\setlist[enumerate]{leftmargin=5.5mm}
\setlist[itemize]{leftmargin=3.5mm}
\newtheorem{theorem}{Theorem}[section]
\newtheorem{lemma}[theorem]{Lemma}
\renewcommand\ottaltinferrule[4]{
  \inferrule*[narrower=1,lab=#1,#2]
  {#3}
  {#4}
}
\newtcolorbox{mycode}[1][gray!70]{
    enhanced,
    colback=white,
    colframe=white,
    overlay={\begin{tcbclipinterior}\fill[#1]
    ([xshift=1.2em,yshift=1.2em]frame.south west)
    rectangle 
    ([xshift=0.9em,yshift=-1.5em]frame.north west);\end{tcbclipinterior}}
}
\newcommand{\keyword}[1]{\textcolor{BlueViolet}{\textbf{#1}}}
\newcommand{\idd}[1]{\textsf{\textsl{#1}}}
\newcommand{\mykeyword}[1]{\textcolor{BlueViolet}{\idd{#1}}}
\newcommand{\varid}[1]{{\idd{#1}}}
\newcommand{\effect}[1]{\textcolor{Sepia}{\idd{#1}}}
\newcommand{\conid}[1]{\textcolor{OliveGreen}{\idd{#1}}}
\newcommand{\glocal}[3]{\langle #2 \rangle^{#1}_{#3}} 
\newcommand{\sglocal}[3]{\langle\!\langle #2 \rangle\!\rangle^{#1}_{#3}} 
\definecolor{listing-data}{RGB}{71,19,169}
\definecolor{listing-type}{RGB}{211,54,130}
\definecolor{listing-keyword}{RGB}{19,42,169} %
\definecolor{listing-number}{RGB}{71,19,169}
\patchcmd{\@@addmarginpar}{\ifodd\c@@page}{\ifodd\c@@page\@@tempcnta\m@@ne}{}{}
\newcommand{\cname}{$\lambda C$\xspace}
\newcommand{\BEQ}{::=}
\newcommand{\BOR}{\mid}
\newcommand{\real}{\mathbf{loss}}
\newcommand{\unit}{\mathbf{unit}}
\newcommand{\bool}{\mathbf{bool}}
\newcommand{\type}{\!:\!}
\newcommand{\etype}{\,!\,}
\newcommand{\loss}{\mathbf{loss}}
\newcommand{\myG}{\Gamma}
\newcommand{\op}{\mathit{op}}
\newcommand{\Op}{\mathit{Op}}
\newcommand\parr{ \mathit{par} }
\newcommand\inn{ \mathit{in} }
\newcommand\outt{ \mathit{out} }
\renewcommand\MathparLineskip{\lineskip=0cm}
\newcommand{\Infer}[4][1]{\ensuremath{\inferrule*[narrower=#1,lab={#2}]{#3}{#4}}}
\def\arcr{\@arraycr}
\newcommand{\local}[2]{ \ssmathhl{\langle\!\langle} #2 \ssmathhl{\rangle\!\rangle_{#1}}}
\newcommand{\return}{\mathbf{return}}
\newcommand{\kk}{k}
\newcommand{\kl}{l}
\newcommand{\effl}{\ell}
\newcommand{\con}{c}
\newcommand{\wh}[2]{\mathbf{with}\; #1\; \mathbf{handle} \; #2}
\newcommand{\wph}[3]{\mathbf{with}\; #1\; \mathbf{from} \; #2 \; \mathbf{handle} \, #3}
\newcommand{\reset}[1]{\mathbf{reset}\, #1}
\definecolor{sscolor}{HTML}{0c78bf}
\definecolor{epscolor}{HTML}{687c92}
\newcommand{\ssmathhl}[1]{ {\color{sscolor}{#1} } }
\newcommand{\myg}[1][]{{\color{sscolor} \mathrm{g_{#1}}}}
\newcommand{\og}{\overline{\myg}}
\newcommand{\sem}[2][black]{ {\color{#1} \left [\! \left | {\color{black}#2} \right |\!\right ]} }
\newcommand{\myskip}{$\mbox{}$\\[-1em]}
\newcommand{\R}{R}
\newcommand{\ER}[3]{\mathbf{R}_{#1}(#2|#3)}
\newcommand{\eqdef}{=_{\small{\mathrm{def}}}}
\newcommand{\mydotminus}{\mathbin{\ooalign{\hss\raise1ex\hbox{.}\hss\cr
  \mathsurround=0pt$-$}}}
\newcommand{\xxrightarrow}[1]{\xrightarrow{\raisebox{-0.6ex}[0ex][0ex]{\scriptsize $#1$}}}
\newcommand{\evalto}[1]{\overset{\ssmathhl{#1}}{\longrightarrow}}
\newcommand{\bevalto}[1]{\overset{\ssmathhl{#1}}{\Longrightarrow}}
\keywords{Effect handlers, Selection monad, Continuations, Machine Learning Programming}
\title{Handling the Selection Monad (Full Version)}
\author{Gordon Plotkin}
\affiliation{
  \institution{Google DeepMind}
  \city{Mountain View}
  \country{United States}
}
\email{plotkin@google.edu}
\author{Ningning Xie}
\affiliation{
  \institution{Google DeepMind and University of Toronto}
  \city{Toronto}
  \country{Canada}
}
\email{ningningxie@cs.toronto.edu}
\begin{document}
\begin{abstract}
The selection monad on a set consists of selection functions. These select an element from the set, based on
a loss (dually, reward) function giving the loss resulting from a choice of an element. 
Abadi and Plotkin used the monad to model a language with operations 
making choices of computations
taking account of the  loss  that would arise  from each  choice.
However, their choices were optimal, and they asked if they could instead be programmer provided.

In this work, we present a novel design enabling programmers to  do so.
We present a version of algebraic effect handlers
enriched by computational ideas inspired by the selection monad.
Specifically, as well as the usual delimited continuations, our new kind of handlers additionally have access to \textit{choice continuations},
that give the possible future losses.
In this way programmers can write operations implementing optimisation algorithms that are aware of the losses arising from their possible choices.

We give an operational semantics for a higher-order model language \cname,
and establish desirable properties including progress, type soundness, and termination for a subset with a mild hierarchical constraint on allowable operation types.
We give this subset a selection monad denotational semantics, and prove
soundness and adequacy results.
We also present a Haskell implementation and give a variety of programming examples.
\end{abstract}

\maketitle

\section{Introduction}
\label{sec:introduction}

\newcommand{\myread}{\mathit{read}}
\newcommand{\myint}{\mathit{Int}}
\newcommand{\argmax}{\mathsf{argmax}}
\newcommand{\argmin}{\mathsf{argmin}}
\newcommand{\geval}{\mathsf{eval}}

The \textit{selection monad}~\cite{EM2010,EMO2010,EMO2010MSFP,EO2010,EsO2011,EO2015} has been used to explain fundamental phenomena in various
areas of logic, including game theory, proof theory, and computational
interpretations; it has also been used in connection with CPS transformations and with algorithm design~\cite{HED15,HG22}.
The monad has the form $ S(X) = (X \to R) \to X$, where
$R$, typically an ordered set such as the real numbers, can be thought of as a set of \textit{losses}. A computation, meaning an element of
 $S(X)$, is a \textit{selection function} that, given a \textit{loss function} from $X$ to
 $R$, picks
an element of $X$. For example, the well-known selection function $\argmin$ takes a
loss function and returns an element that minimizes its value.\footnote{Dually, we can think of $R$ as a set of \emph{rewards}, and recall the $\argmax$ function that picks a maximising element; the two viewpoints are equivalent in case $R$ has a negation function, as with the reals. Below we talk only of losses.} 
The selection monad can be combined with other \textit{auxiliary} monads $T$ to produce \textit{augmented} selection monads $ S_T(X) = (X \to R) \to T(X)$~\cite{EO2015,AP2019}. This generalization proves useful when combining  the selection monad with additional effects.

The connection
between programming languages with decision-making operations and the selection monad
was investigated by~\citet{AP21,AP23}.
They considered a language for a selection monad augmented by the writer monad $W(X) = R \times X$.
It had  a binary \textit{choice operation} which could choose between two computations based on the losses the two choices  entailed. Losses were reported by a loss operation. (They also considered a probabilistic extension.)

{A key question they left open was how to empower programmers with the ability to 
define their own choice operations, with the choice of computations (i.e., the selection strategy)
again based on the losses the possible choices of computation entail.
} 
Specifically, \citeauthor{AP23}
used $\mathsf{argmax}$ to model binary choice so that selections were optimal (or optimal-in-expectation).
Implementing optimal selection 
implies perfect knowledge of
all available choices and the ability to consistently make the best decision.
As they acknowledge,
this is not at all a reasonable assumption for applications such as, for example, learning algorithms~\cite{ruder2016overview,smartChoices2018,GoodBengCour16} or game-playing, 
since
programmers generally do not have efficient, or perhaps any, access to optimal choices. 
Considerations of this sort motivated their question asking for a mechanism allowing programmers to employ their own selection strategies.

We give one such way, answering their key open question.
To do so, we develop a suitable version of \textit{algebraic effect
  handlers}~\cite{pretnar2013handling}. Algebraic effect handlers  provide a flexible
mechanism for modular programming with user-defined effects. They have
been explored in various languages including OCaml~\cite{KC2021OCaml},
C/C++~\cite{effects2024c,effects2022cpp}, and WebAssembly~\cite{phipps2023continuing}.
Algebraic effect handlers achieve modularity by decoupling syntax from semantics: the
syntax is defined by user-specified effect operations, while their semantics are
determined by handlers.
A handler's operation receives
the operation's \textit{argument} and a \textit{delimited continuation} which captures the evaluation context, i.e., from where the operation is
performed to where its \textit{result} can be used. %
Handlers can manipulate the continuation in various ways.
For instance, a handler may choose to not resume the continuation, effectively implementing an exception mechanism, 
or it could resume the the continuation multiple times enabling backtracking and non-deterministic behaviors.
However, standard algebraic effect handlers cannot handle operations based on a program's loss information,
unless this information is provided explicitly as an argument to the operation or returned as part of the final result of the continuation
--- but both options are restrictive.
Specifically,
when performing an operation, complete loss information may not be immediately available.
Similarly, returning the loss as part of the final result of the continuation would require all continuations (and functions) to return loss-result pairs, which is impractical.

\textit{We propose a novel language design that 
empowers programmers to
 write %
 choice 
operations that can choose  computations 
based on the losses the possible choices of computation entail.
In this way they can employ their own selection strategies.}
Our fundamental insight  is to combine 
algebraic effect handlers with computational ideas inspired by the selection monad.
We achieve this by providing effect handlers with   \textit{choice
  continuations} (as well as the usual delimited continuations). 
  {These are a kind of loss continuation that yields the loss  
   arising from an operation's possible result.} %
   Handlers can then define choice operations which compute selections from these choice continuations %
   and use the result for their choice of computation, e.g.,   as an argument to a delimited continuation.
   
    Notably, 
   choice  continuations are not delimited, but have scopes that can be controlled by an additional localising construct
that determines how much loss information is accessible. Such scopes can vary from inside the handler %
to beyond. As in~\citet{AP21},  losses are prescribed using a \textit{loss} effect.

We make several contributions:\\[-1.6em]
\begin{itemize}

\item %
We connect up the
selection monad with effect handlers via \cname, a higher-order model language incorporating the new kind of handlers with their  choice continuations. We present a novel loss-continuation-based operational semantics for the language. 

\item  %
We give \cname a selection-monad-based
denotational semantics  and establish  soundness and adequacy theorems for both big-step and giant-step  operational semantics (Theorems~\ref{thm:sound}, \ref{thm:adequate}, and \ref{thm:giant}).
We thereby both show that  our computational ideas are in accord with the monadic ones which inspired them and establish a theoretical foundation for our
extended effect handlers.

\item  %
We provide a library implementation in Haskell, following the operational semantics,
and present programming examples, demonstrating expressiveness.

\item  %
Our work presents a novel combination of delimited and choice
continuations. Our techniques may extend to other combinations
of different kinds of continuations.
\end{itemize}
The rest of the paper is structured as follows. In \Cref{sec:overview}, we
review the selection monad and algebraic effect handlers, and illustrate our new language design.
In \Cref{sec:source} we present
\cname, our higher-order model language, extending effect handlers with loss
primitives and choice continuations. 
We give  \cname  a  deterministic, progressive, and type-safe
 small-step operational semantics  (\Cref{thm:dps}).
We prove that termination holds for a subset of the language with a hierarchical
constraint on handler interfaces (\Cref{thm:term}); this is needed for the results in 
\Cref{sec:denotational}.
In \Cref{sec:implementation} we give a Haskell library,
and give programming examples illustrating potential applications of the new language design. 
We hope our examples offer fresh insights in the  effect handlers application space.
In \Cref{sec:denotational}, we connect up the selection monad with our
computational contributions. We give the hierarchical part of \cname
a selection-monad-based denotational semantics and establish our soundness and adequacy theorems.
For space reasons, some rules and all proofs are provided in the appendix.
Finally, we discuss related and future work in \Cref{sec:conclusion}.
\section{Overview}
\label{sec:overview}
We first introduce the selection monad, and algebraic effect handlers.
We then present our language design, specifically how  computational ideas inspired by the selection monad are combined with algebraic effect handlers.

\subsection{The Selection Monad} \label{sec:selmon}

While the selection monad $S(X) = (X \to R) \to X$ is available in any cartesian closed category, we
focus on the category of sets. We assume $R$ is a commutative monoid $(\R,+,0)$
(for example, the reals, or a finite product of the reals). This will be needed for the loss operation.
 As we have said, a computation $F \in S(X)$ acts as
a \textit{selection function} taking a \textit{loss function}
$\gamma\in (X \to R)$ and picking an element $F(\gamma) \in X$. The loss associated to  $F \in S(X)$, given a loss function $\gamma\type X \to R$, is defined to be $\ER{}{F}{\gamma} \eqdef \gamma (F(\gamma))$. (We remark that the selection monad is closely connected to the more familiar continuation monad $C(X)   = (X \to R)\to R$. For, given $F$ in $S(X)$,
$\lambda \gamma.\, \ER{}{F}{\gamma}$ is in $C(X)$.)

So, for example, taking $R$ to be the real numbers (with their usual addition), for finite sets $X$, $\argmin_X: (X \to R) \to X$ is an example selection function. Given a loss function $\gamma: X \to R$, $\argmin_X(\gamma)$ is an element $x$ of $X$ minimising $\gamma(x)$ (we assume available some way to choose when there is more than one such element).   Then  $\ER{}{\argmin_X}{\gamma}$ is just the minimum value that $\gamma$ obtains.

To fully specify the selection monad we give its Kleisli triple structure, viz the units $(\eta_S)_X: X \to S(X)$ and the Kleisli extensions $f^{\dagger_S}: S(X) \to S(Y)$ of maps $f\type X \to S(Y)$. (As explained in~\cite{benton2000monads} these correspond, modulo currying, to Haskell's monadic $\return$ and bind ($\mathbf{>\!>\!=}$) operations.) The units are given by: $\eta_S(x) = \lambda \gamma.~x$ (Here, and below, we omit the objects $X$, when writing units.) 
For the Kleisli extension, first associate  a \emph{loss continuation
  transformer} $\tilde{f}\type (Y \to R) \to (X \to R)$ to $f$ by
\[ \tilde{f}(\gamma) = \lambda x\in X.\ER{}{f(x)}{\gamma} \]
where we use $f(x)$ as the $Y$ selection function. (Connecting again to the continuation monad, note that, modulo currying, $\tilde{f}$ has type $X \to C(Y)$.) %
Then  the Kleisli extension is given by:
\[ f^{\dagger_S} (F) = \lambda \gamma \in Y \to R.~{f}~{(} F~{(}
  \tilde{f}~\gamma {)}  {)}~{\gamma}
\]

\noindent
Here the loss function $\gamma$ on $Y$ is transformed into a loss function $\tilde{f}(\gamma)$ on $X$, which
is then used by $F$ to select an element $x = F(\tilde{f}(\gamma))$ of $X$. Finally, $f$ uses $x$ and the original
loss function $\gamma$ to select an element of $Y$. As always, the Kleisli structure determines the monad's functorial action by the formula $S(f) = (\eta_S \circ {f})^{\dagger_S}$, which latter, in this case, is
$\lambda \gamma \in Y\to R.\, f(\gamma \circ f)$. 

Continuing the example, Kleisli extension allows us to solve one-move games with evaluation function
$\geval: X \times Y \to R$. Suppose $f:X \to S(X \times Y)$ is defined by:
\[f(x)(\gamma) = (x,\argmin(\lambda y.\, \gamma(x,y))\]
Then $f^{\dagger_S} (\argmax) (\geval)$ is a minimax pair $(x_0,y_0)$ for $\geval$, with 
$x_0 \in X$  maximising all possible $\geval(x,y)$, and $y_0 \in Y$  minimising all possible $\geval(x_0,y)$.

Turning to augmented monads $S_T(X) = (X \to R) \to T(X)$, as an example, take $T$ to be the writer monad $W(X) = \R \times X$, with $R$ the reals. An example (augmented) selection function is then the ``loss-recording" version of $\argmin$ that sends $\gamma$ to
$(\gamma(\argmin(x)),\argmin(\gamma))$.
The unit of $S_W$ is $\eta_{S_{W}}(x) = \lambda \gamma.~(0, x)$. The loss
associated to $F \in S_W(X)$ and $\gamma\type X \to \R$ is the sum of the loss incurred by $F$ and the loss incurred by the
loss function:
$\ER{W}{F}{\gamma} =  \pi_0(F(\gamma)) + \gamma(\pi_1(F(\gamma)))$.
The Kleisli extension $f^{\dagger_{S_{W}}}:S_W(X) \to S_W(Y)$ for $f:X \to
S_W(Y)$ is then defined as below,  where the losses incurred by $F$ and $f$ are added
up (and  where $\tilde{f}$ is defined similarly to the above):
\[
  \arraycolsep=0pt
  \begin{array}{ll}
f^{\dagger_{S_{W}}} (F) = \lambda \gamma: Y \to R.\;~ &
let~\langle r_1, x  \rangle = {(}F~{(\tilde{f}~\gamma)}{)}
~in~ \\
                                                   &
let~\langle r_2, y  \rangle = {(}{f~x~\gamma}{)} ~
                                                 in~\langle  r_1 + r_2, y \rangle
                                                   \end{array}
\]
The functorial action in this case is: $S_W(f) = \lambda \gamma.\, W(f)(f \circ \gamma)$. In general (see, e.g.,~\citet{AP23}) augmented selection monads $S_T$ are available when $R$ forms a $T$-algebra $\alpha\type T(R) \to R$. In the case of the writer monad $W(X)$ just considered, $\alpha \type  W(R) \to R = +$.

For the denotational semantics of our model language \cname   we use  families $F_\epsilon(W(X))$ of auxiliary monads  and loss sets $F_\epsilon(\R)$, with $R$ the reals, parameterized by multisets $\epsilon$ of certain effects $\ell$, where $F_\epsilon(X)$ is a free algebra monad with signature specified by $\epsilon$. The resulting augmented selection monads are used to give the semantics of \cname programs with effect multisets $\epsilon$.

\subsection{Algebraic Effect Handlers}
\label{sec:overview-effect}
\newcommand\ife[3]{ \mathbf{if}~{#1}~\mathbf{then}~{#2}~\mathbf{else}~{#3} }
\newcommand\true{\mathit{True}}
\newcommand\false{\mathit{False}}
\newcommand{\where}{$\mathit{where}$}
\newcommand{\myor}{\mathit{decide}}

Algebraic effect handlers provide a structured approach to managing effects in
computations.
We give a brief introduction here; for a more in depth account see, 
e.g.~\citet{P15,bauer2015programming}.
Consider a non-deterministic choice
operation, $\myor$, which takes a unit and returns a
$\mathsf{Bool}$ as its result. A computation can invoke operations by providing its argument. 
For example,   the following program performs $\myor$ twice, and returns the
conjunction of the results:\footnote{For clarity, we write $x \leftarrow e_1; e_2$ as syntactic sugar for $(\lambda x.~e_2)~e_1$;
and $e_1; e_2$ for when $x \notin \mathsf{fv}(e_2)$.}
\[
f \triangleq x \leftarrow \myor();~y \leftarrow \myor ();~x~\&\&~y
\]
We can define a handler for $\myor$ to specify its semantics. The handler below
handles $\myor$ by invoking the  continuation $k$ with $\true$ and
$\false$ respectively, and collecting the results:
\noindent
\[
\wh{~\{~ \myor \mapsto \lambda x~k. ~(k~\true) +\!+ ~(k~\false),~
\return \mapsto \lambda x.~[x]~\}
}{f}
\]
Within the $\myor$ clause, $x$ is the operation argument, in this case a unit,
and $k$ represents the captured delimited continuation that takes the operation's result
and resumes the computation from the original call site. The handler
explores both branches of the non-deterministic computation by calling $k$
twice and concatenates the result lists. %
The $\return$ clause applies when a value \ensuremath{x} is returned from the
computation. Here, it simply wraps \ensuremath{x} in a singleton list. The
return clause is optional, as a handler that has no special return behavior can have $\return \mapsto \lambda x.~x$. By applying this handler to $f$, we
effectively explore all possible results of the $\myor$ operation, resulting in
$[\true,\false,\false,\false]$.

As can be seen, effect handlers offer modularity by separating syntax and semantics of effect operations.
However, an effect handler's implementation can only depend on the operation argument and the continuation result,
and it cannot use a program's loss information to make decisions.
In practice, a program's loss may actually depend on the operation result,
and  programs don't always return a loss value as their final output.

\subsection{This Work: Handling the Selection Monad}
\label{sec:overview:this}

This paper introduces a novel language design 
that
integrates algebraic effect handlers with  computational ideas derived from the
selection monad. 
Programmers can write handlers that  make use of loss continuations to make selections.
Our design 
allows programmers to define
custom selection functions.
More broadly,
handlers, while maintaining modularity by only handling operations within their scope, can
now additionally leverage future loss information.

To see how our handler design works, consider the following example program where we write $\loss$ to record a loss value:
\[\arraycolsep=1.4pt
\begin{array}{ll}
    pgm \triangleq & b \leftarrow \myor();\enskip
      i \leftarrow \ife{b}{1}{2};\enskip
      \loss(2 * i);\enskip
      \ife{b}{'a'}{'b'}
\end{array}
\]
The $\loss$ operation is a dedicated
writer effect operation that records a loss value. As it is a writer effect,
multiple $\loss$ operations within such programs will be  aggregated. This
allows for a flexible and modular approach to incorporating loss computations. 

We can handle the choice operation $\myor$ using the loss information; for example,
\[\arraycolsep=1.4pt
\begin{array}{llll}
  & \wh{\{~ \myor \mapsto \lambda x~k~l. & y \leftarrow l~\true; z \leftarrow l~\false;\\
  & & \ife{y <= z}{k~\true}{k~\false}
  ~\}}{pgm} \\
\end{array}
\]
Importantly, as we see in this example, losses are made accessible to handlers through  special
\textbf{\emph{choice continuations}}. These are loss continuations which  associate a loss
to each possible result of an operation.
Concretely, handler operation definitions receive the choice 
continuation  as an additional argument.
The example handler given above compares the losses associated with $\true$ and
$\false$, and resumes computation with the choice (boolean selection)  minimizing the  loss.
Using this handler to handle the previous program,
$b$ will be assigned $\true$, resulting a loss
of $2$ and result $'a'$.
In this case, the handler 
implements $\mathsf{argmin}$, corresponding to \citet{AP21,AP23}.
Importantly, however,
with our design the selection is implemented as a separate handler.
With the delimited continuations and choice continuations available, the handler can implement a variety of selections beyond
$\mathsf{argmax}$, as we will see in \Cref{sec:examples}.

Notably, while the continuation $k$ is delimited, the choice continuation $l$ 
has a useful different scope discipline,
which is delimited by a local construct, and otherwise global.
This allows the
handler to make decisions based on fine-grained control over
choice continuations and losses.
We make use of it to, e.g. restrict choice continuation scopes within while loops.
Further, we do not lose generality: restricting the loss value to be the loss accumulated from the continuation can be implemented as a special case by using local.
From the semantical perspective, as shown by the Adequacy Theorem~(\Cref{thm:adequate}),
the design corresponds to a programmable selection monad. 

\begin{wrapfigure}{r}{0.34\textwidth}
\small
\begin{tabular}{rl}
   \hline 
   \multicolumn{2}{l}{
   \textit{semantics}} \\
   loss function   &  $\gamma,k,l$ \\
   \multicolumn{2}{l}{\textit{syntax}} \\
   loss continuation & $\myg$ \\
   choice continuation     & $l,f_l$ \\
   delimited continuation      & $k,f_k$ \\
   \hline
\end{tabular}
\vspace{3pt}
\caption{Terminology}
\label{fig:term}
\vspace{-5pt}
\end{wrapfigure}%
\Cref{fig:term} presents our terminology and corresponding symbols (ambiguities
are resolved by context).
The loss continuation $\myg$ is the programming manifestations of the loss function $\gamma$
of the selection monad, which takes the result of the program and returns a loss.
On the other hand,
when an operation is handled,
the choice continuation $l$, takes
an operation result and returns a loss.

\section{A Model Calculus}%
\label{sec:source}
In this section we present \cname, our higher-order model language incorporating the new kind of handlers with both choice and delimited continuations. 
We give it an operational semantics, show the standard progress and type safety theorems, and prove termination for a  subset of it subject to a mild restriction permitting no ``effect loops'' in operations.

\newcommand{\Dom}{\mathrm{Dom}}
\newcommand{\lcut}[1]{}
\newcommand{\size}[1]{|#1|}

\newcommand{\NF}{\mathrm{NF}}

\newcommand{\kto}[3]{[#1]#2\etype#3}
\newcommand{\lto}[2]{\{#1\}\!\etype\! #2}
\newcommand{\kapp}[2]{[#1]#2}
\newcommand{\lapp}[2]{\{#1\}#2}
\newcommand{\thensymbol}{\ssmathhl{{\raisebox{0.18ex}{$\scriptstyle \blacktriangleright$}}}}
\newcommand{\smallthensymbol}{\ssmathhl{{\raisebox{0.18ex}{$\scriptstyle \blacktriangleright$}}}}
\newcommand{\then}[3]{#2\, {\color{sscolor} \,\raisebox{0.18ex}{$\scriptstyle \blacktriangleright^{}$}\, #3}}
\newcommand{\nat}{\mathbf{nat}}
\newcommand{\mylist}{\mathbf{list}}

\newcommand{\zero}{\mathbf{zero}}
\newcommand{\mysucc}[1]{\mathbf{succ}(#1)}

\newcommand{\nil}{\mathbf{nil}}

\newcommand{\cons}[2]{\mathbf{cons}(#1,#2)}

\subsection{Syntax}
\newcommand{\Eff}{\Sigma}
\newcommand{\inl}[2]{\mathbf{inl}_{#1,#2}}
\newcommand{\inr}[2]{\mathbf{inr}_{#1,#2}}
\newcommand{\mycases}[5]{\mathbf{cases}\, #1\, \mathbf{of}\, #2.\, #3 \talloblong #4.\, #5}
\newcommand{\iter}[3]{\mathbf{iter}(#1,#2,#3)}
\newcommand{\fold}[3]{\mathbf{fold}(#1,#2,#3)}

The syntax of types and effects is given in \Cref{fig:types}.
Types are ranged over by $\sigma$ and $\tau$.
We also write $\parr, \inn, \outt$ for types when talking about parameter or operation types.

We assume
available  a set of basic types $b$ (including $\real$), and
a set of effect labels $\ell$. We take \emph{effects} $\epsilon$ to be multisets  of effect labels; we use juxtaposition $\epsilon\epsilon'$ for multiset union and write $\epsilon \subseteq \epsilon'$ for sub-multiset.
As well as basic types, types include product types $(\sigma_1,\dots,\sigma_n)$,
 sum types $(\sigma + \tau)$, natural numbers $\nat$, and lists $\mylist(\sigma)$
 for iterations and folds (two examples of simple inductive types), 
and function types $(\sigma
\to \tau\etype \epsilon)$, with argument type $\sigma$, 
and result type $\tau$ and effect $\epsilon$.

 We further assume available a
\emph{signature} $\Sigma$ of effect label typings $\ell\type \Op(\ell)$
associating effect labels $\effl$ to finite non-empty sets of
\emph{$\effl$-operations} $\op$ (with disjoint sets associated to different
effect labels). Our language is patterned after the Koka language~\cite{leijen2014koka} with its grouping of operations $\op$ into effects $\ell$.
Each $\op \in \Op(\ell)$ is typed $\op \type \outt \to \inn$;
we often write $\op\type \outt \xxrightarrow{\ell} \inn$ for %
$\op \in \Op(\effl)$
\footnote{One might rather  have expected $\op \type \inn \to \outt$. The idea here is that an operation is an effect: an element of $out$ is sent to start the effect, then the operation returns  an element of $in$ to continue the computation. It may help to think of, e.g., I/O effects, where, with the present convention, output operations  have type $\outt \to ()$ and input operations have type $() \to \inn$.}.

Expressions $e$ and handlers $h$ are given in \Cref{fig:syntax}, where
$x,y,p,\kk,\kl\dots$ range over variables. %
Note that expressions include loss continuation expressions $\myg$.
We make use of 
standard $\lambda$-calculus
abbreviations, for example $\lambda^{\epsilon} (x,y)\type (\sigma,\tau).\, e$
for functions of pairs.

\begin{figure}[t]
\[\begin{array}{llcl}
\text{type} & \sigma,\tau & \BEQ & b  \BOR (\sigma_1,\dots,\sigma_n)\; (n\geq 0) \BOR \sigma + \tau
                            \BOR \nat \BOR \mylist(\sigma) \BOR(\sigma \to \tau\etype \epsilon)
                             \\[5pt]
\text{effect} &  \epsilon & \BEQ & \{\} \mid \epsilon \effl
               \qquad \qquad  \Eff      \enskip \BEQ \enskip \{\overline{\effl_i : \Op(\effl_i)} \} 
               \qquad \qquad \Op(\effl) \enskip \BEQ \enskip \{ \overline{\op_i \type \outt_i \to \inn_i} \}\\[5pt]
\end{array}
\]
    \caption{Syntax of types and effects}
    \label{fig:types}

\[\begin{array}{llcl} 
\text{expr}   &  e        & \BEQ & \;  \con\BOR f(e)   \BOR x  \BOR  \lambda^\epsilon x\type \sigma.\, e \BOR e_1 ~ e_2
      \\%
      &&&\BOR  (e_1,\ldots,e_n) \BOR e.i \\
      &&&\BOR  \inl{\sigma}{\tau}(e) \BOR \inr{\sigma}{\tau}(e)\BOR \mycases{e}{x_1\type \sigma_1}{e_1}{x_2\type \sigma_2}{e_2}\\
      &&&\BOR \zero \BOR \mysucc{e} \BOR\iter{e_1}{e_2}{e_3} \\
      &&&\BOR \nil_\sigma \BOR \cons{e_1}{e_2} \BOR \fold{e_1}{e_2}{e_3}\\
      &&&\BOR  \op(e) \BOR \loss(e) \BOR \wph{h}{e_1}{e_2} \\
      &&&  \BOR \then{\epsilon}{e_1}{\lambda^{\epsilon} x\type \sigma. \, e_2} 
           \BOR   \glocal{\epsilon}{e}{\myg} \BOR \reset{e}
    \\[5pt]

\text{loss cont exp} & \myg & \BEQ  &  \ssmathhl{\lambda^\epsilon x\type \sigma.\, 0} \BOR 
    \ssmathhl{\lambda^{\epsilon} x\type \sigma.\, \then{\epsilon}{e}{\myg}}
    \\[5pt]

\text{handler} & h & \BEQ & \left \{\begin{array}{l}
                     \op_1 \mapsto \lambda^\epsilon z\type (\parr, \outt_1,
                     (\parr,\inn_1) \to \real \etype \epsilon,
                     (\parr,\inn_1) \to \sigma' \etype \epsilon).~ e_1,\dots, \\
                     \op_n \mapsto \lambda^\epsilon z\type (\parr, \outt_n,
                     (\parr,\inn_n) \to \real \etype \epsilon,
                     (\parr,\inn_n) \to \sigma' \etype \epsilon).~ e_n,\\
                   \return \mapsto \lambda^\epsilon z\type (\parr,  \sigma).\, e
                            \end{array}\right \} \\
                            &&& \hspace{140pt} (\op_1,\dots, op_n \mbox{ enumerates some } \Op(\ell)) \\[5pt]

 \end{array}\]
\caption{Syntax of expressions and handlers}
\label{fig:syntax}
\end{figure}

Expressions include constants $c$ and applications of basic functions $f$. 
Abstractions $ \lambda^\epsilon x\type \sigma.\, e$  are explicitly typed, and  annotated
with their result effect $\epsilon$.
We support \textit{parameterized handlers} \cite{plotkin2009handlers},
which generalize effect handlers by keeping a local handler parameter that can
be updated during resumption. Having parameterized handlers is not necessary,
but is convenient when implementing stateful effects. The parameterized handler
expression $\wph{h}{e_1}{e_2}$ handles the computation $e_2$ using
handler $h$, whose parameter has initial value $e_1$. A program can perform an
operation $\op(e)$ by passing the operation $\op$ an argument $e$. The expression $\loss(e)$
invokes the writer effect operation $\loss$, adding a loss $e$. %
Note that, unlike other operations, it is a built-in effect not associated to
any effect label and so cannot be handled;
it can however be used in handlers,
e.g., to define variant loss operations.

The expression $\then{\epsilonk}{e_1}{(\lambda^\epsilon x\type \sigma.\,e_2)}$ is 
 used to build \textit{\textbf{loss continuations}} 
 $\myg$; these form 
a subset of expressions. Loss continuations $\myg$ begin with the
zero loss continuation 
$\ssmathhl{0_{\sigma,\epsilon}} \eqdef \ssmathhl{\lambda^\epsilon x\type\sigma.\, 0}$,
and get extended by
$\ssmathhl{\lambda^{\epsilon} x\type \sigma.\, \then{\epsilon}{e}{\myg}}$ 
(we assume $\thensymbol$ binds more tightly than $\lambda$).
Intuitively, $(\thensymbol)$ (pronounced as "then") accumulates losses: it first evaluates $e_1$,
collects the loss, and passes the evaluation result as $x$ to $e_2$. %
The expression $\glocal{\epsilon}{e}{\myg}$ localises loss continuations to 
expressions $e$ by executing them with loss continuation $\myg$; in contrast, the expression $\reset{e}$ localises losses to $e$, preventing them from passing 
outside $\reset{e}$.
To impose both forms of localisation  the two constructs can be combined, and 
we write $\sglocal{\epsilon}{e}{\myg}$ for $\reset{\glocal{\epsilon}{e}{\myg}}$.
While the language supports the most general local construct  $\glocal{\epsilon}{e}{\myg}$ that takes a $\myg$,
we find that $\glocal{\epsilon}{e}{\ssmathhl{0_{\sigma,\epsilon}}}$,
which localises the loss with respect to the zero continuation, is sufficient for our examples (\Cref{sec:examples}).
Thus, $(\thensymbol)$ and loss continuations do not necessarily need to be part of the user-facing syntax.

A handler $h$ includes a list of operation definitions and a return definition;
it \emph{handles $\effl$} if this list enumerates $\Op(\effl)$ and has \emph{result effect} $\epsilon$ if the abstractions in the definitions have result effect $\epsilon$.
Operations $\op$ takes
a parameter, an operation argument, a choice  continuation $l$ (following our design), and a delimited continuation $k$.
The choice continuation is the key innovation of this calculus. Both
continuations take a potentially updated parameter and the operation result.
Note that $l$ returns $\real$, while $k$ returns $\sigma'$, and the two
continuations are decorated by the effects $\epsilon$ they may cause. Finally, the
 return clause takes the final parameter and the computation
result of type $\sigma$. 

\textit{Remark.}
For space reasons, in the rest of the paper we focus on a subset of the language excluding sum types, natural numbers, and lists.
The full language is detailed in the appendix.

\subsection{Typing Rules} \label{sec:typingrules}

\begin{figure*}
\renewcommand\MathparLineskip{\lineskip=0cm}

\renewcommand\ottaltinferrule[4]{\inferrule*[narrower=1,lab=#1,#2] {#3} {#4}}
\small{\drules[]{$\myG\vdash e\type \sigma \etype  \epsilon $}{Typing Expressions}
{var,const,fun
  ,prd,prj
  ,abs,app
  ,op, loss, handle
}
\vspace{-5pt}
\renewcommand\ottaltinferrule[4]{\inferrule*[narrower=0.4,lab=#1,#2] {#3} {#4}}
\drule{then} \drule{glocal} \drule{reset}

\renewcommand\ottaltinferrule[4]{\inferrule*[narrower=1,lab=#1,#2] {#3} {#4}}
\drules[]{$\myG \vdash h\type \parr,  \sigma \etype \epsilon\effl \Rightarrow \sigma'\etype\epsilon  $}{Typing Handlers}
{handler}}
\caption{Typing rules for \cname}
\label{typing-rules}
\end{figure*}

\Cref{typing-rules} presents the typing rules. As usual, environments $\myG$ are 
finite sets of bindings $x\type \sigma$ of types  to variables, with no variable bound twice
(equivalently, functions from a finite set $\Dom(\myG)$ of variables to types). 
The judgment %
$\myG\vdash e\type \sigma \etype \epsilon$ is that under the context $\myG$,
the expression $e$ has type $\sigma$ and may produce effects in $\epsilon$. %
The type $\sigma$ is determined, due
to the type and effect annotations in the syntax.
When $\myG$ is empty, we may  write $e\type \sigma\etype \epsilon$; we may also write 
$\myG \vdash e\type \sigma$ or $e\type \sigma$ to show the judgments hold for some $\epsilon$. 
The judgment $\myG \vdash h\type \parr, \sigma\etype \epsilon\effl \Rightarrow
\sigma'\etype\epsilon$ is that  $h$ takes a parameter of type
$\parr$ and a computation of $\sigma$ and returns a result of type $\sigma'$,
producing one less effect $\effl$; all of $\parr$, $\sigma$, $\epsilon$,
$\effl$ and $\sigma'$ are determined. True judgments have unique derivations.

The typing rules are mostly standard for  effect handler calculi. For \rref{const,fun}, we assume
available the types of constants $c\type b$, including $r\type \real$, for all $r\in\R$,  and primitive functions $f\type \sigma \to \tau$ (with $\sigma$ and $\tau$ first order), including $+\type (\real,\real) \to \real$ (which we will write infix). Values can have any effect (\rref{const,var,prd,abs}).
In particular, in  \rref{abs}, the function body has the annotated type $\epsilon$ but
the abstraction can have any effect $\epsilon'$.  
In \rref{app}, we check that the argument has the expected type, and that the
effects of the function, the function body, and the argument match.\footnote{Our type-theoretical formalism does not enjoy sub-effecting,
similar to row-based effect style~\cite{HL16,L17}.
Semantically too there is an obstacle.
For a sub-effecting  \rref{app}, the outer $\epsilon$ should be any super-effects of  the inner one. However, a semantics using a monad family $T_\epsilon$, then needs  covariance in the $\epsilon$ and our selection monad family is not. On the other hand, rules \rref{then} and \rref{glocal} employ sub-effecting---needed for the operational semantics, and holding in our denotational semantics. (See the discussions in \Cref{sec:source:op,sec:denotational}.)}
\Rref{prd,prj} are self-explanatory.

When performing an operation (\rref{op}), the global context tells us that
the operation takes a type $\outt$ and has return type $in$. Then we check that
the operation argument has type $\outt$, and  the return type is $in$.
Moreover, we need to make sure that the result effect $\epsilon$ includes the
effect label $\effl$. For the loss operation (\rref{loss}), the operation takes
a loss and has unit return type.

\Rref{handle} takes care of handling. The rule checks that 
$e_1$ has the correct parameter type, and the computation
being handled has type $\sigma$, while the handler $h$ takes a computation in
$\sigma$ and has return type $\sigma'$, and thus the final result has type $\sigma'$.
\Rref{then} checks the loss calculation expression. Note that $\lambda^{\epsilon_2} x\type \sigma. \, e_2$ may produce fewer 
effects than than the expression $e$.
\Rref{glocal} checks loss-continuation-localized computations. Note again that the loss continuation may produce fewer effects than the localized computation; further there is an ``effect conversion" from the effects of that computation to the effects of the whole localized computation. These variations in effects are needed to account for the loss continuations built up by the operational semantics (see the discussion of rule (F) below); they also provide programming flexibility.

Finally, the \rref{handler} type-checks handler definitions. A handler for
$\effl$ handles all operations in $\Op(\effl)$. For each operation $\op_i :
\outt_i \to \inn_i$, the corresponding clause $e_i$ takes a parameter of type $\parr$,
an operation argument of type  $\outt_i$, a choice  continuation of type  $((\parr, in_i) \to
\real\etype \epsilon)$
and a delimited continuation of type  $((\parr, in_i)\to \sigma'\etype \epsilon)$,
and has return type $\sigma'$. The return clause $e$ takes a parameter of type $\parr$ and the
computation result of type $\sigma$, and has return type $\sigma'$. Because of the
return clause, the handler takes a $\sigma$-computation and returns a
$\sigma'$-computation.

\subsection{Operational Semantics}
\label{sec:source:op}
 \newcommand{\myw}{w}

 \newcommand\beff[1]{\lceil #1  \rceil}
\newcommand\bop{\mathsf{bop}}
 \newcommand{\hop}{\mathsf{h}_{\mathsf{op}}}
 \newcommand{\heff}{\mathsf{h}_{\mathsf{eff}}}

\renewcommand\ottaltinferrule[4]{\inferrule*[narrower=1,right=#1,#2] {#3} {#4}}

In this section, we present the small-step operational semantics of \cname.
Our rules axiomatize a judgment $\myg \vdash_\epsilon e \evalto{r} e'$ that 
under the loss continuation $\myg$, $e$ makes a transition to  $e'$, producing loss $\ssmathhl{r}\in \R$.
The decorative $\epsilon$ provides auxiliary information needed to construct loss continuations.
Here $\myg$ produces the loss caused by  the rest of
the program, given the result of executing $e$.

\begin{figure}
\centering
\small{\begin{tabular}{lrcl}

   value & $v $ & $\BEQ $ & $ x \BOR c     \BOR (v_1,\dots,v_n ) \BOR \lambda^\epsilon x\type \sigma.\, e$
      \\

   regular frame & $F$ & $\BEQ$  & $f(\square) \BOR (v_1,\dots, v_k, \square, e_{k+2},\dots, e_n) \BOR \square.i \BOR \square~e \BOR v~\square$\\
         && $\BOR$& $ \op(\square)  \BOR \loss(\square) \BOR \wph{h}{\square}{e}$  \\
    special frame &$S$ & $\BEQ$ & $\wph{h}{v}{\square}
    \BOR \then{\epsilon}{\square}{(\lambda^{\epsilon} x\type \sigma.e)}  
    \BOR \glocal{\epsilon}{\square}{\myg}
    \BOR \reset{\square}$\\
    cont context & $K$ & $\BEQ$ & $\square \BOR F[K]  \BOR S[K]$\\
       stuck expr & $u$  & $\BEQ$  & $ K[\op(v)]\quad (\op \notin \hop(K)) $ \\%
   terminal expr & $w$  & $\BEQ$  & $v \BOR u$ \\
   redex & $R$ & $\BEQ$ &$ f(v) \BOR v.i \BOR v_1~v_2  \BOR \loss(v)  $ \\
                    && $\BOR$ & $\wph{h}{v_1}{K[\op(v_2)]} \quad (\op \notin \hop(K), \op \in h) $ \\
                    && $\BOR$ & $ \wph{h}{v_1}{v_2} $ \\
          && $\BOR$ & $  %
                            \then{\epsilonk}{v}{\lambda^\epsilon x\type \sigma.e_1}
                            \BOR \glocal{\epsilon}{v}{\myg}
                            \BOR \reset{v}$
    \\
\end{tabular}}
\caption{Syntactical classes used for operational semantics}
\label{fig:value}
  
\small{\[
\begin{array}{llcll}
(R1)\; \qquad &
\myg \vdash_\epsilon f(v) & \evalto{0} & v' & (f(v) \to v') \\
(R2) &
\myg \vdash_\epsilon  (v_1,\dots,v_n).i & \evalto{0} & v_i \\
(R3) & 
\myg \vdash_\epsilon (\lambda^\epsilon x\type \sigma.\, e)~v  & \evalto{0} & e[v/x] \\
(R4) & \myg \vdash_\epsilon  \loss(r) &  \evalto{r} &  () \\

\raisebox{10pt}{$(R5)$} & \multicolumn{4}{c}{ \inferrule*[narrower=0.7]{  
\op \notin \hop(K) \\  \op \mapsto v_o \in h\\ v_1:par \\\op\type \outt \xrightarrow{\ell} \inn \\ 
 \mbox{$h$ has effect $\epsilon$} \\
 f_k = \lambda^\epsilon(p,y)\type (par,in).\,\glocal{\epsilon}{\wph{h}{p}{K[y]}}{\myg}\\
 f_l = \lambda^\epsilon (p,y)\type (par,in).\,\then{\epsilon}{(\wph{h}{p}{K[y]})}{\myg}\\
 }
 {\myg \vdash_\epsilon \wph{h}{v_1}{K[\op(v_2)]}  \evalto{0} v_o(v_1,v_2,f_l,f_k)} }  \\[10pt]
 (R6) &
\myg \vdash_\epsilon  \wph{h}{v_1}{v_2} & \evalto{0} & v_r(v_1,v_2) & (\return \mapsto v_r \in h) \\
 (R7) &  \myg \vdash_{\epsilon} \then{\epsilon_1}{v}{\lambda^{\epsilon_1} x\type \sigma. e}
  &\evalto{0} & \glocal{\epsilon_1}{e[v/x]}{\lambda^{\epsilon_1} x: \sigma.\, 0}  \\
  
 (R8) & \myg \vdash_{\epsilon} \glocal{\epsilon_1}{v}{\myg_1} & \evalto{0} & v \\
  
  (R9) & \myg \vdash_{\epsilon} \reset{v} & \evalto{0} & v

\\[10pt]

(F) & \multicolumn{4}{c}{\inferrule{
                     \ssmathhl{\lambda^\epsilon x\type \tau.\, \then{\epsilon}{F[x]}{\myg}}
                     \vdash_\epsilon e \evalto{r} e' }
{\myg \vdash_\epsilon F[e] \evalto{r} F[e']}} \\ [20pt]

\raisebox{10pt}{$(S1)$} &

\multicolumn{4}{c}{\inferrule*[narrower=0.7]{\mbox{$h$ has effect $\epsilon$} \\ 
\mbox{$h$ handles $\ell$}\\ \return \mapsto v_r \in h \\ v_r\type (\parr,\sigma) \to \sigma'\etype \epsilon \\
                     \lambda^\epsilon x\type \sigma.\, (\then{\epsilon}{v_r(v,x)}{\myg}) \vdash_{\epsilon\ell} e \evalto{r} e' }
{\myg \vdash_\epsilon \wph{h}{v}{e} \evalto{r} \wph{h}{v}{e'}}} \\[15pt]

\raisebox{0pt}{$(S2)$} & \multicolumn{4}{c}{\inferrule{\myg[1] \vdash_\epsilon e \evalto{r} e'}
         {\myg \vdash_\epsilon  (\then{\epsilon}{e}{\myg_1}) \evalto{\; 0 \;} r +  
            (\then{\epsilon_1}{e'}{\myg_1})}} \\[15pt]

\raisebox{0pt}{$(S3)$} & \multicolumn{4}{c}{\inferrule{ 
\myg[1] \vdash_{\epsilon_1} e \evalto{r} e'}
{\myg \vdash_{\epsilon} \glocal{\epsilon_1}{e}{\myg[1]}\evalto{r} \glocal{\epsilon_1}{e'}{\myg[1]}}}
\\[15pt]

\raisebox{0pt}{$(S4)$} & \multicolumn{4}{c}{\inferrule{\myg \vdash_\epsilon e \evalto{r} e'}
         {\myg \vdash_\epsilon  \reset{e} \evalto{\; 0 \;} \reset{e'}}} \\
\end{array}
\]}

\caption{Small-step operational semantics rules}
\label{fig:redexrule}
\end{figure}

Before giving the rules we need some syntax to cover values, contexts, stuck expressions, and redexes.
\Cref{fig:value} presents the syntactical classes needed for the operational semantics. %
\emph{Values} $v$ include variables, constants, value tuples, and lambda expressions.
(Note that values can be typed with any effect, and we generally just write $\myG\vdash v\type  \sigma$ or $\vdash v \type \sigma$.) 
\emph{Continuation contexts} $K$ are either a hole $\square$,
or a regular frame $F$ or  special frame $S$ followed by a continuation context. 
(We distinguish between regular frames and special frames,since, as we will see,
they extend  loss continuations differently.)
We write $K[e]$ for the expression obtained by filling the hole in $K$ with $e$.

\emph{Stuck expressions} $u = K[\op(v)]$ are  operation invocations $\op(v)$ that cannot be handled by
handlers in  continuation contexts $K$;
We write $\heff(K)$ for the multiset of effect labels that $K$ handles;
it is defined inductively with main clause $\heff(\wph{h}{v}{K'}) = \heff(K')\ell$, where $h$ handles $\ell$; we further set  $\hop(K) = \{\op \in \Op(\ell)| \ell \in \heff(K)\}$, the set of operations  handled by $K$.
\textit{Terminal expressions} $w$ are values or stuck expressions; they  cannot reduce; in contrast, (closed) \emph{redexes} $R$ are expressions that do.

Expressions can be analysed uniquely:
\begin{lemma}[Expression analysis] \label{eal} Every expression has exactly one of the following five forms:
(1) a value $v$ (for a unique $v$),
(2) a stuck expression $K[\op(v)]$ (for  unique $K$, $\op$, and $v$), 
(3) a redex $R$ (for a unique $R$),
(4) $F[e]$ (for unique $F$ and $e$, with $e$ not a value or stuck), or
(5) $S[e]$ (for  unique $S$ and $e$, with $e$ not a value or stuck).
\end{lemma}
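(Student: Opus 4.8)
The plan is to prove the Expression Analysis Lemma by structural induction on the grammar of expressions $e$, establishing both the \emph{existence} of one of the five forms and their \emph{mutual exclusivity and uniqueness}. The five cases partition expressions into values, stuck expressions, redexes, and the two ``decomposition'' forms $F[e]$ and $S[e]$ where $e$ is itself neither a value nor stuck. First I would dispose of uniqueness \emph{across} the five forms by inspecting the syntactic shapes: a value $v$ has a head constructor ($c$, a tuple, or a $\lambda$) distinct from that of any redex, frame application, or operation call; a redex $R$ always has fully-evaluated immediate subexpressions (all $v_i$ values, modulo the distinguished handler/continuation positions), whereas $F[e]$ and $S[e]$ by hypothesis have a non-value, non-stuck expression in the hole; and the stuck form $K[\op(v)]$ is characterised by an \emph{unhandled} operation at its head under $K$. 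The key is that for each top-level syntactic constructor, exactly one of these shape-conditions can hold.

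The core of the argument is then a case analysis on the outermost constructor of $e$, carried out in tandem with the induction hypothesis applied to the relevant immediate subexpression. For a constructor whose evaluation requires a subterm to be reduced first, say $e = f(e_0)$, I would apply the induction hypothesis to $e_0$: if $e_0$ is a value $v$, then $e = f(v)$ is a redex by $(R1)$; if $e_0$ is stuck, $e = f(e_0) = F[K[\op(v)]]$ is itself stuck with frame $f(\square)$ prepended (and $\op$ still unhandled, since a regular frame adds nothing to $\hop$); otherwise $e_0$ is neither value nor stuck, and $e = F[e_0]$ falls in case (4) with $F = f(\square)$. The uniqueness of the decomposition $F$, $e_0$ comes from the fact that each regular-frame shape pins down which subexpression sits in the hole. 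The parallel treatment handles $e.i$, $e_1\,e_2$ (evaluate left then right, with the $v\,\square$ frame), tuples (evaluate left-to-right, so the first non-value component determines the frame), $\op(e_0)$, and $\loss(e_0)$. The special-frame cases $(\thensymbol)$, $\glocal{\epsilon}{e}{\myg}$, $\reset{e}$, and the handled-computation position $\wph{h}{v}{\square}$ are analogous but route through the $S[e]$ clause and the special redex rules $(R6)$--$(R9)$.

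The delicate case, and the one I expect to be the main obstacle, is the handler expression $\wph{h}{e_1}{e_2}$, because here the outcome depends on a nonlocal property of $e_2$: whether the first unhandled operation reachable in $e_2$ is one that $h$ handles. Concretely, once $e_1$ reduces to a value $v_1$, I must apply the induction hypothesis to $e_2$ and distinguish: $e_2$ a value (redex via $(R6)$); $e_2 = K[\op(v_2)]$ stuck, which then \emph{splits} according to whether $\op \in h$ --- if so, $\wph{h}{v_1}{K[\op(v_2)]}$ is the handling redex of $(R5)$ (using that $\op \notin \hop(K)$, so the operation is not captured by an inner handler), and if not, the whole expression is \emph{still stuck}, since the outer handler does not handle $\op$ and $\op \notin \hop(\wph{h}{v_1}{K})$ follows from $\heff$ only recording $\ell$ with $h$ handling $\ell$; or $e_2$ neither value nor stuck, giving $S[e_2]$ with the special frame $\wph{h}{v_1}{\square}$. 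I would isolate the relevant bookkeeping --- that $\hop(K) = \hop(\wph{h}{v}{K})$ exactly when $h$ does \emph{not} handle $\op$, and that the ``first unhandled operation'' is well-defined --- as a small auxiliary observation about $\heff$ and $\hop$, proved by the same induction on $K$ that defines them. Establishing that this split is both exhaustive and mutually exclusive, so that a handler expression lands in precisely one of the five forms, is where the real content of the lemma lives; the remaining constructors are routine given the induction hypothesis.
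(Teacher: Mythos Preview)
Your proposal is correct and follows essentially the same approach as the paper: a case analysis on the outermost constructor of $e$, checking for each shape whether the relevant subexpression is a value, stuck, or neither, and routing accordingly into one of the five forms. The paper illustrates the method with the application case $e_1~e_2$ and leaves the rest as routine; your treatment of the handler case, including the split on whether $\op \in h$ when $e_2$ is stuck, matches what the paper needs. One minor remark: you frame the argument as a structural induction and invoke the induction hypothesis on subterms, but the paper gets by with plain case analysis, since ``is a value'' and ``is stuck'' are directly decidable syntactic properties and the full five-way classification of the subterm is never required---only the value/stuck/neither trichotomy.
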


\paragraph{\textbf{Small-step operational semantics}}

\Cref{fig:redexrule} presents our small-step operation semantics rules 
for the  judgment $\myg \vdash_\epsilon e \evalto{r} e'$.
Program execution starts with the zero  loss continuation. Further loss  continuations are progressively built up during execution, in order for subprograms to 
pass their results to their enclosing contexts and so on to the program's loss continuation. (All this is quite analogous to how one computes with ordinary continuations.)

\textit{Redexes.}
Many  redex rules do not use the loss continuation, and produce a zero loss.
For (R1), we assume available deterministic total reductions for primitive functions.
In (R4), $\loss(r)$ produces a loss $r$ returning $()$.
Rules (R8) and (R9) are natural, expressing that the computation  terminates once a value is reached.

Operations get handled by  rule (R5).
The handler operation clause is applied  to  parameter
$v_1$, operation argument $v_2$,  \textbf{delimited continuation} $f_k$,
and \textbf{\textit{choice continuation}} 
$f_l$.
The continuation $f_k$ takes the handler parameter and the operation result to be used
when resuming,  localised to the current loss continuation $\myg$ when called,
since the continuation is captured under the loss continuation $\myg$.
The choice  continuation $f_l$ is built from the standard handler continuation and the current loss continuation using the $\thensymbol$ construct; 
it  therefore has access to
all the losses resulting from executing the operation, and so the handler can make decisions based on
that information. 
In  rule (R6), when a value returns from a handler,
the return clause from the handler applies,
taking as arguments the current handler parameter and the value.
In rule (R7), we evaluate $\then{\epsilon_2}{v}{\lambda^{\epsilon_2} x\type \sigma.\; e}$
by substituting $v$ for $x$  in $e$
and localising the resulting expression to the zero loss continuation, 
as the purpose of $\thensymbol$ is to calculate a loss independently of the current loss continuation.

\textit{Regular frames.}
Rule (F) evaluates expressions $e$ inside regular frames 
$F$.\footnote{The use of frames is a way to present the administrative rules of small-step semantics via a single rule; the idea seems to be folklore. We could as well have used evaluation contexts~\cite{FH92}.}
Importantly,
we adjust the loss continuation $\myg$ to $\ssmathhl{\lambda^\epsilon x\type \tau.\, \then{\epsilon}{F[x]}{\myg}}$
when evaluating $e$. This is because the loss continuation of $e$ is to pass its result $v$ to the context $F[v]$ whose value is then passed to its enclosing loss context, meanwhile accumulating incurred losses. 
Note that, to apply this rule, the decorative $\epsilon$ is used. 
This is the only rule that does so, and it is needed to make the rule deterministic.

\textit{Special frames.}
Lastly, (S1)-(S4) evaluate inside special frames. These adjust the
loss continuations or losses differently from rule (F).
The loss continuation in (S1) uses the return clause from the handler,
since after $e$ is evaluated with the aid of the handler the final result is passed to the return function.
Also,
rule (S1) is where the effect associated with the judgment changes.
It changes from $\epsilon$ to $\epsilon \ell$, when evaluating $e$.
Thus, rule (R5) builds up a loss continuation by combining an expression with a loss continuation with fewer effects.
As a result, sub-effecting is needed in
the typing \rref{then} to ensure type safety of the operational semantics.
Similarly, sub-effecting is used in the typing \rref{glocal},
since rule (R7) further wraps the body of a loss continuation within a local construct.
In rule (S2), the current loss continuation is not imported inside the ``then" construct. Instead $e$ is
evaluated relative to the loss continuation $\ssmathhl{\myg_1}$; moreover, the loss $r$ produced during the evaluation is added to the final result.
In rule (S3) the loss continuation also changes, as with the local construct; note that the loss created by $e$ is exported.
Finally, in rule (S4) the loss continuation does not change, but the loss created by $e$  is not exported.

The standard results hold for our operational semantics:
\begin{theorem} \label{thm:dps}
\myskip
\begin{enumerate}
    \item (\textit{Terminal expressions}) If  $e$ is terminal, then it can make no transition, i.e., $\myg \vdash_{\epsilon} e \xrightarrow{r} e'$ holds for no $\myg, r, \epsilon$, $e'$. 
   \item (\textit{Determinism}) If  $~\myg \vdash_{\epsilon} e \evalto{r} e'$ and  $\myg \vdash_{\epsilon} e \evalto{r'} e''$ then $r = r'$ and $e' = e''$.
   \item (\textit{Progress})  If $e\type \sigma\etype \epsilon_1$ is non-terminal, then
     $\myg \vdash_{\epsilon_1} e \evalto{r} e'$  holds for some $r$ and $e'$ for any $~\myg\type \sigma \to \real\etype \epsilon_2$ with $\epsilon_2 \subseteq \epsilon_1$.
   \item (\textit{Type safety})  If $~\myg\type \sigma \to \real\etype \epsilon_2$,  
        $\myg \vdash_{\epsilon_1} e \evalto{r} e'$, with $\epsilon_2 \subseteq \epsilon_1$,
       and $e\type \sigma\etype \epsilon_1$ then $e'\type \sigma\etype \epsilon_1$.
\end{enumerate}

\end{theorem}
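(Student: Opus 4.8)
The plan is to dispatch the four parts using the Expression Analysis Lemma (Lemma~\ref{eal}) as the central tool, since it classifies every expression uniquely as a value, a stuck expression, a redex, $F[e]$, or $S[e]$. Parts (1) and (2) are purely syntactic. For \emph{terminal expressions}, I would note that the source of every transition rule is a redex (rules (R1)--(R9)), a regular-frame expression $F[e]$ (rule (F)), or a special-frame expression $S[e]$ (rules (S1)--(S4)), i.e.\ case (3), (4), or (5) of Lemma~\ref{eal}; as those cases are mutually exclusive, a value (case (1)) or stuck expression (case (2)) matches no rule. For \emph{determinism}, I would induct, using Lemma~\ref{eal} to force both derivations to use the same rule with the same unique decomposition: the redex rules are deterministic by inspection (using determinism of the primitive reductions in (R1)), and each frame rule reduces to the induction hypothesis on the strictly smaller $e_0$. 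The one delicate point is that rule (F) rebuilds the loss continuation $\lambda^\epsilon x{:}\tau.\,\then{\epsilon}{F[x]}{\myg}$; here the decorative effect $\epsilon$ carried by the judgment, together with the explicit annotations that make the hole type $\tau$ determinate, pins this continuation down, which is exactly why the decoration is needed.

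Parts (3) and (4) are the standard progress/preservation pair, each proved by induction and split via Lemma~\ref{eal} into the redex, $F[\cdot]$, and $S[\cdot]$ cases. I would first establish the usual supporting lemmas: \emph{canonical forms} (a closed value of product, function, or $\real$ type is respectively a tuple, a $\lambda$, or a loss constant), so that (R2), (R3), (R4) can fire and preserve types; a \emph{substitution lemma} for (R3) and (R7); and a \emph{context-typing lemma} stating that a typing of $K[e_0]$ factors into a typing of $e_0$ at a hole type together with a residual typing of $K$, so that replacing $e_0$ by anything of the hole type keeps the whole typing. For progress, each redex form then has an enabled rule (canonical forms supply the missing side conditions), and each frame case follows by applying the induction hypothesis to $e_0$.

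The recurring work in the frame cases---for both progress and preservation---is effect bookkeeping: every frame builds a new loss continuation, and I must verify it is well typed with \emph{body-effect contained in the (possibly enlarged) decorative effect}, so that the hypothesis $\epsilon_2 \subseteq \epsilon_1$ of the statement is re-established for the recursive call. This is precisely where the sub-effecting in rules \textsc{(then)} and \textsc{(glocal)} is used: in (S1) the decoration grows from $\epsilon$ to $\epsilon\effl$ while the rebuilt continuation $\lambda^\epsilon x{:}\sigma.\,\then{\epsilon}{v_r(v,x)}{\myg}$ only has effect $\epsilon$, and $\epsilon \subseteq \epsilon\effl$ makes the hypothesis go through; the (F) and remaining special-frame cases are analogous.

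I expect the genuine obstacle to be the preservation case for the handling redex (R5). There I must show $v_o(v_1,v_2,f_l,f_k) : \sigma'\etype\epsilon$, which reduces to typing the freshly built choice and delimited continuations $f_l = \lambda^\epsilon(p,y).\,\then{\epsilon}{(\wph{h}{p}{K[y]})}{\myg}$ and $f_k = \lambda^\epsilon(p,y).\,\glocal{\epsilon}{\wph{h}{p}{K[y]}}{\myg}$. The key step is the context-typing lemma applied to $K[\op(v_2)]$: since $\op\notin\hop(K)$, the label $\effl$ genuinely propagates to the hole, so $\op(v_2)$ sits there at type $\inn$, and replacing it by $y{:}\inn$ gives $K[y] : \sigma\etype\epsilon\effl$; wrapping in the handler yields $\wph{h}{p}{K[y]} : \sigma'\etype\epsilon$. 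Then, using the handler typing rule to type $v_o$ and the return clause, and rules \textsc{(then)}/\textsc{(glocal)} with $\myg : \sigma'\to\real\etype\epsilon_2$ and $\epsilon_2\subseteq\epsilon$, I obtain $f_l : (\parr,\inn)\to\real\etype\epsilon$ and $f_k : (\parr,\inn)\to\sigma'\etype\epsilon$, so the application $v_o(v_1,v_2,f_l,f_k)$ is well typed at $\sigma'\etype\epsilon$. Once this case is in hand, the remaining redex cases are immediate and the induction closes.
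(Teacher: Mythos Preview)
Your proposal is essentially correct and follows the paper's approach closely. Two minor points of divergence are worth noting. First, your argument for part~(1) is too quick: a stuck expression such as $F[K'[\op(v)]]$ \emph{does} syntactically match the conclusion of rule~(F); Lemma~\ref{eal}'s case~(4) carries the extra side-condition that the hole contents are non-terminal, but the rule itself does not, so you still need an induction on $e$ to argue that the premise of (F) (or of (S1)--(S4)) cannot hold---which is exactly what the paper does. Second, the paper proves (3) and (4) by a single simultaneous induction, showing directly that every non-terminal well-typed $e$ steps to some $e'$ with $e'\type\sigma\etype\epsilon_1$; your plan to run progress and preservation as two separate passes also works, but the combined pass avoids repeating the case split and the effect bookkeeping (in particular the (R5) analysis you correctly single out as the crux).
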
 \label{thm:dss}

\paragraph{\textbf{Example}}

We consider the example program from \Cref{sec:overview:this} to demonstrate the operational semantics:
\[\arraycolsep=1.4pt
\small
\begin{array}{rll}
    pgm & \triangleq & b \leftarrow \myor();\enskip
      i \leftarrow \ife{b}{1}{2};\enskip
      \loss(2 * i);\enskip
      \ife{b}{'a'}{'b'}\\
    h & \triangleq & 
    \{~ \myor \mapsto \lambda x~k~l. ~y \leftarrow l~\true; z \leftarrow l~\false;\ife{y <= z}{k~\true}{k~\false} ~\}
\end{array}
\]
We evaluate the program under the zero continuation,
and omit handler parameters. 
We write $C$ for the character type,
and $B$ for the boolean type.
First, the operation is handled (rule (R5)):

\vspace{-15pt}

{\small
\begin{equation}
   \ssmathhl{0_{C,\{\}}} \vdash_{\{\}} \wh{h}{pgm} 
 \evalto{0}    (y \leftarrow f_l~\true; z \leftarrow f_l~\false;\ife{y <= z}{f_k~\true}{f_k~\false}) \\[2pt]
\end{equation}
\vspace{-15pt}

\noindent
\begin{tabular}{l}
 where $ f_k = \lambda b: B. ~\glocal{\{\}}{ \wh{h}{ (  i \leftarrow \ife{b}{1}{2};\enskip
      \loss(2 * i);\enskip
      \ife{b}{'a'}{'b'})}}{\ssmathhl{0_{C,\{\}}}}$\\
  \qquad\enskip\enskip   $f_l = \lambda b:B.\,\then{\{\}}{( \wh{h}{ (i \leftarrow \ife{b}{1}{2};\enskip
      \loss(2 * i);\enskip
      \ife{b}{'a'}{'b'}))}}{0_{C,\{\}}}$
\end{tabular}
}

\noindent
We then evaluate $(f_l~\true)$. Rule (F)
changes the loss continuation to
$\myg \triangleq 
\ssmathhl{\lambda^\epsilon y\type \tau.\, {(z \leftarrow f_l~\false;}} $
$\ssmathhl{\ife{y <= z}{f_k~\true}{f_k~\false}) ~ \thensymbol ~ { 0_{C,\{\}} }}$. Rule (R3) reduces the application and produces a 0 loss.
Now we evaluate the following expression under $\myg$:
{\small
\begin{equation}
 \myg \vdash_{\{\}}  \then{\{\}}{( \wh{h}{ (i \leftarrow \ife{\true}{1}{2};\enskip
      \loss(2 * i);\enskip
      \ife{\true}{'a'}{'b'}))}}{0_{C,\{\}}}
\end{equation}}%
Importantly, the $\thensymbol$ operator disregards $\myg$, and evaluates
the expression under $\ssmathhl{0_{C,\{\}}}$ (rule (S2)).
This behavior ensures that continuations are consistently evaluated under the loss continuation they are captured at.
Without it, evaluating continuations would yield different results based on how continuations are used within the handler, which is undesirable. Then, evaluating the program
{\small
\begin{equation}
 \ssmathhl{0_{C,\{\}}} \vdash_{\{\}} ( \wh{h}{ (i \leftarrow \ife{\true}{1}{2};\enskip
      \loss(2 * i);\enskip
      \ife{\true}{'a'}{'b'}))}
\end{equation}
}%
produces a loss $2$ and a value $'a'$. According to rule (S2),
the loss is added to the result of $(\then{}{'a'}{ \ssmathhl{0_{C,\{\}}}})$,
producing the result $2 + 0 = 2$.
Substituting $2$ for $y$ in expression (1), we get
{
\small
\begin{equation}
 (z \leftarrow f_l~\false;\ife{2 <= z}{f_k~\true}{f_k~\false})  \\[2pt]
\end{equation}
}%
Similarly, $(f_l~\false)$ evaluates to $4$, and thus the
computation reduces to $(f_k~\true)$.
Continuing the evaluation will produce the final result $'a'$
and the loss $2$.

\paragraph{\textbf{Big-step operational semantics}}

Finally, we define a big-step operational semantics judgment
$\myg \vdash e \xRightarrow{r}w$, that  under  loss continuation $\myg$,
 expression $e$ evaluates to  terminal expression $w$.

\begin{figure}[H]
\centering
\vspace{-15pt}
\begin{mathpar}
\inferrule{ } {\myg \vdash_\epsilon w \bevalto{0} w }
\and
\inferrule{\myg \vdash_\epsilon e_1 \evalto{r} e_2 \\
\myg \vdash_\epsilon e_2 \bevalto{s} w}
{\myg \vdash_\epsilon e \bevalto{r+s} w}
\end{mathpar}
\caption{Big-step operational semantics rules}
\label{fig:bigeval}
\vspace{-10pt}
\end{figure}

\noindent
It follows immediately from Theorem~\ref{thm:dps} that the big-step semantics is deterministic and 
type safe:
\begin{corollary} 
\label{thm:safe}
Given $e\type \sigma \etype  \epsilon$,
and $\myg \type \sigma\to \bool\etype \epsilon'$ with $\epsilon'\subseteq \epsilon$,
there is at most one $r\in \R$ and terminal expression $w$ such that
$\myg \vdash e \bevalto{r} w$ and then $w\type \sigma\etype \epsilon$.
\end{corollary}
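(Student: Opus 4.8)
The plan is to observe that the corollary packages two independent claims about the big-step relation of \Cref{fig:bigeval} --- \textbf{uniqueness} of the pair $(r,w)$, i.e.\ determinism, and \textbf{type preservation}, i.e.\ $w\type\sigma\etype\epsilon$ --- and that each follows by a routine induction on big-step derivations once the four parts of \Cref{thm:dps} are in hand. Note first a structural point that both arguments rely on: the two big-step rules keep $\myg$ and the decorative $\epsilon$ fixed, so the standing hypotheses on the loss continuation $\myg$ and the containment $\epsilon'\subseteq\epsilon$ persist unchanged along any reduction sequence issuing from $e$; I will therefore never have to re-establish the side conditions needed to invoke \Cref{thm:dps}.

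\textbf{Determinism.} Suppose $\myg \vdash_\epsilon e \bevalto{r_1} w_1$ and $\myg \vdash_\epsilon e \bevalto{r_2} w_2$; I prove $r_1=r_2$ and $w_1=w_2$ by induction on the first derivation. If it is the base rule, then $e=w_1$ is terminal and $r_1=0$; by \Cref{thm:dps}(1) a terminal expression admits no small-step transition, so the step rule cannot begin the second derivation either, forcing it to be the base rule with $w_2=e=w_1$ and $r_2=0$. If instead the first derivation is the step rule, then $e$ is non-terminal and steps $\myg\vdash_\epsilon e\evalto{\bar r}e_2$ followed by a strictly smaller sub-derivation $\myg\vdash_\epsilon e_2\bevalto{s_1}w_1$ with $r_1=\bar r+s_1$; since $e$ is non-terminal the base rule cannot apply to the second derivation, so it too uses the step rule, giving $\myg\vdash_\epsilon e\evalto{\bar r'}e_2'$ and $\myg\vdash_\epsilon e_2'\bevalto{s_2}w_2$ with $r_2=\bar r'+s_2$. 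By \Cref{thm:dps}(2) we get $\bar r=\bar r'$ and $e_2=e_2'$, and the induction hypothesis applied to the sub-derivation from $e_2$ yields $s_1=s_2$ and $w_1=w_2$, whence $r_1=r_2$.

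\textbf{Type preservation.} Assuming $e\type\sigma\etype\epsilon$, I show the terminal result satisfies $w\type\sigma\etype\epsilon$ by induction on the big-step derivation. The base case is immediate, since then $w=e$. In the step case $\myg\vdash_\epsilon e\evalto{\bar r}e_2$, I invoke \Cref{thm:dps}(4) --- instantiated with the given $\myg$ and with the two effects taken as $\epsilon$ and $\epsilon'$ (using $\epsilon'\subseteq\epsilon$) --- to conclude $e_2\type\sigma\etype\epsilon$; this re-establishes the typing hypothesis for the sub-derivation from $e_2$, so the induction hypothesis delivers $w\type\sigma\etype\epsilon$.

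I do not expect a genuine obstacle here, which is exactly why the text calls the result immediate. The one place where care is required is ensuring that the two big-step rules do not overlap --- that for a given $e$ at most one of them can open a derivation --- and this rests entirely on \Cref{thm:dps}(1); the rest is bookkeeping. I will also flag that the corollary asserts only ``at most one'', i.e.\ uniqueness together with preservation, and deliberately says nothing about existence of a terminal $w$: existence is termination, which holds only for the hierarchical fragment (\Cref{thm:term}) and is not invoked here.
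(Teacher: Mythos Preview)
Your proposal is correct and matches the paper's approach: the paper simply states that the corollary ``follows immediately from Theorem~\ref{thm:dps}'' (and the appendix version, Corollary~\ref{Acor:evaltd}, repeats this one-line justification), and what you have written is precisely the routine unfolding of that claim---an induction on big-step derivations using parts (1), (2), and (4) of Theorem~\ref{thm:dps} for non-overlap of the two rules, small-step determinism, and small-step type safety respectively.
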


\subsection{Termination}
\label{sec:termination}

\newcommand\mo{{moo}}
\newcommand\cow{{cow}}

We establish termination with a suitable well-foundedness assumption on the effects allowed in the input and output types of operations.
We use the termination result to establish adequacy in \Cref{sec:denotational}.
A result of this type for a standard handler calculus appears in~\citet{FK19}.
However they did not have loss continuations which, as we will see, leads to complex computability definitions.

\paragraph{\textbf{Well-foundness of effects}}

Unfortunately, not all effect handler programs terminate.
Adapting from \citet{BP13}, consider an effect $\cow$
with the corresponding handler $h$:
\[ \cow : \{\; \mo : \unit \to (\unit \to \unit \etype \cow)  \;\}  
\qquad
  h = \{\; \mo \mapsto \lambda (p, x, \effl, k).~k~(\lambda^\cow y.~ \mo(())~()) \;\}
\]
Then the program $e \triangleq  \wph{h}{v}{\mo(())~()} $ diverges:
\[
e \longrightarrow \wph{h}{v}{(\lambda^\cow y.~ \mo(())~())~()}  \longrightarrow  e \longrightarrow \ldots
\]

To rule out such programs where effect labels occur inside the input or output types of their operations, for this subsection and \Cref{sec:denotational} we make use of a well-foundedness assumption on effects.
Specifically, we write $e(\epsilon)$ and $e(\sigma)$ for the set of effect labels appearing in $\epsilon$ or $\sigma$. So, for example 
$e(\sigma \to \tau\etype\epsilon) =  e(\sigma) \cup  e(\tau) \cup \{\ell |\ell \in \epsilon\}$. Our  well-foundedness assumption is that there is an ordering $\ell_1,\dots, \ell_n$ of the labels such that:
\[\op\type \outt \xrightarrow{\ell_j} \inn \;\wedge\; \ell_i \in e(\outt) \cup e(\inn) \implies i < j\]
We then define the effect levels of $\epsilon$ and $\sigma$
by:
$l(\epsilon) = \max_i \{i | \ell_i \in e(\epsilon)\}$ and  $l(\sigma) = \max_i \{i | \ell_i \in e(\sigma)\}$.
The size $|\sigma|$ of  types is defined  standardly  (e.g., 
$|\sigma \to \tau\etype\epsilon| =  1 + |\sigma| +  |\tau| + |\epsilon|$).

Our denotational semantics is defined for programs satisfying the assumption. We  remark that the assumption holds for all our programming examples. In the (also terminating) EFF language~\cite{FK19}, the assumption is baked into the language design: operation types  are only well-defined if they can be shown so using only previously well-typed operations.

\paragraph{\textbf{Computability}}

Our proof uses suitable recursively-defined notions of computability, following Tait~\cite{Tait67}. 
We define the following main notions:
\begin{itemize}
\item[-] \textit{computability} of  closed values $v\type \sigma$,
\item[-]  \textit{loss computability} of  closed  loss continuations  $\myg\type \sigma \to \real\etype \epsilon$, and   
\item[-]  \textit{computability} of closed  expressions $e\type\sigma\etype\epsilon$.
\end{itemize}
 We define these notions by the following mutually-recursive clauses. 
They employ two auxiliary notions.
One is an inductively defined  notion of \textit{$G$-computability} of expressions, where $G$ is a set of loss continuations; the other is  a notion of  \emph{R-computability} of real-valued expressions.

\begin{enumerate}
\item \begin{enumerate}
         \item Every constant $c\type b$ of ground type is computable.
         \item A closed value $(v_1,\dots,v_n)\type (\sigma_1,\dots,\sigma_n)$ is computable if every 
                   $v_i\type \sigma_i$ is computable.
              \item A closed value $\lambda^{\epsilon} x\type \sigma.\, e \type \sigma \to 
                    \tau\etype\epsilon$ is computable if, for every computable value $ v\type \sigma$, the 
                    expression $e[v/x]\type \tau\etype\epsilon$ is computable.
                \end{enumerate}
\item The property  of %
{$G$-computability} of  closed  expressions $ e\type \sigma\etype \epsilon$, for a set $G$ of closed  loss continuations of type $ \myg\type \sigma \to \real\etype \epsilon'$  for some $\epsilon' \subseteq \epsilon$,   is the least such property  $P_{\sigma,\epsilon}$ of these expressions such that 
one of the following three possibilities holds:
\begin{enumerate}
\item  $e$ is a  computable closed  value.
\item  $e$ is an operation value $K[\op(v)]$, with $\op\type \outt \xrightarrow{\ell} \inn$, where $v:\outt$ is  a computable closed  value, and where,  for every  computable closed  value $v_1\type \inn$, 
$P_{\sigma,\epsilon}(K[v_1])$ holds.
\item For every $\myg \in G$, if $\myg \vdash_\epsilon e \xrightarrow{r} e'$ then  $P_{\sigma,\epsilon}(e')$ holds.
\end{enumerate} 
\item \begin{enumerate}
                 \item An expression $e:\real\etype \epsilon$ is {R-computable}
iff it is $\{0_{\real,\epsilon}\}$-computable.

\item A closed  loss continuation $\lambda^\epsilon x\type \sigma.\, e\type \sigma \to \real\etype \epsilon$ is {loss computable} if $e[v/x]$ is R-computable for every computable closed  value $v\type \sigma$.
\end{enumerate}
\item A closed  expression $e\type \sigma\etype\epsilon$ is {computable} 
iff it is $G$-computable, where $G$ is the set of closed  loss-computable 
loss continuations $\myg\type \sigma \to \real\etype \epsilon'$, 
for some $\epsilon'\subseteq \epsilon$. 
\end{enumerate} 

The definitions are proper (i.e. the recursions terminate) as can be be seen by suitable measures $m$ defined on the types and effects of values $v\type \sigma$, 
loss continuations $\myg \type \sigma \to \real\etype \epsilon$, and expressions $e\type \sigma\etype \epsilon$; these are pairs of natural numbers, lexicographically ordered, and are given by:
\[m(v) = (l(\sigma), |\sigma|)
\quad
m(\myg) = (l( \sigma) \max l(\epsilon), | \sigma| )
\quad
m(e) = (l( \sigma) \max l(\epsilon), |\sigma|)\]
These measures do not increase in passing from the definition of one notion to another, so every link in the graph of definitional dependencies is non-increasing. The measures also decrease when passing from the value-computability to itself and from $G$-computability to itself. So every loop in the graph contains a decreasing link, and we see that  the various notions are well-defined.
We extend these notions to open expressions in the usual way, via substitution by computable 
closed  values. We prove the following fundamental lemma:

\begin{lemma}[Fundamental Lemma] \label{lem:fundamental}
\myskip
\begin{enumerate}
    \item Every loss continuation $\myG\vdash  \myg\type \sigma \to \real\etype\epsilon $ is loss computable.
\item Every expression $\myG \vdash e\type \sigma\etype \epsilon$ is computable.
\end{enumerate}
\end{lemma}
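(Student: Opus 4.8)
The plan is to prove both parts simultaneously by structural induction on the typing derivation of the expression (respectively the loss continuation), exploiting the fact that loss continuations form a syntactic subclass of expressions. Since the two notions of computability are mutually recursive, the induction must carry both statements together. For the base cases, a variable $x\type\sigma$ is handled by the extension-to-open-expressions convention: we substitute a computable closed value, which is computable by definition, and a single computable value is trivially $G$-computable via clause (2a). Constants of ground type are computable by clause (1a). The zero loss continuation $0_{\sigma,\epsilon}$ is loss computable because, for any computable closed value $v$, the substituted body is the constant $0\type\real$, which reduces to itself and is $\{0_{\real,\epsilon}\}$-computable, hence R-computable.

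For the inductive step I would proceed case by case on the last typing rule, establishing a collection of closure lemmas showing that $G$-computability is preserved by each elimination/introduction form. The key pattern throughout is clause (2c): to show $e$ is $G$-computable it suffices to show that whenever $\myg\in G$ and $\myg\vdash_\epsilon e\evalto{r}e'$, the result $e'$ is $G$-computable; by the Expression Analysis lemma (\Cref{eal}) every non-value, non-stuck expression is uniquely a redex, an $F[\cdot]$, or an $S[\cdot]$, so I can match each reduction rule. For $\lambda$-abstractions I would appeal to clause (1c), using the induction hypothesis on the body under substitution of a computable argument. For application $e_1\,e_2$, frame rule (F) reduces inside the frames $\square\,e$ and $v\,\square$; the subtlety is that (F) adjusts the loss continuation from $\myg$ to $\lambda x.\,\then{\epsilon}{F[x]}{\myg}$, so I must verify this adjusted continuation lands in the appropriate set $G$ so the induction hypothesis on the subexpression applies. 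This is where the precise definition of $G$-computability as parameterized over a \emph{set} of continuations (rather than a single one) earns its keep. For the handler cases I would use (R5), (R6), and (S1), checking that the constructed $f_k$ and $f_l$ are computable values by feeding them arbitrary computable arguments and unfolding the localisation constructs, relying on termination being implicitly forced by the decreasing measure on effect levels.

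The main obstacle, and the reason the well-foundedness assumption is essential, is the operation case, clause (2b), and its interaction with the handler reduction (R5). When an operation $\op\type\outt\xrightarrow{\ell}\inn$ is performed, the choice continuation $f_l$ and delimited continuation $f_k$ are built from the \emph{handled} computation, and to show $f_l$ is loss computable and $f_k$ is a computable value I need their bodies to be computable; but those bodies re-invoke handling on expressions that may themselves perform $\ell$-operations. Without control this is exactly the $\cow$/$\mo$ divergence. The well-foundedness ordering $\op\type\outt\xrightarrow{\ell_j}\inn\wedge\ell_i\in e(\outt)\cup e(\inn)\implies i<j$ guarantees that the input and output types $\outt,\inn$ of an $\ell$-operation mention only strictly lower-level effects, so the effect level $l(\inn)$ feeding into the continuation argument is strictly below $l(\ell)$. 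This makes the lexicographic measure $m$ strictly decrease when we pass from the operation expression to the continuation applied to an argument of type $\inn$, which both justifies the well-definedness of the computability predicates and lets the induction go through on the operation case.

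I would also need auxiliary closure facts: that $G$-computability is closed under the $\then{\epsilon}{\cdot}{\myg_1}$, $\glocal{\epsilon}{\cdot}{\myg_1}$, and $\reset{\cdot}$ special frames, matching rules (S2)--(S4) and the redex rules (R7)--(R9), and that R-computability of real expressions is preserved as losses are accumulated via $r + (\then{\epsilon_1}{e'}{\myg_1})$ in (S2). These are routine given the measure argument, but the $\then{}{}{}$ case requires care because (S2) evaluates the body under $\myg_1$ rather than the ambient continuation, matching the definition of loss continuation extension; I would verify that the set of continuations witnessing $G$-computability is chosen consistently with this re-anchoring. With part (1) (loss computability of every loss continuation) and part (2) (computability of every expression) established together by the single induction, the Fundamental Lemma follows, and termination is an immediate corollary since every computable expression reduces to a terminal expression.
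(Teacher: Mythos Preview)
Your overall plan---an outer induction on the expression, supported by closure lemmas for each syntactic form---matches the paper's. But a key technical tool is missing. The closure lemmas for frames (regular frames $F[\cdot]$, the special frames for $\thensymbol$, $\glocal{}{}{}$, $\mathbf{reset}$, and especially the handler frame $\wph{h}{v}{\cdot}$) cannot be proved by reduction-chasing via clause~(2c) alone: a small step can increase expression size, so no size- or derivation-based induction on $e$ closes the argument. The paper instead exploits that $G$-computability is a \emph{least} fixed point and extracts an induction principle (``$G$-induction''): to show a property $P$ of all $G$-computable $e$, verify $P$ at computable values, at stuck $K[\op(v)]$ (assuming $P(K[v_1])$ for all computable $v_1$), and at non-terminal $e$ (assuming $P(e')$ whenever $\myg\vdash_\epsilon e\to e'$ for $\myg\in G$). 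The frame lemma then reads: fix $\myg\in L$, set $\myg' = \lambda x.\,\then{}{F[x]}{\myg}$, check $\myg'$ is loss-computable, and prove by $\{\myg'\}$-induction on $e$ that $F[e]$ is $\{\myg\}$-computable. The handler lemma is similar but subtler: in the stuck case with $\op\in h$, rule~(R5) fires, and showing $f_k$ and $f_l$ are computable values requires precisely the $G$-induction hypothesis that $\wph{h}{v_1}{K[w_1]}$ is already computable for all computable $v_1,w_1$. Your proposal never names this principle; ``verifying the adjusted continuation lands in $G$'' is necessary but not sufficient.

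You also conflate two roles of the well-foundedness assumption. It is essential to make the computability \emph{predicates} well-defined (the measure $m$ decreases around each definitional loop), but it is not invoked directly in the Fundamental Lemma's induction. That induction is on the \emph{size} of $e$ (not the typing derivation---size is what makes the stuck case work, since $K[x]$ is strictly smaller than $K[\op(v)]$), and the handler case is discharged by $G$-induction as above, not by an effect-level decrease. Your phrase ``relying on termination being implicitly forced by the decreasing measure'' is circular: termination is the corollary you are proving, not an input.
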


We can deduce termination from computability. 

\begin{theorem} [Termination] \label{thm:term}
For $e_1\type \sigma \etype  \epsilon$ and $\myg \type \sigma\to \bool\etype \epsilon'$ with $\epsilon'\subseteq \epsilon$, there are no infinite sequences:
$\myg \vdash e_1 \xrightarrow{r_1} e_2 \xrightarrow{r_2} \dots \xrightarrow{r_{n-2}} e_{n-1 }\xrightarrow{r_{n-1}} e_n \dots$
\end{theorem}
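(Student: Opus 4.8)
The plan is to derive Termination (\Cref{thm:term}) directly from the Fundamental Lemma (\Cref{lem:fundamental}), rather than by constructing a fresh termination measure. The key idea is that computability, as defined in clauses (1)--(4), is engineered so that it is both preserved under reduction \emph{and} incompatible with an infinite reduction sequence. So the real work is to show that the notion of $G$-computability used in clause (2) forbids infinite runs; everything else is plumbing.

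First I would record the consequence of \Cref{lem:fundamental}(2): the top-level expression $e_1\type\sigma\etype\epsilon$ is computable, meaning (clause 4) that it is $G$-computable for the set $G$ of all closed loss-computable continuations $\myg\type\sigma\to\real\etype\epsilon'$ with $\epsilon'\subseteq\epsilon$. To put the given $\myg\type\sigma\to\bool\etype\epsilon'$ into play I would first note that $\bool$ is (or is treated as) a basic type, so $\myg$ is a loss continuation of the required shape; by \Cref{lem:fundamental}(1) it is loss computable, hence $\myg\in G$. Thus $e_1$ is $G$-computable for a $G$ containing exactly the $\myg$ of the theorem statement.

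The central step is an induction on the inductive (least-fixed-point) structure of the $G$-computability predicate $P_{\sigma,\epsilon}$. I would prove, by that induction, the statement: \emph{if $P_{\sigma,\epsilon}(e)$ holds and $\myg\in G$, then there is no infinite sequence $\myg\vdash_\epsilon e\xrightarrow{r_1}e_2\xrightarrow{r_2}\cdots$}. The three generating cases of $P_{\sigma,\epsilon}$ discharge this cleanly. In case (a), $e$ is a computable value, which by \Cref{thm:dps}(1) is terminal and makes no transition, so no infinite sequence begins. In case (b), $e=K[\op(v)]$ is a stuck/operation expression; again by \Cref{thm:dps}(1) it is terminal and cannot step, so the sequence cannot even start. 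In case (c), the hypothesis gives us that whenever $\myg\vdash_\epsilon e\xrightarrow{r}e'$, the inductive-predicate instance $P_{\sigma,\epsilon}(e')$ holds; since determinism (\Cref{thm:dps}(2)) makes the first step unique, any infinite sequence out of $e$ factors as one step to a unique $e'$ followed by an infinite sequence out of $e'$, and $e'$ again satisfies $P_{\sigma,\epsilon}$ with strictly smaller position in the inductive construction, so the induction hypothesis rules it out. Applying this to $e_1$ and the given $\myg$ yields the theorem.

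The main obstacle is the subtlety of the induction principle in case (c): the predicate $P_{\sigma,\epsilon}$ is defined as a \emph{least} property, so membership of $e$ is witnessed by a well-founded derivation, and I must make the induction track that well-foundedness rather than the (non-decreasing) syntactic size of expressions --- the syntactic size can grow under reduction (e.g.\ rule (R5) expands a handled operation into a term mentioning $f_k$ and $f_l$), which is exactly why a naive structural measure fails and why the authors route termination through computability. I would therefore phrase the induction as rule induction on the inductive definition of $P_{\sigma,\epsilon}$, so that in case (c) the reduct $e'$ is supplied together with its own (smaller) membership witness, and the induction hypothesis applies to it directly. A secondary point to check is that the effect annotation $\epsilon$ stays fixed along the sequence: type safety (\Cref{thm:dps}(4)) guarantees the reducts keep type $\sigma\etype\epsilon$, and each $G$-computability witness is indexed by the same $(\sigma,\epsilon)$, so the predicate and the continuation set $G$ remain the correct ones at every step.
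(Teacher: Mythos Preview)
Your proposal is correct and follows essentially the same route as the paper's proof: use the Fundamental Lemma to see that $\myg$ is loss-computable and $e_1$ is computable, then do rule induction on the least-fixed-point definition of $G$-computability to rule out infinite reduction sequences. The only cosmetic difference is that the paper first narrows from $L_{\sigma,\epsilon}$-computability to $\{\myg\}$-computability (using the monotonicity remark) and then does $\{\myg\}$-induction, whereas you carry the full $G$ through the induction and instantiate at $\myg\in G$ in case~(c); these are equivalent, and your invocation of determinism in case~(c) is harmless but unnecessary.
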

Combining this with Theorem~\ref{thm:safe} we obtain:
\begin{theorem} \label{thm:bterm}
For $e\type \sigma \etype  \epsilon$ and $\myg \type \sigma\to \bool\etype \epsilon'$ with $\epsilon'\subseteq \epsilon$ we have $\myg \vdash e \xRightarrow{r} w$ for a unique $r \in \R$ and terminal expression $w$ (and then $w\type \sigma\etype \epsilon$).
\end{theorem}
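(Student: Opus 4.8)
The plan is to obtain \Cref{thm:bterm} as a direct combination of the two results immediately preceding it, namely the termination theorem (\Cref{thm:term}) and the big-step safety corollary (\Cref{thm:safe}). Neither the existence nor the uniqueness requires any fresh machinery; the work is to show that the big-step judgment $\myg \vdash e \bevalto{r} w$ holds for \emph{some} pair $(r,w)$ and that this pair is \emph{unique}, and then to invoke safety for the typing of $w$.

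First I would establish existence. Starting from $e\type \sigma \etype \epsilon$ and the loss continuation $\myg\type\sigma\to\bool\etype\epsilon'$ with $\epsilon'\subseteq\epsilon$, I repeatedly apply \Cref{thm:dps}(3) (Progress): as long as the current expression is non-terminal, it makes a small-step transition under $\myg$, and by \Cref{thm:dps}(4) (Type safety) the resulting expression again has type $\sigma\etype\epsilon$, so the side conditions needed to reapply Progress are preserved. This generates a (possibly finite) reduction sequence $e = e_1 \xrightarrow{r_1} e_2 \xrightarrow{r_2} \cdots$. By \Cref{thm:term}, no such sequence can be infinite, so it must terminate at some $e_n$ which admits no further transition; by Progress (contrapositive) $e_n$ is a terminal expression $w$. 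Summing the losses along the sequence and reading off the two big-step rules in \Cref{fig:bigeval} (the reflexive rule for terminals, chained with the step rule), I get $\myg \vdash e \bevalto{r} w$ with $r = \sum_{i} r_i$, witnessing existence.

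Next I would handle uniqueness and typing, which are exactly what \Cref{thm:safe} supplies: it states that there is \emph{at most one} such $(r,w)$ and that any such $w$ satisfies $w\type\sigma\etype\epsilon$. (The hypotheses there are literally those of the present theorem, modulo the cosmetic point that \Cref{thm:safe} writes $\bool$ where the target of $\myg$ is $\real$-valued elsewhere; I would keep the statement consistent with the corollary.) Combining the ``at least one'' from the existence argument with the ``at most one'' and typing from \Cref{thm:safe} gives ``exactly one'' together with $w\type\sigma\etype\epsilon$, which is precisely the claim.

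The only subtlety — and the place I would be most careful — is verifying that the side conditions of Progress and Type safety are genuinely maintained along the whole sequence, so that \Cref{thm:term} applies to exactly the sequence I construct. Concretely, I must check that the same $\myg$ (with the invariant $\epsilon'\subseteq\epsilon$ and target type $\sigma$) remains a legitimate loss continuation for $e_{i+1}$ after each step; this follows because Type safety preserves the type $\sigma\etype\epsilon$ of the running expression, and the hypothesis on $\myg$ does not change. No genuinely hard step arises: the theorem is essentially a bookkeeping assembly of Progress, Determinism/safety, and Termination, with \Cref{thm:term} carrying all the real mathematical weight.
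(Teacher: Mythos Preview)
Your proposal is correct and matches the paper's approach exactly: the paper merely says ``Combining this with Theorem~\ref{thm:safe} we obtain'' (where ``this'' is the Termination theorem), and you have spelled out precisely that combination, using Progress and Type safety to construct a terminating small-step sequence and then invoking \Cref{thm:safe} for uniqueness and typing. Your remark about the $\bool$/$\real$ mismatch is also apt; it is a typo in the paper's statements, not a substantive issue.
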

\noindent
The corollary covers any effect multiset $\epsilon$; when $\epsilon$ is empty, the terminal is a value by the well-typing.

\makeatletter
\@ifundefined{lhs2tex.lhs2tex.sty.read}%
  {\@namedef{lhs2tex.lhs2tex.sty.read}{}%
   \newcommand\SkipToFmtEnd{}%
   \newcommand\EndFmtInput{}%
   \long\def\SkipToFmtEnd#1\EndFmtInput{}%
  }\SkipToFmtEnd

\newcommand\ReadOnlyOnce[1]{\@ifundefined{#1}{\@namedef{#1}{}}\SkipToFmtEnd}
\usepackage{amstext}
\usepackage{amssymb}
\usepackage{stmaryrd}
\DeclareFontFamily{OT1}{cmtex}{}
\DeclareFontShape{OT1}{cmtex}{m}{n}
  {<5><6><7><8>cmtex8
   <9>cmtex9
   <10><10.95><12><14.4><17.28><20.74><24.88>cmtex10}{}
\DeclareFontShape{OT1}{cmtex}{m}{it}
  {<-> ssub * cmtt/m/it}{}
\newcommand{\texfamily}{\fontfamily{cmtex}\selectfont}
\DeclareFontShape{OT1}{cmtt}{bx}{n}
  {<5><6><7><8>cmtt8
   <9>cmbtt9
   <10><10.95><12><14.4><17.28><20.74><24.88>cmbtt10}{}
\DeclareFontShape{OT1}{cmtex}{bx}{n}
  {<-> ssub * cmtt/bx/n}{}
\newcommand{\tex}[1]{\text{\texfamily#1}}	%

\newcommand{\Sp}{\hskip.33334em\relax}

\newcommand{\Conid}[1]{\mathit{#1}}
\newcommand{\Varid}[1]{\mathit{#1}}
\newcommand{\anonymous}{\kern0.06em \vbox{\hrule\@width.5em}}
\newcommand{\plus}{\mathbin{+\!\!\!+}}
\newcommand{\bind}{\mathbin{>\!\!\!>\mkern-6.7mu=}}
\newcommand{\rbind}{\mathbin{=\mkern-6.7mu<\!\!\!<}}%
\newcommand{\sequ}{\mathbin{>\!\!\!>}}
\renewcommand{\leq}{\leqslant}
\renewcommand{\geq}{\geqslant}
\usepackage{polytable}

\@ifundefined{mathindent}%
  {\newdimen\mathindent\mathindent\leftmargini}%
  {}%

\def\resethooks{%
  \global\let\SaveRestoreHook\empty
  \global\let\ColumnHook\empty}
\newcommand*{\savecolumns}[1][default]%
  {\g@addto@macro\SaveRestoreHook{\savecolumns[#1]}}
\newcommand*{\restorecolumns}[1][default]%
  {\g@addto@macro\SaveRestoreHook{\restorecolumns[#1]}}
\newcommand*{\aligncolumn}[2]%
  {\g@addto@macro\ColumnHook{\column{#1}{#2}}}

\resethooks

\newcommand{\onelinecommentchars}{\quad-{}- }
\newcommand{\commentbeginchars}{\enskip\{-}
\newcommand{\commentendchars}{-\}\enskip}

\newcommand{\visiblecomments}{%
  \let\onelinecomment=\onelinecommentchars
  \let\commentbegin=\commentbeginchars
  \let\commentend=\commentendchars}

\newcommand{\invisiblecomments}{%
  \let\onelinecomment=\empty
  \let\commentbegin=\empty
  \let\commentend=\empty}

\visiblecomments

\newlength{\blanklineskip}
\setlength{\blanklineskip}{0.66084ex}

\newcommand{\hsindent}[1]{\quad}%
\let\hspre\empty
\let\hspost\empty
\newcommand{\NB}{\textbf{NB}}
\newcommand{\Todo}[1]{$\langle$\textbf{To do:}~#1$\rangle$}

\EndFmtInput
\makeatother
\ReadOnlyOnce{polycode.fmt}%
\makeatletter

\newcommand{\hsnewpar}[1]%
  {{\parskip=0pt\parindent=0pt\par\vskip #1\noindent}}

\newcommand{\hscodestyle}{}

\newcommand{\sethscode}[1]%
  {\expandafter\let\expandafter\hscode\csname #1\endcsname
   \expandafter\let\expandafter\endhscode\csname end#1\endcsname}

\newenvironment{compathscode}%
  {\par\noindent
   \advance\leftskip\mathindent
   \hscodestyle
   \let\\=\@normalcr
   \let\hspre\(\let\hspost\)%
   \pboxed}%
  {\endpboxed\)%
   \par\noindent
   \ignorespacesafterend}

\newcommand{\compaths}{\sethscode{compathscode}}

\newenvironment{plainhscode}%
  {\hsnewpar\abovedisplayskip
   \advance\leftskip\mathindent
   \hscodestyle
   \let\hspre\(\let\hspost\)%
   \pboxed}%
  {\endpboxed%
   \hsnewpar\belowdisplayskip
   \ignorespacesafterend}

\newenvironment{oldplainhscode}%
  {\hsnewpar\abovedisplayskip
   \advance\leftskip\mathindent
   \hscodestyle
   \let\\=\@normalcr
   \(\pboxed}%
  {\endpboxed\)%
   \hsnewpar\belowdisplayskip
   \ignorespacesafterend}

\newcommand{\plainhs}{\sethscode{plainhscode}}
\newcommand{\oldplainhs}{\sethscode{oldplainhscode}}
\plainhs

\newenvironment{arrayhscode}%
  {\hsnewpar\abovedisplayskip
   \advance\leftskip\mathindent
   \hscodestyle
   \let\\=\@normalcr
   \(\parray}%
  {\endparray\)%
   \hsnewpar\belowdisplayskip
   \ignorespacesafterend}

\newcommand{\arrayhs}{\sethscode{arrayhscode}}

\newenvironment{mathhscode}%
  {\parray}{\endparray}

\newcommand{\mathhs}{\sethscode{mathhscode}}

\newenvironment{texthscode}%
  {\(\parray}{\endparray\)}

\newcommand{\texths}{\sethscode{texthscode}}

\def\codeframewidth{\arrayrulewidth}
\RequirePackage{calc}

\newenvironment{framedhscode}%
  {\parskip=\abovedisplayskip\par\noindent
   \hscodestyle
   \arrayrulewidth=\codeframewidth
   \tabular{@{}|p{\linewidth-2\arraycolsep-2\arrayrulewidth-2pt}|@{}}%
   \hline\framedhslinecorrect\\{-1.5ex}%
   \let\endoflinesave=\\
   \let\\=\@normalcr
   \(\pboxed}%
  {\endpboxed\)%
   \framedhslinecorrect\endoflinesave{.5ex}\hline
   \endtabular
   \parskip=\belowdisplayskip\par\noindent
   \ignorespacesafterend}

\newcommand{\framedhslinecorrect}[2]%
  {#1[#2]}

\newcommand{\framedhs}{\sethscode{framedhscode}}

\newenvironment{inlinehscode}%
  {\(\def\column##1##2{}%
   \let\>\undefined\let\<\undefined\let\\\undefined
   \newcommand\>[1][]{}\newcommand\<[1][]{}\newcommand\\[1][]{}%
   \def\fromto##1##2##3{##3}%
   \def\nextline{}}{\) }%

\newcommand{\inlinehs}{\sethscode{inlinehscode}}

\newenvironment{joincode}%
  {\let\orighscode=\hscode
   \let\origendhscode=\endhscode
   \def\endhscode{\def\hscode{\endgroup\def\@currenvir{hscode}\\}\begingroup}
   \orighscode\def\hscode{\endgroup\def\@currenvir{hscode}}}%
  {\origendhscode
   \global\let\hscode=\orighscode
   \global\let\endhscode=\origendhscode}%

\makeatother
\EndFmtInput
\ReadOnlyOnce{forall.fmt}%
\makeatletter

\let\HaskellResetHook\empty
\newcommand*{\AtHaskellReset}[1]{%
  \g@addto@macro\HaskellResetHook{#1}}
\newcommand*{\HaskellReset}{\HaskellResetHook}

\global\let\hsforallread\empty
\global\let\hsexistsread\empty

\newcommand\hsforall{\global\let\hsdot=\hsperiodonce}
\newcommand\hsexists{\global\let\hsdot=\hsperiodonce}
\newcommand*\hsperiodonce[2]{#2\global\let\hsdot=\hscompose}
\newcommand*\hscompose[2]{#1}

\AtHaskellReset{\global\let\hsdot=\hscompose}

\HaskellReset

\makeatother
\EndFmtInput

\section{Programming with the Selection Monad}
\label{sec:implementation}

Having established the operational semantics of our design, we implemented it as
an effect handler library in Haskell.
In this section
we first present the programming interface, then briefly explain the
embedding, and, lastly, present programming examples.

\subsection{Effect Handler Interface}
\label{sec:interface}

We begin with a simple example to demonstrate the programming interface. An effect is
declared as a datatype with its fields being operations. For example, the
following datatype:
\noindent
\begin{hscode}\SaveRestoreHook
\column{B}{@{}>{\hspre}l<{\hspost}@{}}%
\column{E}{@{}>{\hspre}l<{\hspost}@{}}%
\>[B]{}[\mykeyword{effect}|\;\;\keyword{data}\;\conid{NDet}\mathrel{=}\conid{NDet}\;\{\mskip1.5mu \varid{decide}\mathbin{::}\conid{Op}\;()\;\conid{Bool}\mskip1.5mu\}\mskip1.5mu]{}\<[E]%
\ColumnHook
\end{hscode}\resethooks
\noindent
declares a \ensuremath{\conid{NDet}} effect (\Cref{sec:overview-effect}) with an operation
\ensuremath{\varid{decide}} from \ensuremath{()} to \ensuremath{\conid{Bool}}.
This embedding uses a
Template Haskell interface (similar to \citet{Kammar13action}) to reduce 
burdensome syntax.

We can perform an operation and handle an effectful program as follows.
A handler 
\begin{wrapfigure}{r}{0.49\textwidth}
\vspace{-15pt}
\begin{hscode}\SaveRestoreHook
\column{B}{@{}>{\hspre}l<{\hspost}@{}}%
\column{3}{@{}>{\hspre}l<{\hspost}@{}}%
\column{5}{@{}>{\hspre}l<{\hspost}@{}}%
\column{15}{@{}>{\hspre}l<{\hspost}@{}}%
\column{18}{@{}>{\hspre}l<{\hspost}@{}}%
\column{E}{@{}>{\hspre}l<{\hspost}@{}}%
\>[B]{}\varid{pgm}\mathrel{=}{}\<[E]%
\\
\>[B]{}\hsindent{3}{}\<[3]%
\>[3]{}\mykeyword{handlerRet}\;{}\<[15]%
\>[15]{}(\lambda \varid{x}\to \mykeyword{return}\;[\mskip1.5mu \varid{x}\mskip1.5mu])\;{}\<[E]%
\\
\>[15]{}(\conid{NDet}\;\{\mskip1.5mu \effect{decide}\mathrel{=}\mykeyword{operation}\;(\lambda \varid{x}\;\varid{l}\;\varid{k}\to {}\<[E]%
\\
\>[15]{}\hsindent{3}{}\<[18]%
\>[18]{}(\plus )\mathop{{\langle}{\$}{\rangle}}\varid{k}\;\conid{True}\mathop{{\langle}{*}{\rangle}}\varid{k}\;\conid{False})\mskip1.5mu\})\;\mathbin{\$}{}\<[E]%
\\
\>[3]{}\hsindent{2}{}\<[5]%
\>[5]{}\keyword{do}\;\varid{y}\leftarrow \mykeyword{perform}\;\effect{decide}\;();\mykeyword{return}\;(\varid{not}\;\varid{y}){}\<[E]%
\ColumnHook
\end{hscode}\resethooks
\vspace{-20pt}
\end{wrapfigure}
(\ensuremath{\conid{NDet}\;\{\mskip1.5mu \varid{decide}\mathrel{=}\mykeyword{operation}\;(\mathbin{...})\mskip1.5mu\}}) is simply an instance of the data
type with field \ensuremath{\varid{decide}}. The function
\ensuremath{\mykeyword{operation}} takes a lambda expression \ensuremath{(\lambda \varid{x}\;\varid{l}\;\varid{k}\to \varid{e})} and returns type \ensuremath{\conid{Op}}, %
whose arguments are, respectively, the operation argument, the choice continuation, and the delimited
continuation.
The function \ensuremath{\mykeyword{handlerRet}} takes a return clause, a handler definition, and the
computation to be handled. We can also use \ensuremath{\mykeyword{handler}} without a
return clause.
The implementation also supports parameterized handlers.

Finally, a computation is written in a \ensuremath{\keyword{do}} block. This can invoke operations
using \ensuremath{\mykeyword{perform}}, by providing an operation and its argument (in this case
\ensuremath{\varid{decide}} and \ensuremath{()}). We can run the program by calling
\ensuremath{\mykeyword{runSel}}. For example, \ensuremath{\mykeyword{runSel}\;\varid{pgm}} returns \ensuremath{[\mskip1.5mu \conid{False},\conid{True}\mskip1.5mu]}.

\subsection{The Selection Monad}

We define the key datatype \ensuremath{\conid{Sel}\;\varid{r}\;\varid{e}\;\varid{a}}  implementing the programming interface:
 the
loss type \ensuremath{\varid{r}} is any
\ensuremath{\conid{Monoid}}
(not just a specific numerical type),
\ensuremath{\varid{e}} is the program's effect, and
\ensuremath{\varid{a}} is it's type:

\begin{hscode}\SaveRestoreHook
\column{B}{@{}>{\hspre}l<{\hspost}@{}}%
\column{E}{@{}>{\hspre}l<{\hspost}@{}}%
\>[B]{}\keyword{newtype}\;\conid{Sel}\;\varid{r}\;\varid{e}\;\varid{a}\mathrel{=}\conid{Sel}\;\{\mskip1.5mu \varid{unSel}\mathbin{::}\conid{Monoid}\;\varid{r}\Rightarrow (\varid{a}\to \conid{Eff}\;\varid{r}\;\varid{e}\;\varid{r})\to \conid{Eff}\;\varid{r}\;\varid{e}\;(\varid{r},\varid{a}))\mskip1.5mu\}{}\<[E]%
\ColumnHook
\end{hscode}\resethooks

\noindent
It takes a loss continuation \ensuremath{(\varid{a}\to \conid{Eff}\;\varid{r}\;\varid{e}\;\varid{r})} and returns a loss-value tuple \ensuremath{(\varid{r},\varid{a})}.
(Note that the definition corresponds to the semantic model
$S_\epsilon(X) = (X \to F_\epsilon(\R)) \to F_\epsilon(W(X))$ (\Cref{sec:selmon}).)
We use the \ensuremath{\conid{Eff}} datatype to represent effectful programs.
Our design is independent of the concrete strategy used for implementing effect handlers.
We
implemented them using \textit{multi-prompt delimited
  continuations}~\cite{dyvbig_jones_sabry_2007, Xie2021multip}. This
implementation closely follows the operational semantics of effect handlers.
For example, we can define \ensuremath{\effect{loss}} as follows:
\begin{hscode}\SaveRestoreHook
\column{B}{@{}>{\hspre}l<{\hspost}@{}}%
\column{E}{@{}>{\hspre}l<{\hspost}@{}}%
\>[B]{}\effect{loss}\;\varid{r}\mathrel{=}\conid{Sel}\mathbin{\$}\lambda \anonymous \to \mykeyword{return}\;(\varid{r},()){}\<[E]%
\ColumnHook
\end{hscode}\resethooks
The definition corresponds to rule $(R4)$ in \Cref{fig:redexrule},
which ignores the loss continuation, produces a loss \ensuremath{r}, and returns a unit value.

\begin{wrapfigure}{r}{0.43\textwidth}
\vspace{-18pt}
\begin{hscode}\SaveRestoreHook
\column{B}{@{}>{\hspre}l<{\hspost}@{}}%
\column{3}{@{}>{\hspre}l<{\hspost}@{}}%
\column{5}{@{}>{\hspre}l<{\hspost}@{}}%
\column{14}{@{}>{\hspre}l<{\hspost}@{}}%
\column{19}{@{}>{\hspre}l<{\hspost}@{}}%
\column{37}{@{}>{\hspre}l<{\hspost}@{}}%
\column{E}{@{}>{\hspre}l<{\hspost}@{}}%
\>[B]{}\keyword{instance}\;\conid{Monad}\;(\conid{Sel}\;\varid{r}\;\varid{e})\;\keyword{where}{}\<[E]%
\\
\>[B]{}\hsindent{3}{}\<[3]%
\>[3]{}\mykeyword{return}\;\varid{x}{}\<[14]%
\>[14]{}\mathrel{=}\conid{Sel}\;(\lambda \varid{g}\to \mykeyword{return}\;(\varid{mempty},\varid{x})){}\<[E]%
\\
\>[B]{}\hsindent{3}{}\<[3]%
\>[3]{}\varid{e}\bind \varid{f}{}\<[19]%
\>[19]{}\mathrel{=}\conid{Sel}\mathbin{\$}\lambda \varid{g}\to \keyword{do}\;{}\<[E]%
\\
\>[3]{}\hsindent{5}{}\<[5]%
\>[5]{}(\varid{r1},\varid{a})\leftarrow \varid{unSel}\;\varid{e}\;(\lambda \varid{a}\to (\varid{f}\;\varid{a})\triangleright\varid{g}){}\<[E]%
\\
\>[5]{}(\varid{r2},\varid{b})\leftarrow \varid{unSel}\;(\varid{f}\;\varid{a})\;\varid{g}{}\<[E]%
\\
\>[5]{}\mykeyword{return}\;(\varid{r1}\text{\textless\textgreater}\varid{r2},\varid{b}){}\<[E]%
\ColumnHook
\end{hscode}\resethooks
\vspace{-25pt}
\end{wrapfigure}
We present the monad instance declaration for \ensuremath{\conid{Sel}} on the right.
The definition requires some explanations. The \ensuremath{\mykeyword{return}} definition is
straightforward: we ignore the loss continuation and the handler
context, and return a pure tuple with a zero loss.
This corresponds to the evaluation of
terminal expressions (\Cref{fig:bigeval}).
The bind definition corresponds to rule $(F)$ in \Cref{fig:redexrule}.
First,
given the loss continuation \ensuremath{\varid{g}}, we would
first like to evaluate \ensuremath{\varid{e}}. However,
we first need to extend the loss continuation.
Here, \ensuremath{\varid{g}} is a loss continuation
for \ensuremath{(\varid{e} \bind \varid{f})}, not for \ensuremath{\varid{e}}.
Therefore, we transform the loss continuation
where $\triangleright$ implements the \textit{then} operator,
and evaluate \ensuremath{\varid{e}} to the extended loss continuation.
The result of evaluating \ensuremath{\varid{e}} is then passed to \ensuremath{\varid{f}},
where \ensuremath{\varid{f}} takes \ensuremath{\varid{a}} and the loss continuation \ensuremath{\varid{g}},
yielding a loss \ensuremath{\varid{r2}}
and a value \ensuremath{\varid{b}}.
Lastly, the two losses are 
combined (\ensuremath{\varid{r1}\text{\textless\textgreater} \varid{r2}}), again according to the big-step operational semantics,
and the final computation result is
\ensuremath{\varid{b}}.

\makeatletter
\@ifundefined{lhs2tex.lhs2tex.sty.read}%
  {\@namedef{lhs2tex.lhs2tex.sty.read}{}%
   \newcommand\SkipToFmtEnd{}%
   \newcommand\EndFmtInput{}%
   \long\def\SkipToFmtEnd#1\EndFmtInput{}%
  }\SkipToFmtEnd

\newcommand\ReadOnlyOnce[1]{\@ifundefined{#1}{\@namedef{#1}{}}\SkipToFmtEnd}
\usepackage{amstext}
\usepackage{amssymb}
\usepackage{stmaryrd}
\DeclareFontFamily{OT1}{cmtex}{}
\DeclareFontShape{OT1}{cmtex}{m}{n}
  {<5><6><7><8>cmtex8
   <9>cmtex9
   <10><10.95><12><14.4><17.28><20.74><24.88>cmtex10}{}
\DeclareFontShape{OT1}{cmtex}{m}{it}
  {<-> ssub * cmtt/m/it}{}
\newcommand{\texfamily}{\fontfamily{cmtex}\selectfont}
\DeclareFontShape{OT1}{cmtt}{bx}{n}
  {<5><6><7><8>cmtt8
   <9>cmbtt9
   <10><10.95><12><14.4><17.28><20.74><24.88>cmbtt10}{}
\DeclareFontShape{OT1}{cmtex}{bx}{n}
  {<-> ssub * cmtt/bx/n}{}
\newcommand{\tex}[1]{\text{\texfamily#1}}	%

\newcommand{\Sp}{\hskip.33334em\relax}

\newcommand{\Conid}[1]{\mathit{#1}}
\newcommand{\Varid}[1]{\mathit{#1}}
\newcommand{\anonymous}{\kern0.06em \vbox{\hrule\@width.5em}}
\newcommand{\plus}{\mathbin{+\!\!\!+}}
\newcommand{\bind}{\mathbin{>\!\!\!>\mkern-6.7mu=}}
\newcommand{\rbind}{\mathbin{=\mkern-6.7mu<\!\!\!<}}%
\newcommand{\sequ}{\mathbin{>\!\!\!>}}
\renewcommand{\leq}{\leqslant}
\renewcommand{\geq}{\geqslant}
\usepackage{polytable}

\@ifundefined{mathindent}%
  {\newdimen\mathindent\mathindent\leftmargini}%
  {}%

\def\resethooks{%
  \global\let\SaveRestoreHook\empty
  \global\let\ColumnHook\empty}
\newcommand*{\savecolumns}[1][default]%
  {\g@addto@macro\SaveRestoreHook{\savecolumns[#1]}}
\newcommand*{\restorecolumns}[1][default]%
  {\g@addto@macro\SaveRestoreHook{\restorecolumns[#1]}}
\newcommand*{\aligncolumn}[2]%
  {\g@addto@macro\ColumnHook{\column{#1}{#2}}}

\resethooks

\newcommand{\onelinecommentchars}{\quad-{}- }
\newcommand{\commentbeginchars}{\enskip\{-}
\newcommand{\commentendchars}{-\}\enskip}

\newcommand{\visiblecomments}{%
  \let\onelinecomment=\onelinecommentchars
  \let\commentbegin=\commentbeginchars
  \let\commentend=\commentendchars}

\newcommand{\invisiblecomments}{%
  \let\onelinecomment=\empty
  \let\commentbegin=\empty
  \let\commentend=\empty}

\visiblecomments

\newlength{\blanklineskip}
\setlength{\blanklineskip}{0.66084ex}

\newcommand{\hsindent}[1]{\quad}%
\let\hspre\empty
\let\hspost\empty
\newcommand{\NB}{\textbf{NB}}
\newcommand{\Todo}[1]{$\langle$\textbf{To do:}~#1$\rangle$}

\EndFmtInput
\makeatother
\ReadOnlyOnce{polycode.fmt}%
\makeatletter

\newcommand{\hsnewpar}[1]%
  {{\parskip=0pt\parindent=0pt\par\vskip #1\noindent}}

\newcommand{\hscodestyle}{}

\newcommand{\sethscode}[1]%
  {\expandafter\let\expandafter\hscode\csname #1\endcsname
   \expandafter\let\expandafter\endhscode\csname end#1\endcsname}

\newenvironment{compathscode}%
  {\par\noindent
   \advance\leftskip\mathindent
   \hscodestyle
   \let\\=\@normalcr
   \let\hspre\(\let\hspost\)%
   \pboxed}%
  {\endpboxed\)%
   \par\noindent
   \ignorespacesafterend}

\newcommand{\compaths}{\sethscode{compathscode}}

\newenvironment{plainhscode}%
  {\hsnewpar\abovedisplayskip
   \advance\leftskip\mathindent
   \hscodestyle
   \let\hspre\(\let\hspost\)%
   \pboxed}%
  {\endpboxed%
   \hsnewpar\belowdisplayskip
   \ignorespacesafterend}

\newenvironment{oldplainhscode}%
  {\hsnewpar\abovedisplayskip
   \advance\leftskip\mathindent
   \hscodestyle
   \let\\=\@normalcr
   \(\pboxed}%
  {\endpboxed\)%
   \hsnewpar\belowdisplayskip
   \ignorespacesafterend}

\newcommand{\plainhs}{\sethscode{plainhscode}}
\newcommand{\oldplainhs}{\sethscode{oldplainhscode}}
\plainhs

\newenvironment{arrayhscode}%
  {\hsnewpar\abovedisplayskip
   \advance\leftskip\mathindent
   \hscodestyle
   \let\\=\@normalcr
   \(\parray}%
  {\endparray\)%
   \hsnewpar\belowdisplayskip
   \ignorespacesafterend}

\newcommand{\arrayhs}{\sethscode{arrayhscode}}

\newenvironment{mathhscode}%
  {\parray}{\endparray}

\newcommand{\mathhs}{\sethscode{mathhscode}}

\newenvironment{texthscode}%
  {\(\parray}{\endparray\)}

\newcommand{\texths}{\sethscode{texthscode}}

\def\codeframewidth{\arrayrulewidth}
\RequirePackage{calc}

\newenvironment{framedhscode}%
  {\parskip=\abovedisplayskip\par\noindent
   \hscodestyle
   \arrayrulewidth=\codeframewidth
   \tabular{@{}|p{\linewidth-2\arraycolsep-2\arrayrulewidth-2pt}|@{}}%
   \hline\framedhslinecorrect\\{-1.5ex}%
   \let\endoflinesave=\\
   \let\\=\@normalcr
   \(\pboxed}%
  {\endpboxed\)%
   \framedhslinecorrect\endoflinesave{.5ex}\hline
   \endtabular
   \parskip=\belowdisplayskip\par\noindent
   \ignorespacesafterend}

\newcommand{\framedhslinecorrect}[2]%
  {#1[#2]}

\newcommand{\framedhs}{\sethscode{framedhscode}}

\newenvironment{inlinehscode}%
  {\(\def\column##1##2{}%
   \let\>\undefined\let\<\undefined\let\\\undefined
   \newcommand\>[1][]{}\newcommand\<[1][]{}\newcommand\\[1][]{}%
   \def\fromto##1##2##3{##3}%
   \def\nextline{}}{\) }%

\newcommand{\inlinehs}{\sethscode{inlinehscode}}

\newenvironment{joincode}%
  {\let\orighscode=\hscode
   \let\origendhscode=\endhscode
   \def\endhscode{\def\hscode{\endgroup\def\@currenvir{hscode}\\}\begingroup}
   \orighscode\def\hscode{\endgroup\def\@currenvir{hscode}}}%
  {\origendhscode
   \global\let\hscode=\orighscode
   \global\let\endhscode=\origendhscode}%

\makeatother
\EndFmtInput
\ReadOnlyOnce{forall.fmt}%
\makeatletter

\let\HaskellResetHook\empty
\newcommand*{\AtHaskellReset}[1]{%
  \g@addto@macro\HaskellResetHook{#1}}
\newcommand*{\HaskellReset}{\HaskellResetHook}

\global\let\hsforallread\empty
\global\let\hsexistsread\empty

\newcommand\hsforall{\global\let\hsdot=\hsperiodonce}
\newcommand\hsexists{\global\let\hsdot=\hsperiodonce}
\newcommand*\hsperiodonce[2]{#2\global\let\hsdot=\hscompose}
\newcommand*\hscompose[2]{#1}

\AtHaskellReset{\global\let\hsdot=\hscompose}

\HaskellReset

\makeatother
\EndFmtInput

\subsection{Examples}
\label{sec:examples}

\paragraph{\textbf{Example: Greedy algorithms}}
A greedy selection strategy always picks the choice that maximizes (or minimizes) losses.
We define the \ensuremath{\conid{Max}} effect with a \ensuremath{\varid{max}} operation; a corresponding handler can selects the element maximizing the loss,
where \ensuremath{\varid{maxWith}} implements $\mathsf{argmax}$:
\begin{hscode}\SaveRestoreHook
\column{B}{@{}>{\hspre}l<{\hspost}@{}}%
\column{77}{@{}>{\hspre}c<{\hspost}@{}}%
\column{77E}{@{}l@{}}%
\column{E}{@{}>{\hspre}l<{\hspost}@{}}%
\>[B]{}[\mykeyword{effect}|\;\;\keyword{data}\;\conid{Max}\mathrel{=}\conid{Max}\;\{\mskip1.5mu \varid{max}\mathbin{::}\conid{Op}\;[\mskip1.5mu \varid{a}\mskip1.5mu]\;\varid{a}\mskip1.5mu\}\mskip1.5mu]\;{}\<[E]%
\\
\>[B]{}\varid{hmax}\mathrel{=}\mykeyword{handler}\;\conid{Max}\;\{\mskip1.5mu \varid{max}\mathrel{=}\mykeyword{operation}\;(\lambda \varid{x}\;\varid{l}\;\varid{k}\to \keyword{do}\;\varid{b}\leftarrow \varid{maxWith}\;\varid{l}\;\varid{x};\varid{k}\;\varid{b}){}\<[77]%
\>[77]{}\mskip1.5mu\}{}\<[77E]%
\ColumnHook
\end{hscode}\resethooks
As an example, we can define criteria for selecting a \ensuremath{\conid{String}} based
on its length and the number of distinct characters, with greater losses (really, rewards) for better strings:
\begin{hscode}\SaveRestoreHook
\column{B}{@{}>{\hspre}l<{\hspost}@{}}%
\column{E}{@{}>{\hspre}l<{\hspost}@{}}%
\>[B]{}\varid{len}\;\varid{x}\mathrel{=}\effect{loss}\;(\varid{fromIntegral}\;(\varid{length}\;\varid{x})){}\<[E]%
\\
\>[B]{}\varid{distinct}\;\varid{x}\mathrel{=}\keyword{let}\;\varid{i}\mathrel{=}\varid{fromIntegral}\;(\varid{length}\;(\varid{group}\;(\varid{sort}\;\varid{x})))\;\keyword{in}\;\effect{loss}\;(\varid{i}\mathbin{*}\varid{i}){}\<[E]%
\ColumnHook
\end{hscode}\resethooks

\noindent
where \ensuremath{\varid{sort}} sorts the string,
and the \ensuremath{\varid{group}} function collects consecutive identical characters into separate lists.  Thus, the number of groups is the number of distinct characters in the string.

\begin{wrapfigure}{r}{0.4\textwidth}
\vspace{-20pt}
\begin{hscode}\SaveRestoreHook
\column{B}{@{}>{\hspre}l<{\hspost}@{}}%
\column{3}{@{}>{\hspre}l<{\hspost}@{}}%
\column{E}{@{}>{\hspre}l<{\hspost}@{}}%
\>[B]{}\varid{password}\mathrel{=}\keyword{do}{}\<[E]%
\\
\>[B]{}\hsindent{3}{}\<[3]%
\>[3]{}\varid{s}\leftarrow \mykeyword{perform}\;\varid{max}\;[\mskip1.5mu \text{\ttfamily \char34 aaa\char34},\text{\ttfamily \char34 aabb\char34},\text{\ttfamily \char34 abc\char34}\mskip1.5mu]{}\<[E]%
\\
\>[B]{}\hsindent{3}{}\<[3]%
\>[3]{}\varid{len}\;\varid{s}{}\<[E]%
\\
\>[B]{}\hsindent{3}{}\<[3]%
\>[3]{}\varid{distinct}\;\varid{s}{}\<[E]%
\\
\>[B]{}\hsindent{3}{}\<[3]%
\>[3]{}\mykeyword{return}\mathbin{\$}\text{\ttfamily \char34 password~is~\char34}\plus \varid{s}{}\<[E]%
\ColumnHook
\end{hscode}\resethooks
\vspace{-25pt}
\end{wrapfigure}

We can then define a program \ensuremath{\varid{password}} that picks a password from a list based on
these criteria,
where \ensuremath{\plus} concatenates two lists.
Using \ensuremath{\varid{hmax}} to handle \ensuremath{\varid{max}}, \ensuremath{\mykeyword{runSel}\mathbin{\$}\varid{hmax}\;\varid{password}}
returns \ensuremath{\text{\ttfamily \char34 password~is~abc\char34}}, since \ensuremath{\text{\ttfamily \char34 abc\char34}} has the greatest reward.

\paragraph{\textbf{Example: Optimizations}}

Greedy algorithms always pick the optimal option. However, it is not always
possible to enumerate all possible choices and identify the best one.

We consider optimization algorithms, specifically \textit{stochastic gradient
  descent} (SGD), a widely used method for iterative optimization. Starting with
initial parameters, SGD minimizes a cost function by repeatedly updating the
parameters in the direction opposite to their gradients, calculated after
processing each randomly selected data point.

We implement gradient descent as a handler that chooses new parameters as
follows, where the 
\begin{wrapfigure}{r}{0.58\textwidth}
\vspace{-15pt}
\begin{hscode}\SaveRestoreHook
\column{B}{@{}>{\hspre}l<{\hspost}@{}}%
\column{33}{@{}>{\hspre}l<{\hspost}@{}}%
\column{34}{@{}>{\hspre}l<{\hspost}@{}}%
\column{37}{@{}>{\hspre}l<{\hspost}@{}}%
\column{55}{@{}>{\hspre}l<{\hspost}@{}}%
\column{E}{@{}>{\hspre}l<{\hspost}@{}}%
\>[B]{}[\mykeyword{effect}|\;\;\keyword{data}\;\conid{Opt}\mathrel{=}\conid{Opt}\;\{\mskip1.5mu \effect{optimize}\mathbin{::}\conid{Op}\;[\mskip1.5mu \conid{Float}\mskip1.5mu]\;[\mskip1.5mu \conid{Float}\mskip1.5mu]\mskip1.5mu\}\mskip1.5mu]\;{}\<[E]%
\\
\>[B]{}\varid{hOpt}\mathrel{=}\mykeyword{handler}\;(\conid{Opt}\;\{\mskip1.5mu \effect{optimize}\mathrel{=}{}\<[34]%
\>[34]{}\mykeyword{operation}\;(\lambda \varid{p}\;\varid{l}\;\varid{k}\to {}\<[E]%
\\
\>[B]{}\hsindent{33}{}\<[33]%
\>[33]{}\keyword{do}\;{}\<[37]%
\>[37]{}\varid{ds}\leftarrow \varid{autodiff}\;\varid{l}\;\varid{p}{}\<[E]%
\\
\>[37]{}\keyword{let}\;\varid{p'}\mathrel{=}\varid{zipWith}\;{}\<[55]%
\>[55]{}(\lambda \varid{w}\;\varid{d}\to (\varid{w}\mathbin{-}\mathrm{0.01}\mathbin{*}\varid{d}))\;\varid{p}\;\varid{ds}{}\<[E]%
\\
\>[37]{}\varid{k}\;\varid{p'})\mskip1.5mu\}){}\<[E]%
\ColumnHook
\end{hscode}\resethooks
\vspace{-15pt}
\end{wrapfigure}
function \ensuremath{\varid{autodiff}\;\varid{f}\;\varid{x}} calculates the gradient of a
differentiable function \ensuremath{\varid{f}} at point \ensuremath{\varid{x}}.
The \ensuremath{\effect{optimize}} clause first calculates the gradient
\ensuremath{\varid{ds}} by differentiating the
choice continuation \ensuremath{\varid{l}} with respect to the parameters \ensuremath{\varid{p}}. 
It then updates the parameters using \ensuremath{\varid{zipWith}},
where \ensuremath{\mathrm{0.01}} is the \textit{learning
  rate}.
Lastly, it
resumes the
continuation with updated parameters \ensuremath{\varid{p'}}. 

As a concrete example, we use the simplest form of \textit{linear
  regression}~\cite{legendre1806nouvelles} with only one variable,
  a standard example when explaining SGD.
Given a dataset of $(x_i, y_i)_{i=1,..,n}$, where $x_i$
and $y_i$ are real numbers, the goal is to find a weight $w$ and a bias $b$ that
minimize the cost function $\sum_{i=1,..n,}(f(x_i) - y_i)^2$ for the linear
model $f(x) = wx + b$.

\begin{wrapfigure}{r}{0.4\textwidth}
\vspace{-18pt}
\begin{hscode}\SaveRestoreHook
\column{B}{@{}>{\hspre}l<{\hspost}@{}}%
\column{3}{@{}>{\hspre}l<{\hspost}@{}}%
\column{7}{@{}>{\hspre}l<{\hspost}@{}}%
\column{E}{@{}>{\hspre}l<{\hspost}@{}}%
\>[B]{}\varid{linearReg}\;[\mskip1.5mu \varid{w},\varid{b}\mskip1.5mu]\;\varid{x}\;\varid{target}\mathrel{=}{}\<[E]%
\\
\>[B]{}\hsindent{3}{}\<[3]%
\>[3]{}\keyword{do}\;{}\<[7]%
\>[7]{}[\mskip1.5mu \varid{w'},\varid{b'}\mskip1.5mu]\leftarrow \mykeyword{perform}\;\effect{optimize}\;[\mskip1.5mu \varid{w},\varid{b}\mskip1.5mu]{}\<[E]%
\\
\>[7]{}\keyword{let}\;\varid{y}\mathrel{=}\varid{w'}\mathbin{*}\varid{x}\mathbin{+}\varid{b'}{}\<[E]%
\\
\>[7]{}\effect{loss}\mathbin{\$}(\varid{target}\mathbin{-}\varid{y})\mathbin{*}(\varid{target}\mathbin{-}\varid{y}){}\<[E]%
\\
\>[7]{}\mykeyword{return}\;[\mskip1.5mu \varid{w'},\varid{b'}\mskip1.5mu]{}\<[E]%
\ColumnHook
\end{hscode}\resethooks
\vspace{-15pt}
\end{wrapfigure}

The program \ensuremath{\varid{linearReg}} on the right defines a linear regression model. It takes the
current parameters \ensuremath{\varid{w}} and \ensuremath{\varid{b}} (represented as a list) and a data
point \ensuremath{\varid{x}} and \ensuremath{\varid{target}}, and returns new parameters \ensuremath{\varid{w'}} and \ensuremath{\varid{b'}}.

The program first calls an operation $\ensuremath{\effect{optimize}}$ with the current parameters
\ensuremath{[\mskip1.5mu \varid{w},\varid{b}\mskip1.5mu]} and receives updated parameters \ensuremath{[\mskip1.5mu \varid{w'},\varid{b'}\mskip1.5mu]}. It then calculates the
predicted value \ensuremath{\varid{y}} using these new parameters and calls \ensuremath{\effect{loss}} with the
corresponding squared error. Finally, the program returns updated parameters.

We combine the \ensuremath{\varid{hOpt}} handler and the \ensuremath{\varid{linearReg}} program as follows.
\begin{hscode}\SaveRestoreHook
\column{B}{@{}>{\hspre}l<{\hspost}@{}}%
\column{E}{@{}>{\hspre}l<{\hspost}@{}}%
\>[B]{}\varid{foldM}\;(\lambda \varid{p}\;(\varid{x},\varid{y})\to \varid{lreset}\mathbin{\$}\varid{hOpt}\mathbin{\$}\varid{linearReg}\;\varid{p}\;\varid{x}\;\varid{y})\;\varid{random\char95 params}\;\varid{training\char95 data}{}\<[E]%
\ColumnHook
\end{hscode}\resethooks
The program traverses the training dataset, applying gradient descent to each
data point. Note that we apply \ensuremath{\varid{lreset}} that combines local and reset within the
loop body, so each iteration makes decisions based on its own loss. Moreover, we can introduce
a random effect to shuffle the training data, introducing stochasticity into the
process.

\paragraph{\textbf{Example: Hyperparameters}}

In the gradient descent handler, we used the learning rate $0.01$.
For training programs, variables such as the learning rate that govern the training
process are called
\begin{wrapfigure}{r}{0.49\textwidth}
\vspace{-15pt}
\begin{hscode}\SaveRestoreHook
\column{B}{@{}>{\hspre}l<{\hspost}@{}}%
\column{31}{@{}>{\hspre}l<{\hspost}@{}}%
\column{32}{@{}>{\hspre}l<{\hspost}@{}}%
\column{35}{@{}>{\hspre}l<{\hspost}@{}}%
\column{E}{@{}>{\hspre}l<{\hspost}@{}}%
\>[B]{}[\mykeyword{effect}|\;\;\keyword{data}\;\conid{LR}\mathrel{=}\conid{LR}\;\{\mskip1.5mu \effect{lrate}\mathbin{::}\conid{Op}\;()\;\conid{Float}\mskip1.5mu\}\mskip1.5mu]\;{}\<[E]%
\\
\>[B]{}\varid{gd}\mathrel{=}\mykeyword{handler}\;(\conid{Opt}\;\{\mskip1.5mu \effect{optimize}\mathrel{=}{}\<[32]%
\>[32]{}\mykeyword{operation}\;(\lambda \varid{p}\;\varid{l}\;\varid{k}\to {}\<[E]%
\\
\>[B]{}\hsindent{31}{}\<[31]%
\>[31]{}\keyword{do}\;{}\<[35]%
\>[35]{}\varid{ds}\leftarrow \varid{autodiff}\;\varid{l}\;\varid{p}{}\<[E]%
\\
\>[35]{}\alpha\leftarrow \mykeyword{perform}\;\effect{lrate}\;(){}\<[E]%
\\
\>[35]{}\keyword{let}\;\varid{p'}\mathrel{=}\varid{zipWith}\;(\lambda \varid{w}\;\varid{d}\to (\varid{w}\mathbin{-}\alpha\mathbin{*}\varid{d}))\;\varid{p}\;\varid{ds}{}\<[E]%
\\
\>[35]{}\varid{k}\;\varid{p'})\mskip1.5mu\}){}\<[E]%
\ColumnHook
\end{hscode}\resethooks
\vspace{-20pt}
\end{wrapfigure}
\textit{hyperparameters}. The process of finding their
optimal configuration is known as \textit{hyperparameter
  optimization}~\cite{feurer2019hyperparameter}.

We can abstract the learning rate as a separate effect operation as shown on the right.
A handler that always returns a pre-defined learning rate can be defined as follows:
\begin{hscode}\SaveRestoreHook
\column{B}{@{}>{\hspre}l<{\hspost}@{}}%
\column{E}{@{}>{\hspre}l<{\hspost}@{}}%
\>[B]{}\varid{readLR}\;\alpha\mathrel{=}\mykeyword{handler}\;(\conid{LR}\;\{\mskip1.5mu \effect{lrate}\mathrel{=}\mykeyword{operation}\;(\lambda \varid{x}\;\anonymous \;\varid{k}\to \varid{k}\;\alpha)\mskip1.5mu\}){}\<[E]%
\ColumnHook
\end{hscode}\resethooks

More interestingly, a handler for hyperparameter tuning can compare losses from
different configurations. As an
example, the handler below implements a simple \textit{grid
  search} that exhaustively explores
a subset of the hyperparameter space, in this case, for simplicity, two
options:
\begin{hscode}\SaveRestoreHook
\column{B}{@{}>{\hspre}l<{\hspost}@{}}%
\column{39}{@{}>{\hspre}l<{\hspost}@{}}%
\column{E}{@{}>{\hspre}l<{\hspost}@{}}%
\>[B]{}\varid{tuneLR}\;(\alpha_1,\alpha_2)\mathrel{=}\mykeyword{handlerRet}\;(\lambda \anonymous \to \mykeyword{return}\;\alpha_1){}\<[E]%
\\
\>[B]{}\conid{LR}\;\{\mskip1.5mu \effect{lrate}\mathrel{=}\mykeyword{operation}\;(\lambda \anonymous \;\varid{l}\;\anonymous \to \keyword{do}\;{}\<[39]%
\>[39]{}\varid{err1}\leftarrow \varid{l}\;\alpha_1;\varid{err2}\leftarrow \varid{l}\;\alpha_2{}\<[E]%
\\
\>[39]{}\keyword{if}\;\varid{err1}\mathbin{<}\varid{err2}\;\keyword{then}\;\mykeyword{return}\;\alpha_1\;\keyword{else}\;\mykeyword{return}\;\alpha_2)\mskip1.5mu\}{}\<[E]%
\ColumnHook
\end{hscode}\resethooks
The handler calculates the losses for the two learning rates respectively, and
returns the one with a lower loss, without resuming the computation.

\paragraph{\textbf{Example: Two-player games}}

\newcommand\pdata{{\displaystyle p_{\mathsf{data}}}}
\newcommand\pnoise{{\displaystyle p_{\mathsf{noise}}}}

\begin{wrapfigure}{r}{0.22\textwidth}
  \vspace{-14pt}
  \small
  \noindent
  \begin{tabular}{l|cc}
    \backslashbox{A}{B} & Left & Right \\
    \hline
    Left    & 5    &  3 \\
    Right  & 2    & 9 \\
  \end{tabular}
  \vspace{-20pt}
\end{wrapfigure}
A minimax game corresponds to the philosophy of minimizing potential loss in a
worst-case scenario. It involves two players: maximizer, who seeks to
maximize the loss, and minimizer, who aims to minimize it (\Cref{sec:selmon}).
As an example, consider a game with four final
states.
The losses associated with both player's decisions are shown in the table on the right.

A corresponding minimizer handler can be defined as:
\begin{hscode}\SaveRestoreHook
\column{B}{@{}>{\hspre}l<{\hspost}@{}}%
\column{3}{@{}>{\hspre}l<{\hspost}@{}}%
\column{31}{@{}>{\hspre}c<{\hspost}@{}}%
\column{31E}{@{}l@{}}%
\column{E}{@{}>{\hspre}l<{\hspost}@{}}%
\>[B]{}[\mykeyword{effect}|\;\;\keyword{data}\;\conid{Min}\;\varid{a}\mathrel{=}\conid{Min}\;\{\mskip1.5mu \varid{min}\mathbin{::}\conid{Op}\;[\mskip1.5mu \varid{a}\mskip1.5mu]\;\varid{a}\mskip1.5mu\}\mskip1.5mu]\;{}\<[E]%
\\
\>[B]{}\varid{hmin}\mathrel{=}\mykeyword{handler}\;\conid{Min}\;\{\mskip1.5mu \varid{min}\mathrel{=}\mykeyword{operation}\;(\lambda \varid{x}\;\varid{l}\;\varid{k}\to {}\keyword{do}\;\varid{b}\leftarrow \varid{minWith}\;\varid{l}\;\varid{x};\varid{k}\;\varid{b}){}\mskip1.5mu\}\<[E]%
\\
\ColumnHook
\end{hscode}\resethooks

\begin{wrapfigure}{r}{0.52\textwidth}
\vspace{-15pt}
\begin{hscode}\SaveRestoreHook
\column{B}{@{}>{\hspre}l<{\hspost}@{}}%
\column{3}{@{}>{\hspre}l<{\hspost}@{}}%
\column{E}{@{}>{\hspre}l<{\hspost}@{}}%
\>[B]{}\keyword{data}\;\conid{Strategy}\mathrel{=}\conid{Left}\mid \conid{Right}\;\keyword{deriving}\;(\conid{Enum}){}\<[E]%
\\
\>[B]{}\varid{minimax}\mathrel{=}\keyword{do}{}\<[E]%
\\
\>[B]{}\hsindent{3}{}\<[3]%
\>[3]{}\varid{a}\leftarrow \mykeyword{perform}\;\varid{max}\;[\mskip1.5mu \conid{Left},\conid{Right}\mskip1.5mu]{}\<[E]%
\\
\>[B]{}\hsindent{3}{}\<[3]%
\>[3]{}\varid{b}\leftarrow \mykeyword{perform}\;\varid{min}\;[\mskip1.5mu \conid{Left},\conid{Right}\mskip1.5mu]{}\<[E]%
\\
\>[B]{}\hsindent{3}{}\<[3]%
\>[3]{}\effect{loss}\mathbin{\$}[\mskip1.5mu [\mskip1.5mu \mathrm{5},\mathrm{3}\mskip1.5mu],[\mskip1.5mu \mathrm{2},\mathrm{9}\mskip1.5mu]\mskip1.5mu]\mathbin{!!}(\varid{fromEnum}\;\varid{a})\mathbin{!!}(\varid{fromEnum}\;\varid{b}){}\<[E]%
\\
\>[B]{}\hsindent{3}{}\<[3]%
\>[3]{}\mykeyword{return}\;(\varid{a},\varid{b}){}\<[E]%
\ColumnHook
\end{hscode}\resethooks
\vspace{-15pt}
\end{wrapfigure}
We can then play this simple game, where the maximizer A chooses over the minimizer B, as
given on the right,
where we encode the loss table as a nested list, and \ensuremath{\mathbin{(!!)}} is list index operator. 
The program \ensuremath{(\mykeyword{runSel}\mathbin{\$}\varid{hmax}\mathbin{\$}\varid{hmin}\;\varid{minimax})} returns \ensuremath{(\conid{Left},\conid{Right})} with loss \ensuremath{\mathrm{3}}. This is
because A maximizes over the choices of B.
Specifically,
B, the minimizer, chooses 3 (between 3 and 5),
and 2 (between 2 and 9).
Then A, the maximizer, chooses 3 (between 3 and 2).
Note how the loss is shared by two handlers.

In the training domain, \textit{generative
  adversarial networks}~\cite{Gan2014} is also a two-player game.
The algorithm simultaneously trains two models that contest with each other: the
generative model learns to generate samples, while the discriminative model learns
to distinguish between real and generated samples.
More explicitly, it corresponds to
{\small$\min_{G} \max_{D} ~ ( \mathbb{E}_{x \sim \pdata}[\log D(x)] + \mathbb{E}_{z \sim
  \pnoise}[\log (1- D(G(z)))])$},
\noindent
where the discriminator is a minimizer and the generator is a maximizer.

\paragraph{\textbf{Example: Nash equilibrium}}
In game theory, a \textit{Nash equilibrium} describes a situation where no
player can improve their outcome by unilaterally changing their strategy,
assuming all other players maintain their current strategies.

A classic example is the \textit{prisoner's dilemma}, illustrated in the table
below. Here, the loss is 
represented
as a pair, indicating the respective
prison sentences for prisoner A and prisoner B.
If both
\begin{wrapfigure}{r}{0.33\textwidth}
  \small
  \noindent
  \begin{tabular}{l|cc}
    \backslashbox[17mm]{A}{B} & defects & cooperates \\
    \hline
    defects   & (3, 3)    & (0, 5) \\
    cooperates  & (5, 0)   & (1, 1)
  \end{tabular}
  \vspace{-20pt}
\end{wrapfigure}%
prisoners cooperate (by
staying silent), they each serve one year (loss of 1). However, if one prisoner
cooperates while the other defects, the cooperating prisoner serves 5 years,
while the defecting prisoner goes free. If both defect, they each serve 3
years. In this scenario, defection always yields a better individual outcome
regardless of the other prisoner's choice.

We define game steps as follows, using \ensuremath{\conid{Strategy}} for defection (\ensuremath{\conid{Left}}) and cooperation (\ensuremath{\conid{Right}}).
\begin{hscode}\SaveRestoreHook
\column{B}{@{}>{\hspre}l<{\hspost}@{}}%
\column{33}{@{}>{\hspre}l<{\hspost}@{}}%
\column{E}{@{}>{\hspre}l<{\hspost}@{}}%
\>[B]{}\keyword{data}\;\conid{Step}\mathrel{=}\conid{Move}\;\conid{Strategy}\mid \conid{Stay}\;\conid{Strategy}\;\keyword{deriving}\;(\conid{Eq}){}\<[E]%
\\
\>[B]{}[\mykeyword{effect}|\;\;\keyword{data}\;\conid{Play}\mathrel{=}\conid{Play}\;\{\mskip1.5mu \varid{play}{}\<[33]%
\>[33]{}\mathbin{::}\conid{Op}\;(\conid{Step},\conid{Step})\;(\conid{Step},\conid{Step})\mskip1.5mu\}\mskip1.5mu]{}\<[E]%
\ColumnHook
\end{hscode}\resethooks

\begin{wrapfigure}{r}{0.55\textwidth}
\vspace{-17pt}
\begin{hscode}\SaveRestoreHook
\column{B}{@{}>{\hspre}l<{\hspost}@{}}%
\column{3}{@{}>{\hspre}l<{\hspost}@{}}%
\column{E}{@{}>{\hspre}l<{\hspost}@{}}%
\>[B]{}\varid{hNash}\mathrel{=}\mykeyword{handler}\;\conid{Play}\;\{\mskip1.5mu \varid{play}\mathrel{=}\mykeyword{operation}\;(\lambda (\varid{a},\varid{b})\;\varid{l}\;\varid{k}\to \keyword{do}{}\<[E]%
\\
\>[B]{}\hsindent{3}{}\<[3]%
\>[3]{}\keyword{let}\;(\varid{a1},\varid{b1})\mathrel{=}(\varid{getStrtgy}\;\varid{a},\varid{getStrtgy}\;\varid{b}){}\<[E]%
\\
\>[B]{}\hsindent{3}{}\<[3]%
\>[3]{}\keyword{let}\;(\varid{a2},\varid{b2})\mathrel{=}(\varid{move}\;\varid{a1},\varid{move}\;\varid{b1}){}\<[E]%
\\
\>[B]{}\hsindent{3}{}\<[3]%
\>[3]{}\varid{l1}\leftarrow \varid{l}\;(\conid{Stay}\;\varid{a1},\conid{Stay}\;\varid{b1});\varid{l2}\leftarrow \varid{l}\;(\conid{Stay}\;\varid{a2},\conid{Stay}\;\varid{b1}){}\<[E]%
\\
\>[B]{}\hsindent{3}{}\<[3]%
\>[3]{}\varid{l3}\leftarrow \varid{l}\;(\conid{Stay}\;\varid{a1},\conid{Stay}\;\varid{b2}){}\<[E]%
\\
\>[B]{}\hsindent{3}{}\<[3]%
\>[3]{}\keyword{if}\;(\varid{fst}\;\varid{l2}\mathbin{<}\varid{fst}\;\varid{l1})\;\keyword{then}\;\varid{k}\;(\conid{Move}\;\varid{a2},\conid{Stay}\;\varid{b1}){}\<[E]%
\\
\>[B]{}\hsindent{3}{}\<[3]%
\>[3]{}\keyword{else}\;\keyword{if}\;(\varid{snd}\;\varid{l3}\mathbin{<}\varid{snd}\;\varid{l1})\;\keyword{then}\;\varid{k}\;(\conid{Stay}\;\varid{a1},\conid{Move}\;\varid{b2}){}\<[E]%
\\
\>[B]{}\hsindent{3}{}\<[3]%
\>[3]{}\keyword{else}\;\varid{k}\;(\conid{Stay}\;\varid{a1},\conid{Stay}\;\varid{b1}))\mskip1.5mu\}{}\<[E]%
\ColumnHook
\end{hscode}\resethooks
\vspace{-20pt}
\end{wrapfigure}
Given both players' strategies, the handler on the right tries to reduce one
player's loss by adjusting their strategy while holding the other player's
strategy unchanged. The function \ensuremath{\varid{getStrtgy}} extracts the strategy from the
current step, and \ensuremath{\varid{move}} modifies the strategy of the specified player.
Each player compares their own loss and decides
whether to adjust their strategy or stay unchanged.

The \ensuremath{\varid{game}} program below iteratively adjusts the players' strategies until both
choose to \ensuremath{\conid{Stay}}. This signifies that a Nash equilibrium has been reached, and
the program 
terminates.
The program
\begin{wrapfigure}{r}{0.43\textwidth}
\vspace{-17pt}
\begin{hscode}\SaveRestoreHook
\column{B}{@{}>{\hspre}l<{\hspost}@{}}%
\column{3}{@{}>{\hspre}l<{\hspost}@{}}%
\column{9}{@{}>{\hspre}l<{\hspost}@{}}%
\column{11}{@{}>{\hspre}l<{\hspost}@{}}%
\column{E}{@{}>{\hspre}l<{\hspost}@{}}%
\>[B]{}\varid{game}\;\varid{a}\;\varid{b}\mathrel{=}\keyword{do}{}\<[E]%
\\
\>[B]{}\hsindent{3}{}\<[3]%
\>[3]{}(\varid{a'},\varid{b'})\leftarrow \varid{lreset}\mathbin{\$}\varid{hNash}\mathbin{\$}\keyword{do}{}\<[E]%
\\
\>[3]{}\hsindent{6}{}\<[9]%
\>[9]{}(\varid{a1},\varid{b1})\leftarrow \mykeyword{perform}\;\varid{play}\;(\varid{a},\varid{b}){}\<[E]%
\\
\>[3]{}\hsindent{6}{}\<[9]%
\>[9]{}\keyword{let}\;(\varid{a2},\varid{b2})\mathrel{=}(\varid{getStrtgy}\;\varid{a1},\varid{getStrtgy}\;\varid{b1}){}\<[E]%
\\
\>[3]{}\hsindent{6}{}\<[9]%
\>[9]{}\effect{loss}\mathbin{\$}[\mskip1.5mu [\mskip1.5mu (\mathrm{3},\mathrm{3}),(\mathrm{0},\mathrm{5})\mskip1.5mu],[\mskip1.5mu (\mathrm{5},\mathrm{0}),(\mathrm{1},\mathrm{1})\mskip1.5mu]\mskip1.5mu]{}\<[E]%
\\
\>[9]{}\hsindent{2}{}\<[11]%
\>[11]{}\mathbin{!!}(\varid{fromEnum}\;\varid{a2})\mathbin{!!}(\varid{fromEnum}\;\varid{b2}))\;{}\<[E]%
\\
\>[3]{}\hsindent{6}{}\<[9]%
\>[9]{}\mykeyword{return}\;(\varid{a1},\varid{b1}){}\<[E]%
\\
\>[B]{}\hsindent{3}{}\<[3]%
\>[3]{}\keyword{if}\;\varid{isStay}\;\varid{a'}\mathop{\&\&}\varid{isStay}\;\varid{b'}\;\keyword{then}\;\mykeyword{return}\;(\varid{a},\varid{b}){}\<[E]%
\\
\>[B]{}\hsindent{3}{}\<[3]%
\>[3]{}\keyword{else}\;\varid{lreset}\mathbin{\$}\varid{game}\;\varid{a'}\;\varid{b'}{}\<[E]%
\ColumnHook
\end{hscode}\resethooks
\vspace{-30pt}
\end{wrapfigure}
\noindent
\ensuremath{\mykeyword{runSel}\mathbin{\$}\varid{game}\;(\conid{Move}\;\conid{Right})\;(\conid{Move}\;\conid{Right})} 
returns the strategies
\ensuremath{(\conid{Stay}\;\conid{Left},\conid{Stay}\;\conid{Left})} through 2 steps, indicating that both prisoners defect. This outcome
represents a Nash equilibrium, as neither prisoner can improve their individual
outcome by unilaterally changing their strategy.
It is easy to imagine an alternative handler that minimizes the total loss
for both players. In that case, the game would return \ensuremath{(\conid{Stay}\;\conid{Right},\conid{Stay}\;\conid{Right})},
which minimizes the combined loss.

\section{Denotational semantics}
\label{sec:denotational}
\newcommand{\lsem}[1]{\mathcal{L}\sem{#1}}
\newcommand{\Onebb}{\mathbbm{1}}
\newcommand{\opsi}{\overline{\psi}}
\newcommand{\V}{\mathrm{V}}
\newcommand{\EV}{\mathrm{EV}}
\newcommand{\mysim}{\preceq}
\newcommand{\gcut}[1]{}
We next give \cname a denotational semantics  using a suitably augmented selection monad.
We  give  soundness and adequacy theorems, thereby both showing that our computational ideas inspired by the selection monad are in fact in accord with it, and also providing a theoretical foundation for our combination of algebraic effect handlers and the selection monad.

\subsection{Semantics of Types} \label{sec:semtypes}

As discussed in Section~\ref{sec:selmon}, our semantics employs a family $S_\epsilon(X) = (X \to R_\epsilon) \to W_\epsilon(X)$ of augmented selection  monads where $W_\epsilon(X) = F_\epsilon(\R \times X)$ and $R_\epsilon = F_\epsilon(\R)$, with $R$ the reals. The $F_\epsilon$ are used to interpret unhandled effect operations, and $R_\epsilon$ is the free $W_\epsilon$-algebra on the one-point set.
The  $W_\epsilon$ are the commutative  combination \cite{HPP06} of the $F_\epsilon$  and the writer monad $\R \times -$. Algebraically this choice of monad combination corresponds to the  loss operation commuting with  the other operations. Semantically it results in loss effects commuting with 
operation calls; via the Soundness Theorem~\ref{thm:sound}, this is congruent with the operational semantics.
Also, recalling the discussion on subtyping in \cref{sec:typingrules}, note that the $S_\epsilon$ are not effect-covariant, as the $R_\epsilon$ appear contravariantly.

\newcommand{\ssem}[1]{{\color{blue}\mathcal{S}}\sem[blue]{#1}}
\newcommand{\N}{\mathbb{N}}

We define  $F_\epsilon(X)$ 
to be the least set $Y$ such that:
\[Y\; = \; \left ( \sum_{\ell \in \epsilon, \op: \outt \xrightarrow{\ell} \inn, 0 < i \leq \epsilon(\ell)} 
 \ssem{\outt} \times Y^{\ssem{\inn}}  \right ) + X \]
There is an inclusion $F_{\epsilon_1}(X) \subseteq F_{\epsilon}(X)$ if $\epsilon_1 \subseteq \epsilon$; which we use   without specific comment.  
The elements of $F_\epsilon(X)$ can be thought of as \emph{effect values} or \emph{interaction trees}, much as in~\cite{plotkin2001adequacy,xia2020interaction,FK19}. They are trees whose internal nodes are decorated with four things: an effect $\ell \in \epsilon$, an $\ell$-operation $\op: \outt \to \inn$, a handler execution depth index, and  an element of $\ssem{\outt}$. Nodes have successor nodes for each element of $\ssem{\inn}$; and the leaves of the tree are decorated with elements of $X$.  The idea is that such trees indicate possible computations in which various operations occur before finally yielding a value in $X$. Trees of this  kind were used to give a monadic denotational semantics to an algebraic effect language in~\cite{FK19}. 

Note the circularity  in these definitions: the $F_\epsilon$ are defined from the $S_\epsilon$,and vice versa. However the effect levels  strictly decrease in the first case
(because of the well-foundedness assumption)  and do not 
increase in the second (recall \Cref{sec:termination}), justifying the definitions.

Given the $S_\epsilon$ we can define $\ssem{\sigma}$ the semantics of types, as in  Figure~\ref{fig:tsem},
where we assume available a given semantics $\sem{b}$ of basic types, including 
$\ssem{\real} = R$.
\begin{figure}[h]
 \small{
\[\begin{array}{lcl}
\ssem{b} & = & \sem{b}\\
 \ssem{(\sigma_1,\dots,\sigma_n)} & = &  \ssem{\sigma_1} \times \dots \times  \ssem{\sigma_n} \\
  \ssem{\sigma \to \tau\etype \epsilon} & = &   \ssem{\sigma} \to   S_\epsilon(\ssem{\tau})\\
\end{array}\] }
\caption{Semantics of types}
\label{fig:tsem}
\end{figure}

\subsection{Monads} \label{sec:monads}

\newcommand{\mlet}[4]{\mathrm{let}_{#1}\; #2\; \mathrm{be}\; #3\; \mathrm{in}\; #4}
\newcommand{\myfold}[3]{\mathrm{fold}(#1,#2,#3)}
\newcommand{\comp}[2]{#1 \circ #2}

\gcut{
}

For our denotational semantics  we need the monadic structure of the $S_\epsilon$, which is available via the free algebra structures of the $W_\epsilon$ and the $F_\epsilon$.
Beginning with the $F_\epsilon$, say that an 
\textit{$\epsilon$-algebra} is a set $X$ equipped with %
 functions
\[\varphi_{\ell,\op,i} :  \ssem{\outt}  \times X^{\ssem{\inn}}  \to X\]
for $\ell \in \epsilon, \op\type \outt \xrightarrow{\ell} \inn$, and $0 < i \leq \epsilon(\ell)$. 

Then $F_{\epsilon}(X)$ is the free such algebra taking the functions to be: 
\[\varphi^X_{\ell,\op,i}(o,k) \eqdef ((\ell,\op,i),(o,k))\]
with the unit  at $X$ being given by $\eta_{F_\epsilon}(x) = x$, 
ignoring injections into sums. If $(Y,\psi_{\ell,\op,i})$ is another such algebra,
the unique homomorphic extension $f^{\dagger_{F_\epsilon}}$ of a function $f: X \to Y$ is given by setting 
\[f^{\dagger_{W_\epsilon}}(x) = f(x)
\qquad \quad  
\qquad \quad 
f^{\dagger_{F_\epsilon}} ((\ell,\op,i),(o,k)) = \psi_{\ell,\op,i}(o, \comp{f^{\dagger_{F_\epsilon}}}{k})\]

Turning to $W_\epsilon(X) = F_\epsilon(\R \times X)$, we can again see this as a free algebra monad. Say that an \emph{action $\epsilon$-algebra} is an $\epsilon$-algebra  $(Y,\psi_{\ell,\op,i})$ together with an additive action $\cdot: \R \times Y \to Y$ commuting with the $\psi_{\ell,\gamma,\op,i}$  (by an additive action we mean one such that $0\cdot y = y$ and $r \cdot (s\cdot y) = (r + s)\cdot y$). Then $F_\epsilon(\R \times X)$ is the free such algebra with operations $\varphi^{\R \times X}_{\ell,\op,i}$ and action 
given by:
\[r\cdot u \eqdef \mlet{F_\epsilon}{s\in \R, x\in X}{u}{(r+s,x)}\] 
The unit is $\eta_{W_\epsilon}(x) = (0,x)$, and if $(Y,\psi_{\ell,\op,i}, \cdot)$ is another such algebra, the unique homomorphic extension $f^{\dagger_{W_\epsilon}}$ of a function $f: X \to Y$ is given by $f^{\dagger_{W_\epsilon}}(r,x) = r\cdot f(x)$ and:
\[f^{\dagger_{W_\epsilon}} ((\ell,\op,i),(o,k)) = \psi_{\ell,\op,i}(o,\comp{f^{\dagger_{W_\epsilon}}}{k})\]

 Turning finally to the augmented selection monad $S_\epsilon(X) = (X \to R_\epsilon) \to W_\epsilon(X)$. The unit  $\eta_{S_\epsilon}$ at $X$ is given by $\eta_{S_\epsilon}(x) = \lambda \gamma 
\in X  \to R_\epsilon.\, \eta_{W_\epsilon}(x)$ (and recall that $\eta_{W_\epsilon}(x) = (0,x)$). For the Kleisli extension,
rather than  follow the definitions in, e.g.,~\citet{AP21} via a $W_\epsilon$-algebra on $R_\epsilon$ we give definitions that are a little easier to read. 

First, $R_\epsilon$ is an action $\epsilon$-algebra, with 
$\psi_{\ell,\op,i} :  \ssem{\outt}  \times R_\epsilon^{\ssem{\inn}}   \to R_\epsilon$
given by 
$\psi_{\ell,\op,i}(o,k) = 
\varphi^\R_{\ell,\op,i}(o,k)$
and action $\R \times R_\epsilon \to R_\epsilon$ given by:
$r\cdot u \eqdef \mlet{F_{\Onebb,\epsilon}}{s\in \R}{u}{r+s}$. 
Next (using  that $R_\epsilon$ 
is an action $\epsilon$-algebra) the loss $\ER{\epsilon}{F}{\gamma} \in R_\epsilon$ associated to %
$F \in S_\epsilon(Y)$ and loss function $\gamma\type Y \to R_\epsilon$  is
\[\ER{\epsilon}{F}{\gamma} \eqdef \gamma^{\dagger_{W_\epsilon}}(F(\gamma))\]
Then the Kleisli extension $f^{\dagger_{S_\epsilon}}\type S_\epsilon(X) \to S_\epsilon(Y)$  of a function $f\type X \to S_\epsilon(Y)$ is defined~by:
\begin{equation} \label{eqn:Kleisli}
    f^{\dagger_{S_\epsilon}}(F) = \lambda \gamma \in Y \to R_\epsilon.\,\mlet{W_\epsilon}{x \in X}{F(\lambda x\in X.\ER{\epsilon}{fx}{\gamma})}{fx\gamma}
\end{equation}
Finally $S_\epsilon(X)$ is an $\epsilon$-algebra, with functions
$\overline{\varphi}_{\ell,\op,i} :  \ssem{\outt}  \times S_\epsilon(X)^{\ssem{\inn}}   \to S_\epsilon(X)_\epsilon$ given by:
\[\overline{\varphi}^X_{\ell,\op,i}(o,f)(\gamma) = \varphi^{R \times X}_{\ell,\op,i}(o,\lambda x \in \inn.\, f(x)(\gamma))\]
\newcommand{\CL}[1]{\mathbf{L}[#1]}
\vspace{-20pt}
\subsection{Semantics of Expressions and Handlers} \label{sec:semexp}

Given an environment $\myG = x_1\type \sigma_1,\dots,x_n:\sigma_n$ we take
$\ssem{\myG}$ to be the functions (called \emph{environments}) $\rho$ on $\Dom(\myG)$ such that  
$\rho(x_i)  \in \ssem{\sigma_i}$, for $i = 1,\dots,n$. We give  semantics to typed expressions, and handlers according to the following schemes:
{\small\begin{mathpar}
\inferrule {\myG \vdash e\type\sigma\etype \epsilon}
  {\ssem{e} :  \ssem{\myG} \to S_\epsilon(\ssem{\sigma})}
\and
\inferrule{\myG \vdash h\type \parr,  \sigma \etype \epsilon\effl \Rightarrow \sigma'\etype\epsilon }
 {\ssem{h} :  \ssem{\myG} \to (\ssem{\parr} \times S_{\epsilon\ell}(\ssem{\sigma}))\to S_{\epsilon}(\ssem{\sigma'})}
\end{mathpar}}
Our denotational semantics of typed expressions is given in Figure~\ref{fig:esem}.
We  make use of an abbreviation, available for any monad $M$:
 {\small \[\mlet{M}{x\in X}{exp_1}{exp_2} \eqdef (\lambda x \in X.\, exp_2)^{\dagger_M}(exp_1)\]}
 for mathematical expressions $exp_1$ and $exp_2$. This abbreviation makes monadic binding available at the meta-level, and that makes for more transparent formulas. 
 We assume given semantics $\sem{c} \in \sem{b}$, for constants $c\type b$, and $\sem{f}\type \sem{\sigma}  \to \sem{\tau}$ for basic function symbols $f\type \sigma \to \tau$ 
(with $\sem{r} = r$ and $+$  denoting the  addition of $\R$).
We also use an auxiliary ``loss function" semantics. For $\myG, x\type \sigma \vdash e\type \real\etype \epsilon$ we define $\lsem{\lambda^\epsilon x\type \sigma.\,e}\type 
 \ssem{\myG} \to \ssem{\sigma} \to  R_\epsilon$ by:
{\small \[\lsem{\lambda^\epsilon x\type \sigma.\,e}(\rho) = 
\lambda a \in \ssem{\sigma}.\, 
   \mlet{F_\epsilon}{r_1,r_2 \in \R}{\ssem{e}(\rho[a/x]) (\lambda r \in \R.\,0)}{r_2}\]}\\[-2em]

\begin{figure}
\boxed{
\inferrule {\myG \vdash e\type\sigma\etype \epsilon}
  {\ssem{e} :  \ssem{\myG} \to S_\epsilon(\ssem{\sigma})}
  }
 \hfill \emph{(Expressions)}\\

\small{\[\begin{array}{lcll}
\ssem{x}(\rho)  & = & \eta_{S_\epsilon}(\rho(x))\\[0.1em]
\ssem{\con}(\rho)  & = & \eta_{S_\epsilon}(\sem{\con}) \\[0.1em]
\ssem{f(e)}(\rho) & = & 
  \mlet{{S_\epsilon}}{a \in \ssem{\sigma_1}}{\ssem{e}(\rho)}
     {\eta_{S_\epsilon}(\sem{f}(a))} \quad (f\type \sigma_1 \to \sigma)\\
\ssem{(e_1,\ldots,e_n)}(\rho) & = &
 \mlet{{S_\epsilon}}{a_1 \in \ssem{\sigma_1}}{\ssem{e_1}(\rho)}
                                                       {\\&& \dots \\&& \mlet{{S_\epsilon}}{a_n \in \ssem{\sigma_n}}{\ssem{e_n}(\rho)}{\\&&\eta_{S_\epsilon}((a_1,\dots,a_n))}} \qquad \qquad\qquad\qquad (\sigma = (\sigma_1,\dots, \sigma_n))
                                                       \\
\ssem{e.i}(\rho)  & = & S_\epsilon(\pi_i)(\ssem{e}(\rho))\\ 
\ssem{\lambda^{\epsilon_1} x\type \sigma.\, e}(\rho) & = & 
\eta_{S_\epsilon}(\lambda a \in \ssem{\sigma}.\, \ssem{e}(\rho[a/x]))\\

\ssem{e_1 ~ e_2}(\rho) & = &
 \mlet{{S_\epsilon}}
 {\varphi \in \ssem{\sigma_1} \to S_\epsilon(\ssem{\sigma})}{\ssem{e_1}(\rho)}
     {\\&&  \mlet{{S_\epsilon}}{a \in \ssem{\sigma_1}}{\ssem{e_2}(\rho)}{\varphi(a)}}\\
    &&  \hspace{50pt} (\Gamma \vdash e_1 \type \sigma_1 \to \sigma\etype \epsilon)\\[0.1em] 
\ssem{\op(e)}(\rho)  & = &
\mlet{S_\epsilon}
         {a \in \ssem{\outt}}
        {\ssem{e}(\rho)}
{\overline{\varphi}^X_{\ell,\op,\epsilon(\ell)}(a,(\eta_{S_\epsilon})_{\ssem{\inn}})}\\
&&\hspace{50pt} (\op\type \outt \xrightarrow{\effl} \inn, \sigma = \inn)\\
\ssem{\wph{h}{e_1}{e_2}}(\rho) & = & 
\mlet{{S_\epsilon}}{a \in \ssem{par}}{\ssem{e_1}(\rho)}{\ssem{h}(\rho)(a,\ssem{e_2}(\rho))}\\
 &&\hspace{50pt}  (\myG \vdash e_1\type par)  \\[0.1em]
\ssem{\loss(e)}(\rho) & = & \lambda \gamma \in  R_\epsilon^{\ssem{\sigma}}. \,
\mlet{{F_\epsilon}}{r\in R, a \in \R}{\ssem{e}(\rho)(\gamma)}{ (a + r,())}\\
\ssem{\then{\epsilon_1}{e_1}
  {\lambda^{\epsilon_1} x\type \sigma_1.\, e_2}}(\rho) 
&=& \lambda \gamma \in  R_\epsilon^{\ssem{\sigma}}.\,\\
&&\quad\;\mlet{F_\epsilon}{r_1 \in \R, a \in \ssem{\sigma_1}}{\ssem{e_1}(\rho)(\lsem{\lambda^\epsilon x\type \sigma_1.\, e_2}(\rho))}
{\\&&\;\quad \mlet{F_{\epsilon_1}}
                {r_2,r_3\in \R}
                {\ssem{e_2}(\rho[a/x])(\lambda r \in \R.\,0)}
                {(r_2,r_1 + r_3})}\\

\ssem{\glocal{\epsilon_1}{e}{\myg}}(\rho) & = & 
\lambda \gamma \in   R_\epsilon^{\ssem{\sigma}}.\,\ssem{e}(\rho)\lsem{\myg}(\rho)\\\
\ssem{\reset{e}}(\rho) & = & 
\lambda \gamma \in  R_\epsilon^{\ssem{\sigma}}.\,%
\mlet{F_\epsilon}
            {r_1 \in \R, a \in \ssem{\sigma}}
            {\ssem{e}(\rho)(\gamma)}
            {\eta_{W_\epsilon}(a)}%
\end{array}\]}

\caption{Semantics of expressions}
\label{fig:esem}
\end{figure}

\vspace{-10pt}
The denotational semantics of expressions, up to application, is generic to any monadic semantics of  call-by-value effectful languages. The semantics of operation calls uses the $\epsilon$-algebraic structure of the $S_\epsilon$. Other semantics read as  denotational versions of  operational computations. For example, the semantics of 
$\then{\epsilon_1}{e_1}{\lambda^\epsilon x\type \sigma_1.\, e_2}$ ignores the current loss 
continuation, passes the value $a$ of $e_1$ (with loss continuation the loss denotation of 
$\lambda^\epsilon x\type \sigma_1.\, e_2$) to $e_2$, evaluates that (with zero loss continuation),
and 
adds the loss $r_1$ of $e_1$ to the result $r_3$, keeping the resulting loss $r_2$. 
The sub-effecting  in the typing \rref{then} is here reflected in the semantic inclusion 
$F_{\epsilon_1}(X)\subseteq F_{\epsilon}(X)$.

\subsubsection*{{Semantics of Handlers}}
We build up the semantics of handlers in stages. Here is the high-level idea. Ignoring environments and parameters,  we seek a semantics:
$\ssem{h} : S_{\epsilon\ell}(\ssem{\sigma})\to S_{\epsilon}(\ssem{\sigma'})$,  
equivalently $\ssem{h}\gamma G \in   W_{\epsilon}(\ssem{\sigma'})$
for $\gamma: \ssem{\sigma'} \to R_\epsilon$, and $G \in S_{\epsilon\ell}(\ssem{\sigma})$. 
Following the standard approach to the semantics of handlers~\cite{pretnar2013handling} we exploit a free algebra property, here that of of $F_{\epsilon\ell}(R \times \ssem{\sigma})$, constructing an $\epsilon\ell$-algebra on $F_{\epsilon}(R \times \ssem{\sigma'})$ using $h$'s operation definitions (it may not be an action one),  then   obtaining a homomorphism to it from $F_{\epsilon\ell}(R \times \ssem{\sigma})$, and finally applying that to $G \gamma'$, with $\gamma'$  obtained from $\gamma$ using 
$h$'s return function. 

So, consider a handler $h$:
 {\small \[\left \{\begin{array}{l}
                     \op_1 \mapsto \lambda^\epsilon z\type (\parr, \outt_1,
                     (\parr,\inn_1) \to \real \etype \epsilon,
                     (\parr,\inn_1) \to \sigma' \etype \epsilon).~ e_1,\dots, \\
                     \op_n \mapsto \lambda^\epsilon z\type (\parr, \outt_n,
                     (\parr,\inn_n) \to \real \etype \epsilon,
                     (\parr,\inn_n) \to \sigma' \etype \epsilon).~ e_n,\\
                   \return \mapsto \lambda^\epsilon z\type (\parr,  \sigma).\, e_r
                            \end{array}\right \} \]}
where $\myG \vdash h\type \parr, \sigma\etype \epsilon\effl \Rightarrow \sigma'\etype\epsilon$, and fix  $\rho \in \ssem{\myG}$ and $\gamma \in \R_\epsilon^\ssem{\sigma'}$.
We first construct  an $\epsilon\ell$-algebra on $A = W_\epsilon(\ssem{\sigma'})^{\ssem{\parr}}$. 
So for $\ell_1 \in \epsilon\ell$, $\op\type \outt \xrightarrow{\ell_1} \inn$, and $0 < i \leq (\epsilon\ell)\ell_1$ we need functions 
$\psi_{\ell,\op,i}\type \ssem{\outt} \times  A^{\ssem{\inn}}  \to A$. 
For $\ell_1 \in \epsilon$, $\op\type \outt \xrightarrow{\ell_1} \inn$, and $0 < i \leq \epsilon(\ell_1)$, we set
{\small \[\psi_{\ell_1,\op,i}(o,k) = 
 \lambda p \in \ssem{\parr}.\, 
   ((\ell_1,\op,i), (o, \lambda a \in \ssem{\inn}.\, k\hspace{0.5pt}a\hspace{0.5pt}p))\]}
and, for $\op_j$
and  $i = \epsilon(\ell) +1$ we set
 {\small \[\psi_{\ell,\op_j,i}(o,k) = 
    \lambda p\in \ssem{\parr}.\, \ssem{e_j}(\rho[(p,o,l_1,k_1)/z])\gamma
 \]}
 where $k_1(p,a) = k\hspace{0.5pt}a\hspace{0.5pt}p$
 and $l_1(p,a) = \lambda \gamma_1 \in  \R_\epsilon^{\ssem{\sigma'}}.\, \delta_\epsilon(\gamma^{\dagger_{W_\epsilon}}(k\hspace{0.5pt}a\hspace{0.5pt}p))$.
 (in the definition of $l_1$ we use the fact that $R_\epsilon$ 
is an action $\epsilon$-algebra, and 
$\delta_\epsilon\type F_\epsilon(\R) \to F_\epsilon(\R \times \R)$
is %
$F_\epsilon(\lambda r\in \R.\, (0,r))$).\\[-1.15em]

  We  use this algebra to extend the map  $s\type \R \times \ssem{\sigma} \to A$ defined by
  {\small \[s(r,a) = \lambda p\in \ssem{\parr}.\, r\cdot(\ssem{e_r}(\rho[(p,a)/z])\gamma)\] }
  (Recall that $ \return \mapsto \lambda^\epsilon z\type (\parr,  \sigma).\, e_r$ is in $h$.)
  The semantics of the handler $h$ is then given by:
{\small  \[\ssem{h}(\rho)(p,G)(\gamma) = s^{\dagger_{F_{\epsilon\ell}}}(G(\lambda a \in \ssem{\sigma}.\, \ER{\epsilon}{\ssem{e}(\rho[(p,a)/z])}{\gamma}))(p)\]}\\[-3em]

\subsection{Soundness and Adequacy of Operational Semantics}
\newcommand{\vsem}[1]{{\color{red}\mathcal{V}}\sem[red]{#1}}

Below we may omit $\rho$ in $\ssem{e}(\rho)$ (or $\vsem{v}(\rho)$) when $e$ (respectively $v$) is closed. In \Cref{fig:vsem} we define a ``value semantics" $\vsem{v}\type \ssem{\myG} \to \ssem{\sigma}$ for values $\myG\vdash v\type \sigma$. It helps us to state our soundness and adequacy results. \\[-1.5em]
\begin{figure}[h]
    \centering
{\small \[\begin{array}{lcllcl}
\vsem{x}(\rho) &=& \rho(x) \qquad \qquad \qquad &
\vsem{(v_1,\dots,v_n)}(\rho) &=& (\vsem{v_1}(\rho),\dots,\vsem{v_n}(\rho)) \\
\vsem{c}(\rho) &=& \sem{c} &
\vsem{\lambda^{\epsilon_1} x\type \sigma_1.e}(\rho) &=& \lambda a \in \ssem{\sigma}.\, \ssem{e}(\rho[a/x])\\

\end{array}\]}
\caption{Semantics of values}
\label{fig:vsem}
\end{figure}
\vspace{-10pt} \begin{lemma} \label{lem:valsem} For any value $\myG \vdash v\type \sigma \etype \epsilon$ we have:
 $\ssem{v}(\rho) = \eta_{S_\epsilon}(\vsem{v}(\rho))$.
 \end{lemma}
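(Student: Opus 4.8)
The plan is to prove the lemma by a straightforward structural induction on the value $v$, using the defining equations of $\ssem{\cdot}$ (\Cref{fig:esem}) and $\vsem{\cdot}$ (\Cref{fig:vsem}) together with the monad laws for $S_\epsilon$ established in \Cref{sec:monads}. The four value forms to consider are $x$, $c$, $(v_1,\dots,v_n)$, and $\lambda^{\epsilon_1}x\type\sigma_1.\,e$.

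For the base cases the two defining equations match directly: $\ssem{x}(\rho) = \eta_{S_\epsilon}(\rho(x)) = \eta_{S_\epsilon}(\vsem{x}(\rho))$, and similarly $\ssem{c}(\rho) = \eta_{S_\epsilon}(\sem{c}) = \eta_{S_\epsilon}(\vsem{c}(\rho))$. The abstraction case $v = \lambda^{\epsilon_1}x\type\sigma_1.\,e$ is equally immediate and needs no appeal to the induction hypothesis: by definition $\ssem{\lambda^{\epsilon_1}x\type\sigma_1.\,e}(\rho) = \eta_{S_\epsilon}(\lambda a\in\ssem{\sigma_1}.\,\ssem{e}(\rho[a/x]))$, and the argument of $\eta_{S_\epsilon}$ is exactly $\vsem{\lambda^{\epsilon_1}x\type\sigma_1.\,e}(\rho)$.

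The only case requiring the induction hypothesis is the tuple $v = (v_1,\dots,v_n)$. Here I would first expand $\ssem{(v_1,\dots,v_n)}(\rho)$ by its defining equation, a nested chain of $S_\epsilon$-binds ending in $\eta_{S_\epsilon}((a_1,\dots,a_n))$. Applying the induction hypothesis componentwise gives $\ssem{v_i}(\rho) = \eta_{S_\epsilon}(\vsem{v_i}(\rho))$, so every bind has a unit in its first argument. I would then collapse the binds one at a time, from the outside in, using the left unit law $(\lambda x.\,\mathit{exp})^{\dagger_{S_\epsilon}}(\eta_{S_\epsilon}(a)) = \mathit{exp}[a/x]$, which is precisely the unfolding of $\mlet{S_\epsilon}{x}{\eta_{S_\epsilon}(a)}{\mathit{exp}}$ through the definition of the meta-level $\mathrm{let}$. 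After $n$ such steps each $a_i$ has been replaced by $\vsem{v_i}(\rho)$, leaving $\eta_{S_\epsilon}((\vsem{v_1}(\rho),\dots,\vsem{v_n}(\rho))) = \eta_{S_\epsilon}(\vsem{(v_1,\dots,v_n)}(\rho))$, as required.

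There is no serious obstacle: the content is essentially bookkeeping. The one point deserving care is the tuple case, where the iterated collapse of the monadic binds must be justified; this rests solely on the left unit law of $S_\epsilon$, which holds since $S_\epsilon$ is a monad (\Cref{sec:monads}). I would also note in passing that values may be typed at any effect, so each $v_i$ can be assigned the ambient effect $\epsilon$ of the whole tuple, keeping the uses of $\eta_{S_\epsilon}$ and $(-)^{\dagger_{S_\epsilon}}$ well-typed throughout the argument.
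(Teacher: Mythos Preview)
Your proposal is correct and follows essentially the same approach as the paper: a structural induction on values, with the base cases $x$, $c$, and $\lambda^{\epsilon_1}x\type\sigma_1.\,e$ holding by definition, and the tuple case handled by applying the induction hypothesis componentwise and collapsing the nested $S_\epsilon$-binds via the left unit law. The paper's appendix version additionally treats the value forms for sums, naturals, and lists (each again a routine use of the induction hypothesis and the unit laws), but for the restricted language of the main text your four cases are exactly what is needed.
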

As terminal expressions can  be stuck  we also need a lemma for their semantics:
 \begin{lemma} \label{lem:Kop} For terminal $\myG \vdash K[\op(v)]\type \sigma\etype\epsilon$ with $\op\type \outt \xrightarrow{\ell} \inn$
 we have:
 \[\ssem{K[\op(v)]}(\rho) = 
 \lambda \gamma.\,\varphi^{\R \times \ssem{\sigma}}_{\ell,\op,\epsilon(\ell)}
 (\vsem{v}(\rho),\lambda a \in \ssem{\inn}.\,
     \ssem{K[x]}(\rho[a/x])(\gamma))\]
 \end{lemma}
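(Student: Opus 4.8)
The plan is to prove the claim by induction on the structure of the continuation context $K$, with the central tool being the \emph{algebraicity} of the semantic operations $\overline{\varphi}_{\ell,\op,i}$. First I would record the equation
\[f^{\dagger_{S_\epsilon}}(\overline{\varphi}^X_{\ell,\op,i}(o,H)) = \overline{\varphi}^Y_{\ell,\op,i}(o,\, f^{\dagger_{S_\epsilon}}\circ H),\]
valid for any $f\type X \to S_\epsilon(Y)$ and $H \in S_\epsilon(X)^{\ssem{\inn}}$. This follows by unfolding the $S_\epsilon$-Kleisli extension \eqref{eqn:Kleisli} and using that $(\cdot)^{\dagger_{W_\epsilon}}$ is a homomorphism for the free operations $\varphi^{R\times(-)}_{\ell,\op,i}$: the loss function fed to $H$ is exactly the one used to form $\ER{\epsilon}{f(-)}{\gamma}$, so the two sides agree pointwise in $\gamma$. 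More basically, I will repeatedly invoke the defining homomorphism equations for $(\cdot)^{\dagger_{F_\epsilon}}$ and $(\cdot)^{\dagger_{W_\epsilon}}$, which send a node $\varphi_{\ell,\op,i}(o,k)$ to the target operation applied to $o$ and the extension of $k$; this is precisely what lets an operation node be pulled to the outside of a monadic bind.

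For the base case $K = \square$ we have $\sigma = \inn$, $K[\op(v)] = \op(v)$, and $K[x]=x$. Rewriting $\ssem{v}(\rho)$ as $\eta_{S_\epsilon}(\vsem{v}(\rho))$ by Lemma~\ref{lem:valsem}, the $\mathsf{let}$ in the semantics of $\op(e)$ collapses by the left-unit law to $\overline{\varphi}^{\ssem{\inn}}_{\ell,\op,\epsilon(\ell)}(\vsem{v}(\rho),\eta_{S_\epsilon})$; unfolding $\overline{\varphi}$ and noting $\eta_{S_\epsilon}(a)(\gamma) = \eta_{W_\epsilon}(a) = \ssem{x}(\rho[a/x])(\gamma)$ yields the claimed formula.

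In the inductive step for regular frames $K = F[K']$, every regular-frame semantics is, after using Lemma~\ref{lem:valsem} on the value positions, a single $S_\epsilon$-Kleisli extension $g^{\dagger_{S_\epsilon}}$ applied to $\ssem{K'[\op(v)]}(\rho)$: for $f(\square)$, $\square~e$, $v~\square$, $\square.i$, tupling, $\op(\square)$, and $\wph{h}{\square}{e}$ the map $g$ reads directly off Figure~\ref{fig:esem}, while $\loss(\square)$ is handled analogously using that its semantics is built from a homomorphic $F_\epsilon$-bind. Since $\op\notin\hop(K')$, the induction hypothesis presents $\ssem{K'[\op(v)]}(\rho)$ as an $\op$-node; algebraicity pulls that node outside $g^{\dagger_{S_\epsilon}}$, and the residual continuation reassembles into $\lambda a.\,\ssem{K[x]}(\rho[a/x])(\gamma)$, as required.

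In the inductive step for special frames $K = S[K']$, the cases $\then{}{\square}{(\lambda x.e_2)}$, $\glocal{}{\square}{\myg}$, and $\reset{\square}$ again evaluate the hole (relative to the loss continuation dictated by the frame, which the induction hypothesis supplies) and post-process with a homomorphic $F_\epsilon$-bind, so the $\op$-node passes straight through and the inner data reassembles into $\ssem{K[x]}$. The main obstacle is the handler frame $K = \wph{h}{v}{K'}$. Because $K$ is stuck at $\op$, the effect $\ell$ of $\op$ differs from the effect $\ell'$ handled by $h$, whence $(\epsilon\ell')(\ell) = \epsilon(\ell)$ and the index is preserved across the body. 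Unfolding $\ssem{h}$, the extension $s^{\dagger_{F_{\epsilon\ell'}}}$ meets the $\op$-node produced by the induction hypothesis and fires the handler algebra's \emph{pass-through} clause $\psi_{\ell,\op,\epsilon(\ell)}(o,k) = \lambda p.\,\varphi^{R\times\ssem{\sigma'}}_{\ell,\op,\epsilon(\ell)}(o,\lambda a.\,k\,a\,p)$, reconstructing an $\op$-node at the outer effect $\epsilon$. It then remains to check that the handler's auxiliary data $s$ and loss function do not depend on the freshly-bound operation-result variable $x$ (which holds since $x\notin\FV(h)\cup\FV(v)$), so that the reconstructed continuation is exactly $\lambda a.\,\ssem{\wph{h}{v}{K'[x]}}(\rho[a/x])(\gamma) = \lambda a.\,\ssem{K[x]}(\rho[a/x])(\gamma)$, completing the induction.
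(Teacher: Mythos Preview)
Your proposal is correct and follows essentially the same route as the paper's proof: induction on $K$, the base case via Lemma~\ref{lem:valsem}, regular frames via algebraicity of the $\overline{\varphi}$ operations together with the frame decomposition (the paper packages this as a separate lemma $\ssem{F[e]}(\rho) = \mlet{S_\epsilon}{a}{\ssem{e}(\rho)}{\ssem{F[x]}(\rho[a/x])}$), and the handler special frame via the pass-through clause $\psi_{\ell,\op,i}$ together with the observation $\ell\neq\ell'$ so that $(\epsilon\ell')(\ell)=\epsilon(\ell)$. Your explicit recording of the algebraicity equation $f^{\dagger_{S_\epsilon}}(\overline{\varphi}(o,H)) = \overline{\varphi}(o,f^{\dagger_{S_\epsilon}}\circ H)$ is exactly what the paper means by ``algebraicity of the operations at the level of the selection monad.''
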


For soundness, we assume that the semantics of basic functions is sound w.r.t.\, the operational semantics, i.e.
$f(v) \rightarrow v' \implies \sem{f}(\sem{v}) = \sem{v'}$.
We have small-step soundness:
\begin{theorem} [Small-step Soundness] 
\label{thm:smallsound} 
Suppose  $e\type \sigma\etype \epsilon$ and  $\myg \type \sigma \to \real\etype \epsilon_1$ with $\epsilon_1 \subseteq \epsilon$. Then:
{\small \[\myg \vdash_\epsilon e \xrightarrow{r} e' \;\; \implies \;\; 
    \ssem{e}\lsem{\myg} = r\cdot(\ssem{e'}\lsem{\myg})\]}
\end{theorem}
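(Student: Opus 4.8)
The plan is to proceed by rule induction on the derivation of $\myg \vdash_\epsilon e \xrightarrow{r} e'$, organised by \Cref{eal}: the redex rules (R1)--(R9) are the base cases and the frame rules (F), (S1)--(S4) the inductive steps (each has a single sub-derivation premise, supplying the induction hypothesis). Throughout I would use \Cref{lem:valsem} to rewrite value denotations as $\eta_{S_\epsilon}$ of their value semantics, \Cref{lem:Kop} for stuck subterms, a (routine) substitution lemma relating $\ssem{e[v/x]}(\rho)$ to $\ssem{e}(\rho[\vsem{v}(\rho)/x])$, the Kleisli/monad laws for $S_\epsilon$, $W_\epsilon$, $F_\epsilon$, and the two defining equations of the additive action, $0\cdot u = u$ and $r\cdot(s\cdot u) = (r+s)\cdot u$. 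Since every redex except (R4) produces loss $0$, in those cases the goal collapses to $\ssem{e}\lsem{\myg} = \ssem{e'}\lsem{\myg}$.

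For the base cases I would unfold the matching clause of \Cref{fig:esem} on each side. Cases (R1)--(R3), (R8), (R9) are direct monad-law and substitution calculations, with (R2) additionally using $S_\epsilon(\pi_i)$. The decisive case is (R4): here $\ssem{\loss(r)}\lsem{\myg}$ binds the leaf $\eta_{W_\epsilon}(r) = (0,r)$ through $\lambda(r',a).\,(a+r',())$ to give $(r,())$, while $r\cdot(\ssem{()}\lsem{\myg}) = r\cdot(0,()) = (r,())$, so the operational loss annotation is reconciled with the semantic action. For (R7) I would check that the $\thensymbol$-clause, which discards the ambient loss continuation and re-evaluates under the zero continuation, matches the denotation of $\glocal{\epsilon_1}{e[v/x]}{0_{\sigma,\epsilon_1}}$, again modulo substitution.

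The frame cases are where the loss continuation is re-instantiated, so the induction hypothesis is taken at a continuation other than $\myg$. For (F) I would first prove a context-decomposition identity $\ssem{F[e]}\lsem{\myg} = \ssem{e}\lsem{\lambda^\epsilon x\type\tau.\,\then{\epsilon}{F[x]}{\myg}}$ for each regular frame $F$, obtained by unfolding the monadic semantics of $F$ and recognising the composed loss as the loss denotation of $\lambda^\epsilon x.\,\then{\epsilon}{F[x]}{\myg}$ (this is the selection-monad's loss-continuation transformer at work); applying the hypothesis and pushing the action $r\cdot(-)$ back out through the frame then closes the case. Pushing the action out relies on the relevant binds (and, in (S1), the handler homomorphism $s^{\dagger_{F_{\epsilon\ell}}}$) commuting with $r\cdot(-)$, which follows from additivity of the action. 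Cases (S2)--(S4) are analogous with the rule-specific continuation and loss bookkeeping: for (S2) the inner continuation is $\myg[1]$ rather than $\myg$ and the produced loss $r$ is re-exported as the outer sum, so the outer step carries loss $0$ and I would verify $\ssem{\then{\epsilon}{e}{\myg[1]}}\lsem{\myg} = \ssem{r + \then{\epsilon_1}{e'}{\myg[1]}}\lsem{\myg}$ from the hypothesis $\ssem{e}\lsem{\myg[1]} = r\cdot\ssem{e'}\lsem{\myg[1]}$ together with the fact that the $\thensymbol$-semantics ignores $\myg$; for (S4) the semantics of $\reset{\cdot}$ discards the outgoing loss, matching its $0$ annotation.

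The main obstacle is the handler rules (R5) and (S1). In (R5) the denotation of $\wph{h}{v_1}{K[\op(v_2)]}$ runs through $\ssem{h}$, which factors as a free $\epsilon\ell$-algebra homomorphism $s^{\dagger_{F_{\epsilon\ell}}}$; by \Cref{lem:Kop} the argument $\ssem{K[\op(v_2)]}$ exhibits the operation node $\varphi_{\ell,\op,\epsilon(\ell)}$ that this homomorphism consumes, firing the clause $\psi_{\ell,\op_j,\epsilon(\ell)+1}$, which is precisely the denotation of the handler body $e_j$ fed the semantic parameter, argument, and the continuations $k_1,l_1$. The crux is to show that $k_1$ and $l_1$ coincide with the denotations of the operationally captured $f_k$ and $f_l$: unfolding $\ssem{\glocal{\epsilon}{\wph{h}{p}{K[y]}}{\myg}}$ and $\ssem{\then{\epsilon}{\wph{h}{p}{K[y]}}{\myg}}$ should reproduce $k_1$ and $l_1$, the point being that the $\glocal$ and $\thensymbol$ wrappers force these continuations to ignore the ambient loss function and use $\lsem{\myg}$ instead, exactly as in the localized semantic definitions (and matching the $\delta_\epsilon$ conversion appearing in $l_1$). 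Getting this correspondence on the nose --- aligning the handler-depth index $\epsilon(\ell)+1$ picked out by the homomorphism with the occurrence of $\op$ under $K$, and matching $\delta_\epsilon$ with the syntactic $\thensymbol$ --- is the delicate bookkeeping; rule (S1) is then a structurally simpler variant in which the same homomorphism is threaded under a handler frame, with the return clause $v_r$ supplying the adjusted loss continuation $\lambda^\epsilon x.\,\then{\epsilon}{v_r(v,x)}{\myg}$ of the hypothesis.
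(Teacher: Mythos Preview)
Your approach is essentially the paper's: rule induction with the redex cases handled by direct calculation (using the value lemma, substitution, and the monad laws) and the frame cases by the induction hypothesis plus commutation of the $R$-action with the relevant binds/homomorphisms. Your treatment of (R5) and (S1) in particular matches the paper closely, including the identification of $f_k,f_l$ with the semantic $k_1,l_1$ and the need for $s^{\dagger_{F_{\epsilon\ell}}}$ to commute with $r\cdot(-)$ (the paper isolates this as a separate lemma).

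There is, however, one concrete error. The context-decomposition identity you state for (F),
\[
\ssem{F[e]}\lsem{\myg} \;=\; \ssem{e}\lsem{\lambda^\epsilon x\type\tau.\,\then{\epsilon}{F[x]}{\myg}},
\]
is ill-typed: if $e\type\tau$ and $F[e]\type\sigma$ with $\tau\neq\sigma$, the two sides live in $W_\epsilon(\ssem{\sigma})$ and $W_\epsilon(\ssem{\tau})$ respectively. The correct lemma (the paper's Lemma~\ref{lem:Fsemg}, derived from a prior $\ssem{F[e]} = \mlet{S_\epsilon}{a}{\ssem{e}}{\ssem{F[x]}(x\mapsto a)}$) is
\[
\ssem{F[e]}\lsem{\myg} \;=\; \mlet{W_\epsilon}{a\in\ssem{\tau}}{\ssem{e}\lsem{\lambda^\epsilon x\type\tau.\,\then{\epsilon}{F[x]}{\myg}}}{\ssem{F[x]}(x\mapsto a)\lsem{\myg}},
\]
i.e.\ the adjusted loss continuation governs only the argument of a $W_\epsilon$-bind, with the frame's own computation supplying the body. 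Your subsequent remark about ``pushing the action $r\cdot(-)$ back out through the frame'' suggests you have this in mind---the action is pulled through precisely because $r\cdot(-)$ commutes with $\mlet{W_\epsilon}{}{-}{}$---but the identity as written would not support the argument. The paper also relies on a small lemma (Lemma~\ref{lem:theng}) equating $\lsem{\lambda^\epsilon x.\,\then{\epsilon}{e}{\myg}}$ with $\lambda a.\,\ER{\epsilon}{\ssem{e}[a/x]}{\lsem{\myg}}$, which is what makes the Kleisli loss-transformer coincide with the syntactic $\thensymbol$-continuation; you allude to this but will need it explicitly for (F), (S1), and the $f_l$ half of (R5).
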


and that implies evaluation (big-step) soundness:
\begin{theorem} [Evaluation soundness] 
\label{thm:sound} For all %
$e\type \sigma\etype \epsilon$ and  %
$\myg \type \sigma \to \real\etype \epsilon'$ with $\epsilon' \subseteq \epsilon$ we have:
{\small \begin{enumerate}
\item If $\myg \vdash_\epsilon e \xRightarrow{r} v$ then $ 
    \ssem{e}\lsem{\myg} = (r,\vsem{v})$.
\item If $
\myg \vdash_\epsilon e \xRightarrow{r} K[\op(v)]$ then
$ \ssem{e}\lsem{\myg} = r\cdot \ssem{K[\op(v)]}\lsem{\myg}$

     \end{enumerate}}
\end{theorem}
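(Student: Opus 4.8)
The plan is to reduce the evaluation (big-step) statement to the already-established Small-step Soundness (\Cref{thm:smallsound}) by induction on the derivation of $\myg \vdash_\epsilon e \bevalto{r} w$, equivalently on the number of small steps taken. I would prove clauses (1) and (2) simultaneously, letting the shape of the terminal $w$ (a value, or a stuck expression $K[\op(v)]$) select which clause applies. Throughout, type safety (\Cref{thm:dps}, part 4) guarantees that every intermediate expression retains type $\sigma \etype \epsilon$, so the side condition $\epsilon' \subseteq \epsilon$ persists along the whole reduction and both \Cref{thm:smallsound} and the induction hypothesis stay applicable at each step.

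For the base case the derivation is the reflexivity rule, so $e = w$ and $r = 0$. If $w = v$ is a value, then by \Cref{lem:valsem} we have $\ssem{v} = \eta_{S_\epsilon}(\vsem{v})$, and unfolding the definitions of $\eta_{S_\epsilon}$ and $\eta_{W_\epsilon}$ gives $\ssem{v}\lsem{\myg} = \eta_{W_\epsilon}(\vsem{v}) = (0,\vsem{v})$, which is exactly clause (1) with $r = 0$. If instead $w = K[\op(v)]$ is stuck, clause (2) demands $\ssem{K[\op(v)]}\lsem{\myg} = 0 \cdot \ssem{K[\op(v)]}\lsem{\myg}$, and this holds because $0\cdot u = u$ is one of the additive action laws of $W_\epsilon$.

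For the inductive step the derivation begins with a single transition $\myg \vdash_\epsilon e \evalto{r_0} e''$ followed by $\myg \vdash_\epsilon e'' \bevalto{s} w$, with total loss $r = r_0 + s$. \Cref{thm:smallsound} yields $\ssem{e}\lsem{\myg} = r_0 \cdot (\ssem{e''}\lsem{\myg})$, and the induction hypothesis evaluates $\ssem{e''}\lsem{\myg}$. In the value case it equals $(s,\vsem{v})$, and since the $W_\epsilon$-action on a leaf merely adds to the recorded loss, $r_0 \cdot (s,\vsem{v}) = (r_0 + s, \vsem{v})$, giving clause (1). In the stuck case it equals $s \cdot \ssem{K[\op(v)]}\lsem{\myg}$, and the action law $r_0 \cdot (s \cdot y) = (r_0 + s)\cdot y$ gives $\ssem{e}\lsem{\myg} = (r_0+s)\cdot \ssem{K[\op(v)]}\lsem{\myg}$, which is clause (2).

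The genuine content of the result lives entirely in \Cref{thm:smallsound}; once that is in hand the present argument is routine bookkeeping. The only point that needs real care is the interaction of the accumulated loss $r_0$ with the monadic value already produced: one must check that the two additive action laws of $W_\epsilon$ are precisely what is needed to fold $r_0$ into the loss $s$ delivered by the induction hypothesis, uniformly in the value and stuck cases. I therefore expect no substantial obstacle here beyond keeping the type indices $\sigma,\epsilon,\epsilon'$ aligned (handled by type preservation) and correctly computing the action on $W_\epsilon$-leaves.
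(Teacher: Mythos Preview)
Your proposal is correct and follows essentially the same approach as the paper: induction on the big-step derivation, with the base case dispatched by \Cref{lem:valsem} (values) or the action law $0\cdot u = u$ (stuck expressions), and the inductive step by composing \Cref{thm:smallsound} with the induction hypothesis via the action law $r_0\cdot(s\cdot y) = (r_0+s)\cdot y$. The only cosmetic difference is that the paper's appendix proof treats the two clauses as two separate inductions and, for the stuck case, unfolds $\ssem{K[\op(v)]}\lsem{\myg}$ via \Cref{lem:Kop} to obtain a more explicit formula, whereas you keep it packaged as $r\cdot\ssem{K[\op(v)]}\lsem{\myg}$; for the statement as given in the main text your packaging is exactly what is asked.
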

Combining soundness and termination (Theorem~\ref{thm:term}) 
we obtain adequacy:
\begin{theorem} [Adequacy] 
\label{thm:adequate} For all %
$e\type \sigma\etype \epsilon$ and %
$\myg \type \sigma \to \real\etype \epsilon'$ with $\epsilon' \subseteq \epsilon$ we have:
{\small \begin{enumerate}

\item
If $ \ssem{e}\lsem{\myg} = (r,a)$ then, for some $v$,
    $\myg \vdash_\epsilon e \xRightarrow{r} v$ and $\vsem{v} = a$.
\item
If $
\ssem{e}\lsem{\myg} = \varphi^{\R \times \ssem{\sigma}}_{\ell,\op,\epsilon(\ell)} (a,f)$
then, for some $K[\op(v)]$,  $\myg \vdash_\epsilon e \xRightarrow{r} K[\op(v)]$, 
$a = \vsem{v}$,\\ and $f = \lambda b \in \ssem{\inn}.\,
    r\cdot \ssem{K[x]}(x \mapsto b)\lsem{\myg}
 $.
\end{enumerate}}
\end{theorem}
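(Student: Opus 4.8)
The plan is to obtain adequacy as a converse reading of soundness, leveraging the fact that every well-typed program evaluates. Fix $e\type\sigma\etype\epsilon$ and $\myg\type\sigma\to\real\etype\epsilon'$ with $\epsilon'\subseteq\epsilon$. First I would invoke Theorem~\ref{thm:bterm} to obtain a \emph{unique} $r_0\in\R$ and terminal expression $w$ with $\myg\vdash_\epsilon e \bevalto{r_0} w$, and $w\type\sigma\etype\epsilon$. By the terminal case of the Expression Analysis Lemma (\Cref{eal}), $w$ is either a value $v$ or a stuck expression $K[\op'(v')]$. I would then run Evaluation Soundness (Theorem~\ref{thm:sound}) on this unique evaluation to get an exact description of $\ssem{e}\lsem{\myg}$, and match it against the hypothesised denotation. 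The match is forced because the two clauses of soundness land in \emph{disjoint} parts of $W_\epsilon(\ssem\sigma)=F_\epsilon(\R\times\ssem\sigma)$: its right summand holds the ``leaves'' $(r,a)\in\R\times\ssem\sigma$, while its left summand holds the ``operation nodes'' $\varphi^{\R\times\ssem\sigma}_{\ell,\op,i}(o,k)=((\ell,\op,i),(o,k))$.

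For part~(1), assume $\ssem{e}\lsem{\myg}=(r,a)$, a leaf. Soundness~(2) together with Lemma~\ref{lem:Kop} would place $\ssem{e}\lsem{\myg}$ in the node summand, so the stuck case is impossible and $w$ must be a value $v$; then soundness~(1) gives $(r_0,\vsem v)=(r,a)$, whence $r_0=r$ and $\vsem v=a$, so $\myg\vdash_\epsilon e\bevalto{r} v$ with $\vsem v=a$, as required.

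For part~(2), assume $\ssem{e}\lsem{\myg}=\varphi^{\R\times\ssem\sigma}_{\ell,\op,\epsilon(\ell)}(a,f)$, a node. The value case would produce a leaf by soundness~(1), so $w=K[\op'(v')]$. Soundness~(2) gives $\ssem{e}\lsem{\myg}=r_0\cdot\ssem{K[\op'(v')]}\lsem{\myg}$, and Lemma~\ref{lem:Kop} rewrites $\ssem{K[\op'(v')]}\lsem{\myg}=\varphi^{\R\times\ssem\sigma}_{\ell',\op',\epsilon(\ell')}(\vsem{v'},\lambda b\in\ssem{\inn}.\,\ssem{K[x]}(x\mapsto b)\lsem{\myg})$. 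Since the additive action commutes with the algebra operations (the action $\epsilon$-algebra laws), $r_0\cdot(-)$ passes inside the node, giving $\varphi_{\ell',\op',\epsilon(\ell')}(\vsem{v'},\lambda b.\,r_0\cdot\ssem{K[x]}(x\mapsto b)\lsem{\myg})$. Because the node tags $(\ell,\op,i)$ are pairwise distinct, matching the two nodes forces $\ell=\ell'$, $\op=\op'$ (so the indices $\epsilon(\ell)=\epsilon(\ell')$ coincide), $a=\vsem{v'}$, and $f=\lambda b\in\ssem{\inn}.\,r_0\cdot\ssem{K[x]}(x\mapsto b)\lsem{\myg}$, which is exactly the claim with $r=r_0$.

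I expect the main (though modest) effort to lie in two bookkeeping points rather than in any deep argument. First, one must confirm that the two summands of $F_\epsilon(\R\times\ssem\sigma)$ are genuinely disjoint and that operation-node tags are injective, so that the case split driven by the form of $\ssem{e}\lsem{\myg}$ is both exhaustive and exclusive. Second, one must verify that the action $r_0\cdot(-)$ commutes through an operation node and that the index $\epsilon(\ell)$ produced by Lemma~\ref{lem:Kop} agrees with the hypothesised one once the effect labels are matched. Everything else is forced once Theorems~\ref{thm:bterm} and~\ref{thm:sound} and Lemma~\ref{lem:Kop} are in hand.
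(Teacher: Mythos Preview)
Your proposal is correct and follows essentially the same route as the paper: invoke termination (Theorem~\ref{thm:bterm}) to obtain a terminal $w$, case-split on whether $w$ is a value or stuck, and use Evaluation Soundness to match the denotation, with the disjointness of the leaf and node summands of $F_\epsilon(\R\times\ssem\sigma)$ ruling out the wrong case. Your treatment of part~(2) is actually more explicit than the paper's (which simply says ``proved similarly''), correctly spelling out the use of Lemma~\ref{lem:Kop} and the commutation of the $r_0\cdot(-)$ action through operation nodes.
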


\gcut{\begin{proof}
Suppose that $\ssem{e}\lsem{\myg} = (r,a)$. By the termination theorem, Theorem~\ref{thm:term}, either there are $r'$ and $v$ such that
$\myg \vdash_\epsilon e \xRightarrow{r'} v$ or else there are $r'$ and $K[\op(v)]$ such that
$\myg \vdash_\epsilon e \xRightarrow{r'} K[\op(v)]$. 
Using the second part of Theorem~\ref{thm:sound}, we see that the latter cannot happen, as then $\ssem{e}\lsem{\myg}$ would not have the form $(r,a)$.
So there are $r'$ and $v$ such that
$\myg \vdash_\epsilon e \xRightarrow{r'} v$.
Then, by evaluation soundness, we have $\ssem{e}\lsem{\myg} = (r',\vsem{v})$. So $(r,\vsem{v}) = (r',a)$ and the conclusion follows.

The second part is proved similarly.
\end{proof}}
As usual, if $\sigma$ is first-order and the 
denotation map $\sem{c}$ of constants is 1-1, 
we have the corollary:\\[-1em]
\[\myg \vdash_\epsilon e \xRightarrow{r} v \;\; \iff \;\; 
\ssem{e}\lsem{\myg} = (r,\vsem{v})\]

Fixing $\sigma$ and $\epsilon$, set $E = \{e:\sigma\etype\epsilon\}$, and let 
 $\EV$, the set of \textit{effect values},
be the least set such that:
{\small \[\EV\;\; = \;\; 
\sum_{\ell \in \epsilon, \op: \outt \xrightarrow{\ell} \inn} 
 \hspace{-15pt} V_\outt \times \EV^{V_\inn}  %
  \;+ \;(R \times V_\sigma) \]}%
where, for any $\tau$,  $V_\tau \eqdef \{v|v\type\tau\}$. Following~\cite{plotkin2001adequacy,plotkin2009adequacy}, 
we  evaluate  expressions as far as effect values.
Fixing $\myg\type \sigma \to \loss\etype \epsilon'$ (with $\epsilon'\subseteq\epsilon$)
define (using the evident $R$-action on $\EV$) a \textit{giant step} evaluation function
$\mathrm{Eval}: \mathrm{E} \rightharpoonup \EV$ (shown total via computability)  by:
{\small \[\mathrm{Eval}(e) = 
   \left \{
   \begin{array}{ll}
      (r,v)  & (\myg \vdash_\epsilon e \xRightarrow{r} v) \\
       ((\ell,\op),(v, \lambda w \in \V_{\inn}.\, r\cdot\mathrm{Eval}(K[w])) & 
        (\myg \vdash_\epsilon e \xRightarrow{r} K[\op(v)], \op: \outt \xrightarrow{\ell} \inn)
   \end{array}
   \right .\]}
Next let $\mysim$ be the least relation between $\EV$ and $W_\epsilon(\ssem{\sigma})$
such that
(1) $(r,v) \mysim (r,\vsem{v})$
and
(2)$((\ell,\op),(v,f)) \mysim ((\ell,\op,\epsilon(\ell)),(\vsem{v},g))$ if
$\forall w \in V_\inn. f(w) \mysim g(\vsem{w})$.

\begin{theorem}[Giant Step Adequacy] \label{thm:giant} For all $e\type \sigma\etype \epsilon$ we have: $\mathrm{Eval}(e) \mysim \ssem{e}\lsem{\myg}$.
\end{theorem}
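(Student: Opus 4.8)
The plan is to argue by well-founded induction on the effect value $\mathrm{Eval}(e) \in \EV$. This is legitimate because $\EV$ is defined as a least (hence well-founded) set, and, by the Termination Theorem~\ref{thm:bterm}, $\mathrm{Eval}$ is total with its value pinned down by the two clauses of its definition. Before the main argument I would establish one auxiliary fact: the relation $\mysim$ is closed under the two $R$-actions, i.e. $E \mysim u$ implies $r\cdot E \mysim r\cdot u$, where the left action is the evident one on $\EV$ and the right is that of $W_\epsilon$. This follows by a routine induction on the witness for $E \mysim u$: on a leaf $(s,v)\mysim(s,\vsem{v})$ both actions merely add $r$ to the first component, so clause~(1) persists; on a node both actions only relabel subtrees, so clause~(2) is re-established from the inductive hypotheses applied to each child.

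For the main statement I would fix $e$ and split on the shape of its terminal, which exists and is unique by Theorem~\ref{thm:bterm}. If $\myg \vdash_\epsilon e \xRightarrow{r} v$, then $\mathrm{Eval}(e) = (r,v)$, and Evaluation Soundness (Theorem~\ref{thm:sound}(1)) gives $\ssem{e}\lsem{\myg} = (r,\vsem{v})$; clause~(1) of $\mysim$ then closes this case at once.

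If instead $\myg \vdash_\epsilon e \xRightarrow{r} K[\op(v)]$ with $\op\type \outt \xrightarrow{\ell} \inn$, then $\mathrm{Eval}(e) = ((\ell,\op),(v,\lambda w \in V_\inn.\, r\cdot \mathrm{Eval}(K[w])))$. On the semantic side, Theorem~\ref{thm:sound}(2) gives $\ssem{e}\lsem{\myg} = r\cdot \ssem{K[\op(v)]}\lsem{\myg}$, and Lemma~\ref{lem:Kop} rewrites $\ssem{K[\op(v)]}\lsem{\myg}$ as $\varphi^{\R\times\ssem{\sigma}}_{\ell,\op,\epsilon(\ell)}(\vsem{v},\ \lambda a \in \ssem{\inn}.\, \ssem{K[x]}(x\mapsto a)\lsem{\myg})$. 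I would then use that the $W_\epsilon$-action commutes with $\varphi$ — immediate from the free-algebra definition of $r\cdot(-)$ as a homomorphic extension, which yields $r\cdot\varphi_{\ell,\op,i}(o,k) = \varphi_{\ell,\op,i}(o,\lambda a.\, r\cdot k(a))$ — to push $r\cdot$ through the node, obtaining $\ssem{e}\lsem{\myg} = \varphi_{\ell,\op,\epsilon(\ell)}(\vsem{v},\ \lambda a.\, r\cdot \ssem{K[x]}(x\mapsto a)\lsem{\myg})$. To match both sides under clause~(2) of $\mysim$ I must show, for every $w \in V_\inn$, that $r\cdot \mathrm{Eval}(K[w]) \mysim r\cdot \ssem{K[x]}(x\mapsto \vsem{w})\lsem{\myg}$. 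A standard substitution lemma for the denotational semantics identifies $\ssem{K[x]}(x\mapsto\vsem{w}) = \ssem{K[w]}$, so the right-hand side is $r\cdot \ssem{K[w]}\lsem{\myg}$. The inductive hypothesis — applicable because $\mathrm{Eval}(K[w])$ has strictly smaller rank, being (up to the rank-preserving $R$-action) a child of the root of $\mathrm{Eval}(e)$ — gives $\mathrm{Eval}(K[w]) \mysim \ssem{K[w]}\lsem{\myg}$, and the auxiliary action-closure fact then yields the relation for each child; clause~(2) closes the case.

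The main obstacle is organising the induction cleanly: the recursion is carried out over $\mathrm{Eval}(e)$ rather than over $e$, so I must be careful to certify that each recursive call $\mathrm{Eval}(K[w])$ is genuinely a strictly smaller element of $\EV$, and that the two supporting ingredients — the substitution lemma relating $\ssem{K[x]}(x\mapsto\vsem{w})$ to $\ssem{K[w]}$, and the commutation of the $W_\epsilon$-action past $\varphi$ — are in place. These are the only non-bookkeeping steps; once they are available, the two clauses defining $\mysim$ align exactly with the two clauses defining $\mathrm{Eval}$, and the soundness results supply the needed semantic identities.
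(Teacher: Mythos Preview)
Your proposal is correct and follows essentially the same route as the paper: well-founded induction on $\mathrm{Eval}(e)$, with the two cases discharged by Evaluation Soundness and, in the operation case, a substitution identity $\ssem{K[x]}(x\mapsto\vsem{w}) = \ssem{K[w]}$ plus the induction hypothesis on the children. You are in fact more explicit than the paper on two points it leaves tacit---the closure of $\mysim$ under the $R$-actions, and the commutation of $r\cdot(-)$ with $\varphi$ (the paper sidesteps the latter by using a form of soundness that already has $r\cdot$ pushed inside the node)---but these are presentational rather than substantive differences.
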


\section{Related Work and Conclusion}
\label{sec:conclusion}

Our work may be the first  advocating a %
language design based on effect handlers and the selection monad.
It is  closest to \citet{AP21,AP23} who  
used an $\mathsf{argmax}$ selection function to make their choices.
As they themselves 
said, this is unreasonable when there is %
no 
access to optimal strategies.
Further, neither handlers nor choice continuations were provided
(though they did suggest trying monadic reflection and reification~\cite{filinski1994representing}).
\citet{LGG23} proposed  using  effect handlers 
for \textit{reinforcement learning} (RL)~\cite{sutton2018reinforcement},  but did not support choice continuations. 
Basic RL %
(e.g., \textit{multi-armed bandits} %
as in \citet{LGG23}),
does not need choice continuations as action losses  are directly given, and can be transmitted to a user-defined loss effect (and
 ordinary state effects can be used to represent learner's states).
More sophisticated RL algorithms benefit from choice continuations, e.g., \textit{deep reinforcement learning}~\cite{riedmiller2005neural} where  policies are  approximated by neural networks, and so need 
gradient descent.

There are several directions for future work.
First, we are interested in improving the performance of the selection monad.
Specifically, the choice continuation shares expressions with the delimited continuation,
(though this need not  lead to recomputations).
For instance, in the gradient descent handler \ensuremath{hOpt} in \Cref{sec:examples},
\ensuremath{l} is differentiated with respect to the current parameter,
and \ensuremath{k} is resumed with the updated parameter.
In broader scenarios, we expect that further program transformations and advanced compiler optimizations (e.g., memoization) will mitigate recomputations.
Moreover, a more efficient approach to jointly make nested choices is described
in \citet{hartmann2024towards},
using a %
generalized form of selection monad.
We  would also like to integrate our design into existing languages and frameworks.
e.g., JAX~\cite{jax2018github}, a functional programming DSL popular for large-scale
ML
tasks %
(see ~\cite{jaxhandler}).
Interesting too are  frameworks providing choices for users, such as \citet{smartChoices2018}. There are several interesting possibilities for advancing our framework: adding recursive functions 
or iteration; adding effect polymorphism as in~\citet{L17}; obtaining subeffecting using effect inclusions $\epsilon'\subseteq  \epsilon$ to type expressions  yielding $\epsilon$ results from  $\epsilon'$-continuations; and allowing users to locally vary the reward monoid (e.g., to a product with  independent localising constructs, facilitating multi-objective optimization). %

\begin{acks}
We thank Mart\'{i}n Abadi, Adam Paszke, Dimitrios Vytiniotis, and Dan Zheng
for helpful discussion, and  the reviewers for their helpful comments.
\end{acks}

 \bibliographystyle{ACM-Reference-Format}
 \bibliography{Languagesforlearning.bib}

 \newpage
 \appendix
\section{The core language}
\label{appendix:sec:source}
\newcommand{\acut}[1]{#1}
\newcommand{\Lsc}{L_{\sigma,\epsilon}}
In this Appendix we consider \cname in full, presenting proofs of: determinism, progress, and safety of the operational semantics (Theorem~\ref{Athm:dps}), the fundamental lemma (Lemma~\ref{Alem:comp}), termination (Theorem~\ref{Athm:term}), and  soundness and adequacy of the operational semantics 
(Theorems~\ref{Athm:sound},~\ref{Athm:adequate}, and~\ref{Athm:giant}).
\subsection{\cname syntax}

\[\begin{array}{rclrcl}
    \sigma,\tau & \BEQ & b  \BOR (\sigma_1,\dots,\sigma_n) \BOR \sigma + \tau
                         \BOR \nat \BOR \mylist(\sigma)
                         \BOR(\sigma \to \tau\etype \epsilon) \\
 \Eff & \BEQ & \{\overline{\effl_i : \Op(\effl_i)} \} \\
    \epsilon & \BEQ & \{\} \mid \epsilon \effl  \\
    \Op(\effl) & \BEQ & \{ \overline{\op_i \type \outt_i \to \inn_i} \}\\
  \end{array} \]

\[\begin{array}{lcl}
    e & \BEQ & \;  \con\BOR f(e)   \BOR x \\%
       &&\BOR  (e_1,\ldots,e_n) \BOR e.i \\
       &&\BOR  \inl{\sigma}{\tau}(e) \BOR \inr{\sigma}{\tau}(e)\BOR \mycases{e}{x_1\type \sigma_1}{e_1}{x_2\type \sigma_2}{e_2}\\
       &&\BOR \zero \BOR \mysucc{e} \BOR\iter{e_1}{e_2}{e_3} \\
       && \BOR \nil_\sigma \BOR \cons{e_1}{e_2} \BOR \fold{e_1}{e_2}{e_3}\\
      &&                \BOR  \lambda^\epsilon x\type \sigma.\, e \BOR e_1 ~ e_2 \\
      &&\BOR \op(e)  \BOR \loss(e) \BOR \wph{h}{e_1}{e_2} \\
      && \BOR \then{\epsilon}{e_1}{\lambda^\epsilon x\type \sigma.\, e_2} \BOR \glocal{\epsilon}{e}{\myg} \BOR\reset{e} %
    \\[1em]

 \myg & \BEQ  &  \lambda^\epsilon x\type \sigma.\, 0 \BOR 
    \lambda^{\epsilon} x\type \sigma.\, \then{\epsilon}{e}{\myg}
    \\[5pt]

    h & \BEQ & \left \{\begin{array}{l}
                     \op_1 \mapsto \lambda^\epsilon z\type (\parr, \outt_1,
                     (\parr,\inn_1) \to \real \etype \epsilon,
                     (\parr,\inn_1) \to \sigma' \etype \epsilon).~ e_1,\dots, \\
                     \op_n \mapsto \lambda^\epsilon z\type (\parr, \outt_n,
                     (\parr,\inn_n) \to \real \etype \epsilon,
                     (\parr,\inn_n) \to \sigma' \etype \epsilon).~ e_n,\\
                   \return \mapsto \lambda^\epsilon z\type (\parr,  \sigma).\, e
                            \end{array}\right \} \\
                            && \hspace{100pt} (\op_1,\dots, op_n \mbox{ enumerates some } \Op(\ell)) \\[2em]

 \end{array}\]

Handlers $h$ include a list of operation definitions and a return definition;
$h$  \emph{handles $\effl$} if this list enumerates $\Op(\effl)$.

\subsection{Typing Rules}

\small{
\drules[]{$\myG\vdash e\type \sigma \etype  \epsilon $}{Typing Expressions}
{const,fun,var
  ,prd,prj,inl,inr,cases,zero,succ,iter,nil,cons,fold
  ,abs,app,op,loss,handle, then, glocal, reset
}
\drules[]{$\myG \vdash h\type \parr,  \sigma \etype \epsilon\effl \Rightarrow \sigma'\etype\epsilon  $}{Typing Handlers}
{handler}

}
\subsection{Operational Semantics}

The syntactical classes used for operational semantics:

  \begin{center}
  \begin{tabular}{lrcl}
    value & $v $ & $\BEQ $ & $ x \BOR c$ \\
    && $ \BOR$ & $(v_1,\dots,v_n ) \BOR \inl{\sigma}{\tau}(v) \BOR \inr{\sigma}{\tau}(v)$\\
    && $    \BOR$ & $ \zero \BOR \mysucc{v}\BOR \nil_\sigma \BOR \cons{v_1}{v_2} $\\
    && $\BOR$ & $\lambda^\epsilon x\type \sigma.\, e$\\

    regular frame & $F$ & $\BEQ$ & $f(\square) $\\
         && $\BOR$& $ (v_1,\dots, v_k, \square, e_{k+2},\dots, e_n) \BOR \square.i$ \\
          &&$ \BOR $& $ \inl{\sigma}{\tau}(\square) \BOR \inr{\sigma}{\tau}(\square)$\\
         && $   \BOR $& $\mycases{\square}{x_1\type \sigma_1}{e_1}{x_2\type \sigma_2}{e_2}$\\
         &&$ \BOR $& $\mysucc{\square}$\\
         && $\BOR$ & $\iter{\square}{e_2}{e_3}\BOR \iter{v_1}{\square}{e_3} \BOR \iter{v_1}{v_2}{\square}$\\
          &&$ \BOR $& $\cons{\square}{e_2} \BOR \cons{v_1}{\square}$\\
          && $\BOR$ &$\fold{\square}{e_2}{e_3} \BOR  \fold{v_1}{\square}{e_3} \BOR \fold{v_1}{v_2} {\square}$\\
          && $\BOR $&$\square~e \BOR v~\square $\\
          &&$ \BOR$& $ \loss(\square) 
          \BOR \op(\square) \BOR \wph{h}{\square}{e}$ \\
  special frame &$S$ & $\BEQ$ & $\wph{h}{v}{\square} 
             \BOR \then{\epsilon}{\square}{(\lambda^{\epsilon} x\type \sigma.e)}
             \BOR \glocal{\epsilon}{\square}{\myg} %
             \BOR \reset{\square}
  $\\
  cont context & $K$ & $\BEQ$ & $\square \BOR F[K]  \BOR S[K]$\\
        stuck expr & $u$  & $\BEQ$  & $ K[\op(v)]\quad (\op \notin \hop(K)) $ \\

  terminal expr & $w$  & $\BEQ$  & $v \BOR u$ \\

 redex & $R$ & $\BEQ$ &$ f(c)  \BOR v.i \BOR \mycases{v}{x_1\type \sigma_1}{e_1}{x_2\type \sigma_2}{e_2}$\\%
               && $\BOR $ & $\iter{v_1}{v_2}{v_3} \BOR \fold{v_1}{v_2}{v_3} \BOR v_1~v_2$ \\
                    && $\BOR$ & $\loss(v) \BOR \wph{h}{v_1}{v_2} $\\
                    && $\BOR$ & $ \wph{h}{v_1}{K[\op(v_2)]} \quad (\op \notin \hop(K), \op \in h) $ \\
          && $\BOR$ & $   %
                      \then{}{v}{\lambda^\epsilon x\type \sigma.e_1}
                      \BOR \glocal{\epsilon}{v}{\myg}
                      \BOR \reset{v}$
    \\
  \end{tabular}
  \end{center}

We write $\heff(K)$ for the multiset of effect labels that $K$ handles;
it is defined inductively with main clause $\heff(\wph{h}{v}{K'}) = \heff(K')\ell$, where $h$ handles $\ell$; we further define  $\hop(K)$ to be $\{\op \in \Op(\ell)| \ell \in \heff(K)\}$, the set of operations  handled by $K$.

A well-typed value $\myG \vdash v\type\sigma$,  can be typed with any effect: $\myG \vdash  v\type\sigma\etype\epsilon$. This fact is used to prove the following
 ``cut" lemma. We omit the (straightforward) proofs of some lemmas in this appendix.
\begin{lemma} \label{Alem:cut} We have:
\[\frac{\myG \vdash v\type \sigma \qquad \myG, x\type \sigma \vdash e\type \tau \etype \epsilon}
{\myG \vdash e[v/x] \type \tau \etype \epsilon}\]
\end{lemma}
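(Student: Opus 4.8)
The plan is to prove the cut lemma by induction on the derivation of the second premise $\myG, x\type\sigma \vdash e\type\tau\etype\epsilon$. Because the typing rules are syntax-directed and true judgments have unique derivations (\Cref{sec:typingrules}), this reduces to a case analysis on the outermost form of $e$, applying the induction hypothesis to the typing subderivations of its immediate subexpressions. Since $\wph{h}{e_1}{e_2}$ contains a handler $h$ whose clause bodies may mention $x$, the induction is most smoothly carried out \emph{simultaneously} with a companion statement for handler definitions: if $\myG \vdash v\type\sigma$ and $\myG, x\type\sigma \vdash h\type \parr,\sigma_0\etype\epsilon\effl \Rightarrow \sigma'\etype\epsilon$, then $\myG \vdash h[v/x]\type \parr,\sigma_0\etype\epsilon\effl \Rightarrow \sigma'\etype\epsilon$.

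First I would dispose of the base cases. When $e = x$, inversion of \textsc{var} forces $\tau = \sigma$, and since $e[v/x] = v$ the goal $\myG \vdash v\type\sigma\etype\epsilon$ follows from the first premise $\myG \vdash v\type\sigma$ together with the fact, recorded in \Cref{sec:typingrules}, that a well-typed value may be assigned any effect; it is precisely this flexibility that supplies the ambient effect $\epsilon$ demanded by the goal. When $e = y$ with $y \neq x$, or $e = \con$ is a constant, substitution is the identity and the derivation carries over unchanged, the binding $x\type\sigma$ being irrelevant.

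For the inductive step I would push the substitution through each constructor. The effect-free, non-binding constructors (pairs, projections, $f(e)$, application, operation calls, $\loss$, and the sum injections) are routine: one reassembles the subderivations returned by the induction hypothesis, using $(e_1\,e_2)[v/x] = (e_1[v/x])\,(e_2[v/x])$ and so on. The genuinely delicate cases are the binding constructors — abstraction $\lambda^{\epsilon'}y\type\sigma'.\,e_0$, the handler operation and return clauses (which bind $z$), $\mathbf{cases}$, $\mathbf{iter}$, $\mathbf{fold}$, $(\thensymbol)$, and the localising construct $\glocal{\epsilon'}{e}{\myg}$ — where a subexpression is typed in an extended context $\myG, x\type\sigma, y\type\sigma'$. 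Adopting the usual convention that the bound variable $y$ is distinct from $x$ and does not occur free in $v$, applying the induction hypothesis there requires the first premise in the enlarged context, namely $\myG, y\type\sigma' \vdash v\type\sigma$. This is exactly weakening, so the main obstacle — the only ingredient beyond mechanical reassembly — is to have a weakening lemma available: if $\myG \vdash v\type\sigma$ and $y \notin \Dom(\myG)$ then $\myG, y\type\sigma' \vdash v\type\sigma$. I would establish this first, itself by a straightforward induction on typing derivations.

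Two small points merit care. The effect annotation is preserved throughout, since substitution never alters the effect of an expression; and in the binding cases the subexpression must be retyped at its \emph{own} (possibly smaller) effect annotation $\epsilon'$ rather than the ambient $\epsilon$, so the induction hypothesis is applied at whatever effect the subderivation carries. This causes no friction because the lemma is stated for an arbitrary target effect. With weakening in hand, every case is a direct application of the induction hypothesis followed by reapplying the same typing rule, so the argument is entirely uniform.
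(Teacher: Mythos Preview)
Your proposal is correct and is exactly the standard argument the paper has in mind: the paper omits the proof as ``straightforward'' but explicitly notes beforehand that well-typed values can be assigned any effect, which is precisely the ingredient you invoke in the base case $e = x$. The simultaneous induction with a companion statement for handlers, together with weakening for the binding cases, is the expected route.
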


Continuation contexts behave well with respect to typing:
\begin{lemma} \label{Alem:ctx} Suppose that $\vdash K[e]\type \tau\etype \epsilon$, with $K$ handling $\epsilon'$.
Then, for a unique $\sigma$ we have  $\vdash e\type \sigma\etype \epsilon,\epsilon'$ and then for any $\myG \vdash e_1\type \sigma\etype \epsilon,\epsilon'$ we have $\myG \vdash K[e_1]\type \tau\etype\epsilon$.
\end{lemma}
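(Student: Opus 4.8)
The plan is to prove the statement by structural induction on the continuation context $K$, inverting, at each step, the typing rule of the outermost frame. Two general facts make this work and are worth isolating at the outset: typing in \cname is syntax-directed with the type and effect of every subexpression determined (so inversion is unambiguous and the extracted hole-type is automatically unique), and every typing rule constrains a subexpression only through its type and effect rather than its syntactic shape (which is precisely what licenses the ``for any $e_1$'' recomposition clause).

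First I would dispatch the base case $K = \square$: here $\heff(K)$ is empty, so $\epsilon' = \{\}$, and $K[e] = e$ already has type $\tau$ at effect $\epsilon$, so $\sigma = \tau$ works and recomposition is trivial. For the inductive step, write $K = F[K']$ or $K = S[K']$ and, in each case, invert the typing rule of the head frame to read off the type $\tau'$ and effect of the immediate hole-subexpression $K'[e]$, apply the induction hypothesis to $K'$, and finally re-apply the same rule to reassemble $K[e_1]$. Regular frames $F$ never handle, so $\heff(F[K']) = \heff(K')$ and their (effect-preserving) rules leave $K'[e]$ at the ambient effect $\epsilon$; the induction hypothesis for $K'$ then places $e$ at $\epsilon,\heff(K') = \epsilon,\epsilon'$. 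The handler frame $\wph{h}{v}{\square}$ is the only frame where the effect grows: inverting \rscrulename{handle} puts the handled computation at $\epsilon\effl$ when $h$ handles $\effl$, exactly matching $\heff(\wph{h}{v}{K'}) = \heff(K')\effl$, so the induction hypothesis again yields the correct effect $\epsilon,\epsilon'$.

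The remaining special frames $\then{\epsilon}{\square}{(\lambda^\epsilon x\type\sigma.e)}$, $\glocal{\epsilon}{\square}{\myg}$ and $\reset{\square}$ contribute nothing to $\heff$, and so must be shown effect-transparent for the hole. For the $\thensymbol$-frame and the $\reset{\square}$-frame this is direct: the sub-effecting of \rscrulename{then} concerns only the loss continuation $\lambda^\epsilon x\type\sigma.e$ being allowed fewer effects than the body, while the body in the hole carries the ambient effect, so inversion pins it to $\epsilon$. The genuinely subtle case, and the step I expect to be the main obstacle, is $\glocal{\epsilon}{\square}{\myg}$: here \rscrulename{glocal} combines a sub-effecting between $\myg$ and the localised computation with the ``effect conversion'' between that computation and the whole expression, so I must check that inversion still determines a single hole-effect equal to $\epsilon,\epsilon'$, and that the loss-continuation side-conditions of \rscrulename{glocal} refer only to the hole's type and effect, so that they survive replacing $e$ by an arbitrary $e_1$ of the same type and effect.

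Recomposition in every case is then just re-application of the frame's rule with $K'[e_1]$ in place of $K'[e]$, which the induction hypothesis supplies, together with standard weakening of the closed sub-data of $K$ (the handlers, values, and branch expressions sitting in the frames) into the ambient context $\myG$. Uniqueness of $\sigma$ follows, as noted, from the uniqueness of typing derivations in the calculus.
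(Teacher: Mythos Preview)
The paper omits this proof as one of the ``straightforward'' lemmas, so there is no argument to compare against; your structural induction on $K$, with inversion of the head frame at each step and re-application for recomposition, is exactly the natural proof and is correct.

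Your caution about $\glocal{\epsilon_1}{\square}{\myg}$ is well placed. When you carry that case out you will find that the annotation $\epsilon_1$ syntactically fixes the inner effect: inverting \rscrulename{glocal} on $\vdash \glocal{\epsilon_1}{K'[e]}{\myg}:\tau\etype\epsilon$ yields $\vdash K'[e]:\tau\etype\epsilon_1$ with $\epsilon_1\subseteq\epsilon$, so the induction hypothesis places $e$ at effect $\epsilon_1,\heff(K')$ rather than literally $\epsilon,\epsilon'$. Recomposition is unaffected (the side-conditions of \rscrulename{glocal} refer only to the type and to the annotated $\epsilon_1$, both of which are preserved), and the paper's invocations of the lemma use only the uniqueness of $\sigma$ together with recomposition at the effect actually determined by $K$, so nothing breaks. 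You need not change your plan; just record the hole-effect as ``the effect determined by $K$ and its annotations'' rather than insisting on the literal $\epsilon,\epsilon'$ in that one case.
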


\begin{lemma}[Analysis] \label{Alem:eal} Every expression has exactly one of the following five forms:
\begin{enumerate}
\item a value $v$,
\item a stuck expression $K[\op(v)]$, for a unique $K$, $\op$, and $v$, 
\item a redex $R$, 
\item $F[e]$, for a unique $F$ and $e$, with $e$ not a value or stuck, or
\item $S[e]$, for a unique $S$ and $e$, with $e$ not a value or stuck.
\end{enumerate}
\end{lemma}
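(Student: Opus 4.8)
The plan is to prove the three assertions—existence (every $e$ has one of the five forms), mutual exclusivity (no $e$ has two), and uniqueness of the decomposition witnessing each form—\emph{simultaneously}, by structural induction on $e$. The induction is driven by the observation that each non-leaf constructor has a distinguished \emph{evaluation position} (the leftmost sub-expression not yet forced to be a value, as dictated by the frame grammar of Fig.~\ref{fig:value}), and that the classification of $e$ is completely determined by classifying the sub-expression occupying that position via the induction hypothesis.

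First I would dispatch the leaves: constants $c$, variables $x$, and abstractions $\lambda^\epsilon x\type\sigma.\,e_0$ are values, and since loss-continuation expressions $\myg$ are themselves abstractions they are covered here too. For a constructor of the form $G[e_0]$, where $G$ is a single regular or special frame with its hole in evaluation position, I would apply the induction hypothesis to $e_0$ and propagate by a uniform pattern: (i) if $e_0$ is a value $v$, then $G[v]$ is the corresponding redex ($f(v)$, $v.i$, $v_1\,v_2$, $\loss(v)$, $\wph{h}{v_1}{v_2}$, $\then{}{v}{\cdots}$, $\glocal{\epsilon}{v}{\myg}$, $\reset{v}$, and likewise for the inductive-type eliminators), with the sole exception of $\op(\square)$, where $\op(v)$ is \emph{stuck} with empty context $\square$; (ii) if $e_0$ is stuck, say $e_0 = K'[\op(v')]$, then $G[e_0] = (G[K'])[\op(v')]$ is again stuck; and (iii) otherwise $e_0$ is neither a value nor stuck, and $G[e_0]$ realises form (4) or (5) with that very $G$ and $e_0$. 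For constructors with several sub-expressions—tuples $(e_1,\dots,e_n)$, applications $e_1\,e_2$, handling $\wph{h}{e_1}{e_2}$, and analogously the sum/nat/list forms—I would locate the evaluation position by scanning left to right for the first non-value component; if all are values the term is itself a value (or a flat redex, for applications and eliminators whose operands are all values), and otherwise I descend into that component. The left-to-right scan matches the grammar's insistence that the frames $v~\square$ and $\wph{h}{v}{\square}$ require the preceding operand already to be a value, which is exactly what forces the decomposition to be unique.

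The genuinely delicate case—and the one I expect to be the main obstacle—is $\wph{h}{v_1}{e_2}$ once $e_2$ has been found stuck, $e_2 = K'[\op(v')]$ with $\op\notin\hop(K')$. Here the outcome forks on whether $h$ handles $\op$. If $\op\in h$, then, since $\wph{h}{v_1}{\square}$ handles exactly the label of $h$, the whole term is the redex of rule (R5). If $\op\notin h$, I must check it is stuck, i.e. that $\op\notin\hop(\wph{h}{v_1}{\square}[K'])$: unfolding the definitions, $\heff(\wph{h}{v_1}{\square}[K']) = \heff(K')\,\ell$ where $h$ handles $\ell$, so $\hop$ of this context is $\hop(K')\cup\Op(\ell)$; since $\op\notin\hop(K')$ and $\op\notin\Op(\ell)$, indeed $\op$ is unhandled and the term is stuck. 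Getting this handling bookkeeping right is what separates the redex rule (R5) from a stuck configuration, and it is also where uniqueness of the context $K$ in form (2) is pinned down, because the split at every handler node is forced by $\hop$.

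Finally, exclusivity and uniqueness come essentially for free from the shape of the argument: at each node the induction hypothesis classifies the evaluation sub-expression into exactly one of value / stuck / redex / $F[\cdot]$ / $S[\cdot]$, and each possibility triggers exactly one of the five conclusions with a determined frame (or determined redex). Thus the analysis is a deterministic recursive procedure, so the form assigned to $e$ and the data $K,\op,v$, or $R$, or $F,e_0$, or $S,e_0$ witnessing it are unique; in particular the side-conditions ``$e$ not a value or stuck'' in forms (4) and (5) rule out overlap with forms (1)–(3), completing the proof.
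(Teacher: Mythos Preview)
Your proposal is correct and follows essentially the same approach as the paper: a case analysis on the syntactic form of the expression, classifying the sub-expression in evaluation position as value, stuck, or neither, and propagating accordingly. The paper works through application $e_1\,e_2$ as its representative case and argues existence and uniqueness separately, whereas you frame the same argument as a structural induction and spell out the handler case $\wph{h}{v_1}{K'[\op(v')]}$ in more detail than the paper does; the content is the same.
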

\begin{proof}
The proof is by cases on the form of expressions. 
We consider one case as an example, an application $e_1~e_2$.

First we show that the application can always be analysed into one of the 5 forms.
If $e_1$ is not a value or a stuck expression, then $e_1~e_2$ is of the form 
$F[e_1]$, with $e_1$ not a value or stuck, where $F = \square~e_2$.
If $e_1$ is stuck then so is $e_1~e_2$. Otherwise 
$e_1$ is a value, say $v_1$.
Then If $e_2$ is not a value or a stuck expression, then $e_1~e_2$ is of the form 
$F[e_2]$, with $e_1$ not a value or stuck, where $F = v_1~\square$.
If $e_2$ is stuck then so is $v_1~e_2$. 
If $e_2$ is a value, say $v_2$ then $e_1~e_2$ is a redex $v_1~v_2$.

Next we show that the analysis is unique. As it is an application, the only possible forms it can have are
$v_1~v_2$ for values $v_1 = e_1$ and $v_2 = e_2$; $F_1[e_1]$, where $F_1 = \square~e_2$, or $F_2[e_2]$, where 
$F_2 = v_1~\square$ for a value $v_1 = e_1$. 

Now, if $e_1$ is not a value, the only possible form $e_1~e_2$ can have is the second of 
these, viz $F_1[e_1]$. If $e_1$ is stuck, then $e_1~e_2$ cannot have the form $F_1[e_1]$ with $e_1$ not a value or stuck. So the only possible analysis is being stuck. If, on the other hand, $e_1$ is not stuck, then neither is $e_1~e_2$, so the only possible analysis is as $F_1[e_1]$ with $e_1$ not a value or stuck.

Suppose instead that $e_1$ is a value $v_1$. Then $e_1~e_2$ can't have the analysis $F_1[e_1]$ with $e_1$ not a value or stuck, 
So we can assume $e_1~e_2$ has one of the  forms  $v_1~v_2$ or  $F_2[e_2]$.
If $e_2$ is a value then $e_1~e_2$ cannot have the form $F_2[e_2]$ with $e_2$ not a value or stuck. 
So the only possible analysis is the remaining form $v_1~v_2$. 
Suppose instead  that $e_2$ is not a value.If it is not stuck, then neither is $v_1~e_2$, so the only analysis it can have is 
$F_2[e_2]$ with $e_2$ not a value and not stuck. On the other hand if it is stuck then it cannot have that form, and the only possible analysis  it can have is being stuck.

\end{proof}
\newcommand{\ket}[1]{\lvert #1\rangle}

\begin{figure}[H]
    \centering

\small{
\[\begin{array}{llcll}
 \qquad \qquad &

\myg \vdash_\epsilon f(v) & \evalto{0} & v' & (f(v) \to v') \\
 
& \myg \vdash_\epsilon  (v_1,\dots,v_n).i & \evalto{0} & v_i \\

& \myg \vdash_\epsilon   \mycases{\inl{\sigma_1}{\sigma_2}(v)}{x_1\type \sigma_1}{e_1}{x_2\type \sigma_2}{e_2} & \evalto{0} & e_1[v/x_1] \\

 & \myg \vdash_\epsilon   \mycases{\inl{\sigma_2}{\sigma_2}(v)}{x_1\type \sigma_1}{e_1}{x_2\type \sigma_2}{e_2} & \evalto{0} & e_2[v/x_2] \\

 &  \myg \vdash_\epsilon   \iter{0}{v_2}{v_3} & \evalto{0} & v_2 \\

 & \myg \vdash_\epsilon   \iter{\mysucc{v_1}}{v_2}{v_3} & \evalto{0} &
v_3 \iter{v_1}{v_2}{v_3} \\

 & \myg \vdash_\epsilon  \fold{\nil_\sigma}{v_2}{v_4}  &  \evalto{0}  &  v_2 \\

 & \myg \vdash_\epsilon  \fold{\cons{v_1}{v_2}}{v_3}{v_4} & \evalto{0} &  
    v_4(v_1,\fold{v_2}{v_3}{v_4})  \\

& \myg \vdash_\epsilon (\lambda^\epsilon x\type \sigma.\, e)~v  & \evalto{0} & e[v/x] \\

 & \myg \vdash_\epsilon  \loss(r) &  \evalto{r} &  () \\[10pt]

 & \multicolumn{3}{c}{ \inferrule*[narrower=0.7]{ v_1:par \\ \op \notin \hop(K) \\ \op\type \outt \xrightarrow{\ell} \inn \\ \op \mapsto v_o \in h\\
 \mbox{$h$ has effect $\epsilon$} \\ 
 f_k = \lambda^\epsilon(p,y)\type (par,in).\,\glocal{\epsilon}{\wph{h}{p}{K[y]}}{\myg}\\
 f_l = \lambda^\epsilon (p,y)\type (par,in).\,\then{\epsilon}{(\wph{h}{p}{K[y]})}{\myg}}
 {\myg \vdash_\epsilon\wph{h}{v_1}{K[\op(v_2)]}  \evalto{0} v_o(v_1,v_2,f_l,f_k)} } \\[10pt]
 
   & \myg \vdash_\epsilon  \wph{h}{v_1}{v_2} & \evalto{0} & v_r(v_1,v_2) & (\return \mapsto v_r \in h) \\

 &  \myg \vdash_{\epsilon} \then{\epsilon_1}{v}{\lambda^{\epsilon_1} x\type \sigma. e}
  &\evalto{0} & \glocal{\epsilon_1}{e[v/x]}{\lambda^{\epsilon_1} x: \sigma.\, 0} \\
  
  & \myg \vdash_{\epsilon} \glocal{\epsilon_1}{v}{\myg_1} & \evalto{0} & v & 
  \\
  
  & \myg \vdash_{\epsilon} \reset{v} & \evalto{0} & v

\\[15pt]

 & \multicolumn{3}{c}{\inferrule{
                     \lambda^\epsilon x\type \tau.\, (\then{\epsilon}{F[x]}{\myg}) \vdash_\epsilon e \evalto{r} e' }
{\myg \vdash_\epsilon F[e] \evalto{r} F[e']}} \\ [20pt]

 & \multicolumn{3}{c}{\inferrule*[narrower=0.7]{\mbox{$h$ has effect $\epsilon$}\\\return \mapsto v_r \in h \quad v_r\type (\parr,\sigma) \to \sigma'\etype \epsilon \\
 \mbox{$h$ handles $\ell$}\\
                     \lambda^\epsilon x\type \sigma.\, (\then{\epsilon}{v_r(v,x)}{\myg}) \vdash_{\epsilon\ell} e \evalto{r} e' }
{\myg \vdash_\epsilon \wph{h}{v}{e} \evalto{r} \wph{h}{v}{e'}}} \\ [20pt]

& \multicolumn{3}{c}{\inferrule{\myg_1 \vdash_\epsilon e \evalto{r} e'}
         {\myg \vdash_\epsilon  (\then{\epsilon}{e}{\myg_1}) \evalto{\; 0 \;} r +  
            (\then{\epsilon_1}{e'}{\myg_1})}} \\[25pt]

& \multicolumn{3}{c}{\inferrule{ \vdash e \type \sigma\etype \epsilon_1 
\\ \myg_1 \vdash_{\epsilon_1} e \evalto{r} e'}
{\myg \vdash_{\epsilon} \glocal{\epsilon_1}{e}{\myg_1}\evalto{r} \glocal{\epsilon_1}{e'}{\myg}}}
\\[25pt]

& \multicolumn{3}{c}{\inferrule{\myg \vdash_\epsilon e \evalto{r} e'}
         {\myg \vdash_\epsilon  \reset{e} \evalto{\; 0 \;} \reset{e'}}} \\
\end{array}
\]}
    \caption{Small step op sems}
    \label{ssopsems}
\end{figure}
\newcommand{\thisfigurecut}[1]{}

\thisfigurecut{\begin{figure*}
 \begin{mathpar}

 \Infer{handle 1}{
   {\lambda^\epsilon x\type \parr.~\then{\epsilon}{(\wph{h}{x}{e_2})} {\myg } \vdash e_1 \xRightarrow{r_1} v_1}\\\\
   {\lambda^\epsilon y\type \sigma.~\then{\epsilon}{(\wph{h}{v_1}{y})} {\myg } \vdash e_2 \xRightarrow{r_2} v_2}\\\\
   \myg\vdash e[v_1/p, v_2/x]  \xRightarrow{r_3} w}{
   \myg\vdash \wph{h}{e_1}{e_2} \xRightarrow{r_1 + r_2 + r_3}  w
 }\\
 \hspace{100pt} (\return(p\type \parr, x\type \sigma) \mapsto e \in h)\\\\

\and
 \Infer{handle 2}{
   {\lambda^\epsilon x \type \parr.~\then{\epsilon}{(\wph{h}{x}{e_2})} {\myg } \vdash e_1 \xRightarrow{r_1} v_1}\\\\
    {\lambda^\epsilon y \type \sigma.~\then{\epsilon}{(\wph{h}{v_1}{y})} {\myg } \vdash e_2 \xRightarrow{r_2} E[\op(v)]}\\\\
   \myg\vdash e [v_1/p, v_2/x, f_l/\kl,  f_k/\kk] \xRightarrow{r_3} w}{
   \myg\vdash \wph{h}{e_1}{e_2} \xRightarrow{r_1 + r_2 + r_3}  w
 }\\
 \hspace{50pt} (\op(p\type \parr, x\type \outt,\kl \type (\parr,\inn)\to \real\etype \epsilon,\kk \type (\parr, \inn)\to
\sigma'\etype \epsilon) \mapsto  e \in h,\\
   f_k = (\lambda^\epsilon (p, y)\type (\parr, \inn).~\wph{h}{p}{E[y]}), \\
   f_l = \lambda^\epsilon (p, y)\type (\parr, \inn).~\then{\epsilon}{f_k(p,y) } {\myg })\\\\

 \and

\Infer{gen op}{\lambda^\epsilon x\type \sigma.\, \then{\epsilon}{F[x]} {\myg } \vdash e \xRightarrow{r} E[\op(v)]}
                      {\myg\vdash F[e] \xRightarrow{r} F[E[\op(v)]]} \quad (F \mbox{ does not handle } \op, \vdash e\type \sigma)
\and
  \Infer{reset op}{
    \myg\vdash  e \xRightarrow{r} E[\op(v)]
  }{
    \myg\vdash \reset{e} \xRightarrow{0} \reset{E[\op(v)]}
  }

  \and
  \Infer{local op}{
    0_\sigma \vdash  e \xRightarrow{r} E[\op(v)]
  }{
    \myg\vdash \local{\epsilon}{e} \xRightarrow{r} \local{\epsilon}{E[\op(v)]}
  } \quad (\vdash e\type \sigma)\\\\

\end{mathpar}
 \caption{Operational semantics rules for \cname (cntnd)}
\label{Copsem3}
\end{figure*}}

\newpage
\begin{theorem} \label{Athm:dps}
\myskip
\begin{enumerate}
    \item (\textit{Termination}) If  $e$ is terminal, then it can make no transition, i.e., $\myg \vdash_{\epsilon} e \xrightarrow{r} e'$ holds for no $\myg, r, \epsilon$, and $e'$. 
   \item (\textit{Determinism}) If  $~\myg \vdash_{\epsilon} e \evalto{r} e'$ and  $\myg \vdash_{\epsilon} e \evalto{r'} e''$ then $r = r'$ and $e' = e''$.
   \item (\textit{Progress})  If $e\type \sigma\etype \epsilon_1$ is non-terminal, then
     $\myg \vdash_{\epsilon_1} e \evalto{r} e'$  holds for some $r$ and $e'$ for any $~\myg\type \sigma \to \real\etype \epsilon_2$ with $\epsilon_2 \subseteq \epsilon_1$.
   \item (\textit{Type safety})  If $~\myg\type \sigma \to \real\etype \epsilon_2$,  
        $\myg \vdash_{\epsilon_1} e \evalto{r} e'$, with $\epsilon_2 \subseteq \epsilon_1$,
       and $e\type \sigma\etype \epsilon_1$ then $e'\type \sigma\etype \epsilon_1$.
\end{enumerate}

\end{theorem}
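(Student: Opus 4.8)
The plan is to prove the four parts largely in the stated order, treating (1) and (2) together by induction on the size of $e$ driven by the Analysis Lemma (\Cref{Alem:eal}), and (3) and (4) by induction on the size of $e$ using, in addition, a canonical-forms lemma together with the cut and context lemmas (\Cref{Alem:cut,Alem:ctx}). Parts (1) and (2) concern only the syntactic shape of $e$ and the matching of rule left-hand sides, so they need no typing information. Progress and safety, by contrast, rest on well-typedness, chiefly to guarantee that the values inhabiting a redex have the expected canonical form and that the loss continuations manufactured by the frame rules are themselves well-typed.

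For part (1) I would show that a terminal expression matches the left-hand side of no rule. The Analysis Lemma sorts each $e$ into exactly one of five disjoint forms. A value is neither a redex nor of the shape $F[e_0]$ or $S[e_0]$, since filling any frame yields a non-value, so no rule applies to it. For a stuck $K[\op(v)]$ I would induct on $K$: if $K=\square$ the term is $\op(v)$, which is no redex and only fits the frame $\op(\square)$ around the value $v$, whose frame rule requires $v$ to step; if the outermost constructor of $K$ is a regular or special frame, the only candidate is the corresponding frame rule, whose premise demands that the still-stuck smaller subterm step, impossible by the induction hypothesis; and if the outermost constructor is a handler $\wph{h}{\square}{\cdot}$, then $\op\notin\hop(K)$ forces $h$ not to handle $\op$, so the operation redex rule $(R5)$ does not fire either. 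Part (2) then follows from the uniqueness clauses of the Analysis Lemma: each $e$ falls under exactly one form, within each form at most one rule is applicable, and for the frame rules the outcome is pinned down by the induction hypothesis, with the decoration $\epsilon$ fixing the extended loss continuation $\lambda^\epsilon x.\,\then{\epsilon}{F[x]}{\myg}$ uniquely.

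For part (3), a non-terminal $e$ is, by the Analysis Lemma, a redex or of frame form. For a redex I would invoke canonical forms to see each redex rule fires: a value of function type is a $\lambda$-abstraction so $(R3)$ applies, a value of product type is a tuple so $(R2)$ applies, and $(R1)$ uses the assumed totality of primitive reductions; rule $(R5)$ fires precisely because the outermost handler handles the otherwise-unhandled operation. For a frame form $F[e_0]$ or $S[e_0]$ with $e_0$ non-terminal, I would apply the induction hypothesis to $e_0$ under the freshly constructed loss continuation; this requires first checking that continuation is well-typed with an effect contained in that of $e_0$, which is exactly where the sub-effecting permitted by \rref{then} and \rref{glocal}, together with the hypothesis $\epsilon_2 \subseteq \epsilon_1$, is needed.

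For part (4) I would re-type $e'$ rule by rule. The frame rules are discharged by the context lemma and the induction hypothesis, and the substitution redexes $(R3)$, $(R7)$ use the cut lemma. The hard part will be rule $(R5)$: one must exhibit typings for the delimited continuation $f_k$ and the choice continuation $f_l$, both of which wrap $\wph{h}{p}{K[y]}$ (typable via the context lemma using $\op\notin\hop(K)$) inside $\glocal{\epsilon}{\cdot}{\myg}$ and $\then{\epsilon}{\cdot}{\myg}$ respectively. Making their effect annotations agree with the handler clause's expected types $(\parr,\inn)\to\real\etype\epsilon$ and $(\parr,\inn)\to\sigma'\etype\epsilon$ is precisely what the sub-effecting in the typing of $\thensymbol$ and the local construct is designed to permit, and this loss-continuation bookkeeping across the frame and handler rules is the main technical obstacle of the whole proof.
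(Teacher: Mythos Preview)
Your plan is essentially the paper's: parts (1) and (2) by structural induction driven by the Analysis Lemma, and parts (3) and (4) proved simultaneously by case analysis on redexes and frame forms, with the handler redex $(R5)$ as the main work and sub-effecting in \rref{then} and \rref{glocal} used to type the manufactured loss continuations.

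One small slip to fix in part (1): your claim that ``filling any frame yields a non-value'' is false --- $\mysucc{\square}$, $\inl{\sigma}{\tau}(\square)$, $\cons{v_1}{\square}$, and tuple frames with all other slots values are regular frames whose instantiations are values. So a value $v$ \emph{can} be written as $F[e_0]$; what you need instead is the argument you already give for stuck expressions (and which the paper uses uniformly): if a frame rule fired on $v$, the sub-term $e_0$ would step, hence be non-terminal by the induction hypothesis, and then $v$ would fall in case (4) of the Analysis Lemma, contradicting its being in case (1). With that correction your outline matches the paper's proof.
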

\begin{proof}
 We first prove termination. The proof is by induction on $e$. By the analysis lemma $e$ is not a redex, so no redex rule can apply. 
It can have the form $F[e_1]$ or $S[e_1]$, but by the unique analysis lemma $e_1$ must be terminal, so by the induction 
hypothesis $e_1$ cannot make a move, so no context rule can apply to $e$ either.

 We next prove determinism by induction on expressions $e$. We split into cases using the analysis lemma.
If $e$ is terminal, then,as we have just seen, $e$ cannot make any transition.
If $e$ is a redex, then, by the analysis lemma it is so uniquely. 
There is then a unique rule that can apply and it is deterministic (all rules are). Finally if $e$ has 
one of the forms $F[e_1]$ or $S[e_1]$, then there is at most one context rule that can apply. 
By the induction. hypothesis,
$e_1$ can make at most one transition, and so then $e$ can only make at most one transition as context rules 
are deterministic in the transitions the sub-expressions can make.

 Suppose next that, for a non-terminal $e$, we have $~\myg\type \sigma \to \real\etype \epsilon_2$ and $e\type \sigma\etype \epsilon_1$ with $\epsilon_2 \subseteq \epsilon_1$. We  show that 
 $\myg \vdash_{\epsilon_1} e \evalto{r} e'$  for some $r$ and $e'$ with $e'\type \sigma\etype\epsilon_1$. This  proves progress and type safety.
 
 We first consider the cases where $e$ is a redex:
 
 \begin{enumerate} 
 \item Suppose $e$ the form $f(v)$. 
 For some $\sigma_1$ and $\tau$ we have  $f\type \sigma_1 \to \tau$. As $e\type \sigma\etype \epsilon_1$ we have 
 $v\type \sigma_1$ and $\sigma = \tau$. So, for some value $v' \type \tau$,  
 we have $f(v) \to v'$. So $\myg \vdash_\epsilon e \xrightarrow{0} v'$ 
 and $v'\type\sigma\etype \epsilon_1$.  
 
\item Suppose $e$ has the form $v.i$. As $e\type \sigma$, we have $\sigma = (\sigma_1,\dots, \sigma_n)$
and $v = (v_1,\dots,v_n)$, for some $v_1\type\sigma_1,\dots, v_n\type\sigma_n$. Then
$\myg\vdash_{\epsilon_1}v.i \xrightarrow{0} v_i$ and 
$v_i\type \sigma$.
\item Suppose $e$ has the form $\mycases{\inl{\sigma_1}{\sigma_2}(v)}{x_1\type \sigma_1}{e_1}{x_2\type \sigma_2}{e_2}$.
 Then $\myg \vdash_{\epsilon_1} e \xrightarrow{0} e_1[v/x_1]$. As $e\type\sigma\etype\epsilon_1$ we have $v\type \sigma_1$ and $x_1\type \sigma_1\vdash e_1\type \sigma\etype \epsilon_1$. So by Lemma~\ref{Alem:cut} we have $e_1[v/x_1]\type\sigma\etype\epsilon_1$.

\item Suppose $e$ has the form $ \iter{0}{v_2}{v_3}$. Then we have  
$\myg \vdash_{\epsilon_1}  e  \evalto{0}  v_2$. As $e\type \sigma\etype \epsilon_1$ we have $v_2\type \sigma$.

\item Suppose $e$ has the form    $\fold{\cons{v_1}{v_2}}{v_3}{v_4}$.
Then $\myg \vdash_{\epsilon_1} e \evalto{0} v_4(v_1,\fold{v_2}{v_3}{v_4})$. 
As $e\type \sigma \etype \epsilon_1$, for some $\tau$ we have  
$v_1\type \tau$,
$v_2\type \mylist(\tau)$, 
$v_3\type \sigma$, 
and $v_4\type (\tau,\sigma) \to \sigma\etype \epsilon_1$.
So we see that $\fold{v_2}{v_3}{v_4})\type \sigma \etype \epsilon_1$, and so that
$v_4(v_1,\fold{v_2}{v_3}{v_4})\type \sigma \etype \epsilon_1$.

\item Suppose that $e$ has the form $v_1~v_2$. As $e\type\sigma\etype \epsilon_1$ we have $v_1\type \tau \to \sigma\etype\epsilon_1$ and $v_2\type\tau$ for some $\tau$. So $v_1$ has the form $\lambda^{\epsilon_1} x\type \tau.\, e_3$. So $\myg \vdash_{\epsilon_1} e \xrightarrow{0} e_3[v_2/x]$. As 
$\lambda^{\epsilon_1} x\type \tau.\, e_3\type \tau \to \sigma\etype\epsilon_1$ and $v_2\type \tau$, we have $e_3[v_2/x]\type \sigma\etype\epsilon_1$.

\item Suppose $e$ has the form $\loss(v)$. As $e\type \sigma$, we have $v = r$, for some $r \in \R$, and
$\sigma = ()$. We have $\myg\vdash_{\epsilon_1} e \xrightarrow{r} ()$, and $()\type ()$.

\item Suppose $e$ has the form $\wph{h}{v_1}{K[\op(v_2)]}$ with $\op \notin \hop(K)$ and $ \op \in h$.

 As $e\type\sigma\etype\epsilon_1$ we have $h$ has effect $\epsilon_1$ and also 
$h \mbox{ handles } \ell$, $ h\type \parr, \sigma_1\etype\epsilon_1\ell \Rightarrow \sigma\etype \epsilon_1$, 
$v_1\type \parr$,
and $K[\op(v_2)] \type \sigma_1\etype \epsilon_1\ell$,
for some $\ell$, $\parr$, and $\sigma_1$. 
As $\op \in h$, from the first of these  we have
 $\op\type \outt \xrightarrow{\ell} \inn$ for some $\outt$ and $\inn$. 
 
 We then have 
 \[\myg\vdash_{\epsilon_1} e \xrightarrow{0} v_o(v_1,v_2,f_l,f_k)\]
 where 
\[f_k = \lambda^{\epsilon_1}(p,y)\type (par,in).\,\glocal{\epsilon_1}{\wph{h}{p}{K[y]}}{\myg}\]
and 
\[f_l = \lambda^{\epsilon_1} (p,y)\type (par,in).\,\then{\epsilon_1}{\wph{h}{p}{K[y]}}{\myg}\]
 
 From $ h\type \parr, \sigma_1\etype\epsilon_1\ell \Rightarrow \sigma\etype \epsilon_1$ we see that 
$v_o \type (\parr, \outt, (\parr, in) \to \real\etype \epsilon,(\parr,  in)\to \sigma \etype \epsilon_1)\to  \sigma\etype \epsilon_1$, where  $\op \mapsto v_o \in h$.
From $K[\op(v_2)] \type \sigma_1\etype \epsilon_1\ell$ and $\op\type \outt \xrightarrow{\ell} \inn$ we see,
via Lemma~\ref{Alem:ctx}, that $v_2\type \outt$ and $y\type \inn \vdash K[y]\type \sigma_1\etype\epsilon_1\ell$.
From the latter and the typing of $h$ we see that 
\[p\type par, y\type in \vdash \wph{h}{p}{K[y]}\type \sigma \etype \epsilon_1\]
From that and $\myg\type \sigma \to \real\etype \epsilon_2$ with $ \epsilon_2 \subseteq \epsilon_1$  we see that 
\[p\type par, y\type in \vdash \glocal{\epsilon_1}{\wph{h}{p}{K[y]}}{\myg}\type \sigma \etype \epsilon_1\]
and
\[p\type par, y\type in \vdash \then{\epsilon_1}{\wph{h}{p}{K[y]}}{\myg}\type \sigma \etype \epsilon_1\]
So $f_k\type (\parr,  in_i)\to \sigma \etype \epsilon_1$ and
$f_l\type (\parr, in_i) \to \real\etype \epsilon_1$.
Then, putting the types of $v_o$, $v_1$, $v_2$, $f_k$, and $f_l$ together, we see that 
$v_o(v_1,v_2,f_l,f_k)\type \sigma\etype \epsilon_1$, as required.

\item Suppose $e$ has the form  $\wph{h}{v_1}{v_2}$. As $e\type\sigma\etype\epsilon_1$ we have
$h \mbox{ handles } \ell$, $ h\type \parr, \sigma_1\etype\epsilon_1\ell \Rightarrow \sigma\etype \epsilon$, 
$v_1\type \parr$,
and $v_2 \type \sigma_1$,
for some $\ell$, $\parr$, and $\sigma_1$. From the second of these we see that $v_r\type \sigma_1 \to \sigma \etype\epsilon_1$, where  $\return \mapsto v_r \in h$. So we have  
$\myg \vdash \wph{h}{v_1}{v_2}\xrightarrow{0} v_r(v_1,v_2)$ and $v_r(v_1,v_2)\type\sigma\etype\epsilon $

\item  Suppose $e$ has the form $\then{}{v}{\lambda^\epsilon x\type \tau.e_1}$. 
 As $e\type \sigma\etype\epsilon_1$, we have $v\type \tau $, $\epsilon \subseteq \epsilon_1$, and $x\type \tau \vdash e_1\type \real \etype \epsilon$.
Then 
$\myg\vdash_{\epsilon_1}\then{}{v}{\lambda^\epsilon x\type \sigma.e_1} \xrightarrow{0} \glocal{\epsilon}{e_1[v/x]}{\lambda^{\epsilon}x : \real.\,0}$.
and $\glocal{\epsilon}{e_1[v/x]}{\lambda^{\epsilon}x : \real.\,0}\type \sigma\etype\epsilon_1$.

\item   Suppose $e$ has the form $\glocal{\epsilon}{v}{\og}$. Then $\myg\vdash_{\epsilon_1}e \xrightarrow{0} v$, and we have $v\type \sigma$ as  $\glocal{\epsilon}{v}{\og}\type\sigma\etype \epsilon_1$.

\item   Suppose $e$ has the form $\reset{v}$. Then 
$\myg\vdash_{\epsilon_1}e \xrightarrow{0} v$, and we have $v\type \sigma$ as  $\reset{v}\type\sigma\etype \epsilon_1$.

\end{enumerate}

We next consider the various possible context cases.

\begin{enumerate}

\item Suppose $e$ has the form $F[e_1]$ where $e_1$ is not terminal. As $e\type \sigma\etype \epsilon_1$, 
by Lemma~\ref{Alem:ctx}, we have $e_1\type \tau\etype\epsilon_1$ for some type $\tau$ 
(regular contexts do not handle operations), and $x\type \tau \vdash F[x]\type \sigma\etype \epsilon_1$.
We therefore have  $\lambda^{\epsilon_1} x\type \tau.\, (\then{\epsilon_1}{F[x]}{\myg})\type \tau \to \real!\epsilon_1$, as
$\myg\type \sigma \to \real\etype\epsilon_2$, with $\epsilon_2 \subseteq \epsilon_1$.

So, by the induction hypothesis, for some $r$ and $e_1'\type \epsilon_1$  we have $\lambda^{\epsilon_1} x\type \tau.\, (\then{\epsilon_1}{F[x]}{\myg}) \vdash_{\epsilon_1} e_1 \xrightarrow{r} e_1'$, and $e_1'\type \tau\etype\epsilon_1$.
So by the rule for regular contexts, we have $\myg \vdash_{\epsilon_1} F[e_1]\xrightarrow{r} F[e_1]$, 
and, by Lemma~\ref{Alem:ctx} we also have $F[e_1']\type \sigma\etype\epsilon_1$.

\item Suppose $e$ has the form $S[e_1]$, with $e_1$ non-terminal where $S = \wph{h}{v}{\square}$.  
As $e\type \sigma\etype \epsilon_1$, we have $h$ has effect $\epsilon_1$ and also 
$h\type \parr,\sigma_1\etype\epsilon_1\ell \Rightarrow \sigma\etype \epsilon_1$,
$v\type\parr$, and 
$e_1\type \sigma_1\etype \epsilon_1\ell$, 
for some $\parr$, $\sigma_1$, and $\ell$.
So for $\return \mapsto e_r \in h$, we have 
$e_r\type (\parr,\sigma_1) \to \sigma\etype \epsilon_1$.
So we have 
$\lambda^{\epsilon_1} x\type \sigma.\, (\then{\epsilon}{v_r(v,x)}{\myg}\type \sigma \to \real\etype \epsilon_1$.

Then, by the induction hypothesis we have 
$\lambda^{\epsilon_1} x\type \sigma.\, (\then{\epsilon_1}{v_r(v,x)}{\myg}) \vdash_{\epsilon_1\ell} e_1 \evalto{r} e_1'$ for some $r$ and 
$e_1'\type \sigma_1\etype \epsilon_1\ell$.
So by the context rule for $S$ we have $\myg \vdash_\epsilon \wph{h}{v}{e} \evalto{r} \wph{h}{v}{e'}$, and we see that $\wph{h}{v}{e'}\type \sigma\etype\epsilon_1$, as required.

\item Suppose $e$ has the form $S[e_1]$, with $e_1$ non-terminal where $S = \then{\epsilon}{\square}{(\lambda^{\epsilon} x\type \tau.e_2)}$.
Since $e\type \sigma\etype \epsilon_1$, 
we have $\sigma = \real$, 
$e_1\type \tau\etype\epsilon_1$,
$\epsilon \subseteq \epsilon_1$, and
$\lambda^{\epsilon} x\type \tau.e_2\type \tau \to \real\etype\epsilon$.
So, by the induction hypothesis, we have $\lambda^{\epsilon} x\type \tau.e_2 \vdash_{\epsilon_1} e_1 \xrightarrow{r} e_1'$ for some $r$ and $e_1'\type\tau\etype\epsilon_1$.
Then, using the context rule for $S$, we see that
$\myg \vdash_{\epsilon_1}  (\then{\epsilon}{e_1}{\lambda^{\epsilon} x\type \tau.e_2}) \evalto{\; 0 \;} r +  
            (\then{\epsilon_1}{e_1'}{\lambda^{\epsilon} x\type \tau.e_2})$. 
Finally, as $e_1'\type\tau\etype\epsilon_1$ and $\lambda^{\epsilon} x\type \tau.e_2\type \tau \to \real\etype\epsilon$
we see that $(r + \then{\epsilon_1}{e_1'}{\lambda^{\epsilon} x\type \tau.e_2})\type\real\etype\epsilon_1$ as required.

\item  Suppose $e$ has the form $S[e_1]$, with $e_1$ non-terminal where $S = \glocal{\epsilon}{\square}{\og}$.

As $e\type \sigma\etype \epsilon_1$, we have $e_1\type \sigma \etype\epsilon$, 
with $\epsilon \subseteq \epsilon_1$, and 
$\og\type \sigma \to \real\etype \overline{\epsilon}$, with $\overline{\epsilon}\subseteq \epsilon$.

By the induction hypothesis we then have $\og \vdash_{\epsilon} e_1 \evalto{r} e_1'$ 
for some $r$ and $e'$, with $e_1'\type \sigma\etype\epsilon$.

So by the context rule for this $S$ we have $\myg \vdash_{\epsilon_1} \glocal{\epsilon}{e_1}{\og} \evalto{r} \glocal{\epsilon}{e_1'}{\og}$.

Finally, as $e_1'\type \sigma \etype\epsilon$, we have $\glocal{\epsilon}{e_1'}{\og}\type \sigma \etype\epsilon_1$, as required.

\item  Suppose $e$ has the form $S[e_1]$, with $e_1$ non-terminal where $S = \reset{\square}$.

As $e\type \sigma\etype \epsilon_1$, we have $e_1\type \sigma \etype\epsilon_1$.
By the induction hypothesis we then have $\myg \vdash_{\epsilon_1} e_1 \evalto{r} e_1'$ 
for some $r$ and $e_1'$, with $e_1'\type \sigma\etype\epsilon$.

So by the context rule for this $S$ we have $\myg \vdash_{\epsilon_1} \reset{e_1} \evalto{0} \reset{e'_1}$.

Finally, as $e_1'\type \sigma \etype\epsilon_1$, we have $\reset{e_1'}\type \sigma \etype\epsilon_1$, as required.

\end{enumerate}

\end{proof}

\begin{corollary} \label{Acor:evaltd} For $e\type \sigma \etype  \epsilon$ and $\myg \type \sigma\to \bool\etype \epsilon'$ with $\epsilon'\subseteq \epsilon$ there is at most one $r \in \R$ and terminal expression $w$ such that $\myg \vdash e \xRightarrow{r} w$ and then $w\type \sigma\etype \epsilon$.
\end{corollary}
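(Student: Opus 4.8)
The plan is to derive this entirely from Theorem~\ref{Athm:dps}, treating it as a genuine corollary of the small-step metatheory. Note first that the statement asserts only uniqueness (``at most one'') together with type preservation of the terminal; existence of a big-step evaluation is not claimed here (that is the content of the termination results). So I only need to show that whenever $\myg \vdash_\epsilon e \bevalto{r} w$ is derivable the pair $(r,w)$ is forced, and that $w\type\sigma\etype\epsilon$. Throughout, the loss continuation $\myg$ and the effect $\epsilon$ stay fixed at the top level of the big-step judgment (by inspection of the two rules in Figure~\ref{fig:bigeval}), so the hypotheses $\myg\type\sigma\to\real\etype\epsilon'$ and $\epsilon'\subseteq\epsilon$ apply uniformly to every small step along the way.

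For uniqueness I would induct on the derivation of $\myg \vdash_\epsilon e \bevalto{r} w$ (equivalently, on the number of small steps it encodes), comparing it against a second derivation $\myg \vdash_\epsilon e \bevalto{r'} w'$. If $e$ is terminal, part~(1) of Theorem~\ref{Athm:dps} says $e$ admits no small step, so neither derivation can use the second big-step rule; both must use the reflexive rule, giving $r=r'=0$ and $w=w'=e$. If $e$ is non-terminal, then the reflexive rule is inapplicable, so both derivations decompose as $\myg\vdash_\epsilon e\evalto{r_0}e_2$ with $\myg\vdash_\epsilon e_2\bevalto{s}w$ (and similarly with primes). Determinism, part~(2) of Theorem~\ref{Athm:dps}, forces $r_0=r_0'$ and $e_2=e_2'$; the induction hypothesis applied at $e_2$ then gives $s=s'$ and $w=w'$, whence $r=r_0+s=r_0'+s'=r'$. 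Commutativity of $\R$ is not even needed, since the reduction sequence is literally the same in both derivations.

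Type preservation proceeds by the same induction. In the terminal case $w=e\type\sigma\etype\epsilon$ holds by hypothesis. In the step case I would apply part~(4) (type safety) of Theorem~\ref{Athm:dps} to $\myg\vdash_\epsilon e\evalto{r_0}e_2$, using $\myg\type\sigma\to\real\etype\epsilon'$ with $\epsilon'\subseteq\epsilon$, to conclude $e_2\type\sigma\etype\epsilon$; the induction hypothesis then yields $w\type\sigma\etype\epsilon$. The only point needing care---rather than a genuine obstacle---is the observation that $\myg$ is held constant in the big-step rules, so that the single $\myg$ appearing in the hypothesis is exactly the one fed to each small-step application of determinism and type safety; the internal re-adjustments of the loss continuation (as in rules~(F) and~(S1)) happen strictly within a single small step and never surface at this level.
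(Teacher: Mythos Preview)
Your proposal is correct and takes essentially the same approach as the paper, which simply states that the corollary ``follows immediately from Theorem~\ref{Athm:dps}.'' You have faithfully unpacked that one-line justification: the induction on the big-step derivation, the appeal to parts~(1) and~(2) for uniqueness, and the appeal to part~(4) for type preservation are exactly the intended routine argument.
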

\begin{proof} This follows immediately from Theorem~\ref{Athm:dps}.
\end{proof}

\subsection{Termination}
\newcommand{\ov}{\overline}
\paragraph{\textbf{Well-foundness of effects}}

We write $e(\epsilon)$ and $e(\sigma)$ for the effect labels appearing in $\epsilon$ or $\sigma$. So, for example 
$e(\sigma \to \tau\etype\epsilon) =  e(\sigma) \cup  e(\tau) \cup (\epsilon)$. Our  well-foundedness assumption is that there is an ordering $\ell_1,\dots, \ell_n$ of the effect labels such that:
\[\op\type \outt \xrightarrow{\ell_j} \inn \;\wedge\; \ell_i \in e(\outt) \cup e(\inn) \implies i < j\]
We make this assumption for this subsection, and the next section (\Cref{sec:denotational}).

With it, we define the effect levels of multisets of effect labels and types by setting:
$l(\epsilon) = \max_i \{i | \ell_i \in e(\epsilon)\}$ and  $l(\sigma) = \max_i \{i | \ell_i \in e(\sigma)\}$.
We will also make use of the size $|\sigma|$ of an type defined in a standard way (e.g.
$|\sigma \to \tau\etype\epsilon| =  1 + |\sigma| +  |\tau| + |\epsilon|$).

\paragraph{\textbf{Computability}}
Our proof uses suitable recursively-defined notions of computability, following Tait~\cite{Tait67}. 
We define the following main notions:
\begin{itemize}
\item[-] \textit{computability} of  values $v\type \sigma$,
\item[-]  \textit{loss computability} of  loss continuations  $\myg\type \sigma \to \real\etype \epsilon$, and   
\item[-]  \textit{computability} of  expressions $e\type\sigma\etype\epsilon$.
\end{itemize}
 
The definitions are proper (i.e. the recursions terminate) as can be be seen by suitable measures $m$ defined on closed values $v\type \sigma$, closed expressions $e\type \real\etype \epsilon$, closed loss continuations $\myg \type \sigma \to \real\etype \epsilon$, and closed expressions $e\type \sigma\etype \epsilon$; these are pairs of natural numbers, with the lexicographic ordering and are given by:
\[m(v) = (l(\sigma), |\sigma|)
\quad
m(e) = (l(\epsilon), 1)
\quad
m(\myg) = (l( \sigma) \max l(\epsilon), | \sigma| )
\quad
m(e) = (l( \sigma) \max l(\epsilon), |\sigma|)\]
 
We define these notions by the following clauses. 
They employ two auxiliary notions.
One is an inductively defined  notion of \textit{$G$-computability} of closed expressions, where $G$ is a set of closed loss continuations; the other is  a notion of  \emph{R-computability} of closed real-valued expressions.

\begin{enumerate}
\item \begin{enumerate}
         \item Every constant $c\type b$ of ground type is computable.as is $\zero$ and every $\nil_\sigma$.
         \item A  closed value $(v_1,\dots,v_n)\type (\sigma_1,\dots,\sigma_n)$ is computable if every 
                   $v_i\type \sigma_i$ is computable.
                   \item A closed value of one of the forms $\inl{\sigma}{\tau}(v)$,  $\inr{\sigma}{\tau}(v)$, or $\mysucc{v}$ is computable if $v$ is. 
            \item A closed value of  the form  $\cons{v_1}{v_2}$ is computable if $v_1$ and $v_2$ are.  
         
              \item A closed value $\lambda^{\epsilon} x\type \sigma.\, e \type \sigma \to 
                    \tau\etype\epsilon$ is computable if, for every computable value $ v\type \sigma$, the 
                    expression $e[v/x]\type \tau\etype\epsilon$ is computable.
                \end{enumerate}
\item The property  of %
{$G$-computability} of  closed expressions $ e\type \sigma\etype \epsilon$, for a set $G$ of closed loss continuations of type $ \myg\type \sigma \to \real\etype \epsilon'$, where $\epsilon' \subseteq \epsilon$,   is the least such property  $P_{\sigma,\epsilon}$ of these expressions such that 
one of the following three mutually exclusive possibilities holds:
\begin{enumerate}
\item  $e$ is a  computable value.
\item  $e$ is an  stuck expression  $K[\op(v)]$, with $\op\type \outt \xrightarrow{\ell} \inn$, where $v:\outt$ is  a computable  value, and  for every  computable closed value $v_1\type \inn$, 
$P_{\sigma,\epsilon}(K[v_1])$ holds.
\item $e$ is not stuck and for every $\myg \in G$, if $\myg \vdash_\epsilon e \xrightarrow{r} e'$ then  $P_{\sigma,\epsilon}(e')$ holds.
\end{enumerate} 
\item \begin{enumerate}
                 \item A closed expression $e:\real\etype \epsilon$ is {R-computable}
iff it is $\{0_{\real,\epsilon}\}$-computable.

\item A closed loss continuation $\lambda^\epsilon x\type \sigma.\, e\type \sigma \to \real\etype \epsilon$ is {loss-computable} if $e[v/x]$ is R-computable for every closed computable value $v\type \sigma$.
\end{enumerate}
\item A closed expression $e\type \sigma\etype\epsilon$ is {computable} 
iff it is $\Lsc$-computable, where $\Lsc$ is the set of loss-computable 
loss continuations $\myg\type \sigma \to \real\etype \epsilon'$, 
for some $\epsilon'\subseteq \epsilon$. 

\end{enumerate}

 Note that a closed value is computable as a value iff it is computable as an expression. 
Also a terminal expression $  K[\op(v)] \type \sigma\etype \epsilon$ where $\op\type \outt \xrightarrow{\ell} \inn$ is computable iff $  v \type \outt$ is and 
 $K[v_1]$ is for every computable $  v_1\type \inn$.
We may write $L$ instead of $\Lsc$ when $\sigma$ and $\epsilon$ can be understood from the context.

As $G$-computability is defined by a least-fixed point, for expressions $e\type \sigma\etype \epsilon$ and loss continuations $\myg\type \sigma \to \real\etype \epsilon$ with $\epsilon' \subseteq \epsilon$ in $G$, 
the following principle of \emph{$G$-induction} holds:
\begin{itemize}
\item[] Let $P_\epsilon(e)$ be a predicate of closed expressions of type $e\type \sigma \etype \epsilon$ such that the following three clauses hold:
\begin{enumerate}
\item   For every computable value $v\type \sigma$, $P_\epsilon(v)$ holds.
\item For every stuck  $K[\op(v)]$, with $\op\type \outt \xrightarrow{\ell} \inn$ and $v\type \outt$ a computable value,
$P_\epsilon(K[\op(v)])$ holds provided that $P_\epsilon(K[v_1])$ holds for every  computable value $v_1\type \inn$.  
\item For every non-stuck $e\type\sigma\etype\epsilon$, $P_\epsilon(e)$ holds provided that 
$\myg \vdash_\epsilon e \xrightarrow{r} e'$  implies $P_\epsilon(e')$ holds, for every $\myg \in G$.
\end{enumerate}
Then $P_\epsilon(e)$ holds for every    $G$-computable expression $e\type \real\etype \epsilon$.
\end{itemize}
As an example, one can use this induction principle to show that if $G'\subseteq G$ and $e$ is $G$-computable, then it is also $G'$-computable. So we see that every computable loss continuation is loss-computable. We term $\{(\lambda x\type \real.\, 0)\}$-induction \emph{R-induction}.

We extend computability to open expressions in the standard way: an expression 
$x_1\type \sigma_1,\dots,x_n\type \sigma_n \vdash e\type \sigma\etype \epsilon$ is said to be computable if the expression $ e[v_1/x_1,\dots,v_n/x_n]\type \sigma\etype \epsilon$ is computable for all closed computable values $ v_1\type \sigma_1,\dots,  v_n\type \sigma_n$,. When the $v_i$ are known from the context, we generally write $\ov{e}$ for $e[v_1/x_1,\dots,v_n/x_n]$.

We say that a handler $\myG \vdash h\type \parr,\sigma\etype \epsilon\ell \Rightarrow \sigma'\etype \epsilon$ is computable if,  for every handler operation  definition clause  
$\op \mapsto e_{\op}$, $\myG\vdash e_{\op}\type ((\parr, \outt_i, (\parr, in_i) \to \real\etype \epsilon,(\parr,  in_i)\to \sigma'\etype \epsilon)\to  \sigma'\etype \epsilon)\etype \epsilon$ is   and so is $\myG\vdash e_r\type (\parr, \sigma) \to \sigma'\etype \epsilon$ (where  the handler's return clause is $\return \mapsto e_r$).

\begin{lemma} \label{Alem:glocal}
 If $e\type \sigma\etype \epsilon_1$ is $\{\myg\}$-computable for a loss-computable $\myg\type \sigma \to \real\etype \epsilon_2$ (with 
 $\epsilon_2 \subseteq \epsilon_1$), then $\glocal{\epsilon_1}{e}{\myg}\type \sigma\etype\epsilon$ (with $\epsilon_1 \subseteq \epsilon$) is computable.

\end{lemma}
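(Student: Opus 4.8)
The plan is to prove the lemma by $\{\myg\}$-induction on $e$ (the instance of the $G$-induction principle for the singleton set $G = \{\myg\}$). This is legitimate because $\myg\type\sigma\to\real\etype\epsilon_2$ with $\epsilon_2\subseteq\epsilon_1$, so $\{\myg\}$ is an admissible set of loss continuations for expressions of type $\sigma\etype\epsilon_1$. I would take as the induction predicate
\[
P_{\epsilon_1}(e) \;\;\iff\;\; \glocal{\epsilon_1}{e}{\myg}\type\sigma\etype\epsilon \text{ is computable (i.e.\ } \Lsc\text{-computable).}
\]
The three clauses of $\{\myg\}$-induction line up exactly with the three operational rules governing the local construct, namely (R8), the stuck case, and (S3); in each case the goal is to place $\glocal{\epsilon_1}{e}{\myg}$ into the least fixed point defining $\Lsc$-computability by checking the appropriate clause (a), (b), or (c) of that definition.

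For the value clause, if $e = v$ is a computable value then $\glocal{\epsilon_1}{v}{\myg}$ is a redex that steps by (R8) to $v$ with zero loss, for every ambient $\myg' \in \Lsc$; since $v$ is a computable value (hence $\Lsc$-computable) and $\glocal{\epsilon_1}{v}{\myg}$ is not stuck, clause (c) of $\Lsc$-computability applies. For the stuck clause, the key observation --- and the one genuinely delicate point --- is that the special frame $\glocal{\epsilon_1}{\square}{\myg}$ handles no operations, so $\glocal{\epsilon_1}{K[\op(v)]}{\myg}$ is \emph{itself} a stuck expression $K'[\op(v)]$ with $K' = \glocal{\epsilon_1}{K}{\myg}$ and $\op\notin\hop(K') = \hop(K)$. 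Hence it falls under clause (b), not clause (c), of $\Lsc$-computability, and I would discharge it using the induction hypothesis $P_{\epsilon_1}(K[v_1])$ for each computable value $v_1\type\inn$, noting that $K'[v_1] = \glocal{\epsilon_1}{K[v_1]}{\myg}$ and invoking Lemma~\ref{Alem:ctx} to match the types.

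For the reduction clause, $e$ is non-value and non-stuck, so $\glocal{\epsilon_1}{e}{\myg}$ is neither a value nor stuck and must be treated by clause (c). The crucial structural fact is that, for every ambient $\myg'\in\Lsc$, the only applicable step is (S3), whose premise reduces $e$ under the fixed localizing continuation $\myg$ at effect $\epsilon_1$, entirely ignoring $\myg'$: we get $\myg'\vdash_\epsilon \glocal{\epsilon_1}{e}{\myg}\evalto{r}\glocal{\epsilon_1}{e'}{\myg}$ where $\myg\vdash_{\epsilon_1}e\evalto{r}e'$. The inner step is deterministic and independent of $\myg'$, and type safety (Theorem~\ref{Athm:dps}) gives $e'\type\sigma\etype\epsilon_1$; clause (3) of the induction principle then supplies $P_{\epsilon_1}(e')$ directly, i.e.\ $\glocal{\epsilon_1}{e'}{\myg}$ is computable, so clause (c) is satisfied and $P_{\epsilon_1}(e)$ holds. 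Applying the induction principle to the $\{\myg\}$-computable expression $e$ yields the conclusion. I expect no heavy calculation; the only real subtlety is recognizing that localising preserves stuckness (pushing the stuck case into clause (b)) and that (S3) delegates all reduction to $\myg$, which is precisely why $\{\myg\}$-computability of $e$ is the right hypothesis and the ambient loss continuation plays no role.
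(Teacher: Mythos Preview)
Your proposal is correct and follows essentially the same approach as the paper's proof: a $\{\myg\}$-induction whose three cases are discharged via rule (R8) for values, the observation that $\glocal{\epsilon_1}{\square}{\myg}$ handles no operations (so stuckness is preserved) for the stuck case, and rule (S3) for the reduction case. The additional bookkeeping you mention (Lemma~\ref{Alem:ctx} and type safety from Theorem~\ref{Athm:dps}) is left implicit in the paper but is indeed what is needed.
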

\begin{proof} 
We proceed by $\{\myg\}$-induction. There are three cases:
\begin{enumerate}
\item If $e$ is a computable value $v$, then 
  for any $\myg_1 \in \Lsc$   we have 
$\myg_1 \vdash_\epsilon \glocal{\epsilon_1}{v}{\myg} 
\xrightarrow{0}v$.

    \item If $e$ is a stuck expression of the form $K[\op(v)]$ 
    with $\op : \outt \to \inn$, $v$ computable and $\glocal{\epsilon_1}{K[w]}{\myg}$ computable 
    for every computable $w\type \inn$, then we simply note that
    $\glocal{\epsilon_1}{K[op(v)]}{\myg}$ is stuck.
    \item Suppose that $e$ is not stuck. Then $\glocal{\epsilon_1}{e}{\myg}$ is not stuck.
    Further, if $\myg_1 \vdash_\epsilon \glocal{\epsilon_1}{e}{\myg} \xrightarrow{r} e''$ for $\myg_1 \in \Lsc$, then, as $e$ is not stuck, for some $e'$,
    $\myg \vdash_{\epsilon_1} e \xrightarrow{r} e'$
    and $e'' = \glocal{\epsilon_1}{e'}{\myg}$ and, by the induction hypothesis, $e''$ is computable.

\end{enumerate}
\end{proof}

\begin{lemma} \label{Alem:technical}
 \myskip
 \begin{enumerate}
    \item Suppose $e\type \real\etype\epsilon_1$ is $R$-computable. Then so is any
    $r + \glocal{\epsilon_1}{e}{0_{\real,\epsilon_1}}\type \real\etype\epsilon$ (where $\epsilon_1 \subseteq \epsilon$).
    \item 
 Suppose $\myg\type \real \to \real\etype\epsilon_1$  is loss-computable. Then, for any $r \in \R$, so is
$\lambda^\epsilon x\type\real.\, \then{\epsilon}{(r + x)}{\myg}$  (where $\epsilon_1 \subseteq \epsilon$).
\end{enumerate}
\end{lemma}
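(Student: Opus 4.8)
The plan is to establish the two parts in order, proving part~(1) outright and then deriving part~(2) from it. For part~(1), fix $r\in\R$ and $\epsilon\supseteq\epsilon_1$ and argue by the principle of $R$-induction (i.e. $\{0_{\real,\epsilon_1}\}$-induction) on $e$, taking as predicate $P(e)\equiv$ ``$r+\glocal{\epsilon_1}{e}{0_{\real,\epsilon_1}}\type\real\etype\epsilon$ is $R$-computable''. There are three cases. If $e$ is a computable value it must be a constant $s\in\R$; then $r+\glocal{\epsilon_1}{s}{0_{\real,\epsilon_1}}$ is non-stuck and, under the ambient $0_{\real,\epsilon}$, reduces by rule~(R8) inside the two regular frames $+(\square)$ and $(r,\square)$ to $r+s$, which reduces by~(R1) to the constant $t=r+s$; as $t$ is a computable value, running this finite deterministic chain backwards through clause~(c) of $R$-computability gives $P(s)$. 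If $e$ is stuck, $e=K[\op(v)]$ with $v$ computable, then $r+\glocal{\epsilon_1}{K[\op(v)]}{0_{\real,\epsilon_1}}$ is itself stuck of the form $K'[\op(v)]$ (the added frames handle no operations), so clause~(b) reduces the goal to the $R$-computability of $K'[v_1]=r+\glocal{\epsilon_1}{K[v_1]}{0_{\real,\epsilon_1}}$ for every computable $v_1\type\inn$; but that is exactly $P(K[v_1])$, supplied by the induction hypothesis.

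The reduction case is the crux. Here $e$ is non-stuck and non-value, so by progress and determinism there is a unique $e'$ with $0_{\real,\epsilon_1}\vdash_{\epsilon_1}e\xrightarrow{s}e'$. Under the ambient $0_{\real,\epsilon}$, the expression $r+\glocal{\epsilon_1}{e}{0_{\real,\epsilon_1}}$ reduces by peeling off the two regular frames with rule~(F) and then applying rule~(S3); the essential point is that (S3) evaluates the body $e$ under the \emph{localised} loss continuation $0_{\real,\epsilon_1}$ stored in the $\glocal$, discarding whatever loss continuation the frames had built up. Hence $r+\glocal{\epsilon_1}{e}{0_{\real,\epsilon_1}}$ is non-stuck and its unique reduct is $r+\glocal{\epsilon_1}{e'}{0_{\real,\epsilon_1}}$, which is $R$-computable by $P(e')$; clause~(c) then yields $P(e)$. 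This $\glocal$-localisation is the main subtlety and is what makes the induction close: it aligns the reduction of $e$ inside the frames with the only reductions ($0_{\real,\epsilon_1}$-reductions) that the $R$-computability of $e$ controls. Were the $\glocal$ absent, rule~(F) would modify the loss continuation and the induction hypothesis could not be invoked.

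For part~(2), unfolding loss-computability it suffices to show that $\then{\epsilon}{(r+s)}{\myg}$ is $R$-computable for every computable value $s\in\R$, where we write $\myg=\lambda^\epsilon x\type\real.\,\myg_b$. Tracing its evaluation under $0_{\real,\epsilon}$: rule~(S2) first reduces the inner $r+s$ under $\myg$ to the constant $t=r+s$ (by~(R1)), producing $0+\then{\epsilon}{t}{\myg}$; then, inside the frames $+(\square)$ and $(0,\square)$, rule~(R7) fires on the redex $\then{\epsilon}{t}{\myg}$, giving $0+\glocal{\epsilon}{\myg_b[t/x]}{0_{\real,\epsilon}}$. Since $\myg$ is loss-computable and $t$ is a computable value, $\myg_b[t/x]\type\real\etype\epsilon$ is $R$-computable, so by part~(1) (instantiated with $r:=0$ and $\epsilon_1:=\epsilon$) the expression $0+\glocal{\epsilon}{\myg_b[t/x]}{0_{\real,\epsilon}}$ is $R$-computable. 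Walking this two-step deterministic chain backwards through clause~(c)---both intermediate expressions are non-stuck---shows $\then{\epsilon}{(r+s)}{\myg}$ is $R$-computable. The only genuine work is the frame-and-rule bookkeeping in both parts; the conceptual content is the single observation, used in both directions, that $\glocal$ resets the loss continuation so that $R$-computability, a one-continuation notion, is preserved across the surrounding arithmetic frame.
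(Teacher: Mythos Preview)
Your proof is correct and follows essentially the same route as the paper's: $R$-induction for part~(1) with the same three-case analysis, and for part~(2) the same two-step reduction trace (S2 then R7 under the surrounding $+$-frames) followed by an appeal to part~(1). There is one notational slip worth fixing: in part~(2) you write $\myg=\lambda^{\epsilon} x\type\real.\,\myg_b$, but $\myg\type\real\to\real\etype\epsilon_1$, so the annotation is $\epsilon_1$; consequently (R7) produces $\glocal{\epsilon_1}{\myg_b[t/x]}{0_{\real,\epsilon_1}}$ rather than $\glocal{\epsilon}{\dots}{0_{\real,\epsilon}}$, the body $\myg_b[t/x]$ has type $\real\etype\epsilon_1$, and part~(1) is instantiated with the given $\epsilon_1\subseteq\epsilon$ (not with $\epsilon_1:=\epsilon$). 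With that correction the argument is exactly the paper's.
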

\begin{proof}
\myskip
\begin{enumerate}
\item We proceed by R-induction. There are three clauses:
\begin{enumerate}
\item  For a value $v\type\real$ we have 
$0_{\real,\epsilon} \vdash_\epsilon  r + \glocal{\epsilon_1}{v}{0_{\real,\epsilon_1}} 
\xrightarrow{0} r + v \xrightarrow{0} s$, 
where $s = r +v$.
\item For a stuck expression  $K[\op(v)]$, with $\op\type \outt \to \inn$ and computable value $v\type \outt$, suppose that $r + \glocal{\epsilon_1}{K[v_1]}{0_{\real,\epsilon_1}}$  is $0_{\real,\epsilon}$-computable for every  computable value $v_1\type \inn$.  This case is immediate, as then $r + \glocal{\epsilon_1}{K[v]}{0_{\real,\epsilon_1}}$ is a stuck expression, and every  $r + \glocal{\epsilon_1}{K[v_1]}{0_{\real,\epsilon_1}}$ is $0_{\real,\epsilon}$-computable.
\item Suppose that $e$ is not stuck. Then neither is
$r + \glocal{\epsilon_1}{e}{0_{\real,\epsilon_1}}$.
We may suppose that $0_{\real,\epsilon_1} \vdash_\epsilon e \xrightarrow{s} e'$, for some $e'$ and $s$, with $r + \glocal{\epsilon_1}{e'}{0_{\real,\epsilon_1}}$   $0_{\real,\epsilon}$-computable. Then we have
  \[0_{\real,\epsilon} \vdash_\epsilon  \glocal{\epsilon_1}{e}{0_{\real,\epsilon_1}} \xrightarrow{s}
  \glocal{\epsilon_1}{e'}{0_{\real,\epsilon_1}}\]
  and so
  \[0_{\real,\epsilon} \vdash_\epsilon r + \glocal{\epsilon_1}{e}{0_{\real,\epsilon_1}} \xrightarrow{s}
  r + \glocal{\epsilon_1}{e'}{0_{\real,\epsilon_1}}\]
  and we conclude.
  \end{enumerate}
  
\item 
Suppose $\myg$ has the form $\lambda^{\epsilon_1} x\type \real.\, e$. Then, for any $v \in \R$ we have:
\[0\vdash_\epsilon \then{\epsilon}{(r + v)}{\myg} \;\xrightarrow{0}\;
0 + (\then{\epsilon}{s}{\myg})\; \xrightarrow{0}\; 0 + \glocal{\epsilon_1}{e[s/x]}{0_{\real,\epsilon_1}}\]
where $s = r + v$, and we apply part 1.
\end{enumerate}
\end{proof}

\begin{lemma} \label{Alem:ginc}
\myskip
\begin{enumerate}
\item Let $e\type \real\etype \epsilon$ be computable,
for $\myg_1:\real \to \real\etype\epsilon_1$ with $\epsilon_1 \subseteq \epsilon$. Then $r+ e$ is too, for any $r \in \R$.
\item
Let $\myg\type \sigma \to \real \etype \epsilon'$ be  loss-computable   and let $e\type \sigma \etype \epsilon$ be a $\{\myg\}$-computable expression (with $\epsilon'\subseteq \epsilon$). Then 
$\then{\epsilon}{e}{\myg}$  is computable
 for $\myg_1:\real \to \real\etype\epsilon_1$ with $\epsilon_1 \subseteq \epsilon$.
\item 
Let $\myg\type \sigma \to \real \etype \epsilon'$ be loss-computable
 and let 
$x\type \tau \vdash e\type \sigma \etype \epsilon$ be a computable expression (with $\epsilon'\subseteq \epsilon$). Then $\lambda^\epsilon x\type \tau.\, \then{\epsilon}{e}{\myg}\type \tau \to \real\etype \epsilon$ is computable.
\end{enumerate}
\end{lemma}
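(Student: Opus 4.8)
The plan is to prove the three statements together, since they turn out to be mutually recursive, by a single induction justified by the measures $m$ already used to define the computability notions; I read ``computable'' in parts (1) and (2) as full (that is, $\Lsc$-) computability, and $\myg_1$ as ranging over the loss-computable continuations in play. The conceptual core is part (2), so I would start there. Given loss-computable $\myg\type\sigma\to\real\etype\epsilon'$ and a $\{\myg\}$-computable $e\type\sigma\etype\epsilon$, I would show $\then{\epsilon}{e}{\myg}$ computable by $\{\myg\}$-induction on $e$. For a value $v$, rule (R7) gives $\myg_1\vdash_\epsilon \then{\epsilon}{v}{\myg}\evalto{0}\glocal{\epsilon}{e_0[v/x]}{0_{\real,\epsilon}}$ (writing $\myg=\lambda^\epsilon x\type\sigma.e_0$); loss-computability of $\myg$ makes $e_0[v/x]$ $R$-computable, and Lemma~\ref{Alem:glocal} then makes the localised expression computable. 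For a stuck $K[\op(v)]$, the expression $\then{\epsilon}{K[\op(v)]}{\myg}$ is itself stuck (the ``then'' special frame handles nothing), and its computability reduces, through the stuck clause of the induction, to that of $\then{\epsilon}{K[v_1]}{\myg}$ for computable $v_1$. The crucial non-stuck case rests on rule (S2): a step $\myg_1\vdash_\epsilon\then{\epsilon}{e}{\myg}\evalto{0} s+\then{\epsilon}{e'}{\myg}$ reduces the body $e$ under the \emph{fixed} then-continuation $\myg$, independently of the ambient $\myg_1$, so $\{\myg\}$-computability of $e$ delivers exactly the reduct $e'$; the inductive hypothesis makes $\then{\epsilon}{e'}{\myg}$ computable, and part (1) absorbs the produced loss $s$.

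Part (3) would then follow quickly from part (2): unfolding loss-computability, I must show $\then{\epsilon}{e[v/x]}{\myg}$ is $R$-computable for every computable $v\type\tau$; but $e[v/x]$ is computable and $\myg$ loss-computable, so part (2) gives full computability, whence $R$-computability since $0_{\real,\epsilon}\in\Lsc$.

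For part (1) I would use $\Lsc$-induction on the $\real$-typed $e$. The value case reduces $r+e$ by rule (R1) to a constant, and the stuck case is immediate since $r+K[\op(v)]$ is again stuck with $r+K[v_1]$ supplied by the hypothesis. The difficulty is the non-stuck case: $r+e$ is $+((r,e))$, so by rule (F) applied through the two frames $+(\square)$ and $(r,\square)$ a step reduces $e$ not under $\myg_1$ but under the adjusted continuation $\myg''=\lambda^\epsilon y\type\real.\then{\epsilon}{(r,y)}{(\lambda^\epsilon x\type(\real,\real).\then{\epsilon}{+(x)}{\myg_1})}$. To apply full computability of $e$ to this step I must know $\myg''\in\Lsc$, i.e. that $\myg''$ is loss-computable; this is precisely what part (3) provides, applied to the (computable) bodies $+(x)$ and $(r,y)$ on top of the loss-computable $\myg_1$.

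This last point is the main obstacle and the reason the three statements are intertwined: part (2) calls part (1), part (1) calls part (3), and part (3) calls part (2). I expect the well-foundedness of the combined induction to be the delicate bookkeeping step. It should go through because the then-continuations manufactured in part (1) have bodies ($+(x)$, $(r,y)$) that reduce to values in one step, so the resulting calls to part (2) fall into its value case — handled directly by (R7), Lemma~\ref{Alem:glocal}, and the loss-computability already in hand — rather than re-entering part (1) at the same or larger measure. Tracking the lexicographic measure $(l(\sigma)\max l(\epsilon),|\sigma|)$ across these calls is what I would check most carefully.
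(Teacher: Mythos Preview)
Your treatment of parts (2) and (3) matches the paper's almost exactly: part (2) is by $\{\myg\}$-induction on $e$, with the value case handled via rule (R7) and Lemma~\ref{Alem:glocal}, the stuck case immediate, and the non-stuck case using rule (S2) together with part (1); part (3) then follows at once from part (2). The real divergence is in how you discharge the non-stuck case of part (1).

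Your plan for part (1) creates a genuine circularity that your proposed resolution does not close. You want to show the adjusted continuation $\myg''$ is loss-computable by invoking part (3); but part (3) rests on part (2), whose non-stuck case re-enters part (1). Concretely, your claim that ``the resulting calls to part (2) fall into its value case'' fails for the body $+(x)$: after substituting a computable value $v$, the expression $+(v)$ is a \emph{redex}, not a value, so part (2)'s $\{\myg_1\}$-induction lands in the non-stuck clause, where rule (S2) gives $\then{}{+(v)}{\myg_1}\evalto{0} 0 + \then{}{c}{\myg_1}$ and you must again appeal to part (1). The measure $m=(l(\sigma)\max l(\epsilon),|\sigma|)$ does not decrease anywhere in this loop---all calls are at type $\real$ and effect $\epsilon$---so there is no well-founded descent.

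The paper sidesteps this entirely: parts (1), (2), (3) are not mutually recursive but sequentially dependent. The crux of part (1)'s non-stuck case is to show the adjusted continuation lies in $L_{\real,\epsilon}$, and the paper does this by a \emph{prior, standalone} lemma (Lemma~\ref{Alem:technical}) which proves directly, by explicit reduction, that $\lambda^\epsilon x\type\real.\,\then{\epsilon}{(r+x)}{\myg}$ is loss-computable whenever $\myg$ is. That lemma's proof is a short concrete computation (reduce $r+v$ to a constant, apply (R7), then invoke the first part of the same technical lemma about $r+\glocal{}{-}{0}$); it never touches Lemma~\ref{Alem:ginc}. With Lemma~\ref{Alem:technical} in hand, part (1) is a clean $L_{\real,\epsilon}$-induction with no circular call. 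Your instinct that ``the specific bodies are simple enough'' is exactly right---but the way to cash it out is to prove their loss-computability directly in a preliminary lemma, not to route through part (3).
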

\begin{proof}
\myskip
\begin{enumerate}
\item  We proceed by $L_{\real,\epsilon}$-induction on $e$ to show that $r + e$ is computable. We consider the three clauses in turn:
\begin{enumerate}
\item  For a value $v\type\real$, we have $\myg \vdash_\epsilon r + v  \xrightarrow{0} s$ where $s = r +v$, for $\myg\in \L_{\real,\epsilon}$.
\item For a stuck expression  $K[\op(v)]$, with $\op\type \outt \to \inn$ and computable value $v\type \outt$, suppose that $r + K[v_1]$ is R-computable for every  computable value $v_1\type \inn$.  This case is immediate, as then $r + K[\op(v)]$ is a stuck expression, and every  $r + K[v_1]$ is R-computable.
\item Suppose that for all $\myg' \in L_{\real,\epsilon}$, if $\myg' \vdash_\epsilon e \xrightarrow{s'} e'$ then $r + e'$ is computable. 
Now, choose a $\myg\in \L_{\real,\epsilon}$ and suppose $\myg\vdash_\epsilon r+ e \xrightarrow{s} e''$ to show that $e''$ is computable. 
Then $\lambda^\epsilon x\type\real.\, \then{}{(r +x)}{\myg}\vdash _\epsilon e \xrightarrow{s} e'$, for some $e'$, and $e'' = r+e'$.
By Lemma~\ref{Alem:technical}, as $\myg \in \L_{\real,\epsilon}$, so is $\lambda^\epsilon x\type\real.\, \then{}{(r +x)}{\myg}$. Then, by our initial assumption we have $e'' = r+e'$ computable, as required.

\end{enumerate}

\item  We proceed by $\{\myg\}$-induction on $e$, to show that $\then{\epsilon}{e}{\myg}$ is computable. Suppose $\myg$ 
has the form $\lambda^{\epsilon} x\type \sigma.\, e_1$. We consider the three clauses in turn:
\begin{enumerate}
\item  For a value $v\type\real$, we have $\myg_1 \vdash_\epsilon \then{\epsilon}{v}{\myg}  \xrightarrow{0} \glocal{\epsilon'}{e_1[v/x]}{\lambda^{\epsilon'} x:\real.\,0}$ for
$\myg_1 \in \Lsc$. As $\myg$ is loss-computable, this shows that $\then{\epsilon}{v}{\myg}$ is computable, using Lemma~\ref{Alem:glocal}.
\item For a stuck expression  $K[\op(v)]$, with $\op\type \outt \to \inn$ and computable value $v\type \outt$, suppose that $\then{\epsilon}{K[v_1]}{\myg} $ is computable for every  computable value $v_1\type \inn$.  This case is immediate, as then $\then{\epsilon}{K[\op(v)]}{\myg} $ is a stuck expression, and every  $\then{\epsilon}{K[v_1]}{\myg} $ is computable.
\item Suppose that $\myg \vdash_\epsilon e \xrightarrow{r} e'$ and $\then{\epsilon}{e'}{\myg}$ is computable. Then
  $\myg_1 \vdash_\epsilon \then{\epsilon}{e}{\myg} \xrightarrow{0} r + (\then{\epsilon}{e'}{\myg})$, for any $\myg_1 \in \Lsc$. So, using part 1 of this lemma, we see that $\then{\epsilon}{e}{\myg}$ is computable.
\end{enumerate}

\item Part 3 follows immediately from part 2.
\end{enumerate}
\end{proof}

\begin{lemma} \label{{Alem:easyframe}} 
Let $F$ be a regular frame such  that $x\type \sigma \vdash F[x]\type \tau \etype \epsilon$ is computable.  Then 
$F[e]\type \tau\etype\epsilon$ is computable for any computable expression $e\type \sigma \etype \epsilon$.
\end{lemma}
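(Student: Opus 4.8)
The plan is to run the $G$-induction principle of \Cref{sec:termination} with $G = \Lsc$, the set of loss-computable loss continuations $\myg\type\sigma\to\real\etype\epsilon'$ with $\epsilon'\subseteq\epsilon$. Since a closed expression $e\type\sigma\etype\epsilon$ is computable exactly when it is $\Lsc$-computable, it suffices to verify the three induction clauses for the predicate $P_\epsilon(e) :=$ ``$F[e]\type\tau\etype\epsilon$ is computable.'' The proof then follows the same shape as that of \Cref{Alem:glocal}, and the content is concentrated in the non-stuck clause.

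The two easy clauses I would dispatch directly. For the \emph{value} clause: since $x\type\sigma\vdash F[x]\type\tau\etype\epsilon$ is computable, the definition of computability for open expressions gives that $F[x][v/x] = F[v]$ is computable for every computable closed value $v\type\sigma$, which is exactly $P_\epsilon(v)$. For the \emph{stuck} clause, suppose $K[\op(v_0)]\type\sigma\etype\epsilon$ is stuck with $\op\type\outt\xrightarrow{\ell}\inn$ and $v_0\type\outt$ computable, and assume $F[K[v_1]]$ is computable for every computable $v_1\type\inn$. Because $F$ is a regular frame it handles no operations, so $\hop(F[K]) = \hop(K)$ and hence $\op\notin\hop(F[K])$; thus $F[K[\op(v_0)]] = (F[K])[\op(v_0)]$ is again stuck. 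By the computability characterisation of stuck expressions it is computable precisely because $v_0$ is computable and each $(F[K])[v_1] = F[K[v_1]]$ is computable, which is the assumed sub-hypothesis; so $P_\epsilon(K[\op(v_0)])$ holds.

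The crux is the \emph{non-stuck} clause. Here $e$ is non-stuck (if $e$ is a value, $P_\epsilon(e)$ holds by the value-clause argument, so I take $e$ to be a non-value) and I may assume the induction hypothesis that $P_\epsilon(e')$ holds for every $\myg\in\Lsc$ and every $e'$ with $\myg\vdash_\epsilon e\xrightarrow{r}e'$. Then $F[e]$ is of form (4) in the analysis lemma (\Cref{Alem:eal}), hence not stuck, so to show it computable I fix an arbitrary $\myg_1\in L_{\tau,\epsilon}$ and a transition $\myg_1\vdash_\epsilon F[e]\xrightarrow{r}e''$. Since $e$ is neither a value nor stuck, the only rule that can fire is the regular-frame rule (F), so by determinism (\Cref{Athm:dps}) we have $e'' = F[e']$ where $\myg'\vdash_\epsilon e\xrightarrow{r}e'$ with $\myg' = \lambda^\epsilon x\type\sigma.\,\then{\epsilon}{F[x]}{\myg_1}$. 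The hard part is then to see that $\myg'\in\Lsc$: I would apply \Cref{Alem:ginc}(3) to the loss-computable $\myg_1\type\tau\to\real\etype\epsilon'$ and the computable open expression $x\type\sigma\vdash F[x]\type\tau\etype\epsilon$, concluding that $\myg'$ is computable, hence loss-computable (every computable loss continuation is loss-computable), hence a member of $\Lsc$. With $\myg'\in\Lsc$ in hand the induction hypothesis applies to $\myg'\vdash_\epsilon e\xrightarrow{r}e'$, yielding $P_\epsilon(e')$, i.e. $e''=F[e']$ is computable; as $\myg_1$ and the transition were arbitrary, $F[e]$ is $L_{\tau,\epsilon}$-computable. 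The main obstacle is exactly this step — recognising that the loss continuation manufactured by rule (F) lands back in $\Lsc$ — and this is precisely what the hypothesis that $F[x]$ is computable is there to supply, feeding into \Cref{Alem:ginc}(3); everything else is bookkeeping on frames, effects, and the analysis lemma.
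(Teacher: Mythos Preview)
Your proposal is correct and uses the same essential ingredient as the paper---namely \Cref{Alem:ginc} to show that the loss continuation $\lambda^\epsilon x\type\sigma.\,\then{\epsilon}{F[x]}{\myg_1}$ manufactured by rule~(F) is loss-computable---but packages the induction differently. The paper first fixes a single $\myg\in L_{\tau,\epsilon}$, forms the corresponding $\myg'$, and runs a \emph{singleton} $\{\myg'\}$-induction to conclude that $F[e]$ is $\{\myg\}$-computable; you instead run the full $L_{\sigma,\epsilon}$-induction with $P(e)=$``$F[e]$ is computable'' and only introduce $\myg_1$ (and hence $\myg'$) inside the non-stuck clause, using the induction hypothesis at the freshly constructed $\myg'\in L_{\sigma,\epsilon}$. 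Your arrangement is arguably tidier: it delivers $L_{\tau,\epsilon}$-computability of $F[e]$ directly, whereas the paper's version literally yields only ``$\{\myg\}$-computable for every $\myg\in L_{\tau,\epsilon}$'' and leaves the passage to full computability implicit. The value and stuck clauses, and the use of the analysis lemma to force rule~(F), are handled identically in both.
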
 
\begin{proof} 
Fix a loss-computable $\myg: \tau \to \real\etype\epsilon'$ 
with $\epsilon'\subseteq\epsilon$, to show $F[e]$ $\{\myg\}$-computable. 
By Lemma~\ref{Alem:ginc}, the loss continuation 
$\myg' \eqdef \lambda^\epsilon x\type \sigma. (\then{\epsilon}{F[x]}{\myg}) \type \tau \to \real\etype \epsilon$ is loss-computable
and so $e$ is $\{\lambda^\epsilon x\type \sigma. (\then{\epsilon}{F[x]}{\myg})\}$-computable.

We show by $\{\myg'\}$-induction on $\{\myg'\}$-computable expressions $e$  that $F[e]\type \tau\etype \epsilon$ is $\{\myg\}$-computable. 

\begin{enumerate}
\item  $F[v]$ is computable for every computable value $v\type \sigma$ as $F$ is assumed computable.
\item 
Let $K[\op(v)]$ be a stuck expression, with $\op\type \outt \to \inn$, computable value $v\type \outt$, and $F[K[v_1]]$ $\{\myg'\}$-computable, for every computable value $v_1\type \inn$. Then $F[K[\op(v)]]$ is a stuck expression as $F$ is not $\wph{h}{v}{\square}$, and it is then seen to be $\{\myg'\}$-computable as every $F[K[v_1]]$ is.
\item We may suppose that 
$\lambda^\epsilon x\type \sigma.\, (\then{\epsilon}{F[x]}{\myg}) \vdash_\epsilon e \xrightarrow{r} e'$ and $F[e']$ is $\{\myg\}$-computable. Then $\myg \vdash_\epsilon F[e] \xrightarrow{r} F[e']$ and so $F[e]$ is too.
\end{enumerate}

\end{proof}

\begin{lemma} \label{Alem:handframe}
 Suppose the $\ell$-handler  $h\type \parr, \sigma \etype \epsilon\ell \Rightarrow \sigma'\etype \epsilon$ and 
the expression $e\type \sigma \etype \epsilon\ell $ are both computable. Then, for any  computable  value, $v\type \parr$ so is the expression 
$\wph{h}{v}{e}\type \sigma'\etype \epsilon$.
\end{lemma}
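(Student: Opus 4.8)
The plan is to prove a slightly strengthened statement by induction, since the handler parameter is updated at each resumption and the continuations built in rule (R5) mention arbitrary parameters: I would show that for every computable $e\type \sigma\etype\epsilon\ell$ and \emph{every} computable value $v\type\parr$, the expression $\wph{h}{v}{e}$ is computable. As $e$ is computable it is $L_{\sigma,\epsilon\ell}$-computable, where $L_{\sigma,\epsilon\ell}$ is the set of loss-computable loss continuations $\sigma\to\real\etype\epsilon''$ with $\epsilon''\subseteq\epsilon\ell$, so I would argue by $L_{\sigma,\epsilon\ell}$-induction on $e$, taking the predicate $P(e)$ to be ``$\wph{h}{v}{e}$ is computable for every computable value $v\type\parr$''. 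This matches the shape of the proofs of Lemma~\ref{{Alem:easyframe}} and Lemma~\ref{Alem:glocal}.

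First I would handle the two base cases of the induction. When $e$ is a computable value $v_2$, the expression $\wph{h}{v}{v_2}$ is not stuck and steps by rule (R6) to $v_r(v,v_2)$ (where $\return\mapsto v_r\in h$), which is computable because $h$ is computable and computability is preserved under application of computable values (a closed computable value of function type is a $\lambda$-abstraction, so the redex steps to a computable expression); hence $\wph{h}{v}{v_2}$ is computable. When $e$ is a stuck expression $K[\op(v_2)]$ with $\op\type\outt\xrightarrow{\ell'}\inn$ there are two subcases. If $\ell'\neq\ell$ then $\wph{h}{v}{K[\op(v_2)]}$ is itself stuck, since $\op\notin\hop(\wph{h}{v}{K})$, and it is computable directly from clause (b) of the computability definition using the induction hypothesis $P(K[v_1])$ on the reducts $\wph{h}{v}{K[v_1]}$. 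The interesting subcase is $\op\in\Op(\ell)$, so that $\op\in h$: then $\wph{h}{v}{K[\op(v_2)]}$ is a redex that steps, under any loss-computable $\myg$, by rule (R5) to $v_o(v,v_2,f_l,f_k)$.

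The crux of the whole argument is showing this reduct computable, which reduces to showing that the delimited continuation $f_k = \lambda^\epsilon(p,y).\,\glocal{\epsilon}{\wph{h}{p}{K[y]}}{\myg}$ and, above all, the \emph{choice} continuation $f_l = \lambda^\epsilon(p,y).\,\then{\epsilon}{\wph{h}{p}{K[y]}}{\myg}$ are computable values. For computable $p\type\parr$ and $y\type\inn$, the induction hypothesis $P(K[y])$ gives that $\wph{h}{p}{K[y]}$ is computable; then Lemma~\ref{Alem:glocal} makes the body of $f_k$ computable, and Lemma~\ref{Alem:ginc}(2) makes the body of $f_l$ computable, both using that the ambient $\myg$ is loss-computable. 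Hence $f_k$ and $f_l$ are computable, and so is $v_o(v,v_2,f_l,f_k)$, making $\wph{h}{v}{K[\op(v_2)]}$ computable. The appeal to Lemma~\ref{Alem:ginc} to tame $f_l$ is precisely where the novel choice-continuation machinery interacts with the loss-accumulating $\then{}{}{}$ construct, and I expect this to be the main obstacle of the whole lemma, as it is the only place the two kinds of continuation meet.

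Finally, for the inductive (non-value, non-stuck) case I would invoke the analysis lemma (Lemma~\ref{Alem:eal}): then $\wph{h}{v}{e}$ is of the form $S[e]$ with $S=\wph{h}{v}{\square}$ a special frame and $e$ neither a value nor stuck, so its only transitions are by rule (S1), namely $\myg\vdash_\epsilon\wph{h}{v}{e}\xrightarrow{r}\wph{h}{v}{e'}$ whenever $\myg'\vdash_{\epsilon\ell}e\xrightarrow{r}e'$, where $\myg'=\lambda^\epsilon x\type\sigma.\,\then{\epsilon}{v_r(v,x)}{\myg}$. Lemma~\ref{Alem:ginc}(3) shows $\myg'$ is loss-computable, hence a member of $L_{\sigma,\epsilon\ell}$, so the $L_{\sigma,\epsilon\ell}$-induction hypothesis applies to $e'$ and yields $P(e')$, i.e.\ $\wph{h}{v}{e'}$ computable. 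Thus every transition of $\wph{h}{v}{e}$ lands on a computable expression, and clause (c) of the definition makes $\wph{h}{v}{e}$ computable. Collecting the three cases completes the induction and proves the lemma.
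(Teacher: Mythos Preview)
Your proposal is correct and follows essentially the same route as the paper's proof: both argue by $L_{\sigma,\epsilon\ell}$-induction on $e$ with the predicate quantifying over all computable parameter values $v\type\parr$, handle the three cases (value, stuck, non-stuck) in the same way, and in the crucial stuck subcase where $\op\in h$ use Lemma~\ref{Alem:glocal} for $f_k$ and Lemma~\ref{Alem:ginc} for $f_l$. The only cosmetic difference is that in the value base case the paper invokes Lemma~\ref{{Alem:easyframe}} to see that $v_r(v,v_2)$ is computable, whereas you unfold the $\beta$-redex directly; both are fine.
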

\begin{proof}
First, for any computable value $v\type\parr$, set $F_v \eqdef \wph{h}{v}{\square}$, and for any $\myg \in L_{\sigma',\epsilon}$ set 
$\myg_v \eqdef  \lambda^\epsilon x\type \sigma.\,\then{\epsilon}{e_r(v_p,x)}{\myg}$,
where the return clause in $h$ is $\return \mapsto e_r$.
By Lemmas~\ref{Alem:ginc} and~\ref{{Alem:easyframe}}, each $\myg_v$ is 
computable, since $e_r$ is computable as $h$ is.
We use $\Lsc$-induction on $e$ to show that for any computable $v\type\parr$, $F_v[e]$  is computable. 
The three cases are:
\begin{enumerate}
\item Suppose $e$ is a computable value $w\type\sigma$. Choose  computable values $v\type\parr$  to show $F_v[w]$ is computable. For any
$\myg \in L_{\sigma',\epsilon}$ we have 
\[\myg \vdash_\epsilon F_v[w] \xrightarrow{0} e_r(v_p,v) \qquad (\ast)\]
 Applying Lemma~\ref{{Alem:easyframe}}, we see that $e_r(v_p,v)$ is computable, as $e_r$ is computable since $h$ is. So, using $(\ast)$, we see that $F_v[w]$ is $\{\myg\}$-computable.
\item 
Let $K[\op(w)]$ be stuck, with $\op\type \outt \to \inn$, computable value $w\type \outt$, such that for every computable value $w_1\type \inn$,$F_{v_1}[K[w_1]]$ is computable  for every computable value $v_1\type\parr$. 
We have to show that $F_v[K[\op(w)]]$ is computable for all computable values $v\type\parr$.

In case $\op$ is not an $h$-operation, we see that each  $F_{v}[K[\op(w)]]$ is a stuck expression and $F_{v}[K[w_1]]$ is computable for all values  $w_1\type \inn$. So, in this case, 
each $F_v[K[\op(w)]]$ is $\{\myg\}$-computable, as required.

In case $\op$ is an  $h$-operation, for any $\myg \in L_{\sigma',\epsilon}$ we have
\[\myg \vdash_\epsilon F_v[K[\op(w)]] \xrightarrow{0} e_\op(v,w,f_l,f_k)\qquad (\ast\ast)\]
where 
\[\op \mapsto e_\op \in h\]
and
\[f_k = \lambda^\epsilon(p,y)\type (par,in).\,\glocal{\epsilon}{\wph{h}{p}{K[y]}}{\myg}\]
and
 \[f_l = \lambda^\epsilon (p,y)\type (par,in).\,\then{\epsilon}{(\wph{h}{p}{K[y]})}{\myg}\]

and it suffices to prove that $e_\op(v,w,f_l,f_k)$ is computable.
As $e_\op$ is computable by the assumption that $h$ is and as $v$ and $w$ are computable by assumption, we need only prove that $f_k$ and $f_l$ are.

To see that $f_k$ is computable we note that,  for all $v_1\type \parr$ and $w_1 \type \inn$, we are given that
$\wph{h}{v_1}{w_1}$ is computable, and then, using Lemma~\ref{Alem:glocal}, we see that  $\glocal{\epsilon}{\wph{h}{v_1}{K[w_1]}}{\myg}$ is computable .

The proof for $f_l$ is similar, instead using Lemma~\ref{Alem:ginc}.
 \item Suppose $e$ is not stuck. Here, for any $\myg_1 \in L_{\sigma,\epsilon\ell}$, if 
 $\myg_1 \vdash_{\epsilon\ell} e \xrightarrow{r} e'$ then $F_v[e']$
is computable for any computable value $v\type\parr$. 
As $e$ is not stuck then neither is any $F_v[e]$ and for any
$\myg \in L_{\sigma',\epsilon}$, if $\myg\vdash_\epsilon F_v[e] \xrightarrow{r} e''$ then 
 $e'' = F_v[e']$, for some $r$ and $e'$ such that
 $\myg_v \vdash_{\epsilon\ell} e \xrightarrow{r} e'$ (recall that $\myg_v$ is computable and so in $L_{\sigma,\epsilon\ell}$) and so $e'' = F_v[e']$ is computable, as required.

\end{enumerate}

\end{proof}

\begin{lemma}[Fundamental Lemma] \label{Alem:comp}Every expression $\myG \vdash e\type \sigma\etype \epsilon$ is computable.
\end{lemma}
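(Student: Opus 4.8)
The plan is to prove the Fundamental Lemma by structural induction on the typing derivation of $e$, using the auxiliary lemmas already established (Lemmas~\ref{Alem:glocal}, \ref{Alem:technical}, \ref{Alem:ginc}, \ref{{Alem:easyframe}}, and \ref{Alem:handframe}). Since computability for open expressions is defined via substitution by closed computable values, I would fix such a substitution $v_1/x_1,\dots,v_n/x_n$ throughout and write $\ov{e}$ for the result, reducing the problem to showing that each closed $\ov{e}$ is computable. The proof then proceeds case by case on the last typing rule applied.

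The routine cases are the introduction and elimination forms handled by the earlier frame lemmas. For variables and constants, computability is immediate (a variable is replaced by a computable value; constants of ground type are computable by clause (1a)). For each elimination form whose subexpressions sit inside a regular frame $F$ --- projection, primitive function application $f(\square)$, tuple components, $\mathbf{inl}/\mathbf{inr}$, $\mathbf{cases}$, $\mathbf{succ}$, $\mathbf{iter}$, $\mathbf{cons}$, $\mathbf{fold}$, application $\square\,e$ and $v\,\square$, $\op(\square)$, and $\loss(\square)$ --- I would invoke Lemma~\ref{{Alem:easyframe}}: the induction hypothesis gives that the subexpression is computable, and one checks that the frame $x \vdash F[x]$ is computable, which for the redex-producing frames amounts to verifying that the corresponding redex rule sends computable arguments to computable results (e.g.\ for application, that $\lambda$-bodies applied to computable values are computable, which is exactly clause (1e); for $\loss(r)$, that the resulting $()$ together with the produced loss is $R$-computable). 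Abstractions are computable by clause (1e) combined with the induction hypothesis on the body. The $\mathbf{then}$ construct is handled by Lemma~\ref{Alem:ginc}(3), and the localising constructs $\glocal{\epsilon}{e}{\myg}$ and $\reset{e}$ by Lemma~\ref{Alem:glocal} (after noting by the induction hypothesis and Lemma~\ref{Alem:ginc} that the relevant loss continuation is loss-computable). The handler case $\wph{h}{e_1}{e_2}$ is dispatched by Lemma~\ref{Alem:handframe}, once we know both $h$ and $e_2$ are computable.

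The main obstacle is the handler typing case, because Lemma~\ref{Alem:handframe} requires the handler $h$ itself to be computable, and computability of a handler is defined in terms of computability of its operation-definition bodies and its return body. Establishing this requires that the \emph{second} clause of the Fundamental Lemma --- which the excerpt states separately in Lemma~\ref{lem:fundamental}, namely that every loss continuation is loss computable --- be available, since the operation bodies bind choice continuations of type $(\parr,\inn)\to\real\etype\epsilon$. Consequently I would not prove this statement in isolation but rather strengthen the induction to a simultaneous statement covering expressions, handlers, and loss continuations, so that the hypotheses feed each other: the expression induction supplies computability of handler bodies (hence handler computability), which Lemma~\ref{Alem:handframe} then converts into computability of the handled expression. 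The delicate point is ensuring the induction is well-founded, i.e.\ that when typing a handler body we only appeal to the induction hypothesis on strictly smaller derivations; this is where the well-foundedness assumption on effects (guaranteeing the effect levels in $\inn$ and $\outt$ are strictly below $\ell$) is needed to keep the recursion between the $S_\epsilon$-indexed notions of computability terminating, mirroring the measure argument used to justify the definitions themselves.

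I expect the final step to be a short bookkeeping argument assembling the cases: once each typing rule is matched to its corresponding auxiliary lemma and the simultaneous induction is set up, the conclusion that every $\myG \vdash e\type\sigma\etype\epsilon$ is computable follows by reading off the induction, and the companion statement on loss continuations follows from clause (3b) of the computability definition together with the established computability of the bodies of the $\myg$ expressions.
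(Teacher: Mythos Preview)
Your overall plan and most of the case analysis are right, but the ``main obstacle'' you identify is not real. You claim that the handler case forces a simultaneous induction with loss continuations and that the effect well-foundedness assumption enters here; neither holds. Computability of a handler $h$ just means computability of its operation and return bodies $e_j, e_r$ as \emph{open} expressions, and these are strict syntactic subterms of $\wph{h}{e_1}{e_2}$, so the ordinary induction hypothesis already covers them. The variables $l,k$ bound in an operation body are ordinary variables of function type; open-expression computability quantifies over \emph{arbitrary} closed computable values of those types, not over the specific choice continuations $f_l$ produced by the operational semantics. (That $f_l$ is computable is established later, inside the proof of Lemma~\ref{Alem:handframe}, from that lemma's own $L$-induction hypothesis together with Lemma~\ref{Alem:ginc}, not as an input to the Fundamental Lemma.) The well-foundedness assumption plays no role in the Fundamental Lemma proof; it is used only earlier, to ensure the mutually recursive \emph{definitions} of computability terminate. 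Loss-computability of loss continuations is then a corollary of the Fundamental Lemma (computable implies loss-computable, as noted right after the definitions), not something to be proved simultaneously.

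The paper's organization also differs from yours: it inducts on the \emph{size} of $e$ and splits via the five-way classification of Lemma~\ref{Alem:eal} (value, stuck, redex, $F[e_1]$, $S[e_1]$) rather than by last typing rule. Two consequences you should be aware of: stuck expressions $K[\op(v)]$ are a separate top-level case, dispatched by applying the IH to the strictly smaller $K[x]$; and $\reset{e}$ is not handled by Lemma~\ref{Alem:glocal} (which is about $\glocal{}{}{}$) but by its own short $L$-induction. Finally, the redex cases for $\iter{}{}{}$ and $\fold{}{}{}$ require inner inductions on the numeral and list value respectively, which your sketch omits.
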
 
\begin{proof}

  The proof is by induction on the size of $e$ (defined in a standard way). Consider a context $\myG = x_1\type \sigma_1,\dots,x_n\type \sigma_n$, and choose computable  values $v_1\type\sigma_1,\dots,v_n\type \sigma_n$. We have to show that 
  $\ov{e} = e[v_1/x_1,\dots,v_n/x_n]$ is computable. 

 The proof splits into cases according to the form of $e$, following Lemma~\ref{eal}.
\begin{enumerate}
\item %
 \begin{enumerate}
      \item If $e = x_i$ then $\ov{e} = v_i$ is a computable  value by assumption. So $e$ is a computable expression. 
\item If $e$ is a basic constant, $\zero$, or some $\nil_\sigma$ then it is a computable value by definition.
 \item If $e = (v'_1,\dots, v'_m)$ then $\ov{e} = (\ov{v'_1},\dots, \ov{v'_n})$ is a computable value as, by the induction hypothesis all the $\ov{v'_j}$ are computable expressions.
 \item If $e$ has one of the forms $\inl{\sigma}{\tau}(v)$, $\inr{\sigma}{\tau}(v)$, 
   $\mysucc{v}$, or $\cons{v_1}{v_2}$ the proof is similar to the previous case.
   \item If $e = \lambda^{\epsilon} x\type \sigma_1.\, e_1$ then we  show that $\ov{e} =  \lambda^{\epsilon} x\type \sigma_1.\, \ov{e_1}$ is a computable value, that is that for every computable value $ v\type \sigma_1$, $\ov{e_1}[v/x]$ is a computable expression, and that is immediate from the induction hypothesis.
 \end{enumerate}
\item %
The next case is when $e$ is stuck, with the form $K[\op(v)]$ with $\op\type \outt \xrightarrow{\ell} \inn$ and $\myG \vdash v\type \outt$. It suffices to show that $\ov{K}[w]$ is computable for every computable $w\type \inn$. However, as $K[x]$ (choosing $x \notin \Dom(\myG)$)  is smaller than $e$, we have $K[x]$ computable, and so $\ov{K}[w] = K[x][v_1/x_1,\dots,v_n/x_n,w/x]$ is computable.
\item The next case is when $e$ is a redex. As before we split into subcases according to the form of the expression.
  \begin{enumerate}
  \item %
 
  \item  %

  \item  %
  Suppose that $e$ has the form  $v.i$. Then, using the typing information, the fact that $v$ is a value,  and the induction hypothesis we see that  $\ov{e}$ has the form $(v_1,\dots,v_m).i$ with the $v_j$ computable. So as   $\myg \vdash_\epsilon \ov{e} \xrightarrow{0} v_i$ (for every loss-computable loss continuation $\myg\type b \to \real\etype \epsilon'$ with  $\epsilon' \subseteq  \epsilon$)  and as $v_i$ is computable, we see that $\ov{e}$ is computable, as required.

  \item %
  $\inl{\sigma}{\tau}(v')$ or $\inr{\sigma}{\tau}(v')$ with $v'$ computable. In the first case we see that $\ov{e_1}[\ov{v'}/x]$ is computable (as $\ov{e_1}$ is computable by the induction hypothesis) and that $\myg\vdash_\epsilon \ov{e} \xrightarrow{0}  \ov{e_1}[\ov{v_1}/x]$ (for the relevant $\myg$)  and so that  $\ov{e}$ is computable, as required. The second case is similar to the first case.
   
    \item %
    $(\lambda^\epsilon x\type \sigma_1.\, e_1)\ov{v_2}$ with $e_1$ and $\ov{v_2}$ computable, and we  have   $\myg\vdash_\epsilon \ov{e} \xrightarrow{0}  \ov{e_1}[\ov{v_2}/x]$ for the relevant $\myg$.

  \item %
 By the induction hypothesis, $v_1$, $v_2$, and $v_3$ are computable. 
  We proceed by (natural numbers) induction on $\ov{v_1}$ to show that 
  $\ov{e} = \iter{\ov{v_1}}{\ov{v_2}}{\ov{v_3}}$ is computable.  
  
  If $\ov{v_1}$ is $\zero$ then, for any relevant $\myg$ we have  
  $\myg \vdash_\epsilon \iter{\ov{v_1}}{\ov{v_2}}{\ov{v_3}} \xrightarrow{0} \ov{v_2}$ and we are done as $\ov{v_2}$ is computable.
    If $\ov{v_1}$ is $\mysucc{w}$ then, for any relevant $\myg$ we have  
    $\myg \vdash_\epsilon   \iter{\mysucc{w}}{\ov{v_2}}{\ov{v_3}} \xrightarrow{0} 
v_3 \iter{w}{\ov{v_2}}{\ov{v_3}}$. As $v_3$ is computable, the frame $\ov{v_3}\square$ is computable (see previous case), further, by the induction hypothesis, 
$\iter{w}{\ov{v_2}}{\ov{v_3}}$ is computable. So,  by Lemma~\ref{{Alem:easyframe}}, $\ov{v_3} \iter{w}{\ov{v_2}}{\ov{v_3}}$ is computable, and so $\iter{\mysucc{w}}{\ov{v_2}}{\ov{v_3}}$ is too, as required.

  \item %
  By the induction hypothesis, $v_1$, $v_2$, and $v_3$ are computable. 
  We proceed by (structural) induction on $\ov{v_1}$ to show that 
  $\ov{e} = \fold{\ov{v_1}}{\ov{v_2}}{\ov{v_3}}$ is computable.

  If $\ov{v_1}$ is $\nil$ then, for any relevant $\myg$ we have  
  $\myg \vdash_\epsilon \fold{\ov{v_1}}{\ov{v_2}}{\ov{v_3}} \xrightarrow{0} \ov{v_2}$ and we are done as $\ov{v_2}$ is computable.
  
    If $\ov{v_1}$ is $\cons{w_1}{w_2}$ then we have  
    $\myg \vdash_\epsilon   \fold{\cons{w_1}{w_2}}{\ov{v_2}}{\ov{v_3}} \xrightarrow{0} 
\ov{v_3} (w_1,\fold{w_2}{\ov{v_2}}{\ov{v_3}})$,  for any relevant $\myg$. 
The frame $(w_1,\square)$ is computable and so $(w_1,\fold{w_2}{\ov{v_2}}{\ov{v_3}})$ is computable using Lemma~\ref{{Alem:easyframe}}, as, by the induction hypothesis, $\fold{w_2}{\ov{v_2}}{\ov{v_3}}$ is computable.
Next, as $v_3$ is computable, the frame $\ov{v_3}\square$ is computable, 
and so, again using   Lemma~\ref{{Alem:easyframe}}, we see that $\ov{v_3} (w_1,\fold{w_2}{\ov{v_2}}{\ov{v_3}})$
is computable, as  $(w_1,\fold{w_2}{\ov{v_2}}{\ov{v_3}})$ is computable.
  So, as  \[\myg \vdash_\epsilon   \fold{\cons{w_1}{w_2}}{\ov{v_2}}{\ov{v_3}} \xrightarrow{0} 
\ov{v_3} (w_1,\fold{w_2}{\ov{v_2}}{\ov{v_3}})\] $ \fold{\cons{w_1}{w_2}}{\ov{v_2}}{\ov{v_3}} $ is computable, as required.

  \item %
Using the induction hypothesis, we see that  $h$, and so $\ov{h}$ is computable as are $v_1$ and $v_2$, or $v'_1$, $v$, and $\ov{K}[\op(v)]$. We may then apply Lemma~\ref{Alem:handframe} to show that $\ov{e}$ is computable.

  \item 

  Suppose that $e$ has the form  $ \loss(v)$. Here we note that 
  $\myg \vdash \loss(\ov{v}) \xrightarrow{\ov{v}} ()$ for the relevant $\myg$.

  \item %
  Suppose that $e$ has the form  $\then{}{v}{\lambda^{\epsilon_1} x\type \sigma.e_1}$.
 This is immediate from Lemma~\ref{Alem:ginc}, using the induction hypothesis. 

  \item %
  
  Suppose that $e$ has the form $\glocal{\epsilon}{v}{\myg}$, for a value $v$. This case is immediate.

 \item %
  
  Suppose that $e$ has the form $\reset{v}$, for a value $v$. This case is immediate.
   
  \end{enumerate}
\item %
 Suppose that $\myG \vdash e\type \sigma \etype \epsilon$ has the form $F[e_1]$, with $e_1$ not a value or stuck. In this case, by the induction hypothesis, both $F[x]$ (with $x \notin \Dom(\myG)$) and $e_1$ are computable. So both
$\ov{F}[x]$ and $\ov{e_1}$ are computable. We may then apply Lemma~\ref{{Alem:easyframe}}.

\item The last case is where $\myG \vdash e\type \sigma \etype \epsilon$ has the form $S[e_1]$, with $e_1$ not a value or stuck.
There are four possibilities:
\begin{enumerate}
\item %
$S$ is $ \wph{h}{v}{\square}$. In this case, $h$ and $e_1$ are computable, by the induction hypothesis, and we may apply Lemma~\ref{Alem:handframe}.

\item %
In this case, $e_1$ are $\lambda^{\epsilon} x\type \sigma. \, e_2$ are computable, by the induction hypothesis, and we may apply  Lemma~\ref{Alem:ginc}.

\item %
$S$ is $\glocal{\epsilon_1}{\square}{\myg}$. In this case $e_1$
is computable by the induction hypothesis, and $\myg$ is computable (and so loss-computable) by the induction hypthesis. We can apply Lemma~\ref{Alem:glocal}.

\item %
$S$ is $\reset{\square}$. We show by $\Lsc$-induction on computable $e$ that $\reset{e}$ is computable. There are three cases:
\begin{enumerate}
\item If $e$ is a computable value $v$, then 
  for any $\myg_1 \in \Lsc$   we have 
$\myg_1 \vdash_\reset{v}
\xrightarrow{0}v$.

    \item If $e$ is a stuck expression of the form $K[\op(v)]$ 
    with $\op : \outt \to \inn$, $v$ computable and $\reset{K[w]}$ computable 
    for every computable $w\type \inn$, then we simply note that
    $\reset{K[op(v)]}$ is stuck.
    \item Suppose that $e$ is not stuck. Then $\reset{e}$ is not stuck.
    Further, if $\myg_1 \vdash_\epsilon \reset{e} \xrightarrow{r} e''$ for $\myg_1 \in \Lsc$, then, as $e$ is not stuck, for some $r' \in \R$ and $e'$,
    $\myg \vdash_{\epsilon_1} e \xrightarrow{r'} e'$
    and $e'' = \reset{e'}$ and, by the induction hypothesis, $e''$ is computable (and note that $r=0$ in this case).

\end{enumerate}

\end{enumerate}

\end{enumerate}
\end{proof}
It follows at once from this fundamental lemma that all loss continuations are loss-computable.
As usual we can deduce termination from computability. 
\begin{theorem}[Termination] \label{Athm:term} 
For $e_1\type \sigma \etype  \epsilon$ and $\myg \type \sigma\to \bool\etype \epsilon'$ with $\epsilon'\subseteq \epsilon$, there are no infinite sequences:
\[\myg \vdash e_1 \xrightarrow{r_1} e_2 \xrightarrow{r_2} \dots \xrightarrow{r_{n-2}} e_{n-1 }\xrightarrow{r_{n-1}} e_n \dots\]
\end{theorem}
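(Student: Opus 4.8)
The plan is to deduce termination directly from the Fundamental Lemma (Lemma~\ref{Alem:comp}) together with the fact that $G$-computability is an inductively (least-fixed-point) defined notion. First I would invoke the Fundamental Lemma to obtain that $e_1\type\sigma\etype\epsilon$ is computable, i.e. $L_{\sigma,\epsilon}$-computable, and that the given loss continuation $\myg$ is loss-computable, so that $\myg\in L_{\sigma,\epsilon}$ (recall the remark following the lemma that every loss continuation is loss-computable). Writing $G=L_{\sigma,\epsilon}$, the theorem then reduces to a strong-normalisation statement: every $G$-computable expression $e\type\sigma\etype\epsilon$ admits no infinite reduction sequence $\myg_1\vdash_\epsilon e\xrightarrow{r_1}e'\xrightarrow{r_2}\cdots$ all of whose loss continuations $\myg_i$ are drawn from $G$. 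Since the fixed $\myg$ of the statement lies in $G$, and the single-$\myg$ sequences of the statement are a special case of the "any $\myg_i\in G$ at each step" sequences, this immediately yields the conclusion.

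The substantive step is to prove this strong-normalisation property by $G$-induction, taking $P_\epsilon(e)$ to be ``there is no infinite reduction sequence from $e$ using loss continuations in $G$.'' I would check the three clauses of the induction principle. For clause~(1), a computable value is terminal and hence, by Theorem~\ref{Athm:dps}(1), makes no transition, so $P_\epsilon$ holds vacuously. For clause~(2), a stuck expression $K[\op(v)]$ is likewise terminal and makes no transition, so $P_\epsilon$ again holds vacuously --- the auxiliary hypothesis on the $K[v_1]$ is not even needed. The only genuinely inductive clause is~(3): for a non-stuck $e$, the induction hypothesis states precisely that every one-step reduct $e'$ of $e$ (under any $\myg\in G$) already satisfies $P_\epsilon$; any infinite sequence out of $e$ must begin with some step $\myg\vdash_\epsilon e\xrightarrow{r}e'$ and continue with an infinite tail from $e'$, contradicting $P_\epsilon(e')$, whence $P_\epsilon(e)$. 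Type safety (Theorem~\ref{Athm:dps}(4)) ensures reducts remain at type $\sigma\etype\epsilon$, so that $G=L_{\sigma,\epsilon}$ is the correct set of loss continuations throughout and the $G$-induction stays within a single type and effect.

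I expect essentially no obstacle once the Fundamental Lemma is available: all the real difficulty --- the simultaneous recursion over values, loss continuations, and expressions with its decreasing measure $m$ --- has already been discharged there, and the present argument is a standard ``strong normalisation from an inductively defined reducibility predicate'' deduction. The two points requiring a little care are (a) confirming that proving the stronger ``varying $\myg\in G$'' property suffices, which is immediate because the corresponding reduction relation includes the fixed-$\myg$ relation; and (b) making explicit that it is the least-fixed-point character of $G$-computability that validates the $G$-induction principle and thereby supplies the well-founded structure driving clause~(3). Finally, I would note that combining this result with determinism, progress, and type safety (Corollary~\ref{Acor:evaltd}) gives the big-step termination and uniqueness statement subsequently used to establish adequacy.
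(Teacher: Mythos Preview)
Your proposal is correct and follows essentially the same approach as the paper: invoke the Fundamental Lemma to obtain computability, then use the $G$-induction principle to derive strong normalisation. The only minor difference is that the paper first restricts to the singleton $\{\myg\}$ (using anti-monotonicity of $G$-computability in $G$, so that $L_{\sigma,\epsilon}$-computability of $e$ yields $\{\myg\}$-computability) and then performs a $\{\myg\}$-induction directly on the fixed-$\myg$ reduction, whereas you keep $G=L_{\sigma,\epsilon}$ and prove the slightly stronger ``varying $\myg_i\in G$'' statement before specialising; both routes are immediate and equivalent.
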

\begin{proof}
 By the Fundamental Lemma $\myg$ is computable and so loss-computable and so then $e$ is $\{\myg\}$-computable. An evident L-induction then shows that there is no such infinite sequence.
\end{proof}

We then have:
\begin{theorem} \label{Athm:bterm}
For $e\type \sigma \etype  \epsilon$ and $\myg \type \sigma\to \bool\etype \epsilon'$ with $\epsilon'\subseteq \epsilon$ we have $\myg \vdash e \xRightarrow{r} w$ for a unique $r \in \R$ and terminal expression $w$ (and then $w\type \sigma\etype \epsilon$).
\end{theorem}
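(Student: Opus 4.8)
The plan is to assemble this statement directly from the three pieces already established: progress and type safety (Theorem~\ref{Athm:dps}, parts~(3)--(4)), termination (Theorem~\ref{Athm:term}), and the uniqueness result (Corollary~\ref{Acor:evaltd}). Concretely, I would split the claim into \emph{existence} of a terminating evaluation and \emph{uniqueness} of its loss and terminal expression, and handle each with a different ingredient.

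For existence, I would iterate the small-step relation from $e_1 \eqdef e$, keeping the given $\myg\type \sigma \to \bool\etype \epsilon'$ (with $\epsilon' \subseteq \epsilon$) fixed throughout. So long as the current $e_i\type\sigma\etype\epsilon$ is non-terminal, Progress (Theorem~\ref{Athm:dps}(3)) supplies a transition $\myg \vdash_\epsilon e_i \evalto{r_i} e_{i+1}$, and Type Safety (Theorem~\ref{Athm:dps}(4)) gives $e_{i+1}\type\sigma\etype\epsilon$, so the iteration continues under the same $\myg$, $\sigma$, and $\epsilon$. By Termination (Theorem~\ref{Athm:term}) this chain cannot be infinite, hence it halts at some expression that admits no further transition; by Progress (contrapositively) that expression must be terminal, say $w$, and by part~(1) of Theorem~\ref{Athm:dps} it genuinely makes no move, so the iteration really stops. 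Reading the finite chain $\myg \vdash_\epsilon e_1 \evalto{r_1} \cdots \evalto{r_{n-1}} e_n = w$ through the big-step rules of Figure~\ref{fig:bigeval} (the reflexive rule at $w$, then folding in each small step) yields $\myg \vdash e \xRightarrow{r} w$ with $r = \sum_i r_i$, while the running invariant gives $w\type\sigma\etype\epsilon$.

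Uniqueness of $r$ and $w$ is then exactly Corollary~\ref{Acor:evaltd}, which already records at most one such pair together with the typing $w\type\sigma\etype\epsilon$; combined with existence this completes the proof. I do not expect a genuine obstacle here, since all the real difficulty is discharged in the termination argument and in progress/safety; the only points needing care are bookkeeping ones --- verifying that $\myg$, $\sigma$, and $\epsilon$ are preserved step-by-step so that Progress and Type Safety keep applying to the same judgment form, and that the big-step loss accumulates as the finite sum $\sum_i r_i$. Finally, the remark that an empty $\epsilon$ forces $w$ to be a value follows by observing that a stuck $w = K[\op(v)]$ with $\op\type\outt \xrightarrow{\ell}\inn$ would require $\ell \in \epsilon$, contradicting $\epsilon = \{\}$.
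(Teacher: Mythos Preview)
Your proposal is correct and follows the same route the paper takes: the paper simply states that the result follows by combining Termination (Theorem~\ref{Athm:term}) with the progress, type-safety, and uniqueness facts already recorded in Theorem~\ref{Athm:dps} and Corollary~\ref{Acor:evaltd}, without spelling out the iteration argument. Your write-up supplies exactly those routine details (iterate progress/safety to build a finite small-step chain, fold into a big-step derivation, then invoke the corollary for uniqueness), and the extra remark about empty $\epsilon$ forcing $w$ to be a value is also handled as the paper intends.
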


\section{Denotational semantics}

\label{sec:Adenotational}

We first repeat the material on the denotational semantics of \cname, viz the semantics of types (Section~\ref{sec:Asemtypes}), the semantics of expressions (Section~\ref{sec:Asemexp}), and the semantics of handlers (Section~\ref{sec:Ahandsem}). We then  prove  correctness and adequacy in  Section~\ref{sec:Aproofs}.

\subsection{Semantics of Types} \label{sec:Asemtypes}

As discussed in Section~\ref{sec:semtypes}, our semantics employs a family $S_\epsilon(X) = (X \to R_\epsilon) \to W_\epsilon(X)$ of augmented selection  monads,
where 
$W_\epsilon(X) = F_\epsilon(\R \times X)$ and 
$R_\epsilon = F_\epsilon(\R)$. The $W_\epsilon$ are used to interpret  the loss operation and unhandled effect operations, viz, those of the effects in  $\epsilon$ (taking account of their multiplicity).
We remark that the  $W_\epsilon$ are the commutative  combination \cite{HPP06} of the 
$F_\epsilon$  and the writer monad $\R \times -$; algebraically this corresponds to having the  loss operation commute with  the operations. Finally, $R_\epsilon$ is the free $W_\epsilon$-algebra on the one-point set.

Given the $S_\epsilon$, we can define $\ssem{\sigma}$ the semantics of types, as in  Figure~\ref{fig:Atsem},
where we assume available a given semantics $\sem{b}$ of basic types.
\begin{figure}[H]
\small{
\[\begin{array}{lcl}
\ssem{b} & = & \sem{b}\\
 \ssem{(\sigma_1,\dots,\sigma_n)} & = &  \ssem{\sigma_1} \times \dots \times  \ssem{\sigma_n} \\
 \ssem{\sigma + \tau} & = & \ssem{\sigma} +  \ssem{\tau}\\
 \ssem{\nat} & = & \N\\
  \ssem{\mylist(\sigma)} & = &  \ssem{\sigma}^*\\
  \ssem{\sigma \to \tau\etype \epsilon} & = &   \ssem{\sigma} \to   S_\epsilon(\ssem{\tau})\\
\end{array}\] }

\caption{Semantics of types}
\label{fig:Atsem}
\end{figure}

We define  $W_\epsilon(X)$. 
to be the least set $Y$ such that:
\[Y\; = \; \left ( \sum_{\ell \in \epsilon, \op: \outt \xrightarrow{\ell} \inn, 0 < i \leq \epsilon(\ell)} 
 \ssem{\outt} \times Y^{\ssem{\inn}}  \right ) + X \]
There is an inclusion $F_{\epsilon_1}(X) \subseteq F_{\epsilon_2}(X)$ if $\epsilon_1 \subseteq \epsilon_2$; we will use it  without specific comment.  
Note the recursion in these definitions: $\ssem{-}$ is defined using the $S_\epsilon$, the $S_\epsilon$ using the $W_\epsilon$,  the  $W_\epsilon$ using the $F_\epsilon$, and the $F_\epsilon$ using $\ssem{-}$. However,  the well-foundedness assumption (\Cref{sec:termination}) justifies these  definitions.

\subsection{{Semantics of Expressions}} \label{sec:Asemexp}

For the denotational semantics of expressions and handlers we need the monadic structure of the $S_\epsilon$, and so that of the $W_\epsilon$ and the $F_\epsilon$. We  make use of an abbreviation, available for any monad $M$:
 \[\mlet{M}{x\in X}{exp_1}{exp_2} \eqdef (\lambda x \in X.\, exp_2)^{\dagger_M}(exp_1)\]
 for mathematical expressions $exp_1$ and $exp_2$. This abbreviation makes monadic binding available at the meta-level, and that makes for more transparent formulas. 
 
 Say that an 
\textit{$\epsilon$-algebra} is a set $X$ equipped with %
 functions
\[\varphi_{\ell,\op,i} :  \ssem{\outt}  \times X^{\ssem{\inn}}  \to X\]
for $\ell \in \epsilon, \op\type \outt \xrightarrow{\ell} \inn$, and $0 < i \leq \epsilon(\ell)$. 

Then $F_{\epsilon}(X)$ is the free such algebra taking the functions to be: 
\[\varphi^X_{\ell,\op,i}(o,k) \eqdef ((\ell,\op,i),(o,k))\]
with unit $\eta_{F_\epsilon}(x) = x$ (we ignore injections into sums). If we have another such algebra $(Y,\psi_{\ell,\op,i})$,
the unique homomorphic extension $f^{\dagger_{F_\epsilon}}$ of a function $f: X \to Y$ is given by setting 
\[f^{\dagger_{W_\epsilon}}(x) = f(x)\] 
and 
\[f^{\dagger_{W_\epsilon}} ((\ell,\op,i)(out,k, \gamma)) = \psi_{\ell,\op,i}(o, \comp{f^{\dagger_{F_\epsilon}}}{k})\]

The idea of defining monads $F_{\epsilon}$ as free algebras depending on varying families of operation signatures to give the semantics of effect calculi already appears in~\citet{FK19}; here we adapt the idea  to our slightly different treatment of effects.

Turning to $W_\epsilon(X) = F_\epsilon(\R \times X)$, we can again see this as a free algebra monad. Say that an \emph{action $\epsilon$-algebra} is an $\epsilon$-algebra  $(Y,\psi_{\ell,\op,i})$ together with an additive action $\cdot: \R \times Y \to Y$ commuting with the $\psi_{\ell,\gamma,\op,i}$  (by an additive action we mean one such that $0\cdot y = y$ and $r \cdot (s\cdot y) = (r + s)\cdot y$). Then $F_\epsilon(\R \times X)$ is the free such algebra with operations $\varphi^{\R \times X}_{\ell,\op,i}$ and action $\cdot: \R \times F_\epsilon(\R \times X) \to F_\epsilon(\R \times X)$ given by:
\[r\cdot u \eqdef \mlet{F_\epsilon}{s\in \R, x\in X}{u}{(r+s,x)}\] 
The unit is $\eta_{W_\epsilon}(x) = (0,x)$, and if we have another such algebra $(Y,\psi_{\ell,\op,i}, \cdot)$, the unique homomorphic extension $f^{\dagger_{W_\epsilon}}$ of a function $f: X \to Y$ is given by $f^{\dagger_{F_\epsilon}}(r,x) = r\cdot f(x)$ and:
\[f^{\dagger_{W_\epsilon}} ((\ell,\op,i)(out,k)) = \psi_{\ell,\op,i}(o,\comp{f^{\dagger_{W_\epsilon}}}{k})\]

 We now turn to the augmented selection monad $S_\epsilon(X) = (X \to R_\epsilon) \to W_\epsilon(X)$. The unit is given by $(\eta_{S_\epsilon})_X(x) = \lambda \gamma 
\in X  \to R_\epsilon.\, (\eta_{W_\epsilon})_X(x)$ (and recall that $(\eta_{W_\epsilon})_X(x) = (0,x)$). For the Kleisli extension,
rather than  follow the definitions in, e.g.,~\cite{AP21} via a $W_\epsilon$-algebra on $R_\epsilon$ we give definitions that are a little easier to read. 

First, $R_\epsilon$ is an action $\epsilon$-algebra, with 
$\psi_{\ell,\op,i} :  \ssem{\outt}  \times R_\epsilon^{\ssem{\inn}}   \to R_\epsilon$
given by 
$\psi_{\ell,\op,i}(o,k) = 
\varphi^\R_{\ell,\op,i}(o,k)$
and action $\R \times R_\epsilon \to R_\epsilon$ given by:
$r\cdot u \eqdef \mlet{F_{\Onebb,\epsilon}}{s\in \R}{u}{r+s}$. 

Next,  the loss $\ER{\epsilon}{F}{\gamma} \in R_\epsilon$ associated to a selection function $F \in S_\epsilon(Y)$ and loss function $\gamma\type Y \to R_\epsilon$  is
\[\ER{\epsilon}{F}{\gamma} \eqdef \gamma^{\dagger_{W_\epsilon}}(F(\gamma))\]
where we use the fact that $R_\epsilon$ 
is an action $\epsilon$-algebra.

Then, finally, the Kleisli extension $f^{\dagger_{S_\epsilon}}\type S_\epsilon(X) \to S_\epsilon(Y)$  of a function $f\type X \to S_\epsilon(Y)$ is defined~by:
\begin{equation} \label{eqn:Kleisli}
    f^{\dagger_{S_\epsilon}}(F) = \lambda \gamma \in Y \to R_\epsilon.\,\mlet{W_\epsilon}{x \in X}{F(\lambda x\in X.\ER{\epsilon}{fx}{\gamma})}{fx\gamma}
\end{equation}

\renewcommand{\CL}[1]{\mathbf{L}[#1]}

Turning to the denotational semantics, given an environment $\myG = x_1\type \sigma_1,\dots,x_n:\sigma_n$ we take
$\ssem{\myG}$ to be the functions (called \emph{environments}) $\rho$ on $\Dom(\myG)$ such that  
$\rho(x_i)  \in \ssem{\sigma_i}$, for $i = 1,\dots,n$.

Then the denotational semantics of typed expressions and handlers are of the following types:
\begin{mathpar}
\inferrule {\myG \vdash e\type\sigma\etype \epsilon}
  {\ssem{e} :  \ssem{\myG} \to S_\epsilon(\ssem{\sigma})}
\and
\inferrule{\myG \vdash h\type \parr,  \sigma \etype \epsilon\effl \Rightarrow \sigma'\etype\epsilon }
 {\ssem{h} :  \ssem{\myG} \to (\ssem{\parr} \times S_{\epsilon\ell}(\ssem{\sigma}))\to S_{\epsilon}(\ssem{\sigma'})}
\end{mathpar}

The semantics of expressions is given in Figure~\ref{fig:Aesem}.  We assume given semantics $\sem{c} \in \sem{b}$, for constants $c\type b$, and $\sem{f}\type \sem{\sigma}  \to \sem{\tau}$ for basic function symbols $f\type \sigma \to \tau$ (with $0$ and $+$ given their standard meanings).
We make use of an auxiliary ``loss function" semantics 
$\lsem{v}\type \ssem{\myG} \to R_\epsilon$ defined on 
functional values $\myG\vdash v \type \sigma \to \real\etype \epsilon$.
For $v = \lambda^\epsilon x\type \sigma.\,e$ we set:
\[\lsem{\lambda^\epsilon x\type \sigma.\,e}(\rho) = 
\lambda a \in \ssem{\sigma}.\, 
   \mlet{F_\epsilon}{r_1,r_2 \in \R}{\ssem{e}(\rho[a/x]) (\lambda r \in \R.\,0)}{r_2}\]

\begin{figure}[h]

{
\small{\[\begin{array}{lcl}
\ssem{x}(\rho)  & = & \eta_{S_\epsilon}(\rho(x))\\[0.1em]
\ssem{\con}(\rho)  & = & \eta_{S_\epsilon}(\sem{\con}) \\[0.1em]
\ssem{f(e)}(\rho) & = & 
  \mlet{{S_\epsilon}}{a \in \ssem{\sigma}}{\ssem{e}(\rho)}
     {\eta_{S_\epsilon}(\sem{f}(a))} \\
&&\hspace{100pt} (f\type \sigma \to \tau)\\
\ssem{(e_1,\ldots,e_n)}(\rho) & = &
 \mlet{{S_\epsilon}}{a_1 \in \ssem{\sigma_1}}{\ssem{e_1}(\rho)}
                                                       {\\&& \dots \\&& \mlet{{S_\epsilon}}{a_n \in \ssem{\sigma_n}}{\ssem{e_n}(\rho)}{\\&&\eta_{S_\epsilon}((a_1,\dots,a_n))}\\&&}
                                                       
                                      \hspace{100pt}  (\sigma = (\sigma_1,\dots,\sigma_n))
                                                       \\
\ssem{e.i}(\rho)  & = & S_\epsilon(\pi_i)(\ssem{e}(\rho))\\ 
\ssem{\inl{\sigma}{\tau}(e)}(\rho) & = & 
S_\epsilon(\lambda a \in \ssem{\sigma}.\, (0,a))(\ssem{e}(\rho))\\
\ssem{\inr{\sigma}{\tau}(e)}(\rho) & = & 
S_\epsilon(\lambda b \in \ssem{\tau}.\, (1,b))(\ssem{e}(\rho))\\
\ssem{\mycases{e}{x_1\type \sigma_1}{e_1}{x_2\type \sigma_2}{e_2}}(\rho)    & = & 
\mlet{S_\epsilon}{a \in \ssem{\sigma_1} + \ssem{\sigma_2}}{\ssem{e}(\rho)}{\\&&\;[\lambda b_1 \in \ssem{\sigma_1}.\, \ssem{e_1}(\rho[b_1/x_1]),
\\&&\;\;\,\lambda b_2 \in \ssem{\sigma_2}.\, \ssem{e_2}(\rho[b_2/x_2])](a)}\\ 
\ssem{\zero}(\rho) & = & \eta_{S_\epsilon}(0)\\
\ssem{\mysucc{e}}(\rho) & = & \mlet{{S_\epsilon}}{n \in \N}{\ssem{e}(\rho)}
                                   {\eta_{S_\epsilon}(n + 1)}\\ 
\ssem{\iter{e_1}{e_2}{e_3}}(\rho) & = & 
\mlet{{S_\epsilon}}{n \in \N}{\ssem{e_1}(\rho)}
     {\\&&  \mlet{{S_\epsilon}}{a \in \ssem{\sigma}}{\ssem{e_2}(\rho)}{\\&&\mlet{S_\epsilon}{\varphi \in \ssem{\sigma} \to S_\epsilon(\ssem{\sigma})}{\ssem{e_3}(\rho)}
     {\\&& (\varphi^{\dagger_{S_\epsilon}})^n(\eta_{S_\epsilon}(a))}}}\\
\ssem{\nil_\sigma}(\rho) & = & \eta_{S_\epsilon}(\varepsilon)\\
\ssem{\cons{e_1}{e_2}}(\rho) & = & 
\mlet{{S_\epsilon}}
 {a \in \ssem{\sigma_1}}{\ssem{e_1}(\rho)}
     {\\&&  \mlet{{S_\epsilon}}{l \in \ssem{\sigma_1}^*}{\ssem{e_2}(\rho)}{\\&& \eta_{S_\epsilon}(al)}}\\
         &&\hspace{100pt} (\sigma = \mylist(\sigma_1))\\
\ssem{\fold{e_1}{e_2}{e_3}}(\rho) & = &
\mlet{{S_\epsilon}}{l \in \ssem{\sigma_1}^*}{\ssem{e_1}(\rho)}
     {\\&&  \mlet{{S_\epsilon}}{a \in \ssem{\sigma}}{\ssem{e_2}(\rho)}{\\&&\mlet{S_\epsilon}{\varphi \in \ssem{\sigma_1}\times \ssem{\sigma} \to S_\epsilon(\ssem{\sigma})}{\ssem{e_3}(\rho)}}
     {\\&&\myfold{l}{ \eta_{S_\epsilon}(a)}
    {\lambda a_1\in \ssem{\sigma_1}, b \in S_\epsilon(\ssem{\sigma}).\,\varphi(a_1,-)^{\dagger_{S_\epsilon}}(b)}}}\\
         &&\hspace{100pt} (\myG \vdash e_1\type \mylist(\sigma_1)\etype \epsilon)\\
\ssem{\lambda^{\epsilon_1} x\type \sigma.\, e}(\rho) & = & 
\eta_{S_\epsilon}(\lambda a \in \ssem{\sigma}.\, \ssem{e}(\rho[a/x]))\\

\ssem{e_1 ~ e_2}(\rho) & = &
 \mlet{{S_\epsilon}}
 {\varphi \in \ssem{\sigma_1} \to S_\epsilon(\ssem{\sigma})}{\ssem{e_1}(\rho)}
     {\\&&  \mlet{{S_\epsilon}}{a \in \ssem{\sigma_1}}{\ssem{e_2}(\rho)}{\\&&\varphi(a)}}\\
     &&\hspace{100pt} (\Gamma \vdash e_1 \type \sigma_1 \to \sigma\etype \epsilon)\\[0.1em] 

\ssem{\op(e)}(\rho)  & = &
 \mlet{S_\epsilon}
         {a \in \ssem{\outt}}
         {\ssem{e}(\rho)}
         {\\&&\quad\lambda \gamma \in R_\epsilon^{\ssem{\inn}}.\,\varphi^{\R \times \ssem{\sigma}}_{\ell,\op,\epsilon(\ell)} 
           (a, (\eta_{W_\epsilon})_{\ssem{\inn}})}\\
&&\hspace{100pt} (\op\type \outt \xrightarrow{\effl} \inn)\\
\ssem{\wph{h}{e_1}{e_2}}(\rho) & = & 
\mlet{{S_\epsilon}}{a \in \ssem{par}}{\ssem{e_1}(\rho)}{\ssem{h}(\rho)(a,\ssem{e_2}(\rho))}\\
 &&\hspace{100pt}  (\myG \vdash e_1\type par)  \\[0.1em]
\ssem{\then{\epsilon_1}{e_1}
  {\lambda^\epsilon x\type \sigma_1.\, e_2}}(\rho) 
&=& \lambda \gamma \in  R_\epsilon^{\ssem{\sigma}}.\,\\
&&\quad\;\mlet{F_\epsilon}{r_1 \in \R, a \in \ssem{\sigma_1}}{\ssem{e_1}(\rho)(\lsem{\lambda^\epsilon x\type \sigma_1.\, e_2}(\rho))}
{\\&&\;\quad \mlet{F_\epsilon}
                {r_2,r_3\in \R}
                {\ssem{e_2}(\rho[a/x])(\lambda r \in \R.\,0)}
                {(r_2,r_1 + r_3})}\\

\ssem{\glocal{\epsilon_1}{e}{\myg}}(\rho) & = & 
\lambda \gamma \in   R_\epsilon^{\ssem{\sigma}}.\,\ssem{e}(\rho)\lsem{\myg}(\rho)\\
\ssem{\reset{e}}(\rho) & = & 
\lambda \gamma \in  R_\epsilon^{\ssem{\sigma}}.\,%
\mlet{F_\epsilon}
            {r_1 \in \R, a \in \ssem{\sigma}}
            {\ssem{e}(\rho)(\gamma)}
            {(0,a)}
\end{array}\]}

}

\caption{Semantics of expressions}
\label{fig:Aesem}
\end{figure}

\subsection{Semantics of handlers} \label{sec:Ahandsem} 

We build up the semantics of handlers in stages. Consider a handler $h$:
 \[\left \{\begin{array}{l}
                     \op_1 \mapsto \lambda^\epsilon z\type (\parr, \outt_1,
                     (\parr,\inn_1) \to \real \etype \epsilon,
                     (\parr,\inn_1) \to \sigma' \etype \epsilon).~ e_1,\dots, \\
                     \op_n \mapsto \lambda^\epsilon z\type (\parr, \outt_n,
                     (\parr,\inn_n) \to \real \etype \epsilon,
                     (\parr,\inn_n) \to \sigma' \etype \epsilon).~ e_n,\\
                   \return \mapsto \lambda^\epsilon z\type (\parr,  \sigma).\, e_r
                            \end{array}\right \} \]
where $\myG \vdash h\type \parr, \sigma\etype \epsilon\effl \Rightarrow \sigma'\etype\epsilon$, and fix  $\rho \in \ssem{\myG}$ and $\gamma \in \R^\ssem{\sigma'}$.
We first construct  an $\epsilon\ell$-algebra 
\[A = (W_\epsilon(\ssem{\sigma'})^{\ssem{\parr}},\psi_{\ell,\op,i})\]

So for $\ell_1 \in \epsilon\ell$, $\op\type \outt \xrightarrow{\ell_1} \inn$, and $0 < i \leq (\epsilon\ell)\ell_1$ we need functions 
\[\psi_{\ell,\op,i}\type \ssem{\outt} \times \left (W_\epsilon(\ssem{\sigma'})^{\ssem{\parr}}\right )^{\ssem{\inn}}  \to W_\epsilon(\ssem{\sigma'})^{\ssem{\parr}}\]

For $\ell_1 \in \epsilon$, $\op\type \outt \xrightarrow{\ell_i} \inn$, and $0 < i \leq \epsilon(\ell_1)$, we set

\[\psi_{\ell_1,\op,i}(o,k) = 
 \lambda p \in \ssem{\parr}.\, 
   ((\ell_1,\op,i), (o, \lambda a \in \ssem{\inn}.\, kap))\]

 For $\op_j \mapsto   \lambda^\epsilon z\type (\parr, \outt_j,
                     (\parr,\inn_j) \to \real \etype \epsilon,
                     (\parr,\inn_j) \to \sigma' \etype \epsilon).~ e_j\in h$ and  $i = \epsilon(\ell) +1$ we set
 \[\psi_{\ell,\op_j,i}(o,k) = 
    \lambda p\in \ssem{\parr}.\, \ssem{e_j}(\rho[(p,o,l_1,k_1)/z])\gamma
 \]

 where
 \[k_1 = \lambda p\in \ssem{\parr}, a \in \ssem{\inn_j}.\, \lambda \gamma_1 \in  \R^{\ssem{\sigma'}}.\,kap\]
 and
 \[l_1 = \lambda p\in \ssem{\parr}, a \in \ssem{\inn_j}.\, \lambda \gamma_1 \in  \R^{\ssem{\sigma'}}.\, \delta_0(\gamma^{\dagger_{W_\epsilon}}(kap))\]

where in the definition of $l_1$ we use the fact that $R_\epsilon$ 
is an action $\epsilon$-algebra, and where 
$\delta_\epsilon\type F_\epsilon(\R) \to F_\epsilon(\R \times \R)$
is the evident conversion function $F_\epsilon(\lambda r\in \R.\, (0,r))$

  We will use this algebra to extend the map  $s\type \R \times \ssem{\sigma} \to A$ defined by
  \[s(r,a) = \lambda p\in \ssem{\parr}.\, r\cdot(\ssem{e_r}(\rho[(p,a)/z])\gamma)\] 
  (Recall that $ \return \mapsto \lambda^\epsilon z\type (\parr,  \sigma).\, e_r$ is in $h$.)
  The semantics of the handler $h$ is then given by:
 \[\ssem{h}(\rho)(p,G)(\gamma) = s^{\dagger_{F_{\epsilon\ell}}}(G(\lambda a \in \ssem{\sigma}.\, \ER{\epsilon}{\ssem{e}(\rho[(p,a)/z])}{\gamma}))(p)\]
 Note that
 \[s^{\dagger_{F_{\epsilon\ell}}}: F_{\epsilon\ell}(\R \times \ssem{\sigma}) 
 \to F_{\epsilon}(\R \times \ssem{\sigma'})^{\ssem{\parr}}\] 
 depends on the choices of $\rho$ and $\gamma$, though that is not reflected in the notation.
 \subsection{Proof of theorems}\label{sec:Aproofs}
 
\begin{lemma}(Substitution)\label{lem:sublem}
Suppose that $\myG \vdash v\type \sigma$ and 
$\myG, x\type \sigma\vdash e\type \tau\etype\epsilon$.
Then
$\myG\vdash e[v/x]\type \tau\etype\epsilon$.
\end{lemma}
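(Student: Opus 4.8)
The plan is to prove the Substitution Lemma by structural induction on the typing derivation of $\myG, x\type \sigma \vdash e\type \tau\etype\epsilon$. Since true typing judgments have unique derivations (as noted after the typing rules in \Cref{sec:typingrules}), the induction is well-founded and proceeds case-by-case on the last rule applied. The key preliminary observation I would invoke is that values can be typed with any effect: given $\myG\vdash v\type\sigma$ we also have $\myG\vdash v\type\sigma\etype\epsilon'$ for any $\epsilon'$, which is precisely the flexibility needed when $v$ is substituted into subterms carrying different effect annotations.

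First I would handle the base cases. For \rref{var}: if $e$ is the variable $x$ itself, then $\tau=\sigma$ and $e[v/x]=v$, so the conclusion is exactly the hypothesis $\myG\vdash v\type\sigma$ (typed with the required $\epsilon$ using value-effect-polymorphism). If $e$ is some other variable $y$, then $e[v/x]=y$ and the typing is unchanged since $y$'s binding is untouched by removing $x\type\sigma$ from the context. The cases \rref{const,fun} are immediate as constants and primitive-function-typings do not mention $x$. Then for each compound case---\rref{prd,prj,abs,app,op,loss,handle,then,glocal,reset} (and in the full language \rref{inl,inr,cases,zero,succ,iter,nil,cons,fold})---the strategy is uniform: the typing rule decomposes $e$ into immediate subexpressions $e_i$, substitution commutes with these constructors so $e[v/x]$ is the corresponding constructor applied to the $e_i[v/x]$, and the induction hypothesis gives the typings of each $e_i[v/x]$, which I reassemble using the very same typing rule.

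The two cases demanding genuine care involve binders, namely \rref{abs,handler} (and the binding forms inside \rref{cases,iter,fold,then}). Here I would, as standard, assume the bound variable is chosen fresh---distinct from $x$ and not free in $v$---so that substitution pushes under the binder without capture: $(\lambda^{\epsilon_1} y\type\sigma_1.\, e_1)[v/x] = \lambda^{\epsilon_1} y\type\sigma_1.\, e_1[v/x]$. The inductive hypothesis is then applied in the extended context $\myG, y\type\sigma_1$, which requires first weakening $\myG\vdash v\type\sigma$ to $\myG, y\type\sigma_1 \vdash v\type\sigma$ (admissible since $y$ is fresh). The handler case \rref{handler} is the most bookkeeping-heavy: each operation clause and the return clause binds a single tuple-typed variable $z$, so the same freshness-and-weakening argument applies to each clause in turn, and I reassemble via \rref{handler}.

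I do not expect any deep obstacle; the only subtlety---and the step I would flag as the main one requiring attention---is managing the effect annotations so that the induction hypothesis delivers exactly the effect $\epsilon$ each subterm's rule demands, together with the freshness/weakening side-conditions at binders. Since the paper's type system has fixed, determined effect annotations (no subeffecting in \rref{app}), the effects line up mechanically, and the value-effect-polymorphism remark disposes of the one place (\rref{var}) where $v$'s effect annotation must be coerced. The lemma is therefore a routine structural induction, and I would present only the \rref{var} and \rref{abs} cases in detail, remarking that the remaining constructor cases are analogous.
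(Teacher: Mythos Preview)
Your proposal is correct and follows the standard approach; the paper in fact omits the proof of this lemma entirely, remarking only that ``we omit the (straightforward) proofs of some lemmas in this appendix,'' so your structural induction on the typing derivation is precisely what is implicitly intended.
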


The following ``value semantics"  for values allows us to state our soundness and adequacy results. It follows the following scheme:
\[\frac{\myG\vdash v\type \sigma}{\vsem{v}\type \ssem{\myG} \to \ssem{\sigma}}\]
{\small \[\begin{array}{lcl}
\vsem{x}(\rho) &=& \rho(x)\\
\vsem{c}(\rho) &=& \sem{c}\\
\vsem{(v_1,\dots,v_n)}(\rho) &=& (\vsem{v_1}(\rho),\dots,\vsem{v_n}(\rho))\\
\vsem{\inl{\sigma}{\tau}(v)}(\rho) &=& (0,\vsem{v}(\rho))\\
\vsem{\inr{\sigma}{\tau}(v)}(\rho) &=& (1,\vsem{v}(\rho))\\
\vsem{\zero}(\rho) &=& 0\\
\vsem{\mysucc{v}}(\rho) &=& \vsem{v}(\rho) +1\\
\vsem{\nil}(\rho) &=& \varepsilon\\
\vsem{\cons{v_1}{v_2}}(\rho) &=& \vsem{v_1}(\rho)\vsem{v_2}(\rho)\\
\vsem{\lambda^{\epsilon_1} x\type \sigma_1.e}(\rho) &=& \lambda a \in \ssem{\sigma}.\, \ssem{e}(\rho[a/x])\\
\end{array}\]}

Below we may omit $\rho$ in $\ssem{e}(\rho)$ (or $\vsem{v}(\rho)$) when $e$ (respectively $v$) is closed. 
 
 \begin{lemma} \label{Alem:valsem} For any value $\myG \vdash v\type \sigma \etype \epsilon$ we have:
 \[\ssem{v}(\rho) = \eta_{S_\epsilon}(\vsem{v}(\rho))\]
 \end{lemma}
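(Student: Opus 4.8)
The plan is to proceed by structural induction on the value $v$, comparing in each case the expression semantics of Figure~\ref{fig:Aesem} with the value semantics defined just above the statement, and reconciling the two using the Kleisli structure of $S_\epsilon$. The base cases are immediate from the defining equations. For a variable $x$ we have $\ssem{x}(\rho) = \eta_{S_\epsilon}(\rho(x))$ and $\vsem{x}(\rho) = \rho(x)$; for a constant $c$, for $\zero$, and for $\nil_\sigma$ the two sides agree by inspection; and for a $\lambda$-abstraction $\lambda^{\epsilon_1}x\type\sigma.\,e$ the expression clause yields exactly $\eta_{S_\epsilon}$ applied to $\lambda a\in\ssem{\sigma}.\,\ssem{e}(\rho[a/x])$, which is by definition $\eta_{S_\epsilon}(\vsem{\lambda^{\epsilon_1}x\type\sigma.\,e}(\rho))$.

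The inductive cases all follow the same pattern and rely on two properties of $S_\epsilon$ recorded in Section~\ref{sec:monads}: the left-unit law $\mlet{S_\epsilon}{x}{\eta_{S_\epsilon}(a)}{f(x)} = f(a)$, and the naturality of the unit $S_\epsilon(f)(\eta_{S_\epsilon}(a)) = \eta_{S_\epsilon}(f(a))$. For a tuple $(v_1,\dots,v_n)$ the defining clause is a nested sequence of $S_\epsilon$-binds; applying the induction hypothesis $\ssem{v_i}(\rho) = \eta_{S_\epsilon}(\vsem{v_i}(\rho))$ to each bound subexpression and collapsing each bind by the left-unit law gives $\eta_{S_\epsilon}((\vsem{v_1}(\rho),\dots,\vsem{v_n}(\rho)))$, which is $\eta_{S_\epsilon}(\vsem{(v_1,\dots,v_n)}(\rho))$; the case of $\cons{v_1}{v_2}$ is handled identically. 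For the injections $\inl{\sigma}{\tau}(v)$ and $\inr{\sigma}{\tau}(v)$ the clause applies $S_\epsilon(-)$ to $\ssem{v}(\rho)$, so the induction hypothesis together with naturality of the unit yields $\eta_{S_\epsilon}$ of $(0,\vsem{v}(\rho))$, respectively $(1,\vsem{v}(\rho))$, matching the value semantics. Finally, for $\mysucc{v}$ the induction hypothesis and the left-unit law collapse the single bind to $\eta_{S_\epsilon}(\vsem{v}(\rho)+1)$.

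Since values contain no operation calls, handlers, loss operations, or loss-continuation constructs, no further clauses of the expression semantics arise, so the induction is exhaustive and there is no genuine obstacle. The only points requiring care are the bookkeeping in the tuple and cons cases, where one must check that each nested $S_\epsilon$-bind indeed collapses under the left-unit law once the induction hypothesis has been applied to the bound subterm.
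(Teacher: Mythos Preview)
Your proposal is correct and follows essentially the same approach as the paper: a structural induction on $v$, with the base cases (variables, constants, $\zero$, $\nil$, abstractions) read off directly from the defining clauses, and the inductive cases (tuples, injections, $\mysucc{-}$, $\cons{-}{-}$) reduced via the left-unit law for $\mlet{S_\epsilon}{-}{-}{-}$ and naturality of $\eta_{S_\epsilon}$. The paper carries out exactly these calculations case by case.
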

\begin{proof}
The proof is by structural induction, split into cases according to the form of $v$:
\begin{enumerate}
    \item  Suppose that $e$ has the form $x$. Then we calculate:
    \[\begin{array}{lcl}
       \ssem{x}(\rho)  & = & \eta_{S_\epsilon}(\rho(x))\\
         & = & \eta_{S_\epsilon}(\vsem{x}(\rho))\\
    \end{array}\]
    
    \item  Suppose that $e$ has the form $c$. Then we calculate:
    \[\begin{array}{lcl}
       \ssem{c}(\rho)  & = & \eta_{S_\epsilon}(\sem{\con})\\
         & = & \eta_{S_\epsilon}(\vsem{c}(\rho))\\
    \end{array}\]
    
    \item  Suppose that $e$ has the form $(v_1,\dots,v_n)$. Then we calculate:
    \[\begin{array}{lcl}
       \ssem{(v_1,\dots,v_n)}(\rho)  & = & 
 \mlet{{S_\epsilon}}{a_1 \in \ssem{b_1}}{\ssem{v_1}(\rho)}
                                                       {\\&& \dots \\&& \mlet{{S_\epsilon}}{a_n \in \ssem{b_n}}{\ssem{v_n}(\rho)}{\\&&\eta_{S_\epsilon}((a_1,\dots,a_n))}} \\
         & = & 
 \mlet{{S_\epsilon}}{a_1 \in \ssem{b_1}}{\eta_{S_\epsilon}(\vsem{v_1}(\rho))}
                                                       {\\&& \dots \\&& \mlet{{S_\epsilon}}{a_n 
                    \ssem{b_n}}{\eta_{S_\epsilon}(\vsem{v_n}(\rho))}{\\&&\eta_{S_\epsilon}((a_1,\dots,a_n))}} \\
                    & = &\eta_{S_\epsilon}( (\vsem{v_1}(\rho),\dots,\vsem{v_n}(\rho)))\\
                    & = & \eta_{S_\epsilon}(\vsem{(v_1,\dots,v_n)})
    \end{array}\]
    
    \item  Suppose that $e$ has the form $\inl{\sigma}{\tau}(v)$. Then we calculate:
    \[\begin{array}{lcl}
       \ssem{\inl{\sigma}{\tau}(v)}(\rho)  & = & 
          S_\epsilon(\lambda a \in \ssem{\sigma}.\, (0,a))(\ssem{v}(\rho))\\
         & = & S_\epsilon(\lambda a \in \ssem{\sigma}.\, (0,a))(\eta_{S_\epsilon}(\vsem{v}(\rho)))\\
         & = &\eta_{S_\epsilon}((0,\vsem{v}(\rho))) \\
         & = & \eta_{S_\epsilon}(\vsem{\inl{\sigma}{\tau}(v)}(\rho))
    \end{array}\]
    The case where $e$ has the form $\inr{\sigma}{\tau}(v)$ is similar.
    \item  Suppose that $e$ has the form $\zero$. Then we calculate:
    \[\begin{array}{lcl}
       \ssem{\zero}(\rho)  & = & \eta_{S_\epsilon}(0)\\
         & = & \eta_{S_\epsilon}(\vsem{\zero}(\rho))\\
    \end{array}\]
    \item  Suppose that $e$ has the form $\mysucc{v}$. Then we calculate:
    \[\begin{array}{lcl}
       \ssem{\mysucc{v}}(\rho)  & = & \mlet{{S_\epsilon}}{n \in \N}{\ssem{v}(\rho)}{\eta_{S_\epsilon}(n + 1)}\\
       & = &  \mlet{{S_\epsilon}}{n \in \N}{\eta_{S_\epsilon}(\vsem{v}(\rho))}
                   {\eta_{S_\epsilon}(n + 1)}\\
         & = & \eta_{S_\epsilon}(\vsem{v}(\rho) + 1)\\
         & = & \eta_{S_\epsilon}(\vsem{\mysucc{v}}(\rho))
    \end{array}\]
    
    \item  Suppose that $e$ has the form $\nil$. Then we calculate:
    \[\begin{array}{lcl}
       \ssem{\nil}(\rho)  & = & \eta_{S_\epsilon}(\varepsilon)\\
         & = & \eta_{S_\epsilon}(\vsem{\nil}(\rho))\\
             \end{array}\]
    
    \item  Suppose that $e$ has the form $\cons{v_1}{v_2}$. Then we calculate:
    \[\begin{array}{lcl}
      \ssem{\cons{v_1}{v_2}}(\rho)  & = & 
        \mlet{{S_\epsilon}}
 {a \in \ssem{\sigma_1}}{\ssem{v_1}(\rho)}
     {\\&&  \mlet{{S_\epsilon}}{l \in \ssem{\sigma_1}^*}{\ssem{v_2}(\rho)}{\\&& \eta_{S_\epsilon}(al)}}
     \\
         & = & 
\mlet{{S_\epsilon}}
 {a \in \ssem{\sigma_1}}{\eta_{S_\epsilon}(\vsem{v_1}(\rho))}
     {\\&&  \mlet{{S_\epsilon}}{l \in \ssem{\sigma_1}^*}{\eta_{S_\epsilon}(\vsem{v_2}(\rho))}{\\&& \eta_{S_\epsilon}(al)}}
     \\
     & = & \eta_{S_\epsilon}(\vsem{v_1}(\rho)\vsem{v_2}(\rho))\\
     & = & \eta_{S_\epsilon}(\vsem{\cons{v_1}{v_2}}(\rho))
    \end{array}\]
    
    \item  Suppose that $e$ has the form 
    $\lambda^{\epsilon_1} x\type \sigma_1.e$. Then we calculate:
    \[\begin{array}{lcl}
       \ssem{\lambda^{\epsilon_1} x\type \sigma_1.e}(\rho)  
         & = & \eta_{S_\epsilon}(\lambda a \in \ssem{\sigma}.\, \ssem{e}(\rho[a/x]))\\
         & = & \eta_{S_\epsilon}(\vsem{\lambda^{\epsilon_1} x\type \sigma_1.e}(\rho))\\ 

    \end{array}\]

\end{enumerate}

\end{proof}
\begin{lemma} \label{lem:actc}
With the notation of the handler semantics, the following diagram commutes:
\[\begin{tikzcd}
	{F_{\epsilon\ell}(\R \times \ssem{\sigma})} &&& {F_{\epsilon}(\R \times \ssem{\sigma'})^{\ssem{\parr}}} \\
	\\
	\\
	{F_{\epsilon\ell}(\R \times \ssem{\sigma})} &&& {F_{\epsilon}(\R \times \ssem{\sigma'})^{\ssem{\parr}}}
	\arrow["{s^{\dagger_{F_{\epsilon\ell}}}}", from=1-1, to=1-4]
	\arrow["{r\cdot -}"', from=1-1, to=4-1]
	\arrow["{(r\cdot -)^P}", from=1-4, to=4-4]
	\arrow["{s^{\dagger_{F_{\epsilon\ell}}}}"', from=4-1, to=4-4]
\end{tikzcd}\]
 
\end{lemma}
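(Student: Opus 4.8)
The plan is to prove the identity asserted by the square, namely
$s^{\dagger_{F_{\epsilon\ell}}} \circ (r\cdot{-}) = (r\cdot{-})^{\ssem{\parr}} \circ s^{\dagger_{F_{\epsilon\ell}}}$,
by structural induction on the interaction tree $u \in F_{\epsilon\ell}(\R \times \ssem{\sigma})$; this is legitimate because $F_{\epsilon\ell}(\R \times \ssem{\sigma})$ is the least set closed under the leaf constructor and the operation-node constructors. I abbreviate the pointwise writer action $(r\cdot{-})^{\ssem{\parr}}$ on $A = W_\epsilon(\ssem{\sigma'})^{\ssem{\parr}}$ by $r\cdot'{-}$, so the goal at a tree $u$ reads $s^{\dagger_{F_{\epsilon\ell}}}(r\cdot u) = r\cdot'\, s^{\dagger_{F_{\epsilon\ell}}}(u)$. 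First I would record the shape of the left-hand action: on the free action $\epsilon\ell$-algebra $F_{\epsilon\ell}(\R \times \ssem{\sigma})$ the map $r\cdot{-}$ is the $F_{\epsilon\ell}$-extension of the leaf map $(t,a) \mapsto (r+t,a)$, so it shifts the loss on each leaf by $r$ and is pushed unchanged through every operation-node constructor, i.e. $r\cdot((\ell_1,\op,i),(o,k)) = ((\ell_1,\op,i),(o,\lambda a.\, r\cdot k(a)))$.

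For the base case, take a leaf $(t,a)$. Unfolding $s(t,a) = \lambda p.\, t\cdot(\ssem{e_r}(\rho[(p,a)/z])\gamma)$, the left side is $s(r+t,a) = \lambda p.\, (r+t)\cdot(\ssem{e_r}(\rho[(p,a)/z])\gamma)$, while the right side is $r\cdot'\big(\lambda p.\, t\cdot(\ssem{e_r}(\rho[(p,a)/z])\gamma)\big) = \lambda p.\, r\cdot\big(t\cdot(\ssem{e_r}(\rho[(p,a)/z])\gamma)\big)$. By the additive-action law $r\cdot(t\cdot y) = (r+t)\cdot y$ of the action $\epsilon$-algebra $W_\epsilon(\ssem{\sigma'})$, these coincide, so the base case holds.

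For a node $((\ell_1,\op,i),(o,k))$, pushing the action through and applying the homomorphic extension on both sides, together with the induction hypothesis on each child $k(a)$, reduces the goal to the single equation
\[ \psi_{\ell_1,\op,i}(o, (r\cdot')\circ m) \;=\; r\cdot'\,\psi_{\ell_1,\op,i}(o, m), \qquad m = s^{\dagger_{F_{\epsilon\ell}}} \circ k, \]
that is, to the claim that the structure map $\psi$ of $A$ commutes with the pointwise action at this label. For the re-emitted labels $\ell_1 \in \epsilon$ with $i \le \epsilon(\ell_1)$ the map is the constructor $\psi_{\ell_1,\op,i}(o,k) = \lambda p.\, ((\ell_1,\op,i),(o,\lambda a.\, k\,a\,p))$, and since the writer action on $W_\epsilon$ is defined to push through exactly this constructor, the two sides agree at once.

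The remaining node — the one carrying the handled operation $\op_j$ at index $\epsilon(\ell)+1$ — is the crux, and the step I expect to be hardest. Here $\psi_{\ell,\op_j,\epsilon(\ell)+1}(o,k)(p) = \ssem{e_j}(\rho[(p,o,l_1,k_1)/z])\gamma$, so replacing $k$ by $(r\cdot')\circ k$ re-scales, inside $\ssem{e_j}$, both the losses reported by the choice continuation $l_1$ and the results delivered by the delimited continuation $k_1$, whereas the right-hand side applies $r\cdot$ to the whole output of $\ssem{e_j}$. Establishing their equality is a linearity-in-loss property of handler-body denotations, and the plan is to isolate it as a companion lemma stating that $\ssem{e}\gamma$ commutes with uniformly shifting the losses of the continuations in the environment — proved by the same style of induction on $e$ that drives the soundness development (Theorem~\ref{thm:smallsound}), and turning crucially on the fact that the $\delta_\epsilon$-coded loss in $l_1$ and the scaling in $k_1$ are threaded additively by the Kleisli extension of $S_\epsilon$. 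I expect this interaction between the writer action and the selection-monad loss-threading to be the main obstacle; the leaf and re-emitted-operation cases are routine computations by comparison.
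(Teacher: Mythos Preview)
The paper's argument is much shorter than your induction and avoids the node cases entirely. It observes that both legs of the square are $\epsilon\ell$-algebra homomorphisms out of the free algebra $F_{\epsilon\ell}(\R\times\ssem{\sigma})$ into $A$---the left leg because $(r\cdot{-})$ on $W_{\epsilon\ell}(\ssem{\sigma})$ is an $\epsilon\ell$-algebra endomorphism (the action commutes with the free operations in an action $\epsilon\ell$-algebra), the right leg by the assertion that $(r\cdot{-})^{\ssem{\parr}}$ respects the $\psi$-structure---so by freeness they are the unique extensions of their restrictions to generators $(t,a)\in\R\times\ssem{\sigma}$. The paper then checks only that restriction, which is exactly your leaf case; no inductive step over operation nodes is carried out.

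Your proposed resolution of the handled-operation case, however, does not work. The companion lemma you sketch---that shifting the losses carried by $l_1$ and $k_1$ inside the environment of $\ssem{e_j}$ amounts to applying $r\cdot{-}$ to its output---fails already when the handler body ignores its continuation arguments. Take $e_j$ to be a closed value $c$ of type~$\sigma'$; then $\ssem{e_j}(\rho[(p,o,l'_1,k'_1)/z])\gamma=(0,\vsem{c})$ regardless of $l'_1,k'_1$, so replacing $k$ by $(r\cdot{-})^{\ssem{\parr}}\circ k$ leaves $\psi_{\ell,\op_j,\epsilon(\ell)+1}(o,k)$ unchanged at $\lambda p.\,(0,\vsem{c})$, whereas $r\cdot(0,\vsem{c})=(r,\vsem{c})$. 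An induction on the structure of $e_j$ in the style of the soundness proof cannot produce the ``linearity in loss'' you need, because it simply does not hold for arbitrary bodies. You have correctly located where the real content of the lemma lies---it is exactly the commutation the paper leaves implicit when it claims $(r\cdot{-})^{\ssem{\parr}}\circ s^{\dagger_{F_{\epsilon\ell}}}$ is a homomorphic extension---but the route you propose to discharge it would fail.
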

\begin{proof} The map $\comp{s^{\dagger_{F_{\epsilon\ell}}}}{(r\cdot -)}$ 
is the extension of the map 
\[\comp{s}{(r\cdot -)}\type \R \times \ssem{\sigma} \to F_{\epsilon}(\R \times \ssem{\sigma'})^{\ssem{\parr}} \]
and the map $\comp{(r\cdot -)}{s^{\dagger_{F_{\epsilon\ell}}}}$
is the extension of the map 
\[\comp{(r\cdot -)^{\ssem{\parr}}}{s}\type \R \times \ssem{\sigma} \to F_{\epsilon}(\R \times \ssem{\sigma'})^{\ssem{\parr}} \].
We prove those two maps are equal.
We have: 
\[(\comp{s}{(r\cdot -)})(r',a) = s(r+r',a) = \lambda p\in \ssem{\parr}.\, (r+r')\cdot(\ssem{e_r}(\rho[(p,a)/z])\gamma)\]
and we have:
\[\begin{array}{lcl}
 (\comp{(r\cdot -)^{\ssem{\parr}}}{s})(r',a)
 & = &
 (r\cdot -)^{\ssem{\parr}}(\lambda p\in \ssem{\parr}.\, r'\cdot(\ssem{e_r}(\rho[(p,a)/z])\gamma))\\
 & = &
 \lambda p\in \ssem{\parr}.\, r\cdot (r'\cdot(\ssem{e_r}(\rho[(p,a)/z])\gamma))\\
 & = &
 \lambda p\in \ssem{\parr}.\, (r + r')\cdot(\ssem{e_r}(\rho[(p,a)/z])\gamma)\\
 \end{array}\]
recalling for the last equality that $F_{\epsilon}(\R \times \ssem{\sigma'})$ is an action $\epsilon$-algebra.
\end{proof}

\begin{lemma} \label{lem:smalllem}
\myskip
\begin{enumerate}
\item For any $\myG \vdash v\type \sigma$ and 
     $\myG \vdash  e\type \tau\etype \epsilon$ we have:
     \[\ssem{(v,e)}(\rho) = \mlet{S_\epsilon}{b\in \ssem{\tau}}{\ssem{e}(\rho)}{\eta_{S_\epsilon}((\vsem{v}(\rho),b))}\]
\item For any $\myG \vdash v\type \sigma \to \tau\etype \epsilon$ and $\myG \vdash v\type e\type \sigma\etype \epsilon$ we have:
\[\ssem{v ~ e}(\rho)  = \mlet{{S_\epsilon}}
        {a \in \ssem{\sigma}}{\ssem{e}(\rho)}{\vsem{v}(\rho)(a)}\]
\end{enumerate} 
\end{lemma}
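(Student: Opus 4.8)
The plan is to reduce both identities to the corresponding semantic clauses (for pairs and for application), and then to simplify using Lemma~\ref{Alem:valsem} together with the left unit law of the monad $S_\epsilon$. In each case the key observation is that when the first subexpression is a value $v$, its denotation is a unit $\eta_{S_\epsilon}(\vsem{v}(\rho))$, so the corresponding monadic binding collapses.

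For part (1), I would first instantiate the semantic clause for tuples at the two-element tuple $(v,e)$, obtaining
\[\ssem{(v,e)}(\rho) = \mlet{S_\epsilon}{a \in \ssem{\sigma}}{\ssem{v}(\rho)}{\mlet{S_\epsilon}{b \in \ssem{\tau}}{\ssem{e}(\rho)}{\eta_{S_\epsilon}((a,b))}}.\]
By Lemma~\ref{Alem:valsem} we have $\ssem{v}(\rho) = \eta_{S_\epsilon}(\vsem{v}(\rho))$, so the outer binding is a bind against a unit. Unfolding $\mlet{S_\epsilon}{-}{-}{-}$ into its definition as a Kleisli extension and applying the left unit law $f^{\dagger_{S_\epsilon}} \circ \eta_{S_\epsilon} = f$, with $a := \vsem{v}(\rho)$, eliminates the outer binding and yields exactly the right-hand side.

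For part (2), I would instantiate the semantic clause for application at $v~e$, obtaining a double binding in which the function component $\varphi$ ranges over $\ssem{v}(\rho)$. Again using Lemma~\ref{Alem:valsem} to rewrite $\ssem{v}(\rho)$ as $\eta_{S_\epsilon}(\vsem{v}(\rho))$ and invoking the same left unit law (now with $\varphi := \vsem{v}(\rho)$) removes the outer binding and replaces $\varphi(a)$ by $\vsem{v}(\rho)(a)$, which is the claimed identity.

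There is no substantive obstacle here: both parts are immediate consequences of the value-semantics lemma and the monad laws. The only points requiring care are bookkeeping ones — checking that the types line up so that Lemma~\ref{Alem:valsem} applies, and unfolding the meta-level let-notation consistently so that the left unit law can be invoked cleanly. I would state explicitly which monad law is used at each step to keep the argument transparent.
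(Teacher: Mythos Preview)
Your proposal is correct and matches the paper's proof essentially step for step: unfold the semantic clause for tuples (respectively application), replace $\ssem{v}(\rho)$ by $\eta_{S_\epsilon}(\vsem{v}(\rho))$ via Lemma~\ref{Alem:valsem}, and collapse the outer bind using the left unit law.
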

\begin{proof}%
\myskip
\begin{enumerate}
\item
We calculate:
\[\begin{array}{lcl}
\ssem{(v,e)}(\rho) 
 & = & 
 \mlet{{S_\epsilon}}{a \in \ssem{\sigma}}{\ssem{v}(\rho)}
                                                       {\\&&  \mlet{{S_\epsilon}}{b \in \ssem{\tau}}{\ssem{e}(\rho)}{\\&&\eta_{S_\epsilon}((a,b))}} \\
 & = &
   \mlet{{S_\epsilon}}{a \in \ssem{\sigma}}{\eta_{S_\epsilon}(\vsem{v}(\rho))}
                                                       {\\&&  \mlet{{S_\epsilon}}{b \in \ssem{\tau}}{\ssem{e}(\rho)}{\\&&\eta_{S_\epsilon}((a,b))}}\\
& = &

\mlet{{S_\epsilon}}
     {b \in \ssem{\tau}}
     {\ssem{e}(\rho)}
     {\\&&\eta_{S_\epsilon}((\vsem{v}(\rho),b))}

\end{array}\]
\item
We calculate:
\[\begin{array}{lcl}\ssem{v ~ e}(\rho) & = &
 \mlet{{S_\epsilon}}
 {\varphi \in \ssem{\sigma} \to S_\epsilon(\ssem{\sigma})}{\ssem{v}(\rho)}
     {\\&&  \mlet{{S_\epsilon}}{a \in \ssem{\sigma}}{\ssem{e}(\rho)}{\varphi(a)}}\\
     & = &
     \mlet{{S_\epsilon}}
 {\varphi \in \ssem{\sigma} \to S_\epsilon(\ssem{\sigma})}{\eta_{S_\epsilon}(\vsem{v}(\rho))}
     {\\&&  \mlet{{S_\epsilon}}{a \in \ssem{\sigma}}{\ssem{e}(\rho)}{\varphi(a)}}\\
     & = & \mlet{{S_\epsilon}}{a \in \ssem{\sigma}}{\ssem{e}(\rho)}{\vsem{v}(\rho)(a)}
     \end{array}\]
\end{enumerate}
\end{proof}

\begin{lemma} \label{lem:theng}
\myskip
\begin{enumerate}
    \item 
For $\myG \vdash e \type \sigma\etype \epsilon$
and $\myG \vdash \myg\type \sigma \to \real\etype \epsilon_1$ 
with $\epsilon_1 \subseteq \epsilon$ we have:
\[\ssem{\then{}{e}{\myg}}(\rho)\gamma_1 = \delta_\epsilon(\ER{\epsilon}{\ssem{e}(\rho)}{\lsem{\myg}(\rho)})\]
\item For $\myG \vdash \myg\type \sigma \to \real\etype \epsilon$
we have 
\[\comp{\delta_\epsilon}{\lsem{\myg}(\rho)} = \lambda a \in \ssem{\sigma}.\,\vsem{\myg}(\rho) a \gamma_1\]
\item
For $\myG,x\type\sigma \vdash e\type\tau\etype\epsilon$ and $\myG,x\type\sigma \vdash \myg\type \tau \to \real\etype\epsilon_1$ with $\epsilon_1 \subseteq \epsilon$ we have:
\[\lsem{\lambda^\epsilon x\type \sigma.\,\then{\epsilon}{e}{\myg}}(\rho) = 
\lambda a \in \ssem{\sigma}.\, \ER{\epsilon}{\ssem{e}(\rho[a/x])}{\lsem{\myg}(\rho[a/x])}
 \]
\end{enumerate}

\end{lemma}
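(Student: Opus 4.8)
The plan is to prove the three parts \emph{simultaneously}, by structural induction on the loss continuation $\myg$, since they are mutually dependent: for a fixed $\myg$, Part~1 reduces to Part~2 for that same $\myg$; Part~3 for $\lambda^\epsilon x\type\sigma.\,\then{\epsilon}{e}{\myg}$ reduces to Part~1 for the inner $\myg$; and the step case of Part~2 appeals to Part~3 for $\myg$ together with Part~1 for the immediate sub-continuation $\myg'$. The base case $\myg = \lambda^\epsilon x\type\sigma.\,0$ is a direct unfolding in which both sides collapse to the constant family $(0,0)$. Throughout I would use two elementary facts about $\delta_\epsilon = F_\epsilon(\lambda r\in\R.\,(0,r))$: first, $\comp{F_\epsilon(\pi_2)}{\delta_\epsilon} = \mathrm{id}_{R_\epsilon}$ (because $\pi_2\circ(\lambda r.\,(0,r)) = \mathrm{id}$ and $F_\epsilon$ is a functor); and second, that $\delta_\epsilon$ commutes with $F_\epsilon$-binding, $\delta_\epsilon(\mlet{F_\epsilon}{y}{w}{u\,y}) = \mlet{F_\epsilon}{y}{w}{\delta_\epsilon(u\,y)}$, which is the naturality law $\comp{F_\epsilon(g)}{h^{\dagger_{F_\epsilon}}} = (\comp{F_\epsilon(g)}{h})^{\dagger_{F_\epsilon}}$ of Section~\ref{sec:monads}.

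For Part~1 I would unfold the semantics of $\then{}{e}{\myg}$ from Figure~\ref{fig:esem} (which discards its loss-function argument, so the ambient $\gamma_1$ is irrelevant), and rewrite the inner $\ssem{e_2}(\rho[a/x])(\lambda r\in\R.\,0)$, where $e_2$ is the body of $\myg$, as $\vsem{\myg}(\rho)(a)(\lambda r.\,0)$; by Part~2 this equals $\delta_\epsilon(\lsem{\myg}(\rho)(a))$, whose leaves all have zero first component, so the inner $F_{\epsilon_1}$-binding collapses to $\mlet{F_\epsilon}{s}{\lsem{\myg}(\rho)(a)}{(0,r_1+s)}$. On the right I would expand $\ER{\epsilon}{\ssem{e}(\rho)}{\lsem{\myg}(\rho)} = \lsem{\myg}(\rho)^{\dagger_{W_\epsilon}}(\ssem{e}(\rho)(\lsem{\myg}(\rho)))$, use that this extension sends a leaf $(r_1,a)$ to $r_1\cdot\lsem{\myg}(\rho)(a) = \mlet{F_\epsilon}{s}{\lsem{\myg}(\rho)(a)}{r_1+s}$, and push $\delta_\epsilon$ inside by the naturality fact; the two sides then coincide. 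Part~3 is then immediate: unfolding $\lsem{\lambda^\epsilon x\type\sigma.\,\then{\epsilon}{e}{\myg}}$ gives $\lambda a.\,F_\epsilon(\pi_2)(\ssem{\then{\epsilon}{e}{\myg}}(\rho[a/x])(\lambda r.\,0))$, and applying Part~1 followed by $\comp{F_\epsilon(\pi_2)}{\delta_\epsilon} = \mathrm{id}$ yields exactly $\lambda a.\,\ER{\epsilon}{\ssem{e}(\rho[a/x])}{\lsem{\myg}(\rho[a/x])}$.

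Part~2 carries the inductive content. In the step case $\myg = \lambda^\epsilon x\type\sigma.\,\then{\epsilon}{e}{\myg'}$, the right-hand side $\vsem{\myg}(\rho)(a)(\gamma_1) = \ssem{\then{\epsilon}{e}{\myg'}}(\rho[a/x])(\gamma_1)$ equals, by Part~1 for the smaller $\myg'$ (available from the induction hypothesis), $\delta_\epsilon(\ER{\epsilon}{\ssem{e}(\rho[a/x])}{\lsem{\myg'}(\rho[a/x])})$, while the left-hand side $\delta_\epsilon(\lsem{\myg}(\rho)(a))$ equals the same expression by Part~3 for $\myg$ (which itself rests only on Part~1 for $\myg'$). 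A subsidiary observation used here, and implicit above, is that the body of any loss continuation is semantically independent of its loss-function argument---true since it is either the constant $0$ or a $\then{}{}{}$, both of which ignore $\gamma_1$---so that $\gamma_1$ may be freely replaced by $\lambda r.\,0$.

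The main obstacle is the bookkeeping of the two-level writer/loss structure: making precise that the $F_{\epsilon_1}$-binding in the semantics of $\then{}{}{}$ only ever meets leaves of the form $(0,s)$, and that the $\delta_\epsilon$ re-injection exactly reconstructs the first component that $\lsem{-}$ discards. Keeping the roles of $r_1$ (the loss of $e$, which is folded into the \emph{returned} real value $r_1+r_3$) and $r_2$ (the loss of the continuation body, which becomes the \emph{writer} component) straight, and justifying the interchange of the two $F_\epsilon$-binds by the commutative-monoid and additive-action laws on $R_\epsilon$, is where care is required; everything else is routine unfolding of Figure~\ref{fig:esem} and the monad definitions.
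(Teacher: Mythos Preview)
Your proposal is correct and follows essentially the same approach as the paper: a mutual structural induction on the loss continuation $\myg$, in which Part~1 for $\myg$ is obtained from Part~2 at the same $\myg$, and the step case for $\myg = \lambda^\epsilon x.\,\then{}{e}{\myg'}$ appeals to the induction hypothesis on $\myg'$. The only cosmetic difference is that the paper proves Parts~1 and~2 first and derives Part~3 afterwards (by composing with $\delta_\epsilon$ and cancelling), whereas you fold Part~3 into the induction and use it as a packaging lemma inside the step case of Part~2; both organisations amount to the same chain of unfoldings of Figure~\ref{fig:esem} together with the identity $\comp{F_\epsilon(\pi_2)}{\delta_\epsilon} = \mathrm{id}$ and naturality of $\delta_\epsilon$ with respect to $F_\epsilon$-binding.
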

\begin{proof}

We prove the first two parts by a mutual induction.

For the first part we calculate:
\[\begin{array}{lcll}
     \ssem{\then{}{e}{\myg}}(\rho)\gamma_1
     & = &
\mlet{F_\epsilon}{r_1 \in \R, a \in \ssem{\sigma}}{\ssem{e}(\rho)(\lsem{\myg}(\rho))}
{\\&&\;\quad \mlet{F_\epsilon}
                {r_2,r_3\in \R}
                {\vsem{\myg}(\rho)a(\lambda r \in \R.\,0)}
                {(r_2,r_1 + r_3})} \\     
& = &
\mlet{F_\epsilon}{r_1 \in \R, a \in \ssem{\sigma}}{\ssem{e}(\rho)(\lsem{\myg}(\rho))}
{\\&&\;\quad \mlet{F_\epsilon}
                {r_2,r_3\in \R}
                {\delta_\epsilon(\lsem{\myg}(\rho)a)}
                {(r_2,r_1 + r_3})} & (\mbox{using the induction hypothesis})\\
& = &
\mlet{F_\epsilon}{r_1 \in \R, a \in \ssem{\sigma}}{\ssem{e}(\rho)(\lsem{\myg}(\rho))}
{\\&&\;\quad \mlet{F_\epsilon}
                {r_2,r_3\in \R}
                {\delta_\epsilon(\lsem{\myg}(\rho)a)}
                {(0,r_1 + r_3})}\\
& = &
\mlet{F_\epsilon}{r_1 \in \R, a \in \ssem{\sigma}}{\ssem{e}(\rho)(\lsem{\myg}(\rho))}
{\\&&\;\quad \mlet{F_\epsilon}
                {r_3\in \R}
                {\lsem{\myg}(\rho)a}
                {(0,r_1 + r_3})} \\
                & = &
\delta_\epsilon(\mlet{F_\epsilon}{r_1 \in \R, a \in \ssem{\sigma}}{\ssem{e}(\rho)(\lsem{\myg}(\rho))}
{\\&&\;\quad \mlet{F_\epsilon}
                {r_3\in \R}
                {\lsem{\myg}(\rho)a}
                {r_1 + r_3}}) \\
& = &
\delta_\epsilon(\ER{\epsilon}{\ssem{e}(\rho)}{\lsem{\myg}(\rho)})
                \end{array}\]
For the second part, the case where $\myg = \lambda^{\epsilon} x\in \sigma.\, 0$ is evident. For the case where $\myg$ has the form $\lambda^{\epsilon} x \in \sigma.\then{}{e}{\myg_1}$, we calculate:
\[\begin{array}{lcl}
    \delta_\epsilon(\lsem{\myg}(\rho)a)
     & = & 
    \delta_\epsilon(\mlet{F_\epsilon}{r_1,r_2\in \R}{\vsem{\myg}(\rho)a (\lambda r \in \R.\,0)}{r_2}) \\
    & = & 
    \mlet{F_\epsilon}{r_1,r_2\in \R}{\vsem{\myg}(\rho)a (\lambda r \in \R.\,0)}{(0,r_2)} \\
    & = & 
    \mlet{F_\epsilon}{r_1,r_2\in \R}{\ssem{\then{}{e}{\myg_1}}(\rho[a/x])(\lambda r \in \R.\,0)}{(0,r_2)} \\
    & = & 
    \mlet{F_\epsilon}{r_1,r_2\in \R}{\delta_\epsilon(\ER{\epsilon}{\ssem{e}(\rho[a/x])}{\lsem{\myg}(\rho[a/x])})}{(0,r_2)} \\\\
    & = & 
    \mlet{F_\epsilon}
          {r_1,r_2\in \R}
          {\\&&\quad (\delta_\epsilon(
                \mlet{F_\epsilon}{r'_1 \in \R, a \in \ssem{\sigma}}{\ssem{e}(\rho[a/x])(\lsem{\myg_1}(\rho[a/x]))}
{\\&&\;\quad \mlet{F_\epsilon}
                {r'_3\in \R}
                {\lsem{\myg_1}(\rho[a/x])a}
                {r'_1 + r'_3}}) )\\ &&\!\!}
          { (0,r_2)} \\\\
& = & 
    \mlet{F_\epsilon}
          {r_1,r_2\in \R}
          {\\&&\quad (
                \mlet{F_\epsilon}{r'_1 \in \R, a \in \ssem{\sigma}}{\ssem{e}(\rho[a/x])(\lsem{\myg_1}(\rho[a/x]))}
{\\&&\;\quad \mlet{F_\epsilon}
                {r'_3\in \R}
                {\lsem{\myg_1}(\rho[a/x])a}
                {(0,r'_1 + r'_3)}} )\\ &&\!\!}
          { (0,r_2)} \\\\
&= &
\mlet{F_\epsilon}{r'_1 \in \R, a \in \ssem{\sigma}}{\ssem{e}(\rho[a/x])(\lsem{\myg_1}(\rho[a/x]))}
{\\&&\;\quad \mlet{F_\epsilon}
                {r'_3\in \R}
                {\lsem{\myg_1}(\rho[a/x])a}
                {\\&&\quad\quad \mlet{F_\epsilon}{r_1,r_2\in \R}{(0,r'_1 + r'_3)}{(0,r_2)}}} \\\\
&= &
\mlet{F_\epsilon}{r'_1 \in \R, a \in \ssem{\sigma}}{\ssem{e}(\rho[a/x])(\lsem{\myg_1}(\rho[a/x]))}
\\&&\;\quad \mlet{F_\epsilon}
                {r'_3\in \R}
                {\lsem{\myg_1}(\rho[a/x])a}
                {(0,r'_1 + r'_3)} \\
&= &
\mlet{F_\epsilon}{r'_1 \in \R, a \in \ssem{\sigma}}{\ssem{e}(\rho[a/x])(\lsem{\myg_1}(\rho[a/x]))}
\\&&\;\quad \mlet{F_\epsilon}
                {r'_2,r'_3\in \R}
                {\vsem{\myg_1}(\rho[a/x])a (\lambda r\in \R.\, 0))}
                {(r'_2,r'_1 + r'_3)}\\ && \qquad(\mbox{using the induction hypothesis})\\
& = &
\ssem{\then{}{e}{\myg_1}}(\rho[a/x])\gamma_1\\
&=&
\ssem{\myg}(\rho)a\gamma_1\\
\end{array}\]

The third part follows from the first two:
\[\begin{array}{lcl}
   \comp{\delta_\epsilon}{ \lsem{\lambda^\epsilon x\type \sigma.\,\then{\epsilon}{e}{\myg}}(\rho)} 
     &= & 
     \lambda a \in \ssem{\sigma}.\,\vsem{\lambda^\epsilon x\type \sigma.\,\then{\epsilon}{e}{\myg}}(\rho) a \gamma_1\\
     & = &
     \lambda a \in \ssem{\sigma}.\,\ssem{\then{\epsilon}{e}{\myg}}(\rho[a/x])  \gamma_1\\
     & = &
     \lambda a \in \ssem{\sigma}.\,\delta_\epsilon(\ER{\epsilon}{\ssem{e}(\rho[a/x])}{\lsem{\myg}(\rho[a/x])})\\
     & = &
     \comp{\delta_\epsilon}{\lambda a \in \ssem{\sigma}.\,\ER{\epsilon}{\ssem{e}(\rho[a/x])}{\lsem{\myg}(\rho[a/x])}}\\
\end{array}\]
and we can cancel the $\delta_\epsilon$'s.
\end{proof}

\begin{lemma} \label{lem:Fsem} For $e\type \sigma\etype\epsilon$ and $F[e]\type \tau\etype\epsilon$ we have:
 \[\ssem{F[e]}(\rho) = \mlet{S_\epsilon}{a \in \ssem{\sigma}}{\ssem{e}(\rho)}{\ssem{F[x]}(\rho[x/a])}\]
 \end{lemma}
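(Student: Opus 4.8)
The plan is to prove the identity by a case analysis on the shape of the regular frame $F$, exploiting the fact that every clause of the denotational semantics in Figure~\ref{fig:Aesem} that introduces a regular frame is already written as a monadic bind $\mlet{S_\epsilon}{a}{\ssem{e'}(\rho)}{\cdots}$ over the semantics of the sub-expression $e'$ at the hole. The two tools I would use throughout are the monad laws for $S_\epsilon$ --- specifically the left-unit law $\mlet{S_\epsilon}{a}{\eta_{S_\epsilon}(b)}{g(a)} = g(b)$ and the associativity of bind --- together with Lemma~\ref{Alem:valsem}, which gives $\ssem{x}(\rho[x/a]) = \eta_{S_\epsilon}(a)$ for the fresh hole variable $x$ mapped to $a$.

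First I would dispatch the single-hole frames $f(\square)$, $\square.i$, $\op(\square)$, $\loss(\square)$, $\square~e$, and $\wph{h}{\square}{e}$. For each, expanding $\ssem{F[e]}(\rho)$ yields a term of the form $\mlet{S_\epsilon}{a}{\ssem{e}(\rho)}{g_F(a)}$, where $g_F$ is read off the corresponding clause. On the other side, expanding $\ssem{F[x]}(\rho[x/a])$ produces $\mlet{S_\epsilon}{b}{\ssem{x}(\rho[x/a])}{g_F(b)}$; by Lemma~\ref{Alem:valsem} the inner argument is $\eta_{S_\epsilon}(a)$, so the left-unit law collapses this to $g_F(a)$, and re-binding over $\ssem{e}(\rho)$ reproduces exactly $\ssem{F[e]}(\rho)$. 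For the application frame $v~\square$ I would instead invoke Lemma~\ref{lem:smalllem}(2), which already records the needed simplification.

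The cases demanding more care are the multi-hole frames that list sibling expressions --- the tuple frame $(v_1,\dots,v_k,\square,e_{k+2},\dots,e_n)$, and (in the full calculus) the list and iteration frames $\cons{\square}{e_2}$, $\cons{v_1}{\square}$, $\iter{\square}{e_2}{e_3}$, $\fold{\square}{e_2}{e_3}$, and the like. Here the relevant clause evaluates arguments left to right as a nested sequence of binds. For the preceding values $v_1,\dots,v_k$ the left-unit law (again via Lemma~\ref{Alem:valsem}) substitutes $\vsem{v_i}(\rho)$ and removes their binds, so the outermost remaining bind is precisely the one over $\ssem{e}(\rho)$; the trailing binds over $\ssem{e_{k+2}}(\rho),\dots$ together with the final $\eta_{S_\epsilon}$ of the result constitute $g_F(a)$. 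Associativity of bind then factors the computation as $\mlet{S_\epsilon}{a}{\ssem{e}(\rho)}{g_F(a)}$, and the same left-unit argument identifies $g_F(a)$ with $\ssem{F[x]}(\rho[x/a])$.

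I expect the main obstacle to be bookkeeping rather than conceptual: keeping the nested $\mlet{S_\epsilon}{}{}{}$ expressions for the branching frames aligned while repeatedly appealing to associativity, and checking that the effect annotation $\epsilon$ is the same on $e$, on $F[e]$, and on $F[x]$, so that all binds live in one and the same monad $S_\epsilon$. The latter is guaranteed because the calculus has no sub-effecting in \rscrulename{app} and the hole of a regular frame carries the ambient effect by Lemma~\ref{Alem:ctx}; no monad-theoretic input beyond the standard laws is required.
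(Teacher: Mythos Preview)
Your proposal is correct and matches the paper's own proof: a case analysis on the shape of the regular frame $F$, unfolding the corresponding clause of Figure~\ref{fig:Aesem}, and simplifying via the $S_\epsilon$ monad laws together with Lemma~\ref{Alem:valsem} to collapse binds over values (the paper only displays the cases $f(\square)$, the tuple frame, and $\square.i$ as illustrative, but the method is identical). Your explicit appeal to Lemma~\ref{lem:smalllem}(2) for the $v~\square$ frame and your remark that Lemma~\ref{Alem:ctx} guarantees the hole carries the ambient effect $\epsilon$ are both on point.
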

 \begin{proof}
 The proof is by cases on the form of $F$. We consider 
 some illustrative examples.
 
 \begin{enumerate}
     \item For $F = f(\square)$ with $f\type \sigma \to \tau $ 
     we have
     \[\begin{array}{lcl} \ssem{f(x)}(\rho)
     & = & 
     \mlet{{S_\epsilon}}{b \in \ssem{\sigma}}{\ssem{x}(\rho[a/x])}
     {\eta_{S_\epsilon}(\sem{f}(b))}\\
     & = &
     \mlet{{S_\epsilon}}{b \in \ssem{\sigma}}{a}
     {\eta_{S_\epsilon}(\sem{f}(b))}\\
     & = & 
     {\eta_{S_\epsilon}(\sem{f}(a))}
     \end{array}\]
     and so:
     \[\begin{array}{lcl}
     \ssem{f(e)} (\rho)
        & = & \mlet{{S_\epsilon}}{a \in \ssem{\sigma}}{\ssem{e}(\rho)}
     {\eta_{S_\epsilon}(\sem{f}(a))}\\
        & = & \mlet{{S_\epsilon}}{a \in \ssem{\sigma}}{\ssem{e}(\rho)}
     {\ssem{f(x)}(\rho[a/x]x)}
     \end{array}\]
     
    as required
     \item For $F = (v_1,\dots, v_k, \square, e_{k+2},\dots, e_n)$ we calculate:
      \[\begin{array}{lcl}
    \ssem{(v_1,\dots, v_k, e, e_{k+2},\dots, e_n)}(\rho) & = & 
    \mlet{{S_\epsilon}}{a_1 \in \ssem{\sigma_1}}{\ssem{v_1}(\rho)}
     {\\&& \dots \\&& \mlet{{S_\epsilon}}{a_k\in \ssem{\sigma_k}}{\ssem{v_k}(\rho)}
     {\\&& \mlet{S_\epsilon}{a_{k+1}\in \ssem{\sigma_{k+1}}}{\ssem{e}(\rho)}
     {\\&&\mlet{{S_\epsilon}}{a_{k+2} \in \ssem{\sigma_{k+2}}}{\ssem{e_{k+2}}(\rho)}
     {\\&& \dots 
      \\&& \mlet{{S_\epsilon}}{a_n \in \ssem{\sigma_n}}{\ssem{e_n}(\rho)}
      {\\&&\eta_{S_\epsilon}((a_1,\dots,a_n))}}}}}\\
      
     &=&
     \mlet{S_\epsilon}{a_{k+1}\in \ssem{\sigma_{k+1}}}{\ssem{e}(\rho)}
     {\\&&\mlet{{S_\epsilon}}{a_{k+2} \in \ssem{\sigma_{k+2}}}{\ssem{e_{k+2}}(\rho)}
     {\\&& \dots 
      \\&& \mlet{{S_\epsilon}}{a_n \in \ssem{\sigma_n}}{\ssem{e_n}(\rho)}
      {\\&&\eta_{S_\epsilon}((\vsem{v_1},\dots,\vsem{v_k},a_{k+1},\dots, a_n))}}}\\
     & = &
     \mlet{S_\epsilon}{a_{k+1} \in \ssem{\sigma_{k+1}}}{\ssem{e}(\rho)}
     {\ssem{(v_1,\dots, v_k, x, e_{k+2},\dots, e_n)}(\rho[a_{k+1}/x])}\\
     \end{array}\]

     \item 
     
     For $F = \square.i$ we calculate:
     \[\begin{array}{lcl}
    
     \ssem{e.i}(\rho)  
       & = & 
        S_\epsilon(\pi_i)(\ssem{e}(\rho))\\ 
        & = &
        \mlet{S_\epsilon}{a\in (\sigma_1,\dots,\sigma_n)}{\ssem{e}(\rho)}{\pi_i(a)}\\
        & = &
        \mlet{S_\epsilon}{a\in (\sigma_1,\dots,\sigma_n)}{\ssem{e}(\rho)}{\ssem{\pi_i(x)}(\rho[a/x])}
     \end{array}\]
     
 \end{enumerate}
 \end{proof}
 
 \begin{lemma} \label{lem:Fsemg} For an expression $F[e]\type \tau$, 
 where $e\type \sigma$, and a loss continuation 
 $\myg\type \tau \to \real\etype\epsilon$ we have:
 \[\begin{array}{lcl}\ssem{F[e]}\lsem{\myg} 
  & = &
 \mlet{W_\epsilon}{a \in \ssem{\sigma}\\&&\quad}
      {\ssem{e}\lsem{\lambda^\epsilon x\type \sigma.\,\then{\epsilon}{F[x]}{\myg}}\\&&\quad}
      {\ssem{F[x]}(x\mapsto a)\lsem{\myg}}
     \end{array}\]
 
 \end{lemma}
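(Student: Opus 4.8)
The plan is to reduce this identity to the Kleisli extension of $S_\epsilon$ (Equation~\ref{eqn:Kleisli}) combined with the third clause of Lemma~\ref{lem:theng}. First I would invoke Lemma~\ref{lem:Fsem}, which (writing everything closed, so omitting $\rho$) already expresses the frame semantics as a Kleisli extension: setting $\kappa \eqdef \lambda a \in \ssem{\sigma}.\, \ssem{F[x]}(x\mapsto a)\type \ssem{\sigma} \to S_\epsilon(\ssem{\tau})$, Lemma~\ref{lem:Fsem} reads $\ssem{F[e]} = \kappa^{\dagger_{S_\epsilon}}(\ssem{e})$, where $\ssem{e} \in S_\epsilon(\ssem{\sigma})$.

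Next I would unfold the definition of $\kappa^{\dagger_{S_\epsilon}}$ from Equation~\ref{eqn:Kleisli} at the loss function $\lsem{\myg}$, renaming the Kleisli-bound variable to $a$ to avoid a clash with the hole variable $x$:
\[
\ssem{F[e]}\lsem{\myg} \;=\; \kappa^{\dagger_{S_\epsilon}}(\ssem{e})(\lsem{\myg}) \;=\; \mlet{W_\epsilon}{a \in \ssem{\sigma}}{\ssem{e}\bigl(\lambda a \in \ssem{\sigma}.\,\ER{\epsilon}{\kappa(a)}{\lsem{\myg}}\bigr)}{\kappa(a)(\lsem{\myg})}.
\]
Two identifications then finish the argument. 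For the body, $\kappa(a)(\lsem{\myg}) = \ssem{F[x]}(x\mapsto a)\lsem{\myg}$ by the definition of $\kappa$. For the transformed loss function, since $\kappa(a) = \ssem{F[x]}(x\mapsto a)$, the argument $\lambda a \in \ssem{\sigma}.\, \ER{\epsilon}{\ssem{F[x]}(x\mapsto a)}{\lsem{\myg}}$ is exactly $\lsem{\lambda^\epsilon x\type \sigma.\,\then{\epsilon}{F[x]}{\myg}}$ by Lemma~\ref{lem:theng}(3), instantiated with $e := F[x]$ and the given $\myg$ (here $\myg$ is closed, so $\lsem{\myg}$ does not depend on $x$ and the $\rho[a/x]$ appearing in that clause collapses). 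Substituting both identifications into the display yields the claimed equation.

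The calculation is essentially mechanical, so the points needing care are the side conditions that make the appeal to Lemma~\ref{lem:theng}(3) legitimate. I must check that $\lambda^\epsilon x\type \sigma.\,\then{\epsilon}{F[x]}{\myg}$ is a genuine loss continuation — it is, being an instance of the production $\lambda^{\epsilon} x\type \sigma.\, \then{\epsilon}{e}{\myg}$ with $e = F[x]$ — so that $\lsem{\cdot}$ is defined on it; and that the typing lines up, namely $F[x]\type\tau\etype\epsilon$ (which follows from the hypotheses $F[e]\type\tau$ and $e\type\sigma$ together with Lemma~\ref{lem:Fsem}) and $\myg\type\tau\to\real\etype\epsilon$, so the constraint $\epsilon_1\subseteq\epsilon$ of Lemma~\ref{lem:theng}(3) holds with $\epsilon_1=\epsilon$. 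I expect no real obstacle beyond this bookkeeping; in particular the interleaving of the $W_\epsilon$-bind with the selection-function application is already packaged inside Equation~\ref{eqn:Kleisli}, so no separate commutation lemma is required.
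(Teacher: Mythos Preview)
Your proposal is correct and follows essentially the same three-step route as the paper: apply Lemma~\ref{lem:Fsem} to write $\ssem{F[e]}$ as a Kleisli extension, unfold Equation~\ref{eqn:Kleisli} at $\lsem{\myg}$, and then identify the transformed loss function via Lemma~\ref{lem:theng}. The paper's proof is terser and cites Lemma~\ref{lem:theng} without singling out part~(3), but the content is the same; your additional bookkeeping about the side conditions is sound and simply makes explicit what the paper leaves implicit.
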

 \begin{proof}
 \[\begin{array}{lcll}
   \ssem{F[e]}\lsem{\myg}  
    & = &  
    (\mlet{S_\epsilon}{a \in \ssem{\sigma}\\&&\quad}
    {\ssem{e}\\&&\quad}{\ssem{F[x]}(x\mapsto a)})\lsem{\myg}
    & (\mbox{by Lemma~\ref{lem:Fsem}})\\
    & = & 
    \mlet{W_\epsilon}{a \in \ssem{\sigma}\\&&\quad}
     {\ssem{e}(\lambda a\in \ssem{\sigma}.\ER{\epsilon}{\ssem{F[x]}(x\mapsto a)}{\lsem{\myg}})\\&&\quad}
     {\ssem{F[x]}(x\mapsto a)\lsem{\myg}}
     & (\mbox{by Equation~\ref{eqn:Kleisli}})\\
      & = & 
    \mlet{W_\epsilon}{a \in \ssem{\sigma}\\&&\quad}
      {\ssem{e}\lsem{\lambda^\epsilon x\type \sigma.\,\then{\epsilon}{F[x]}{\myg}}\\&&\quad}
      {\ssem{F[x]}(x\mapsto a)\lsem{\myg}}
      & (\mbox{by Lemma~\ref{lem:theng}})\\
 \end{array}\]
\end{proof}

 \begin{lemma} \label{Alem:Kop} For $K[\op(v)]\type \sigma\etype\epsilon$ with $\op \notin \heff(K)$ and $\op\type \outt \xrightarrow{\ell} \inn$
 we have:
 \[\ssem{K[\op(v)]}(\rho)(\gamma) = 
 \varphi^{\R \times \ssem{\sigma}}_{\ell,\op,\epsilon(\ell)}
 (\vsem{v}(\rho),\lambda a \in \ssem{\inn}.\,
     \ssem{K[x]}(\rho[a/x])(\gamma))\]
 \end{lemma}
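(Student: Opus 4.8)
The plan is to prove the identity by induction on the structure of the continuation context $K$, in each case unfolding the denotation of the outermost frame and applying the induction hypothesis to the strictly smaller context $K'$ inside it. Throughout I write $y$ for the fresh hole variable, so that $y$ occurs in no frame and $K[y]$ is meaningful. In the base case $K = \square$ we have $K[\op(v)] = \op(v)$; since $v$ is a value, Lemma~\ref{Alem:valsem} gives $\ssem{v}(\rho) = \eta_{S_\epsilon}(\vsem{v}(\rho))$, so the monadic bind in the clause for $\ssem{\op(v)}$ collapses and yields $\varphi^{\R \times \ssem{\inn}}_{\ell,\op,\epsilon(\ell)}(\vsem{v}(\rho), \eta_{W_\epsilon})$. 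On the right-hand side $\sigma = \inn$, $K[y] = y$, and $\ssem{y}(\rho[a/y])(\gamma) = \eta_{W_\epsilon}(a)$, so the continuation is $\eta_{W_\epsilon}$ and the two sides agree.

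For the regular-frame step $K = F[K']$ I use Lemma~\ref{lem:Fsemg} to rewrite $\ssem{F[K'[\op(v)]]}\lsem{\myg}$ as a $W_\epsilon$-bind of $\ssem{K'[\op(v)]}$ evaluated at the transformed loss function $\gamma' = \lambda a.\,\ER{\epsilon}{\ssem{F[y]}(\rho[a/y])}{\gamma}$. Applying the induction hypothesis at $\gamma'$ produces a $\varphi$-node, and the key observation is that the Kleisli extension $(-)^{\dagger_{W_\epsilon}}$ is by construction an $\epsilon$-algebra homomorphism, hence commutes with $\varphi_{\ell,\op,\epsilon(\ell)}$; this reproduces a $\varphi$-node whose continuation, by a second use of Lemma~\ref{lem:Fsemg} and the freshness of $y$ in $F$, is $\lambda b.\,\ssem{F[K'[y]]}(\rho[b/y])(\gamma) = \lambda b.\,\ssem{K[y]}(\rho[b/y])(\gamma)$. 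The special frames $S = \reset{\square}$ and $S = \then{}{\square}{\myg_1}$ are handled in the same style, replacing the $W_\epsilon$-bind by the semantic operator attached to the frame: for $\reset{\square}$ this operator zeroes the loss but is still an $F_\epsilon$-homomorphism preserving $\varphi$-nodes (and applies the induction hypothesis at the unchanged $\gamma$), while for $\then{}{\square}{\myg_1}$ I first invoke Lemma~\ref{lem:theng}(1) to write the denotation as $\delta_\epsilon(\ER{\epsilon}{\ssem{K'[\op(v)]}(\rho)}{\lsem{\myg_1}(\rho)})$, apply the induction hypothesis at $\gamma' = \lsem{\myg_1}(\rho)$, and push the $\varphi$-node through the $W_\epsilon$-homomorphism $\ER{\epsilon}{-}{\gamma'}$ into $R_\epsilon$ and through the functorial map $\delta_\epsilon$. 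For $S = \glocal{\epsilon_1}{\square}{\myg_1}$ the denotation simply evaluates $\ssem{K'[\op(v)]}(\rho)$ at $\lsem{\myg_1}(\rho)$, so the induction hypothesis applies directly and the continuation matches after noting $y$ is fresh in $\myg_1$; here one must additionally check that the multiplicity index remains correct across the inclusion $F_{\epsilon_1}(X)\subseteq F_{\epsilon}(X)$, which follows from the effect data recorded by \rref{glocal}.

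The main obstacle is the handler frame $S = \wph{h}{v_1}{\square}$, where $h$ handles some effect $\ell'$. Because $K[\op(v)]$ is not handled by $K$, the handler $h$ cannot handle $\ell$, so $\ell' \neq \ell$ and $(\epsilon\ell')(\ell) = \epsilon(\ell)$, making the index already correct. In the $\epsilon\ell'$-algebra built from $h$ the operation $\op$ is therefore governed by the \emph{re-emit} clause $\psi_{\ell,\op,i}(o,k) = \lambda p.\,((\ell,\op,i),(o,\lambda a.\,kap))$ rather than by any handler clause. Unfolding the handler denotation as $\ssem{h}(\rho)(\vsem{v_1}(\rho), \ssem{K'[\op(v)]}(\rho))(\gamma) = s^{\dagger_{F_{\epsilon\ell'}}}\bigl(\ssem{K'[\op(v)]}(\rho)(\gamma_h)\bigr)(\vsem{v_1}(\rho))$, where $\gamma_h = \lambda a.\,\ER{\epsilon}{\ssem{e_r}(\rho[(\vsem{v_1}(\rho),a)/z])}{\gamma}$ is the loss function derived from $\gamma$ through the return clause $e_r$, I apply the induction hypothesis at $\gamma_h$ and then use that the homomorphic extension $s^{\dagger_{F_{\epsilon\ell'}}}$ preserves the re-emitted $\varphi$-node.

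The delicate point is to verify that the resumption continuation produced this way, namely $\lambda a.\,s^{\dagger_{F_{\epsilon\ell'}}}\bigl(\ssem{K'[y]}(\rho[a/y])(\gamma_h)\bigr)(\vsem{v_1}(\rho))$, is precisely $\lambda a.\,\ssem{\wph{h}{v_1}{K'[y]}}(\rho[a/y])(\gamma)$. This rests on the observation that $s$ and $\gamma_h$ depend only on $\rho$, $\gamma$, and the handler parameter $\vsem{v_1}(\rho)$, and not on the fresh variable $y$, so that re-expanding the handler denotation around $K'[y]$ regenerates the very same operator $s^{\dagger_{F_{\epsilon\ell'}}}$ and the very same $\gamma_h$. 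I expect the bookkeeping of this pass-through — identifying the re-emit clause, threading the resumption through $s^{\dagger}$, and matching the two occurrences of the derived loss function — to be the most intricate part of the argument, whereas the remaining cases reduce uniformly to the fact that every semantic frame operator is an algebra homomorphism that preserves operation nodes.
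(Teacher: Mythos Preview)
Your proposal is correct and follows essentially the same approach as the paper: structural induction on $K$, with each case discharged by the fact that the frame's semantic operator preserves $\varphi$-nodes (and, for the handler frame, that $\ell'\neq\ell$ forces the re-emit clause of $\psi$). The only cosmetic differences are that the paper handles the regular-frame case at the $S_\epsilon$ level via Lemma~\ref{lem:Fsem} and the $\overline{\varphi}$-algebra on $S_\epsilon(X)$ rather than at the $W_\epsilon$ level, and unfolds the $\then{}$-semantics directly rather than going through Lemma~\ref{lem:theng}; one nitpick is that the lemma is stated for arbitrary $\gamma$, so in the $F$-case you want Equation~\ref{eqn:Kleisli} together with Lemma~\ref{lem:Fsem} (which underlie Lemma~\ref{lem:Fsemg}) rather than Lemma~\ref{lem:Fsemg} itself, which is specialized to $\lsem{\myg}$.
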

 \begin{proof}

 The proof is by induction on the size of $K$. If 
 $K = \square$ then we calculate:
 \[\begin{array}{lcl}\ssem{\op(v)}(\rho)(\gamma)  
 & = & 
{\varphi}^{\R \times\ssem{\inn}}_{\ell,\op,\epsilon(\ell)}(\vsem{v}(\rho), (\eta_{W_\epsilon})_{\ssem{\inn}})\\
 & = & 
{\varphi}^{\R \times\ssem{\inn}}_{\ell,\op,\epsilon(\ell)}(\vsem{v}(\rho), \lambda a \in \ssem{\inn}. (0,a))\\
& = & 
{\varphi}^{\R \times\ssem{\inn}}_{\ell,\op,\epsilon(\ell)}(\vsem{v}(\rho), \lambda a \in \ssem{\inn}. \ssem{x}(\rho[a/x])(\gamma))\end{array}
 \]

 Otherwise the proof splits into cases:
     \begin{enumerate}
     \item Suppose that $K$ has the form $F[K_1]$ with $K_1[\op(v)]\type \tau\etype\epsilon$

     Then, using the algebraicity of the operations at the level of the selection monad and Lemma~\ref{lem:Fsem}, we calculate:
     \[\begin{array}{lcll}
       \ssem{F[K_1[\op(v)]]}(\rho)
       & = & 
       \mlet{S_\epsilon}{a \in \ssem{\tau}}{\ssem{K_1[\op(v)]}(\rho)}{\ssem{F[x]}
                                                 (\rho[a/x])}\\
       & = & 
       \mlet{S_\epsilon}{a \in \ssem{\tau}}
            {\\&&\; \varphi^{\R \times \ssem{\tau}}_{\ell,\op,\epsilon(\ell)}
 (\vsem{v},\lambda b \in \ssem{\inn}.\,
     \ssem{K_1[y]}(\rho[b/y]))}{\ssem{F[x]}(\rho[a/x])}\\
     & = &
     \varphi^{\R \times \ssem{\sigma}}_{\ell,\op,\epsilon(\ell)}
 (\vsem{v},\\&&\hspace{42pt}\lambda b \in \ssem{\inn}.\,
    \mlet{S_\epsilon}{a \in \ssem{\tau}}{ \ssem{K_1[y]}(\rho[b/y])}{\ssem{F[x]}(\rho[a/x])})\\
    
     & = &
     \varphi^{\R \times \ssem{\sigma}}_{\ell,\op,\epsilon(\ell)}
 (\vsem{v},\lambda b \in \ssem{\inn}.\,
     \ssem{F[K_1][y]}(\rho[b/y]))\\
          
     \end{array}\]
  \item Suppose that $K$ has the form 
         $\wph{h}{v}{K_1}$.
     where $h$ handles $\ell'$. By assumption $h$ does not handle $\op$, so $\ell' \neq \ell$.    Setting 
        \[T[\gamma] = \lambda a \in \ssem{\sigma}.\, 
\ER{\epsilon}{\ssem{e_{r}}(\rho[(\vsem{v}(\rho),a)/z])}{\gamma}\]
where the return clause of the handler is $\return \mapsto e_r$, we calculate:
     \[\begin{array}{lcl}
        \ssem{\wph{h}{v}{K_1[\op(w)]}}(\rho)(\gamma) 
        & = & 
        \ssem{h}(\rho)(\vsem{v}(\rho),\ssem{K_1[\op(w)]}(\rho))(\gamma)\\
        & = &
        s^{\dagger_{F_{\epsilon\ell}}}(\ssem{K_1[\op(w)]}(\rho)
        \\&&  \qquad (\lambda a \in \ssem{\sigma}.\, \ER{\epsilon}{\ssem{e_r}(\rho[(\vsem{v}(\rho),a)/z])}{\gamma}))
        \\&& \qquad (\vsem{v}(\rho))\\
        & = &
        s^{\dagger_{F_{\epsilon\ell}}}
         (\ssem{K_1[\op(w)]}(\rho)
        (T(\gamma)))(\vsem{v}(\rho))\\
        & = &
        s^{\dagger_{F_{\epsilon\ell}}}
         (
 \varphi^{\R \times \ssem{\sigma}}_{\ell,\op,\epsilon\ell'(\ell)}
 (\vsem{w}(\rho),\lambda a \in \ssem{\inn}.\,
     \ssem{K_1[x]}(\rho[a/x])(T(\gamma)))
         \\&& \qquad(\vsem{v}(\rho))\\
     & = &
     \psi_{\ell,\op,\epsilon(\op)}
        (\vsem{w}(\rho),\\&&
        \qquad \lambda a \in \ssem{\inn}.\,
            s^{\dagger_{F_{\epsilon\ell}}}(
                  \ssem{K_1[x]}(\rho[a/x])T(\gamma))) (\vsem{v}(\rho)) \\
    & = &
      \varphi_{\ell,\op,\epsilon(\op)}
        (\vsem{w}(\rho),\\&&
        \qquad \lambda a \in \ssem{\inn}.\,
            s^{\dagger_{F_{\epsilon\ell}}}(
                  \ssem{K_1[x]}(\rho[a/x])T(\gamma)) (\vsem{v}(\rho))) \\
     & = &
      \varphi_{\ell,\op,\epsilon(\op)}
        (\vsem{w}(\rho),\\&&
        \qquad \lambda a \in \ssem{\inn}.\,
            \ssem{\wph{h}{v}{K_1[x]}}(\rho[a/x])\gamma) \\
        \end{array}\]

   \item 
   
   Suppose that $K$ has the form $\glocal{\epsilon}{K_1}{\myg}$.
     We calculate:
     
     \[\begin{array}{lcl}
        \ssem{\glocal{\epsilon_1}{K_1[\op(v)]}{\myg}}(\rho)\gamma 
        & = &
        \ssem{K_1[\op(v)]}(\rho)\lsem{\myg}(\rho)\\
        & = &
        \varphi^{\R \times\ssem{\sigma}}_{\ell,\op,\epsilon(\ell)}
 (\vsem{v}(\rho),\lambda a \in \ssem{\inn}.\,
     \ssem{K_1[x]}(\rho[a/x])\lsem{\myg}(\rho))\\
     & = &
        \varphi^{\R \times\ssem{\sigma}}_{\ell,\op,\epsilon(\ell)}
 (\vsem{v}(\rho),\lambda a \in \ssem{\inn}.\,
     \ssem{\glocal{\epsilon_1}{K_1[x]}{\myg}}(\rho[a/x])\\
\end{array}\]

     \item

     Suppose that $K$ has the form 
     $\then{}{K_1}{\lambda^{\epsilon_1} x\type \tau.\, e_1}$.
      Then we calculate:
     \[\begin{array}{lcl}
       \ssem{\then{}{K_1[\op(v)]}{\lambda^{\epsilon_1} x\type \tau.\, e_1}}(\rho)(\gamma)
       & = & 
    \mlet{F_\epsilon}{r_1 \in \R, a \in \ssem{\tau}}{\ssem{K_1[\op(v)]}(\rho)(\lsem{\lambda^\epsilon x\type \tau.\, e_1}(\rho))}
{\\&&\;\quad \mlet{F_\epsilon}
                {r_2,r_3\in \R}
                {\ssem{e_1}(\rho[a/x])(\lambda r \in \R.\,0)}
                {(r_2,r_1 + r_3})}   \\
& = & 
 \mlet{F_\epsilon}{r_1 \in \R, a \in \ssem{\tau}}
      {\\&& \quad
         \varphi^{\R \times \ssem{\sigma}}_{\ell,\op,\epsilon(\ell)}
 (\vsem{v}(\rho),\lambda a \in \ssem{\inn}.\,
     \ssem{K_1[x]}(\rho[a/x])(\lsem{\lambda^\epsilon x\type \tau.\, e_1}(\rho)))
      \\&&}
{\quad \mlet{F_\epsilon}
                {r_2,r_3\in \R}
                {\ssem{e_1}(\rho[a/x])(\lambda r \in \R.\,0)}
                {(r_2,r_1 + r_3})}   \\
 & = &
 \varphi^{\R \times \ssem{\sigma}}_{\ell,\op,\epsilon(\ell)}
 (\vsem{v}(\rho),\lambda a \in \ssem{\inn}.\,
   \\&& \mlet{F_\epsilon}{r_1 \in \R, a \in \ssem{\tau}}
      {\ssem{K_1[x]}(\rho[a/x])(\lsem{\lambda^\epsilon x\type \tau.\, e_1}(\rho))}
{\\&&\quad \mlet{F_\epsilon}
                {r_2,r_3\in \R}
                {\ssem{e_1}(\rho[a/x])(\lambda r \in \R.\,0)}
                {(r_2,r_1 + r_3})} 
  )\\
& = &
 \varphi^{\R \times \ssem{\sigma}}_{\ell,\op,\epsilon(\ell)}
 (\vsem{v}(\rho),\lambda a \in \ssem{\inn}.\,
    \ssem{\then{}{K_1[x]}{\lambda^{\epsilon_1} x\type \tau.\, e_1}}(\rho[a/x])(\gamma)
  )\\

 \end{array}\]

\item  Suppose that $K$ has the form $\reset K_1$. Then we calculate:
 \[\begin{array}{lcl}
\ssem{\reset K_1[\op(v)]}(\rho)(\gamma) 
& = &
\mlet{F_\epsilon}
            {r_1 \in \R, a \in \ssem{\sigma}}
            {\ssem{K_1[\op(v)]}(\rho)(\gamma)}
            {(0,a)}\\
& = &
\mlet{F_\epsilon}
            {r_1 \in \R, a \in \ssem{\sigma}\\&&}
            {\varphi^{\R \times \ssem{\sigma}}_{\ell,\op,\epsilon(\ell)}
             (\vsem{v}(\rho),
              \lambda a \in \ssem{\inn}.\,\ssem{K_1[x]}(\rho[a/x])(\gamma))\\&&}
            {(0,a)}\\
& = &
\varphi^{\R \times \ssem{\sigma}}_{\ell,\op,\epsilon(\ell)}
             (\vsem{v}(\rho),
              \lambda a \in \ssem{\inn}.\,\\&&\quad
                  \mlet{F_\epsilon}
            {r_1 \in \R, a \in \ssem{\sigma}}
            {\ssem{K_1[x]}(\rho[a/x])(\gamma)}
            {(0,a)})\\
& = &
\varphi^{\R \times \ssem{\sigma}}_{\ell,\op,\epsilon(\ell)}
             (\vsem{v}(\rho),
              \lambda a \in \ssem{\inn}.\,
                  \ssem{\reset K_1[x]}(\rho[a/x])(\gamma)\\
\end{array}\]

 \end{enumerate}
 \end{proof}
For our soundness theorem we assume that the semantics of basic functions is sound w.r.t.\, the operational semantics, i.e.
$f(v) \rightarrow v' \implies \sem{f}(\sem{v}) = \sem{v'}$. 
The main result we need is that the small step operational semantics is sound:

\begin{theorem}[Small-step Soundness] \label{Athm:smallsound} Suppose we have an expression $e\type \sigma\etype \epsilon$ and a loss continuation $\myg \type \sigma \to \real\etype \epsilon_1$ with $\epsilon_1 \subseteq \epsilon$. Then:
\[\myg \vdash_\epsilon e \xrightarrow{r} e' \;\; \implies \;\; 
    \ssem{e}\lsem{\myg} = r\cdot(\ssem{e'}\lsem{\myg})\]
\end{theorem}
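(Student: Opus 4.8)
The plan is to prove the Small-step Soundness Theorem by a single induction on the derivation of the transition $\myg \vdash_\epsilon e \xrightarrow{r} e'$, equivalently by the structure of $e$ as analysed by \Cref{eal}. For each rule in \Cref{fig:redexrule}, I would unfold both sides of the claimed equation $\ssem{e}\lsem{\myg} = r\cdot(\ssem{e'}\lsem{\myg})$ using the denotational clauses of \Cref{fig:esem}, and verify equality. The redex rules (R1)--(R9) are base cases requiring no induction hypothesis; the frame rules (F) and (S1)--(S4) are the inductive steps, where I would apply the induction hypothesis to the inner transition under the appropriately modified loss continuation. Throughout I would rely on the auxiliary lemmas already available: \Cref{lem:valsem} (value semantics), \Cref{lem:Kop}/\Cref{Alem:Kop} (stuck-expression semantics), \Cref{lem:Fsemg} (the semantics of an expression inside a regular frame decomposes via $W_\epsilon$-binding), and especially \Cref{lem:theng} (relating the $\thensymbol$ construct, the loss functional $\ER{\epsilon}{-}{-}$, and $\lsem{-}$ via $\delta_\epsilon$), together with the substitution \Cref{Alem:cut}.

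First I would dispatch the easy redex cases. For (R1)--(R3) the loss is $0$, so I must show $\ssem{e}\lsem{\myg} = \ssem{e'}\lsem{\myg}$; these follow by unfolding the monadic $\mlet{}{}{}{}$ clauses and using \Cref{lem:valsem} to collapse the units for the value arguments (together with the basic-function soundness assumption for (R1)). For (R4), $\loss(r)$, the semantics adds $r$ to the accumulated loss, which is exactly the $r\cdot(-)$ action applied to $\ssem{()}\lsem{\myg} = (0,())$. The handler rules (R5) and (R6) are the substantial redex cases: here I would expand the handler semantics from \Cref{sec:semexp}, in particular the free-algebra homomorphism $s^{\dagger_{F_{\epsilon\ell}}}$ and the operation-clause function $\psi_{\ell,\op_j,i}$, and match the built-in continuations $f_k,f_l$ of (R5) against the semantic $k_1,l_1$; this is where the definitions of $\glocal{}{}{}$ and $\thensymbol$ in $f_k,f_l$ must line up precisely with the $\delta_\epsilon(\gamma^{\dagger_{W_\epsilon}}(\cdots))$ appearing in $l_1$, invoking \Cref{lem:theng}. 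The remaining redex cases (R7)--(R9) are direct computations from the clauses for $\thensymbol$, $\glocal{}{}{}$, and $\reset{}$.

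For the frame rules I would proceed as follows. In rule (F), by \Cref{lem:Fsemg} the denotation $\ssem{F[e]}\lsem{\myg}$ binds the value of $e$ computed under the \emph{extended} loss continuation $\lsem{\lambda^\epsilon x.\,\then{\epsilon}{F[x]}{\myg}}$, which is exactly the continuation under which the premise transition runs; so applying the induction hypothesis there, followed by re-wrapping with $F$ and using that the $r\cdot(-)$ action commutes with $W_\epsilon$-binding (the $W_\epsilon$-algebra structure), yields the result. Rule (S1) is analogous but uses the handler's return clause to form the modified continuation $\lambda^\epsilon x.\,\then{\epsilon}{v_r(v,x)}{\myg}$, matching the $T[\gamma]$ transformation used in the handler semantics; the induction hypothesis runs at effect $\epsilon\ell$. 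Rules (S2)--(S4) require care about \emph{which} continuation and \emph{where} the loss is exported: for (S2) the inner $e$ runs under $\myg_1$ (not $\myg$) and the produced loss $r$ is added outside, so I use the induction hypothesis under $\lsem{\myg_1}$ and then the fact that $\ssem{\then{}{-}{\myg_1}}$ factors through $\ER{\epsilon}{-}{\lsem{\myg_1}}$ via \Cref{lem:theng}, with $r\cdot(-)$ emerging from additivity of the action; (S3) is similar using the $\glocal{}{}{}$ clause and exporting the loss; (S4) uses the $\reset{}$ clause which discards the inner loss, consistent with the produced loss being $0$.

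The main obstacle I expect is rule (R5), the operation-handling redex, where the operational loss and choice continuations $f_k,f_l$ are built syntactically out of $\glocal{}{}{}$, $\thensymbol$, the handler $h$, and the current $\myg$, and must be shown semantically equal to the functions $k_1,l_1$ baked into $\psi_{\ell,\op_j,\epsilon(\ell)+1}$. Establishing $\ssem{f_l} = l_1$ is the delicate part: it requires combining \Cref{Alem:Kop} (to express $\ssem{\wph{h}{p}{K[y]}}$ as an operation node when $\op$ is unhandled, or as the handled value otherwise) with \Cref{lem:theng} to identify $\ssem{\then{\epsilon}{\wph{h}{p}{K[y]}}{\myg}}$ with $\delta_\epsilon(\ER{\epsilon}{\ssem{\wph{h}{p}{K[y]}}}{\lsem{\myg}})$, and then to match this against $\delta_\epsilon(\gamma^{\dagger_{W_\epsilon}}(k\,a\,p))$ after unwinding the free-algebra homomorphism $s^{\dagger_{F_{\epsilon\ell}}}$ at the operation node. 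Getting the handler-depth indices and the $R_\epsilon$-action bookkeeping exactly right here is the crux; once (R5) is settled, the other cases are comparatively mechanical verifications.
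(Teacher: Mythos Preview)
Your proposal is correct and matches the paper's proof almost exactly: the same case split on transition rules, the same auxiliary lemmas (\Cref{lem:valsem}, \Cref{Alem:Kop}, \Cref{lem:Fsemg}, \Cref{lem:theng}), and the same identification of (R5) as the crux where $f_k,f_l$ must be matched against the semantic $k_1,l_1$. The one detail you gloss over is that (S1) is not quite ``analogous'' to (F): pulling $r\cdot(-)$ through $s^{\dagger_{F_{\epsilon\ell}}}$ requires the separate \Cref{lem:actc}, since the handler-defined $\epsilon\ell$-algebra on $W_\epsilon(\ssem{\sigma'})^{\ssem{\parr}}$ is not in general an action algebra.
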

\begin{proof} we split into cases according to the various possible transitions.
We use  Lemma~\ref{Alem:valsem} without comment.
\begin{enumerate}
    \item  %
    Suppose the transition is 
    \[f(v) \evalto{0}  v'\] 
    for $f\type \sigma \to \tau$, where $f(v) \to v'$.
    Then we calculate:
    \[\begin {array}{lcll}
    \ssem{f(v)}
        & = & 
       \mlet{{S_\epsilon}}{a \in \ssem{\sigma}}{\ssem{v}}
     {\eta_{S_\epsilon}(\sem{f}(a))} \\
       & = &
       \mlet{{S_\epsilon}}{a \in \ssem{\sigma}}{\eta_{S_\epsilon}(\vsem{v})}
     {\eta_{S_\epsilon}(\sem{f}(a))} & (\mbox{by Lemma~\ref{Alem:valsem}})\\
      & = & \eta_{S_\epsilon}(\sem{f}(\vsem{v}))\\
      & = & \eta_{S_\epsilon}(\vsem{v'})
      
    \end{array}\]
    and this  gives the required result as it is equivalent to
    \[\ssem{f(v)}\gamma = 0\cdot \ssem{v'}\gamma\]
    for all relevant $\gamma$.
    \item  %
    \[\myg \vdash_\epsilon  (v_1,\dots,v_n).i  \evalto{0}  v_i \] 
    Then we calculate:
    
    \[\begin {array}{lcll}
    \ssem{(v_1,\dots,v_n).i} & = & S_\epsilon(\pi_i)(\ssem{(v_1,\dots,v_n)})\\
                             & = & S_\epsilon(\pi_i)(\eta_{S_\epsilon}((\vsem{v_n},\dots,\vsem{v_n})))& \mbox{(by Lemma~\ref{Alem:valsem})}\\
                             & = & \eta_{S_\epsilon}(\vsem{v_1})\\
                             & = & \ssem{v_i}
    \end{array}\]
    and this  gives the required result as it is equivalent to
    \[\ssem{(v_1,\dots,v_n).i}\gamma = 0\cdot \ssem{v_i}\gamma\]
    for all relevant $\gamma$.
    \item %
    \[\myg \vdash_\epsilon   \mycases{\inl{\sigma_1}{\sigma_2}(v)}{x_1\type \sigma_1}{e_1}{x_2\type \sigma_2}{e_2}  \evalto{0}  e_1[v/x_1]\]

    Again using Lemma~\ref{Alem:valsem}, we calculate:
    \[\begin{array}{lcll}
        \ssem{ \mycases{\inl{\sigma_1}{\sigma_2}(v)}{x_1\type \sigma_1}{e_1}{x_2\type \sigma_2}{e_2}} 
        & = &
        \mlet{S_\epsilon}{a \in \ssem{\sigma_1} + \ssem{\sigma_2}}{\ssem{\inl{\sigma_1}{\sigma_2}(v)}}{\\&&\;[\lambda b_1 \in \ssem{\sigma_1}.\, \ssem{e_1}(x_1 \mapsto b_1),
\\&&\;\;\,\lambda b_2 \in \ssem{\sigma_2}.\, \ssem{e_2}(x_2 \mapsto b_2)](a)}\\
         & = &
    \mlet{S_\epsilon}{a \in \ssem{\sigma_1} + \ssem{\sigma_2}}{\eta_{S_\epsilon}((0,\vsem{v}))}{\\&&\;[\lambda b_1 \in \ssem{\sigma_1}.\, \ssem{e_1}(x_1 \mapsto b_1),
\\&&\;\;\,\lambda b_2 \in \ssem{\sigma_2}.\, \ssem{e_2}(x_2 \mapsto b_2)](a)}\\             & = &
  [\lambda b_1 \in \ssem{\sigma_1}.\, \ssem{e_1}(x_1 \mapsto b_1),
\\&&\;\;\,\lambda b_2 \in \ssem{\sigma_2}.\, \ssem{e_2}(x_2 \mapsto b_2)]((0,\vsem{v}))\\
     & = &
  \ssem{e_1}(x_1 \mapsto \vsem{v})\\
    & = & \ssem{e_1[v_1/x_1]}\qquad (\mbox{by Lemma~\ref{lem:sublem}})
\end{array}\]
The case $\inr{\sigma_1}{\sigma_2}(v)$ is similar.
    \item %
    \[ \myg \vdash_\epsilon   \iter{0}{v_2}{v_3}  \evalto{0}  v_2\]

We have 
\[\begin{array}{lcl}
   \ssem{\iter{0}{v_2}{v_3}}  & = & 
       \mlet{{S_\epsilon}}{n \in \N}{\ssem{0}}
     {\\&&  \mlet{{S_\epsilon}}{a \in \ssem{\sigma}}{\ssem{v_2}}{\\&&\mlet{S_\epsilon}{\varphi \in \ssem{\sigma} \to S_\epsilon(\ssem{\sigma})}{\ssem{v_3}}
     {\\&& (\varphi^{\dagger_{S_\epsilon}})^n(\eta_{S_\epsilon}(a))}}}\\
     & = &
      \mlet{{S_\epsilon}}{n \in \N}{\eta_{S_\epsilon}(\vsem{0})}
     {\\&&  \mlet{{S_\epsilon}}{a \in \ssem{\sigma}}{\eta_{S_\epsilon}(\vsem{v_2})}{\\&&\mlet{S_\epsilon}{\varphi \in \ssem{\sigma} \to S_\epsilon(\ssem{\sigma})}{\eta_{S_\epsilon}(\vsem{v_3})}
     {\\&& (\varphi^{\dagger_{S_\epsilon}})^n(\eta_{S_\epsilon}(a))}}}\\
& = & (\vsem{v_3}^{\dagger_{S_\epsilon}})^0(\eta_{S_\epsilon}(\vsem{v_2}))
     \\
& = &         \eta_{S_\epsilon}(\vsem{v_2})\\
& = &         \ssem{v_2}
\end{array}\]
    
    \item %
    \[ \myg \vdash_\epsilon   \iter{\mysucc{v_1}}{v_2}{v_3}  \evalto{0} 
v_3 \iter{v_1}{v_2}{v_3}\]
    
   We calculate:
\[\begin{array}{lcll}
   \ssem{\iter{\mysucc{v_1}}{v_2}{v_3}}  & = & 
       \mlet{{S_\epsilon}}{n \in \N}{\ssem{\mysucc{v_1}}}
     {\\&&  \mlet{{S_\epsilon}}{a \in \ssem{\sigma}}{\ssem{v_2}}{\\&&\mlet{S_\epsilon}{\varphi \in \ssem{\sigma} \to S_\epsilon(\ssem{\sigma})}{\ssem{v_3}}
     {\\&& (\varphi^{\dagger_{S_\epsilon}})^n(\eta_{S_\epsilon}(a))}}}\\
     & = &
      \mlet{{S_\epsilon}}{n \in \N}{\eta_{S_\epsilon}(\vsem{v_1} + 1)}
     {\\&&  \mlet{{S_\epsilon}}{a \in \ssem{\sigma}}{\eta_{S_\epsilon}(\vsem{v_2})}{\\&&\mlet{S_\epsilon}{\varphi \in \ssem{\sigma} \to S_\epsilon(\ssem{\sigma})}{\eta_{S_\epsilon}(\vsem{v_3})}
     {\\&& (\varphi^{\dagger_{S_\epsilon}})^n(\eta_{S_\epsilon}(a))}}}\\
& = & (\vsem{v_3}^{\dagger_{S_\epsilon}})^{\vsem{v_1} + 1}(\eta_{S_\epsilon}(\vsem{v_2}))
     \\
& = &     \vsem{v_3}^{\dagger_{S_\epsilon}}(\ssem{\iter{v_1}{v_2}{v_3}})   \\
& = &         \ssem{v_3~\iter{v_1}{v_2}{v_3}} & (\mbox{by Lemma~\ref{lem:smalllem}})
\end{array}\]
    
    \item %
    \[\myg \vdash_\epsilon  \fold{\nil_{\sigma_1}}{v_2}{v_3}    \evalto{0}    v_2\]
    
    Then we have 
\[\begin{array}{lcl}
     \ssem{\fold{\nil_{\sigma_1}}{v_2}{v_3}} 
     & = &
\mlet{{S_\epsilon}}
     {l \in \ssem{\sigma_1}^*}{\ssem{\nil_{\sigma_1}}}
     {\\&&  \mlet{{S_\epsilon}}
            {a \in \ssem{\sigma}}
            {\ssem{v_2}}
              {\\&&\mlet{S_\epsilon}
{\varphi \in \ssem{\sigma_1}\times \ssem{\sigma} \to S_\epsilon(\ssem{\sigma})}}
{\ssem{v_3}}
     {\\&&\myfold{l}{ \eta_{S_\epsilon}(a)}
    {\lambda a_1\in \ssem{\sigma_1}, b \in S_\epsilon(\ssem{\sigma}).\,\varphi(a_1,-)^{\dagger_{S_\epsilon}}(b)}}}\\
    & = &
\mlet{{S_\epsilon}}
     {l \in \ssem{\sigma_1}^*}{\eta_{S_\epsilon}(\varepsilon)}
     {\\&&  \mlet{{S_\epsilon}}
            {a \in \ssem{\sigma}}
            {\eta_{\epsilon}(\vsem{v_2})}
              {\\&&\mlet{S_\epsilon}
{\varphi \in \ssem{\sigma_1}\times \ssem{\sigma} \to S_\epsilon(\ssem{\sigma})}}
{\eta_{\epsilon}(\vsem{v_3})}
     {\\&&\myfold{l}{ \eta_{S_\epsilon}(a)}
    {\lambda a_1\in \ssem{\sigma_1}, b \in S_\epsilon(\ssem{\sigma}).\,\varphi(a_1,-)^{\dagger_{S_\epsilon}}(b)}}}\\
& = &
\myfold{\epsilon}{ \eta_{S_\epsilon}(\vsem{v_2})}
    {\lambda a_1\in \ssem{\sigma_1}, b \in S_\epsilon(\ssem{\sigma}).\,\vsem{v_3}(a_1,-)^{\dagger_{S_\epsilon}}(b)}\\
& = & \eta_{S_\epsilon}(\vsem{v_2})\\
& = & \ssem{v_2}
\end{array}\]

    \item %
    \[\myg \vdash_\epsilon  \fold{\cons{v_1}{v_2}}{v_3}{v_4}  \evalto{0}   
    v_4(v_1,\fold{v_2}{v_3}{v_4})\]
    
    Then, on the one hand, we have:
\[\begin{array}{lcl}
     \ssem{\fold{\cons{v_1}{v_2}}{v_3}{v_4}} & = &
\mlet{{S_\epsilon}}
     {l \in \ssem{\sigma_1}^*}{\ssem{\cons{v_1}{v_2}}}
     {\\&&  \mlet{{S_\epsilon}}
            {a \in \ssem{\sigma}}
            {\ssem{v_3}}
              {\\&&\mlet{S_\epsilon}
{\varphi \in \ssem{\sigma_1}\times \ssem{\sigma} \to S_\epsilon(\ssem{\sigma})}}
{\ssem{v_4}}
     {\\&&\myfold{l}{ \eta_{S_\epsilon}(a)}
    {\lambda a_1\in \ssem{\sigma_1}, b \in S_\epsilon(\ssem{\sigma}).\,(\varphi(a_1,-))^{\dagger_{S_\epsilon}}(b)}}}\\
& = &
\mlet{{S_\epsilon}}
     {l \in \ssem{\sigma_1}^*}{\eta_{S_\epsilon}(\vsem{v_1}\vsem{v_2})}
     {\\&&  \mlet{{S_\epsilon}}
            {a \in \ssem{\sigma}}
            {\eta_{S_\epsilon}(\vsem{v_3})}
              {\\&&\mlet{S_\epsilon}
{\varphi \in \ssem{\sigma_1}\times \ssem{\sigma} \to S_\epsilon(\ssem{\sigma})}}
{\eta_{S_\epsilon}(\vsem{v_4})}
     {\\&&\myfold{l}{ \eta_{S_\epsilon}(a)}
    {\lambda a_1\in \ssem{\sigma_1}, b \in S_\epsilon(\ssem{\sigma}).\,(\varphi(a_1,-))^{\dagger_{S_\epsilon}}(b)}}}\\
& = &
\myfold{\vsem{v_1}\vsem{v_2}}{\eta_{S_\epsilon}(\vsem{v_3})}
    {\\&& \; \lambda a_1\in \ssem{\sigma_1}, b \in S_\epsilon(\ssem{\sigma}).\,(\vsem{v_4}(a_1,-))^{\dagger_{S_\epsilon}}(b)}\\
   & = & (\vsem{v_4}(\vsem{v_1},-))^{\dagger_{S_\epsilon}}\\
            && \quad(\myfold{\vsem{v_2}}{\eta_{S_\epsilon}(\vsem{v_3})}
    {\\&& \qquad \lambda a_1\in \ssem{\sigma_1}, b \in S_\epsilon(\ssem{\sigma}).\,(\vsem{v_4}(a_1,-))^{\dagger_{S_\epsilon}}(b)})\\
     & = & (\vsem{v_4}(\vsem{v_1},-))^{\dagger_{S_\epsilon}}\\
            && \quad(\ssem{\fold{\cons{v_1}{v_2}}{v_3}{v_4}})
    \end{array}\]
and, on the other hand, using Lemma~\ref{lem:smalllem} we have:
\[\begin{array}{lcll}
    \ssem{v_4~(v_1,\fold{v_2}{v_3}{v_4})} & = & 
     \mlet{{S_\epsilon}}
          {a \in \ssem{(\sigma_1,\sigma)}}
          {\ssem{(v_1,\fold{v_2}{v_3}{v_4})}}
          {\vsem{v_4}(a)}   \\
     & = &
     \mlet{{S_\epsilon}}
          {a \in \ssem{(\sigma_1,\sigma)}}
          {\\&& \quad (\mlet{{S_\epsilon}}{b\in \ssem{\sigma}}{\ssem{\fold{v_2}{v_3}{v_4}}}{\eta_{S_\epsilon}(\vsem{v_1},b)})\\&&}
          {\vsem{v_4}(a)}   \\
          & = & 
           \mlet{{S_\epsilon}}
                {b\in\ssem{\sigma}}
                {\ssem{\fold{v_2}{v_3}{v_4}}\\&&}
                {\mlet{{S_\epsilon}}
                      {a \in \ssem{(\sigma_1,\sigma)}}
                      {\eta_{S_\epsilon}(\vsem{v_1},b)\\&&\quad}
                      {\vsem{v_4}(a)}}\\
        & = &
             \mlet{{S_\epsilon}}
                {b\in\ssem{\sigma}}
                {\ssem{\fold{v_2}{v_3}{v_4}}\\&&}
                {\vsem{v_4}((\vsem{v_1},b))}\\   
          
\end{array}\]

\item %
    \[\myg \vdash_\epsilon (\lambda^\epsilon x\type \sigma.\, e_1)~v   \evalto{0}  e_1[v/x] \]
    
    Then we have 
\[\begin{array}{lcll}
   \ssem{(\lambda^\epsilon x\type \sigma.\, e_1)~v}  
     & = & \mlet{{S_\epsilon}}
        {a \in \ssem{\sigma}}{\ssem{v}}{\vsem{\lambda^\epsilon x\type \sigma.\, e_1}(a)}& 
        (\mbox{by Lemma~\ref{lem:smalllem}})\\
     & = &  \vsem{\lambda^\epsilon x\type \sigma.\, e_1}(\vsem{v})\\
     & = & \lambda a \in \ssem{\sigma}.\, \ssem{e_1}(x \mapsto a)\vsem{v}\\
     & = & \ssem{e_1}(x \mapsto \vsem{v})\\
     & = & \ssem{e_1[v/x]} & (\mbox{by Lemma~\ref{lem:sublem}})
\end{array}\]

\item %
    \[\myg \vdash_\epsilon  \loss(r)   \evalto{r}   ()\]
    
    Then for any $\gamma\type \Onebb \to R_\epsilon$ we have 

\[\begin{array}{lcl}
    \ssem{\loss(r)}(\gamma) & = & (\mlet{{S_\epsilon}}{a \in \R}{\ssem{r}}
    {\lambda \gamma \in \mathbbm{1} \to R_\epsilon.\, (a,())})(\gamma)\\
     & = & (\lambda \gamma \in \mathbbm{1} \to R_\epsilon.\, (r,()))(\gamma)\\
     & = & (r,())\\
     & = & r\cdot (0,())\\
     & = & r\cdot (\ssem{()}\gamma)\\
\end{array}\]

\item
Suppose the transition is
\[\myg \vdash_\epsilon\wph{h}{v_1}{K[\op(v_2)]}  \evalto{0} v_o(v_1,v_2,f_l,f_k)\]
where 
  $v_1:par$, 
  $\op \notin \hop(K)$, 
  $\op\type \outt \xrightarrow{\ell} \inn$, 
  $\op \mapsto v_o \in h$,
 and
 \[f_k = \lambda^\epsilon(p,y)\type (par,in).\,\glocal{\epsilon}{\wph{h}{p}{K[y]}}{\myg}\]
 and 
 \[f_l = \lambda^\epsilon (p,y)\type (par,in).\,\then{\epsilon}{(\wph{h}{p}{K[y]})}{\myg}\]
 We have 
 \[v_0 = \lambda^\epsilon z\type (\parr, \outt_1,
                     (\parr,\inn) \to \real \etype \epsilon,
                     (\parr,\inn) \to \sigma' \etype \epsilon).~ e_o\]
and
\[\return \mapsto \lambda^\epsilon z\type (\parr,  \sigma).\, e_r \in h\]
for some $e_o$ and $e_r$.

Setting 
$T[\myg] = \lambda a \in \ssem{\sigma_1}.\, 
\ER{\epsilon}{\ssem{e_{r}}[(\vsem{v_1},a)/z]}{\lsem{\myg}}
$
we calculate, using Lemma~\ref{Alem:Kop}:
\[\begin{array}{lcll}
  \ssem{\wph{h}{v_1}{K[\op(v_2)]}}\lsem{\myg} \\\\
   =   
  \quad \mlet{{S_\epsilon}}{a \in \ssem{\parr}}{\ssem{v_1}}
   {\\ \quad\ssem{h}(a,\ssem{K[\op(v_2)})}\lsem{\myg}\\\\
   =    
   \quad \ssem{h}(\vsem{v_1},\ssem{K[\op(v_2)})\lsem{\myg} \\\\
   = 
\quad   s^{\dagger_{F_{\epsilon\ell}}}((\ssem{K[\op(v_2)]})(\lambda a \in \ssem{\sigma_1}.\, 
\ER{\epsilon}{\ssem{e_{r}}[(\vsem{v_1},a)/z]}{\lsem{\myg}}))\vsem{v_1}\\\\
   = 
\quad   s^{\dagger_{F_{\epsilon\ell}}}((\ssem{K[\op(v_2)]})(T(\myg)))\vsem{v_1}\\\\
=
\quad s^{\dagger_{F_{\epsilon\ell}}}
 (\varphi^{\R \times \ssem{\sigma_1}}_{\ell,\op,i}
  (\vsem{v_2},
   \lambda a \in \ssem{\inn}.\, \ssem{K[y]}(y \mapsto a)T(\myg)))\vsem{v_1} \\
     (\mbox{where $i = 1 + \epsilon(\op)$})\\\\
= \quad \psi_{\ell,\op,i}
        (\vsem{v_2},
        \lambda a \in \ssem{\inn}.\,
            s^{\dagger_{F_{\epsilon\ell}}}(
                  \ssem{K[y]}(y \mapsto a)T(\myg))) \vsem{v_1} \\\\
= \quad \ssem{e_o}(z \mapsto (\vsem{v_1},\vsem{v_2},l_1,k_1))\lsem{\myg}\\
\end{array}\]

 where
 \[k_1 = \lambda p\in \ssem{\parr}, a \in \ssem{\inn_j}.\, \lambda \gamma_1 \in  \R^{\ssem{\sigma'}}.\,s^{\dagger_{F_{\epsilon\ell}}}(
                  \ssem{K[y]}(y \mapsto a)T(\myg))p\]
 and
 \[l_1 = \lambda p\in \ssem{\parr}, a \in \ssem{\inn_j}.\, \lambda \gamma_1 \in  \R^{\ssem{\sigma'}}.\, \delta_\epsilon(\lsem{\myg}^{\dagger_{W_\epsilon}}(s^{\dagger_{F_{\epsilon\ell}}}(
                  \ssem{K[y]}(y \mapsto a)T(\myg))p))\]

 and we calculate:
\[\begin{array}{lcl}
\ssem{f_k}(p,a)\gamma_1
&= &
\ssem{\lambda^\epsilon(x,y)\type (par,in).\,\glocal{\epsilon}{\wph{h}{x}{K[y]}}{\myg}}(p,a)\gamma_1\\
&=& 
\ssem{\glocal{\epsilon}{\wph{h}{x}{K[y]}}{\myg}}(x \mapsto p, y \mapsto a)\gamma_1\\
&= & 
\ssem{\wph{h}{p}{K[y]}}(y \mapsto a)\lsem{\myg}\\
& = &
\ssem{h}(p,\ssem{K[y]}(y \mapsto a))\lsem{\myg}\\
& = &
s^{\dagger_{F_{\epsilon\ell}}}(\ssem{K[y]}(y \mapsto a)(\lambda a \in \ssem{\sigma}.\, \ER{\epsilon}{\ssem{e_r}(\rho[(p,a)/z])}{\lsem{\myg}}))p\\
&=&
s^{\dagger_{F_{\epsilon\ell}}}(\ssem{K[y]}(y \mapsto a)
(T(\myg)))p\\
& = &
k_1(p,a)\gamma_1
\end{array}\]
and also:
\[\begin{array}{lcll}
   \ssem{f_l}(p,a)\gamma_1
   & = &
   \ssem{\then{\epsilon}{(\wph{h}{x}{K[y]})}{\myg}}(x \mapsto p, y \mapsto a)\gamma_1\\
   & = &
   \delta_\epsilon(\ER{}{\ssem{\wph{h}{x}{K[y]}}(x \mapsto p, y \mapsto a)}{\lsem{\myg}})\\ && \hspace{150pt}(\mbox{by Lemma~\ref{lem:theng}})\\
   &=& \delta_\epsilon(\lsem{\myg}^{\dagger_{W_\epsilon}}(\ssem{\wph{h}{x}{K[y]}}(x \mapsto p, y \mapsto a)\lsem{\myg}))\\
   
   & = &
   \delta_\epsilon(\lsem{\myg}^{\dagger_{W_\epsilon}}(s^{\dagger_{F_{\epsilon\ell}}}(\ssem{K[y]}(y \mapsto a)
(T(\myg)))p))\\
& = &
l_1(p,a)\gamma_1
\end{array}\]

\[\myg \vdash_\epsilon\wph{h}{v_1}{K[\op(v_2)]}  \evalto{0} v_o(v_1,v_2,f_l,f_k)\]
where 
  $v_1:par$, 
  $\op \notin \hop(K)$, 
  $\op\type \outt \xrightarrow{\ell} \inn$, 
  $\op \mapsto v_o \in h$,\\ 
  \[f_k = \lambda^\epsilon(p,y)\type (par,in).\,\wph{h}{p}{K[y]}\] 
  and \[f_l = \lambda^\epsilon (p,y)\type  
            (par,in).\,\then{\epsilon}{f_k(p,y)}{\myg}\]

we have, setting $T[\gamma] = \lsem{\lambda^\epsilon x\type \sigma_1.\,\then{\epsilon}{v_r(v,x)}{\gamma}}$:

\[s(r,a) = \lambda p\in \ssem{\parr}.\, r\cdot(\ssem{e}(\rho)[(p,a)/z])\] 
  \[\ssem{h}(p,G)(\gamma) = s^{\dagger_{F_{\epsilon\ell}}}(G(\lambda a \in \ssem{\sigma}.\, 
  \ER{\epsilon}{\ssem{e}[(p,a)/z]}{\gamma}))(p)(\gamma)\]

\[\begin{array}{lcl}
  \ssem{\wph{h}{v_1}{K[\op(v_2)]}}\lsem{\myg} \\
   =   
  \quad (\mlet{{S_\epsilon}}{a \in \ssem{\parr}}{\ssem{v_1}}
   {\\ \quad\ssem{h}(a,\ssem{K[\op(v_2)}})\lsem{\myg}\\
   =    
   \quad \ssem{h}(\vsem{v_1},\ssem{K[\op(v_2)})\lsem{\myg} \\
   = 
\quad   s^{\dagger_{F_{\epsilon\ell}}}((\ssem{K[\op(v_2)]})(\lambda a \in \ssem{\sigma}.\, 
\ER{\epsilon}{\ssem{e_{r}}(\rho)[(\vsem{v_1},a)/z]}{\lsem{\myg}}))\vsem{v_1} \lsem{\myg}\\
   = 
\quad   s^{\dagger_{F_{\epsilon\ell}}}((\ssem{K[\op(v_2)]})(T(\myg)))\vsem{v_1} \lsem{\myg}\\
=
\quad s^{\dagger_{F_{\epsilon\ell}}}(\varphi^{\R \times \ssem{\sigma_1}}_{\ell,\op,i}(\vsem{v},\lambda a \in \ssem{\inn}.\,
     \ssem{K[y]}(y \mapsto a)T(\myg)))\vsem{v_1} \lsem{\myg}\\
= \quad \psi_{\ell,\op_j,i}
        (\vsem{v},
        \lambda a \in \ssem{\inn}.\,
            s^{\dagger_{F_{\epsilon\ell}}}(
                  \ssem{K[y]}(y \mapsto a)T(\myg)))) \vsem{v_1} \lsem{\myg}\\
= \quad \ssem{e_j}(\rho[(\vsem{v_1},l_1,k_1)/z])
\end{array}\]

\[\psi_{\ell,\op_j,i}(o,k) = \lambda p\in \ssem{\parr}.\, \ssem{e_j}(\rho[(p,l_1,k_1)/z])
 \]
 where
 \[k_1 = \lambda p\in \ssem{\parr}, a \in \ssem{\inn_j}.\, kap\]
 and
 \[l_1 = \lambda p\in \ssem{\parr}, a \in \ssem{\inn_j}.\, \ER{\epsilon}{kap}{-}\]
 where
 
 \[\begin{array}{lcl}k = \lambda a \in \ssem{\inn}.\,
            s^{\dagger_{F_{\epsilon\ell}}}(
                  \ssem{K[y]}(y \mapsto a)T(\myg))
\end{array}\]

Then we have:
\[\begin{array}{lcl}
k_1(p,a) &=& k(a,p)\\ 
&=& s^{\dagger_{F_{\epsilon\ell}}}(
                  \ssem{K[y]}(y \mapsto a)T(\myg))p \\                  
&=& \lambda \gamma. s^{\dagger_{F_{\epsilon\ell}}}(
                  \ssem{K[y]}(y \mapsto a)T(\myg))p\gamma \\
                  
                  \end{array}\]
                  
\textcolor{red}{BUT}

\[\begin{array}{lcl}
\ssem{f_k}(p,a)
&= &
\ssem{\lambda^\epsilon
   x\type \parr,y \type \inn.\,\wph{h}{x}{K[y]}}(p,a)\\
& = & \ssem{\wph{h}{x}{K[y]}} (x \mapsto p, y \mapsto a)  \\
& = &
\lambda \gamma.\, s^{\dagger_{F_{\epsilon\ell}}}(\ssem{K[y]}y \mapsto a)(T(\gamma))p \gamma
\end{array}\]

\item %

Suppose the transition is:
    \[ \myg \vdash_\epsilon  \wph{h}{v_1}{v_2}  \evalto{0}  v_r(v_1,v_2) \\\]
    where $\return \mapsto v_r =  \lambda^\epsilon z\type (\parr,  \sigma).\, e_{r}$ is the return clause of $h$.
   
   For some $\parr$, $\sigma_1$, and $\ell$ we have 
   $h\type \sigma_1\etype \epsilon\ell \Rightarrow \sigma\etype\epsilon$,
   $v_1\type\parr$, $v_2\type \sigma_1$, and $z\type (\parr,  \sigma_1) \vdash e_{br}\type \sigma\etype\epsilon$.
   
    We have :

\[\begin{array}{lcl}
    \ssem{\wph{h}{v_1}{v_2}} \gamma
    & = & 
(\mlet{{S_\epsilon}}{a \in \ssem{\parr}}{\ssem{v_1}}{\ssem{h}(a,\ssem{v_2})}\gamma\\
  & = &
  \ssem{h}(\vsem{v_1},\ssem{v_2})\gamma\\
  & = & 
 s^{\dagger_{F_{\epsilon\ell}}}((\ssem{v_2})(\lambda a \in \ssem{\sigma}.\, \ER{\epsilon}{\ssem{e_{r}}[(\vsem{v_1},a)/z]}{\gamma}))\vsem{v_1} \\%
 & = &
 s^{\dagger_{F_{\epsilon\ell}}}((0,\vsem{v_2}))\vsem{v_1}\\%
 & = &
  s(0,\vsem{v_2})\vsem{v_1}\\%
 & = & (\lambda p\in \ssem{\parr}.\, 0\cdot(\ssem{e_{r}}(z \mapsto (p,\vsem{v_2})\gamma))\vsem{v_1}\\
 & = &
  \ssem{e_r}(z \mapsto (\vsem{v_1},\vsem{v_2}))\gamma\\
 & = &
  \ssem{e_r[(v_1,v_2)/z]}\gamma\\
 & = &
  \ssem{v_r~(v_1,v_2)}\gamma
  \end{array}\]

\item %
\[\myg \vdash_{\epsilon} \glocal{\epsilon_1}{v}{\myg}  \evalto{0}  v 
\]    
where $\epsilon_1 \subseteq \epsilon$. Then we have 
\[\ssem{\glocal{\epsilon_1}{v}{\myg}}\gamma = \ssem{v}\lsem{\myg} = \ssem{v}\gamma\]

\item

Suppose the transition is
\[\myg \vdash_{\epsilon} \then{\epsilon_1}{v}{\lambda^{\epsilon_1} x\type \sigma. e}
  \; \evalto{0} \; \glocal{\epsilon_1}{e[v/x]}{\lambda^{\epsilon_1} x: \sigma.\, 0} \]
where $\epsilon_1 \subseteq \epsilon$. In this case 
$\sigma = \real$.
  
  We have:

\[\begin{array}{lcl}
     \ssem{\then{\epsilon_1}{v}{\lambda^{\epsilon_1} x\type \sigma_1. \, e_2}}(\gamma)
&=&  \mlet{F_\epsilon}{r_1 \in \R, a \in \ssem{\sigma_1}}{\ssem{v}(\lsem{\lambda^{\epsilon_1} x\type \sigma_1. \, e_2})}
{\\&&\;\quad \mlet{F_\epsilon}
                {r_2,r_3\in \R}
                {\ssem{e_2}(x \mapsto a)(\lambda r \in \R.\,0)}
                {(r_2,r_1 + r_3})} \\
 &=& \mlet{F_\epsilon}{r_1 \in \R, a \in \ssem{\sigma_1}}{(0,\vsem{v})}
{\\&&\;\quad \mlet{F_\epsilon}
                {r_2,r_3\in \R}
                {\ssem{e_2}(x \mapsto a)(\lambda r \in \R.\,0)}
                {(r_2,r_1 + r_3})} \\
 &=&  \mlet{F_\epsilon}
                {r_2,r_3\in \R}
                {\ssem{e_2}(x \mapsto \vsem{v})(\lambda r \in \R.\,0)}
                {(r_2,r_3}) \\
 &=& 
   \ssem{e_2}(x \mapsto \vsem{v})(\lambda r \in \R.\,0)\\
 &=& 
   \ssem{e_2[v/x]}\lsem{\lambda^{\epsilon_1} x \type \real.\,0}\\
 &= &\ssem{\glocal{\epsilon_1}{e_2[v/x]}{\lambda^{\epsilon_1} x :  \real.\,0}}\gamma
   \end{array}\]
\item
Suppose the transition is 
\[\reset{v} \xrightarrow{0} v\]
Then we have:
\[\begin{array}{lcl}
\ssem{\reset{v}}\lsem{\myg}
& = &
\mlet{F_\epsilon}
            {r_1 \in \R, a \in \ssem{\sigma}}
            {\ssem{v}\lsem{\myg}}
            {(0,a)}\\
& = &
\mlet{F_\epsilon}
            {r_1 \in \R, a \in \ssem{\sigma}}
            {(0,\vsem{v})}
            {(0,a)}\\
& = & (0,\vsem{v})\\
& = & \ssem{v}\lsem{\myg}

\end{array}\]

\item 

Suppose the transition is generated by the rule
\[ {\inferrule{
                     \lambda^\epsilon x\type \tau.\, (\then{\epsilon}{F[x]}{\myg}) \vdash_\epsilon e_1 \evalto{r} e_1' }
{\myg \vdash_\epsilon F[e_1] \evalto{r} F[e_1']}}\]

By the induction hypothesis we have:
\[\ssem{e_1}\lsem{\lambda^\epsilon x\type \tau.\, (\then{\epsilon}{F[x]}{\myg})}
=
r\cdot \ssem{e'_1}\lsem{\lambda^\epsilon x\type \tau.\, (\then{\epsilon}{F[x]}{\myg})}\]

We then see:

\[\begin{array}{lcll}\ssem{F[e_1]}\lsem{\myg} 
  & = &
 \mlet{W_\epsilon}{a \in \ssem{\tau}\\&&\quad}
      {\ssem{e_1}\lsem{\lambda^\epsilon x\type \tau.\,\then{\epsilon}{F[x]}{\myg}}\\&&\quad}
      {\ssem{F[x]}(x\mapsto a)\lsem{\myg}}
      & (\mbox{by Lemma~\ref{lem:Fsemg}})\\
& = &
 \mlet{W_\epsilon}{a \in \ssem{\tau}\\&&\quad}
      {r\cdot \ssem{e_1}\lsem{\lambda^\epsilon x\type \tau.\,\then{\epsilon}{F[x]}{\myg}}\\&&\quad}
      {\ssem{F[x]}(x\mapsto a)\lsem{\myg}}\\
& = &
 r\cdot (\mlet{W_\epsilon}{a \in \ssem{\tau}\\&&\quad}
      {\ssem{e_1}\lsem{\lambda^\epsilon x\type \tau.\,\then{\epsilon}{F[x]}{\myg}}\\&&\quad}
      {\ssem{F[x]}(x\mapsto a)\lsem{\myg}})\\
      & = &
 r\cdot \ssem{F[e_1]}\lsem{\myg} & (\mbox{by Lemma~\ref{lem:Fsemg}})\\
      \end{array}\]
\item 

Suppose the transition is generated by the rule
\[{\inferrule*[narrower=0.7]{\return \mapsto v_r \in h \quad v_r\type (\parr,\sigma_1) \to \sigma\etype \epsilon \\
                     \lambda^\epsilon x\type \sigma_1.\, (\then{\epsilon}{v_r(v,x)}{\myg}) \vdash_{\epsilon\ell} e_2 \evalto{r} e_2' }
{\myg \vdash_\epsilon \wph{h}{v}{e_2} \evalto{r} \wph{h}{v}{e_2'}}}\]

We have $h$ handles $\ell$, for some $\ell$,
and $v_r = \lambda^{\epsilon}\, z\type (\parr,\sigma_1).\,e_r$
for some $e_r$.

By the induction hypothesis we have:
\[\ssem{e_2}\lsem{\lambda^\epsilon x\type \sigma_1.\,\then{\epsilon}{v_r(v,x)}{\myg}} = r\cdot \ssem{e'_2}\lsem{\lambda^\epsilon x\type \sigma_1.\,\then{\epsilon}{v_r(v,x)}{\myg}}\]

We have:
\[\begin{array}{lcll}
\lambda a \in \ssem{\sigma}.\, \ER{\epsilon}{\ssem{e_r}[(\vsem{v},a)/z]}{\lsem{\myg}}))
& = &
\lambda a \in \ssem{\sigma_1}.\, 
 \ER{\epsilon}{\ssem{v_r(v,x)}[x\mapsto a]}{\lsem{\myg}}))\\
& = &
\lsem{\lambda^\epsilon x\type \sigma_1.\,\then{\epsilon}{v_r(v,x)}{\myg}} \\
&& \qquad\qquad (\mbox{by Lemma~\ref{lem:theng}})
\end{array}\]

\[\lsem{\lambda^\epsilon x\type \sigma.\,\then{\epsilon}{e}{\myg}} = \lambda a \in \ssem{\sigma}.\, \ER{\epsilon}{\ssem{e}[x\mapsto a]}{\lsem{\myg}}))
 \]
 
Using this equality, for $e_2$ we have
\[\begin{array}{lcl}
    \ssem{\wph{h}{v}{e_2}}(\lsem{\myg}) 
       & = & (\mlet{{S_\epsilon}}{a \in \ssem{par}}{\ssem{v}}{\ssem{h}(a,\ssem{e_2})}(\lsem{\myg})\\
       & = & \ssem{h}(\vsem{v},\ssem{e_2})(\lsem{\myg})\\
       & = & s^{\dagger_{F_{\epsilon\ell}}}(\ssem{e_2}(\lambda a \in \ssem{\sigma}.\, \ER{\epsilon}{\ssem{e_r}[(\vsem{v},a)/z]}{\lsem{\myg}}))(\vsem{v})\\%
       & = &s^{\dagger_{F_{\epsilon\ell}}}(\ssem{e_2}(\lsem{\lambda^\epsilon x\type \sigma_1.\,\then{\epsilon}{v_r(v,x)}{\myg}}))(\vsem{v})\\%
\end{array}\]
and, similarly, for $e_2'$ we have:
\[\begin{array}{lcl}\ssem{\wph{h}{v}{e'_2}}(\lsem{\myg}) 
        =  s^{\dagger_{F_{\epsilon\ell}}}(\ssem{e'_2}(\lsem{\lambda^\epsilon x\type \sigma_1.\,\then{\epsilon}{v_r(v,x)}{\myg}}))(\vsem{v})\\%
        \end{array}\]

We conclude by linking these equalities using the induction hypothesis:

\[\begin{array}{lcl}
     s^{\dagger_{F_{\epsilon\ell}}}(\ssem{e_2}(\lsem{\lambda^\epsilon x\type \sigma_1.\,\then{\epsilon}{v_r(v,x)}{\myg}}))(\vsem{v})\\%
    \qquad = s^{\dagger_{F_{\epsilon\ell}}}(r\cdot \ssem{e'_2}(\lsem{\lambda^\epsilon x\type \sigma_1.\,\then{\epsilon}{v_r(v,x)}{\myg}}))(\vsem{v})\\%
    \qquad = (r\cdot -)^{\ssem{\parr}}(s^{\dagger_{F_{\epsilon\ell}}}(\ssem{e'_2}(\lsem{\lambda^\epsilon x\type \sigma_1.\,\then{\epsilon}{v_r(v,x)}{\myg}})))(\vsem{v})\\%
     \qquad = r\cdot (s^{\dagger_{F_{\epsilon\ell}}}(\ssem{e'_2}(\lsem{\lambda^\epsilon x\type \sigma_1.\,\then{\epsilon}{v_r(v,x)}{\myg}}))(\vsem{v}))\\%
\end{array}\]
with the second-last equality holding by Lemma~\ref{lem:actc}.

\item 
Suppose the transition is generated by the rule
\[{\inferrule{ \vdash e \type \sigma\etype \epsilon_1 
\\ \myg_1 \vdash_{\epsilon_1} e \evalto{r} e'}
{\myg \vdash_{\epsilon} \glocal{\epsilon_1}{e}{\myg_1}\evalto{r} \glocal{\epsilon_1}{e'}{\myg}}}\]

From the induction hypothesis, we have:
\[\ssem{e}\lsem{\myg_1}= r\cdot(\ssem{e'}\lsem{\myg_1})\]
Using this, we calculate:
\[\begin{array}{lcl}
\ssem{\glocal{\epsilon_1}{e}{\myg_1}}\lsem{\myg} 
& = & \ssem{e}\lsem{\myg_1}\\
& = & r\cdot(\ssem{e'}\lsem{\myg_1})\\
& = & r \cdot \ssem{\glocal{\epsilon_1}{e}{\myg_1}}\lsem{\myg} 
\end{array}\]

\item

Suppose the transition is generated by the rule
\[{\inferrule{\myg_1 \vdash_\epsilon e_1 \evalto{r_0} e_1'}
         {\myg \vdash_\epsilon  (\then{\epsilon}{e_1}{\myg_1}) \evalto{\; 0 \;} r_0 +  
            (\then{\epsilon_1}{e_1'}{\myg_1})}}\]

By the induction hypothesis we have:
\[\ssem{e_1}\lsem{\myg_1} = r_0\cdot\ssem{e'_1}\lsem{\myg_1}\]
Suppose that:
\[\myg_1 = \lambda^{\epsilon_1} x\type \sigma_1. \, e_2\]
with $\epsilon_1 \subseteq \epsilon$.

                We then have:
 \[\begin{array}{lcl}
      \ssem{\then{\epsilon_1}{e_1}
  {\lambda^\epsilon x\type \sigma_1.\, e_2}}\gamma 
&=&\mlet{F_\epsilon}{r_1 \in \R, a \in \ssem{\sigma_1}}{\ssem{e_1}(\lsem{\lambda^\epsilon x\type \sigma_1.\, e_2})}
{\\&&\;\quad \mlet{F_\epsilon}
                {r_2,r_3\in \R}
                {\ssem{e_2}(x \mapsto a)(\lambda r \in \R.\,0)}
                {(r_2,r_1 + r_3})}
\end{array}\]

\[\begin{array}{lcl}
\ssem{\then{\epsilon_1}{e_1}
        {\lambda^{\epsilon_1} x\type \sigma_1. \, e_2}} 
        \lsem{\myg}
&=& 
\mlet{F_\epsilon}{r_1 \in \R, a \in \ssem{\sigma_1}}{\ssem{e_1}(\lsem{\myg_1})}
{\\&&\;\quad \mlet{F_\epsilon}
                {r_2, r_3\in \R}
                {\ssem{e_2}(x \mapsto a)(\lambda r \in \R.\,0)}
                {(r_2,r_1 + r_3})}\\

&=& 
\mlet{F_\epsilon}{r_1 \in \R, a \in \ssem{\sigma_1}}{r_0\cdot \ssem{e'_1}(\lsem{\myg_1})}
{\\&&\;\quad \mlet{F_\epsilon}
                {r_2,r_3\in \R}
                {\ssem{e_2}(x \mapsto a)(\lambda r \in \R.\,0)}
                {(r_2, r_1 + r_3})}\\
&=& 
\mlet{F_\epsilon}{r_1 \in \R, a \in \ssem{\sigma_1}}{\ssem{e'_1}(\lsem{\myg_1})}
{\\&&\;\quad \mlet{F_\epsilon}
                {r_2,r_3\in \R}
                {\ssem{e_2}(x \mapsto a)(\lambda r \in \R.\,0)}
                {(r_2,r_0 + r_1 + r_3})}\\
&=&         
\ssem{r_0 + (\then{\epsilon_1}{e_1}
        {\lambda^{\epsilon_1} x\type \sigma_1. \, e_2})} 
        \lsem{\myg}
\end{array}\]

\item Suppose the transition is generated using the rule

\[\inferrule{\myg \vdash_\epsilon e_1 \evalto{r} e'_1}
         {\myg \vdash_\epsilon  \reset{e_1} \evalto{\; 0 \;} \reset{e'_1}}\] 

By the induction hypothesis we have:
\[\ssem{e_1}\lsem{\myg} = r\cdot \ssem{e'_1}\lsem{\myg}\]
So then:
\[\begin{array}{lcl}
\ssem{\reset{e_1}}\lsem{\myg}
 &=& \mlet{F_\epsilon}
            {r_1 \in \R, a \in \ssem{\sigma}}
            {\ssem{e_1}\lsem{\myg}}
            {(0,a)}\\
&=& \mlet{F_\epsilon}
            {r_1 \in \R, a \in \ssem{\sigma}}
            {r\cdot\ssem{e'_1}\lsem{\myg}}
            {(0,a)}\\
&=& \mlet{F_\epsilon}
            {r_1 \in \R, a \in \ssem{\sigma}}
            {\ssem{e'_1}\lsem{\myg}}
            {(0,a)}\\
& = & \ssem{\reset{e'_1}}\lsem{\myg}\\
& = & 0\cdot\ssem{\reset{e'_1}}\lsem{\myg}\\ 
\end{array}\]

\end{enumerate}

\end{proof}

Soundness for the big-step operational semantics  then follows:
\begin{theorem} [Evaluation Soundness] \label{Athm:sound}
For all expressions $e\type \sigma\etype \epsilon$ and loss continuations $\myg \type \sigma \to \real\etype \epsilon'$ with $\epsilon' \subseteq \epsilon$ we have:
\[\myg \vdash_\epsilon e \xRightarrow{r} v \;\; \implies \;\; 
\ssem{e}\lsem{\myg} = (r,\vsem{v}) \]
and
\[\begin{array}{l}
\myg \vdash_\epsilon e \xRightarrow{r} K[\op(v)] \;\; \implies \\\\ 
\hspace{30pt} \ssem{e}\lsem{\myg} = \varphi^{\R \times \ssem{\sigma}}_{\ell,\op,\epsilon(\ell)}
 (\vsem{v},\lambda a \in \ssem{\inn}.\,
   r\cdot  \ssem{K[x]}(x \mapsto a)\lsem{\myg}
     \end{array}\]
     \end{theorem}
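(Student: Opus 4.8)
The plan is to derive evaluation soundness directly from small-step soundness (Theorem~\ref{Athm:smallsound}) by induction on the length of the reduction sequence underlying the big-step judgment $\myg \vdash_\epsilon e \xRightarrow{r} w$ --- equivalently, by induction on the derivation of that judgment using the two rules of Figure~\ref{fig:bigeval}. Throughout, the loss continuation $\myg$ stays fixed and the effect annotation $\epsilon$ is unchanged, so the side condition $\epsilon' \subseteq \epsilon$ persists; moreover, by type safety (Theorem~\ref{Athm:dps}) every expression reached from a well-typed $e\type\sigma\etype\epsilon$ is again of type $\sigma\etype\epsilon$. Hence each intermediate appeal to Theorem~\ref{Athm:smallsound} and to Lemmas~\ref{Alem:valsem} and~\ref{Alem:Kop} is licensed. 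Since the shape of the terminal $w$ is not known until the end, I would prove both clauses simultaneously.

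For the base case $e = w$ is already terminal and $r = 0$. If $w$ is a value $v$, then Lemma~\ref{Alem:valsem} gives $\ssem{v} = \eta_{S_\epsilon}(\vsem{v})$, so $\ssem{v}\lsem{\myg} = \eta_{W_\epsilon}(\vsem{v}) = (0,\vsem{v})$, which is the first clause for $r = 0$. If $w$ is a stuck expression $K[\op(v)]$, then Lemma~\ref{Alem:Kop} computes $\ssem{K[\op(v)]}\lsem{\myg}$ as exactly $\varphi^{\R\times\ssem{\sigma}}_{\ell,\op,\epsilon(\ell)}(\vsem{v},\lambda a.\,\ssem{K[x]}(x\mapsto a)\lsem{\myg})$; since $0\cdot u = u$ for the additive action, this is the second clause for $r = 0$.

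For the inductive step the derivation begins with a single transition $\myg \vdash_\epsilon e \xrightarrow{r_0} e'$ followed by $\myg \vdash_\epsilon e' \xRightarrow{s} w$, with $r = r_0 + s$. Theorem~\ref{Athm:smallsound} yields $\ssem{e}\lsem{\myg} = r_0\cdot(\ssem{e'}\lsem{\myg})$, and the induction hypothesis describes $\ssem{e'}\lsem{\myg}$. In the value case $\ssem{e'}\lsem{\myg} = (s,\vsem{v})$, so $\ssem{e}\lsem{\myg} = r_0\cdot(s,\vsem{v}) = (r_0+s,\vsem{v})$ by the action on $W_\epsilon$, as required. In the stuck case the hypothesis gives $\ssem{e'}\lsem{\myg} = \varphi^{\R\times\ssem{\sigma}}_{\ell,\op,\epsilon(\ell)}(\vsem{v},\lambda a.\,s\cdot\ssem{K[x]}(x\mapsto a)\lsem{\myg})$, and it remains to simplify $r_0\cdot(-)$ applied to this.

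This last simplification is the main obstacle, and the only genuinely algebraic point. It rests on two facts about the action $\epsilon$-algebra structure of $W_\epsilon$: first, the action commutes with each operation, so that $r_0\cdot\varphi^{\R\times X}_{\ell,\op,i}(o,k) = \varphi^{\R\times X}_{\ell,\op,i}(o,\lambda a.\,r_0\cdot k(a))$, which is precisely the defining condition of an action $\epsilon$-algebra; and second, additivity, $r_0\cdot(s\cdot u) = (r_0+s)\cdot u$. Combining the two rewrites the denotation as $\varphi^{\R\times\ssem{\sigma}}_{\ell,\op,\epsilon(\ell)}(\vsem{v},\lambda a.\,(r_0+s)\cdot\ssem{K[x]}(x\mapsto a)\lsem{\myg})$, which is the second clause for the total loss $r_0 + s$. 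I expect the write-up to be short once these identities are invoked; the care needed is mainly in tracking which monad's action is meant and in confirming that well-typedness propagates along the sequence so that all cited results apply at every step.
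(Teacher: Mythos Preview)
Your proposal is correct and essentially identical to the paper's proof: both argue by induction on the big-step derivation, using Lemma~\ref{Alem:valsem} and Lemma~\ref{Alem:Kop} for the base cases and Theorem~\ref{Athm:smallsound} for the step, then close the stuck case via the action--operation commutation and additivity of the $W_\epsilon$ action. The only cosmetic difference is that you prove the two clauses simultaneously and make the appeal to type safety explicit, whereas the paper treats them in sequence and leaves well-typedness propagation implicit.
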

\begin{proof}
Both statements are proved by induction on the proof of the evaluation relation. For the first one, in the base case we have $e = v$ and $r=0$ and, using Lemma~\ref{Alem:valsem} we have:
\[\ssem{e}\lsem{\myg} = \ssem{v}\lsem{\myg} = (0,\vsem{v}) \]
In the other case we have $\myg \vdash_\epsilon e \xrightarrow{r_1} e'$ and 
$\myg \vdash_\epsilon e' \xRightarrow{r_2} v$ and $r = r_1 + r_2$. So using Theorem~\ref{Athm:smallsound} we obtain:
\[\ssem{e}\lsem{\myg} = r_1\cdot (\ssem{e'}\lsem{\myg}) = r_1 \cdot (r_2,\vsem{v}) = (r,\vsem{v})\]

For the second one, in the base case we have $e = K[\op(v)]$ and $r=0$ and, using Lemma~\ref{Alem:Kop} we have:
\[\ssem{e}\lsem{\myg} = \ssem{K[\op(v)]}\lsem{\myg} =  
 \varphi^{\R \times \ssem{\sigma}}_{\ell,\op,\epsilon(\ell)}
 (\vsem{v},\lambda a \in \ssem{\inn}.\,
     \ssem{K[x]}(x \mapsto a)\lsem{\myg})\]
In the other case we have $\myg \vdash_\epsilon e \xrightarrow{r_1} e'$ and 
$\myg \vdash_\epsilon e' \xRightarrow{r_2} v$ and $r = r_1 + r_2$. So we obtain:
\[\begin{array}{lcl}
\ssem{e}\lsem{\myg} 
& = &
r_1\cdot (\ssem{e'}\lsem{\myg})\\
& = &
r_1\cdot (\varphi^{\R \times \ssem{\sigma}}_{\ell,\op,\epsilon(\ell)}
 (\vsem{v},\lambda a \in \ssem{\inn}.\,
     r_2\cdot\ssem{K[x]}(x \mapsto a)\lsem{\myg}))\\
& = &
(\varphi^{\R \times \ssem{\sigma}}_{\ell,\op,\epsilon(\ell)}
  \vsem{v},\lambda a \in \ssem{\inn}.\,
     r\cdot\ssem{K[x]}(x \mapsto a)\lsem{\myg})
\end{array}
\]

\end{proof}

\begin{theorem}[Adequacy] \label{Athm:adequate} For all expressions
$e\type \sigma\etype \epsilon$ and loss continuations $\myg \type \sigma \to \real\etype \epsilon'$ with $\epsilon' \subseteq \epsilon$ we have:
\[\ssem{e}\lsem{\myg} = (r,a) \implies 
    \exists v.\,  \myg \vdash_\epsilon e \xRightarrow{r} v \wedge \vsem{v} = a\]
and
\[\begin{array}{l}
\ssem{e}\lsem{\myg} = \varphi^{\R \times \ssem{\sigma}}_{\ell,\op,\epsilon(\ell)} (a,f)\;\; \implies \;\;   \exists K[\op(v)].\,  \myg \vdash_\epsilon e \xRightarrow{r} K[\op(v)]\; \wedge \\\\
\hspace{20pt} a = \vsem{v} 
 \;\wedge \; f = \lambda b \in \ssem{\inn}.\,
    r\cdot \ssem{K[x]}(x \mapsto b)\lsem{\myg}
 \end{array}\]
\end{theorem}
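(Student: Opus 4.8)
The plan is to deduce Adequacy from the already-proven Evaluation Soundness (Theorem~\ref{Athm:sound}) together with Termination (Theorem~\ref{Athm:bterm}), exactly in the spirit of the commented-out proof sketch for Theorem~\ref{thm:adequate}. The two theorems fit together as a classic ``soundness plus termination gives adequacy'' argument: soundness tells us the operational result determines the denotation, and termination guarantees an operational result exists, so any equation on the denotational side must be witnessed operationally.

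For the first part, suppose $\ssem{e}\lsem{\myg} = (r,a)$. First I would invoke Theorem~\ref{Athm:bterm}: since $e\type \sigma\etype\epsilon$ and $\myg\type \sigma \to \real\etype\epsilon'$ with $\epsilon'\subseteq\epsilon$, there is a unique $r'$ and terminal $w$ with $\myg \vdash_\epsilon e \xRightarrow{r'} w$, and $w$ is either a value $v$ or a stuck expression $K[\op(v)]$. I would rule out the stuck case using the second part of Theorem~\ref{Athm:sound}: if $w = K[\op(v)]$ then $\ssem{e}\lsem{\myg}$ would have the form $\varphi^{\R \times \ssem{\sigma}}_{\ell,\op,\epsilon(\ell)}(\dots)$, which is an element of the summand of $F_\epsilon(\R\times\ssem{\sigma})$ coming from an operation node, hence cannot equal $(r,a)$, an element of the $\R\times\ssem{\sigma}$ summand (these are disjoint injections in the defining sum for $F_\epsilon$). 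So $w$ is a value $v$, and the first part of Theorem~\ref{Athm:sound} gives $\ssem{e}\lsem{\myg} = (r',\vsem{v})$. Comparing with $(r,a)$ yields $r = r'$ and $a = \vsem{v}$, so $v$ is the required witness.

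The second part is symmetric. Assuming $\ssem{e}\lsem{\myg} = \varphi^{\R \times \ssem{\sigma}}_{\ell,\op,\epsilon(\ell)}(a,f)$, I again apply Termination to get $\myg \vdash_\epsilon e \xRightarrow{r} w$ for a unique $r$ and terminal $w$. Now the first part of Theorem~\ref{Athm:sound} rules out $w$ being a value (that would force $\ssem{e}\lsem{\myg}$ to be of the form $(r,\vsem{v})$, again in the wrong summand). Hence $w = K[\op'(v)]$ for some $K$, $\op'\type\outt'\xrightarrow{\ell'}\inn'$, and $v$, and the second part of Theorem~\ref{Athm:sound} gives
\[
\ssem{e}\lsem{\myg} = \varphi^{\R \times \ssem{\sigma}}_{\ell',\op',\epsilon(\ell')}
 \bigl(\vsem{v},\ \lambda b \in \ssem{\inn'}.\ r\cdot \ssem{K[x]}(x \mapsto b)\lsem{\myg}\bigr).
\]
Matching this against the assumed form and using that the decorations $(\ell,\op,i)$ on operation nodes of $F_\epsilon$ are part of the disjoint-sum injection (so equal effect values have equal node labels), I conclude $\ell' = \ell$, $\op' = \op$, the index $\epsilon(\ell')$ agrees, and then the $\ssem{\outt}\times(\dots)^{\ssem{\inn}}$ components must be equal, giving $a = \vsem{v}$ and $f = \lambda b \in \ssem{\inn}.\ r\cdot \ssem{K[x]}(x \mapsto b)\lsem{\myg}$, as required.

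The only subtle point — and the step I would be most careful about — is the injectivity argument that lets me read off the decorations and components from an equality of effect values. This relies on $F_\epsilon(X)$ being genuinely a coproduct, so that the operation-node summands (indexed by $(\ell,\op,i)$) are disjoint from each other and from the leaf summand $X$, and that within a fixed summand the pairing $\ssem{\outt}\times X^{\ssem{\inn}}$ is componentwise. This is immediate from the least-fixed-point definition of $F_\epsilon$ in Section~\ref{sec:Asemtypes}, where $\varphi^X_{\ell,\op,i}(o,k) = ((\ell,\op,i),(o,k))$ records the label explicitly; I would state this disjointness/injectivity as the one observation the argument turns on, and otherwise the proof is a short, mechanical combination of the two prior theorems. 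Everything else — uniqueness of $r$ and $w$, and the case split — is supplied verbatim by Theorem~\ref{Athm:bterm} and Theorem~\ref{Athm:sound}.
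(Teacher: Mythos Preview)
Your proposal is correct and follows essentially the same approach as the paper's own proof: invoke termination to obtain a big-step result, use the two parts of Evaluation Soundness to rule out the wrong terminal shape, and then read off the witness by comparing components. Your explicit attention to the disjointness/injectivity of the summands of $F_\epsilon$ is more than the paper spells out (it just says the denotation ``would not have the form $(r,a)$''), but it is exactly the right observation and does no harm.
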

\begin{proof}
\myskip

Suppose that $\ssem{e}\lsem{\myg} = (r,a)$. By the termination theorem, Theorem~\ref{Athm:term}, either there are $r'$ and $v$ such that
$\myg \vdash_\epsilon e \xRightarrow{r'} v$ or else there are $r'$ and $K[\op(v)]$ such that
$\myg \vdash_\epsilon e \xRightarrow{r'} K[\op(v)]$. 
Using the second part of Theorem~\ref{Athm:sound} we see that the latter cannot happen, as then $\ssem{e}\lsem{\myg}$ would not ahve the form $(r,a)$.
So there are $r'$ and $v$ such that
$\myg \vdash_\epsilon e \xRightarrow{r'} v$.
Then, by evaluation soundness, we have $\ssem{e}\lsem{\myg} = (r',\vsem{v})$. So $(r,\vsem{v}) = (r',a)$ and the conclusion follows.

The second case is proved similarly.
\end{proof}

For the following corollary we assume the interpretation of constants is 1-1, I.e. that $\sem{c} = \sem{c'}$ implies $c = c'$. 
\begin{corollary}[First-order adequacy] 
Suppose we have a first-order type $\sigma$, an expression $e\type \sigma\etype \emptyset$, and a loss continuation $\myg \type \sigma \to \real\etype \emptyset$. Then for any $v\type \sigma$ we have:
\[\ssem{e}\lsem{\myg} = (r,\vsem{v}) \iff
    \myg \vdash_\epsilon e \xRightarrow{r} v\]
\end{corollary}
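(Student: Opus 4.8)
The plan is to prove the two implications separately, with the right-to-left direction being immediate and the left-to-right direction requiring an auxiliary injectivity lemma tailored to first-order types.

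For the backward direction, suppose $\myg \vdash_\epsilon e \xRightarrow{r} v$. Since here $\epsilon = \emptyset$, this is a direct instance of the first part of Evaluation Soundness (Theorem~\ref{Athm:sound}), yielding $\ssem{e}\lsem{\myg} = (r, \vsem{v})$. This direction uses neither first-orderness nor the injectivity hypothesis on constants.

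For the forward direction, suppose $\ssem{e}\lsem{\myg} = (r, \vsem{v})$. First I would invoke big-step termination (Theorem~\ref{Athm:bterm}): $e$ evaluates to a unique loss $r'$ and a unique terminal expression $w$. Because the effect $\emptyset$ is empty, the well-typing forces $w$ to be a value, say $v'$ (a stuck terminal $K[\op(v)]$ would require a non-empty effect). Evaluation Soundness then gives $\ssem{e}\lsem{\myg} = (r', \vsem{v'})$, and comparing with the hypothesis $(r, \vsem{v})$ yields $r' = r$ and $\vsem{v'} = \vsem{v}$. (Alternatively one can extract such a $v'$ directly from the first part of the Adequacy Theorem, Theorem~\ref{Athm:adequate}.) It then remains to upgrade the semantic equality $\vsem{v'} = \vsem{v}$ to the syntactic equality $v' = v$.

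The crux, and the main obstacle, is therefore a key lemma stating that $\vsem{-}$ is injective on closed values of first-order type: if $\sigma$ is first-order and $\vsem{v} = \vsem{v'}$ for $v, v' \type \sigma$, then $v = v'$. I would prove this by induction on the structure of the first-order type $\sigma$, reading off the defining clauses of the value semantics $\vsem{-}$. At base types the inductive step is exactly the assumption that the constant denotation map $\sem{-}$ is 1-1; at product, sum, $\nat$, and $\mylist(\sigma_1)$ types it follows structurally from the induction hypothesis, after first observing that a closed value of such a type must itself be built from the matching value constructors (so the two values have the same top-level shape before comparing components). The decisive point is that first-order types contain no function types, so no questions of extensional equality of denotations arise; this is precisely why the statement is restricted to first-order $\sigma$. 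Granting this lemma, $\vsem{v'} = \vsem{v}$ gives $v' = v$, and together with $r' = r$ we conclude $\myg \vdash_\epsilon e \xRightarrow{r} v$, which completes the forward direction and hence the corollary.
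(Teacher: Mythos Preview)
Your proposal is correct and matches the paper's approach: the paper's proof simply says the corollary follows from the Adequacy Theorem together with the fact that injectivity of the constant denotation map lifts to injectivity of $\vsem{-}$ on first-order values, which is exactly the key lemma you identify and sketch. Your termination-plus-soundness route for the forward direction is just the unfolding of the Adequacy Theorem's own proof, and you correctly note the Adequacy Theorem can be invoked directly instead.
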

\begin{proof} This follows from the adequacy theorem, together with the fact that as the interpretation of constants is 1-1 so is that of first-order values, i.e., if $\vsem{v} = \vsem{v'}$, for $v,v'\type \sigma$ then $v = v'$.
\end{proof}

Turning to giant step operational semantics, fix $\sigma$ and $\epsilon$. Define the set $\EV$ of effect values by:
\[\EV\; = \; 
\left ( \sum_{\ell \in \epsilon, \op: \outt \xrightarrow{\ell} \inn} 
  V_\outt \times \EV^{V_\inn}  \right ) \,+ \,R \times V_\sigma \]
where $V_\tau = \{v|v\type\tau\}$.
Next, fix $\myg\type \sigma \to \loss\etype \epsilon'$ (with $\epsilon'\subseteq\epsilon$)
and define a partial function 
$\mathrm{Eval}: \mathrm{E} \rightharpoonup \EV$ on expressions $e\type\sigma\etype \epsilon$ by:
{\small \[\mathrm{Eval}(e) = 
   \left \{
   \begin{array}{ll}
      (r,v)  & (\myg \vdash_\epsilon e \xRightarrow{r} v) \\
       ((\ell,\op),(v, \lambda w \in \V_{\inn}.\, r\cdot\mathrm{Eval}(K[w])) & 
        (\myg \vdash_\epsilon e \xRightarrow{r} K[\op(v)], \op: \outt \xrightarrow{\ell} \inn)
   \end{array}
   \right .\]}
using the evident $R$-action on $\EV$.
Using $\{\myg\}$-induction, one sees that $\mathrm{Eval}$ is  total.
Next, inductively define a relation $\mysim$ between $\EV$ and $W_\epsilon(\ssem{\sigma})$ by:
\[(r,v) \mysim (s,a) \iff r = s \;\wedge\; \vsem{v} = a\]
and:
\[((\ell,\op),(v,f)) \mysim ((\ell',\op',n'),(a,g))
\iff \ell = \ell' \,\wedge\, \op = \op' \,\wedge\, n' = \epsilon(\ell)\,
\wedge \, \vsem{v} = a \,\wedge \, \forall w \in V_\inn. f(w) \mysim g(\ssem{w}) 
\]
Notice that in case all the $\inn$ and $\outt$ such that $\op\type \outt \xrightarrow{\ell} \inn$ for some $\ell \in \epsilon$ are first-order, and the semantics of constants is the identity, this relation is a bijection. 

\begin{theorem} \label{Athm:giant} For all $e\type \sigma\etype \epsilon$ we have: $\mathrm{Eval}(e) \mysim \ssem{e}\lsem{\myg}$.
\end{theorem}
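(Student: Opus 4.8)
The plan is to prove $\mathrm{Eval}(e) \mysim \ssem{e}\lsem{\myg}$ by a straightforward case analysis that appeals directly to the Evaluation Soundness Theorem (Theorem~\ref{Athm:sound}), relying on Termination (Theorem~\ref{Athm:term}) to guarantee that $\mathrm{Eval}(e)$ is always defined. Since $\mathrm{Eval}$ is defined by cases on whether $e$ evaluates to a value or to a stuck expression, and these are mutually exclusive and exhaustive by the unique analysis lemma together with termination, I would split the argument accordingly. First I would invoke Theorem~\ref{Athm:bterm} (or equivalently Theorem~\ref{Athm:term} combined with Corollary~\ref{Acor:evaltd}) to obtain, for the fixed $\myg\type \sigma \to \loss\etype \epsilon'$ with $\epsilon'\subseteq\epsilon$, a unique $r$ and a unique terminal $w$ such that $\myg \vdash_\epsilon e \xRightarrow{r} w$, where $w$ is either a value $v$ or a stuck expression $K[\op(v)]$.

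In the first case, $w = v$ is a value and $\mathrm{Eval}(e) = (r,v)$ by definition. By the first part of Theorem~\ref{Athm:sound}, $\ssem{e}\lsem{\myg} = (r,\vsem{v})$. Since $\mysim$ relates $(r,v)$ to any $(s,a)$ exactly when $r = s$ and $\vsem{v} = a$, we immediately get $(r,v) \mysim (r,\vsem{v})$, which is the desired conclusion. In the second case, $w = K[\op(v)]$ with $\op\type \outt \xrightarrow{\ell} \inn$, so that $\mathrm{Eval}(e) = ((\ell,\op),(v, \lambda w\in V_\inn.\, r\cdot\mathrm{Eval}(K[w])))$. By the second part of Theorem~\ref{Athm:sound}, $\ssem{e}\lsem{\myg} = \varphi^{\R \times \ssem{\sigma}}_{\ell,\op,\epsilon(\ell)}(\vsem{v}, \lambda b\in \ssem{\inn}.\, r\cdot\ssem{K[x]}(x\mapsto b)\lsem{\myg})$, which unfolds to the pair $((\ell,\op,\epsilon(\ell)),(\vsem{v}, g))$ where $g = \lambda b\in\ssem{\inn}.\, r\cdot\ssem{K[x]}(x\mapsto b)\lsem{\myg}$. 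To establish $\mathrm{Eval}(e)\mysim\ssem{e}\lsem{\myg}$ using the inductive clause for $\mysim$, I need to check that $\ell$, $\op$ match, that $\epsilon(\ell)$ is the recorded index, that $\vsem{v} = \vsem{v}$, and crucially that for every $w\in V_\inn$ we have $r\cdot\mathrm{Eval}(K[w]) \mysim g(\vsem{w}) = r\cdot\ssem{K[x]}(x\mapsto \vsem{w})\lsem{\myg}$.

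This last verification is where the real content lies and constitutes the main obstacle: I must relate $r\cdot\mathrm{Eval}(K[w])$ to $r\cdot\ssem{K[w]}\lsem{\myg}$. The natural route is an inner induction (on the derivation of $\mathrm{Eval}$, justified by $\{\myg\}$-induction as in the totality argument for $\mathrm{Eval}$) showing the general claim $\mathrm{Eval}(e)\mysim\ssem{e}\lsem{\myg}$ for all $e$, and then using the fact that $\mysim$ is preserved by the $R$-action---that is, $u\mysim U$ implies $r\cdot u\mysim r\cdot U$---which I would verify by a short induction on the definition of $\mysim$, observing that the action on $\EV$ distributes into the leaves exactly as the action $r\cdot -$ on $W_\epsilon(\ssem{\sigma})$ does (since $r\cdot((\ell,\op,n),(a,g)) = ((\ell,\op,n),(a, r\cdot\circ g))$ by the definition of the action $\epsilon$-algebra on $F_\epsilon(\R\times X)$). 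Applying the inductive hypothesis to each $K[w]$ gives $\mathrm{Eval}(K[w])\mysim\ssem{K[w]}\lsem{\myg} = \ssem{K[x]}(x\mapsto\vsem{w})\lsem{\myg}$ (the last equality by the substitution lemma, Lemma~\ref{lem:sublem}, and $\vsem{w}$ being the value denotation), and then the action-preservation lemma upgrades this to $r\cdot\mathrm{Eval}(K[w])\mysim g(\vsem{w})$, completing the case. The whole argument is thus a clean combination of Termination, Evaluation Soundness, and one auxiliary lemma about $\mysim$ commuting with the $R$-action; the only subtlety to handle carefully is making the induction well-founded, which is exactly why termination is invoked to ensure the giant-step evaluation tree is finite.
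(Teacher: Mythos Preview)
Your proposal is correct and follows essentially the same route as the paper's proof: well-founded induction on $\mathrm{Eval}(e)$, a two-case split via Theorem~\ref{Athm:bterm}, and an appeal to Evaluation Soundness (Theorem~\ref{Athm:sound}) in each case, with the induction hypothesis supplying $\mathrm{Eval}(K[w]) \mysim \ssem{K[w]}\lsem{\myg}$ in the stuck case. The only difference is that you are more explicit about the auxiliary fact that $\mysim$ is preserved by the $R$-action (needed to pass from $\mathrm{Eval}(K[w]) \mysim \ssem{K[w]}\lsem{\myg}$ to $r\cdot\mathrm{Eval}(K[w]) \mysim r\cdot\ssem{K[w]}\lsem{\myg}$); the paper leaves this step implicit.
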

\begin{proof}
We proceed by well-founded-tree induction on  $\mathrm{Eval(e)}$.
By Theorem~\ref{Athm:bterm} there are two cases:
\begin{enumerate}
\item Here $\myg \vdash_\epsilon e  \xRightarrow{r} v$. 
We have $\mathrm{Eval}(e) = (r,v)$ 
and, by Theorem~\ref{Athm:sound},  $\ssem{e}\lsem{\myg} = (r,\vsem{v})$.
\item Here $\myg \vdash_\epsilon e \xRightarrow{r} K[\op(v)]$.
We have 
\[\mathrm{Eval}(e) = ((\ell,\op),(v, \lambda w \in \V_{\inn}.\, r\cdot\mathrm{Eval}(K[w]))\]
and, by Theorem~\ref{Athm:sound} again, we have:
\[\begin{array}{lcl}
\ssem{e}\lsem{\myg} & = & \varphi^{\R \times \ssem{\sigma}}_{\ell,\op,\epsilon(\ell)}
 (\vsem{v},\lambda a \in \ssem{\inn}.\,
   r\cdot  \ssem{K[x]}(x \mapsto a)\lsem{\myg}\\
   & = & ((\ell,\op,\epsilon(\ell)),
 (\vsem{v},\lambda a \in \ssem{\inn}.\,
   r\cdot  \ssem{K[x]}(x \mapsto a)\lsem{\myg}))
     \end{array}\]
and the conclusion follows, since, using the induction hypothesis, 
when $a = \vsem{w}$ for $w\type\inn$ we have:
\[\mathrm{Eval}(K[w]) \mysim \ssem{K[w]}\lsem{\myg} = \ssem{K[x]}(x \mapsto \vsem{w})\lsem{\myg} 
= \ssem{K[x]}(x \mapsto a)\lsem{\myg}\]

\end{enumerate}
\end{proof}

%

%
%
%
%
%
% %
 %
%
%
%
%
%
%
%
%
%

\end{document}